\definecolor{tensorcolor}{rgb}{1., 1., 1.}
\definecolor{btensorcolor}{rgb}{0.65,0.77,0.95}
\tikzset{
    mid arrow/.style args={#1}{
        decoration={
            markings,
            mark=at position #1 with {\arrow{Stealth}}
        },
        postaction={decorate}
    }
}
\declaretheoremstyle[
shaded={bgcolor=\color{rgb}{0.9,0.9,0.9}}
]{definition}
\declaretheorem[style=theorem]{definition}
\newtheorem{proposition}{Proposition}
\newtheorem{properties}{Properties}
\newtheorem{lemma}[proposition]{Lemma}
\begin{document}
\title{Unitary network: Tensor network unitaries with local unitarity}
\author{Wenqing Xie}
\affiliation{Department of Physics, Hong Kong University of Science and Technology, Clear Water Bay, Hong Kong SAR, China}

\author{Seishiro Ono}
\affiliation{RIKEN Center for Interdisciplinary Theoretical and Mathematical Sciences (iTHEMS), RIKEN, Wako 351-0198, Japan}
\affiliation{Department of Physics, Hong Kong University of Science and Technology, Clear Water Bay, Hong Kong SAR, China}

\author{Hoi Chun Po}
\email{hcpo@ust.hk}
\affiliation{Department of Physics, Hong Kong University of Science and Technology, Clear Water Bay, Hong Kong  SAR, China}

\preprint{RIKEN-iTHEMS-Report-25}

\begin{abstract}
We introduce \textit{unitary network}, an oriented architecture for tensor network unitaries. Compared to existing architectures, in a unitary network each local tensor is required to be a unitary matrix upon suitable reshaping. Global unitarity is ensured when the network obeys a suitable ordering property.
Unitary operators represented by unitary networks need not preserve locality. In particular, we show that the class of unitary networks encompasses global unitaries which preserve locality up to exponentially suppressed tails, as in those that naturally arise from the finite-time evolution of local Hamiltonians. 
Non-invertible symmetries, as exemplified by the non-local Kramers-Wannier duality in one dimension, can also be represented using unitary networks. We also show that information flow in a unitary network can be characterized by a flow index, which matches the known index for quantum cellular automata as a special case.
\end{abstract}
\maketitle

\tableofcontents

\section{Introduction}
Tensor network states are quantum many-body states defined through the contraction of local tensors. As the number of parameters in the ansatz grows only linearly in the system size, tensor networks provide a possible route to bypass the prohibitive exponential growth of the Hilbert space dimension and enable classical simulation of quantum problems. As a tradeoff, however, it is generally a challenging problem to determine the representability of various quantum states using a specific tensor network architecture. Nevertheless, the \textit{matrix product states} (MPSs) have proven remarkably effective for both representing and determining the ground and low-lying excited states of one-dimensional (1D) local Hamiltonians~\cite{White_DMRG, verstraete2006matrix, hastings2007area}. On the conceptual side, the MPS architecture has also enabled the classification of symmetry-protected topological phases in quantum spin systems~\cite{Chen_MPS_SPT_2011, Schuch_MPS_SPT_2011, Cirac_RMP}.

Recent studies on quantum many-body dynamics have also highlighted the importance of obtaining efficient representations of quantum many-body operators. For instance, the generalization of the equilibrium quantum Hall effect to the non-equilibrium Floquet setting revealed that the shift operator, known to be not realizable in a 1D system using \textit{finite-depth local unitary} (FDLU) circuit, can be understood as the dynamical analog of the anomalous chiral edge state~\cite{Harper_chiral_Floquet, Po_2016}. On another front, the advancement in our general understanding of topological phases~\cite{chen20122d, sahinoglu2150characterizing, bultinck2017anyons} had also inspired the generalization of symmetries---a fundamental concept in physics---to include both higher-form and non-invertible variants \cite{Okada2024, Lootens2023, Lootens2024, Lootens2025, Fukusumi2021, Ashkenazi2022, Tantivasadakarn2024, Aasen2022, bravyi2022adaptiveconstantdepthcircuitsmanipulating, Tantivasadakarn2023, Fechisin2025, Okuda2024, aasen2016topological, cao2023subsystem}. 

How can we represent quantum operators using the tensor network ansatz? In 1D, the \textit{matrix-product operators} (MPO) \cite{Verstraete_MPDO_2004, pirvu2010matrix} provide a natural generalization of MPS to operators which preserve the entanglement area law of quantum states. In the context of quantum dynamics and generalized symmetries, however, one is often interested in the restricted class of unitary operators. Demanding unitarity on the MPO is a nontrivial problem, and this has been tackled through the framework of \textit{matrix product unitaries} (MPUs)~\cite{chen20122d, cirac2017matrix, _ahino_lu_2018}. Restricted to the case of an infinite 1D chain with uniform local tensors, i.e., a manifestly translation invariant ansatz, MPUs (with finite bond dimension) have been shown to be equivalent to \textit{quantum cellular automata} (QCAs) ~\cite{cirac2017matrix,Piroli_2020}. QCAs~\cite{schumacher2004reversiblequantumcellularautomata, Farrelly_2020, arrighi2019overview} generalize FDLU circuits in that only the locality preserving property, but not the locally generated requirement, of the unitary map is required. As such, QCAs naturally include the shift operators, and are in fact classified by an index~\cite{Gross_2012} closely tied to the presence of the shifts (henceforth referred to as the GNVW index). 

Yet, such equivalence of MPUs with QCAs also implies their limit in representing unitary maps which are only approximately locality preserving, or those that preserve the area-law property but not locality. The former appears naturally in the study of quantum dynamics, as the finite-time evolution of a local Hamiltonian will generally transform an operator into one which is only approximately contained within the Lieb-Robinson lightcone~\cite{lieb1972finite} and admits an exponentially decaying tail outside~\cite{ranard2022converse}. The latter scenario arises naturally in non-invertible symmetries which map between area-law states in different phases. Can the MPU framework be generalized to include these important cases of unitary maps which are not QCAs?

In this work, we answer this question in the affirmative through a simple construction: instead of first considering a general MPO and then imposing conditions on the local tensors to realize unitarity, we consider MPOs built explicitly using local unitary building blocks~\cite{ferris2012perfect, pollmann2016efficient, wahl2017efficient, liu2019machine, haghshenas2021optimization, haghshenas2022variational, Styliaris_2025}. Importantly, the local unitary operators act on both the physical and auxiliary Hilbert spaces, which differentiates the construction from FDLU circuits and, as we will show, provides a way to represent non-QCA but area-law preserving \textit{translationally invariant} (TI) unitaries on an infinite chain simply by repeating the same local tensor, a feature that is often called ``uniform.’’

\begin{figure}[t]
    \centering
    \includegraphics[width=0.99\linewidth]{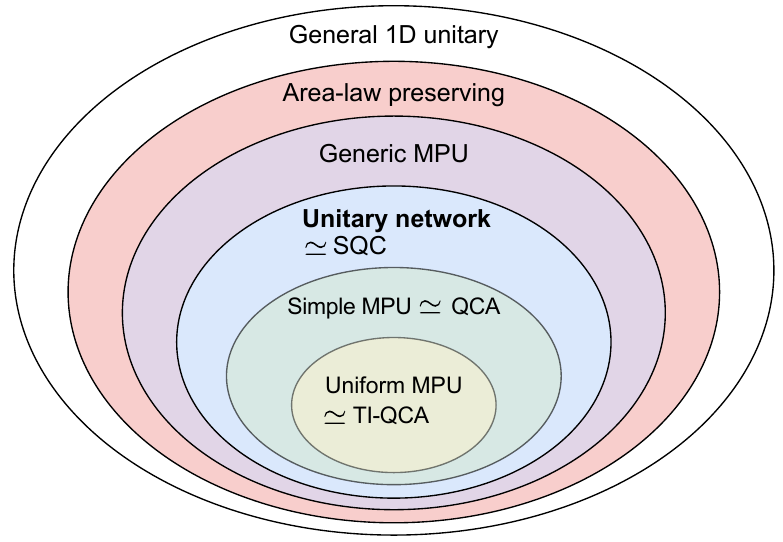}
    \caption{
    \textbf{Hierarchy of one-dimensional unitaries.}
    An area-law preserving unitary transforms an area-law state to another area-law state.
    MPU, (TI-)QCA, and SQC stand for \textit{matrix product unitary}, \textit{(translational invariant) quantum cellular automaton}, and \textit{sequential quantum circuit}, respectively.
    The class of unitary networks is equivalent to that of SQCs.
    }
    \label{fig:UN include MPU}
\end{figure}

We remark that non-uniform MPUs go beyond QCA \cite{Styliaris_2025} and can be realized with a quantum circuit of depth $O(\text{poly }N)$ and $O(N)$ auxiliary qudits~\cite{styliaris2025MPU} for $N$ physical qudits. On the other hand, we show that unitary networks have equivalent representability to \textit{sequential quantum circuits} (SQCs)~\cite{schon2005sequential, schon2007sequential, banuls2008sequentially}, which can represent all QCAs and various non-local transformations to connect states of distinct gapped phases~\cite{Chen_2024} with only $O(N)$ depth and no auxiliary qudit. A hierarchy of classes of unitary operators is shown in Fig.~\ref{fig:UN include MPU}. While generic (non-uniform) MPUs might have superior representability compared to unitary networks, constructing a unitary network for a specific unitary operator may be simpler due to its inherent global unitarity. Some earlier works have also utilized unitarity at the local tensor level~\cite{ferris2012perfect, pollmann2016efficient, wahl2017efficient, liu2019machine, haghshenas2021optimization, haghshenas2022variational, Styliaris_2025}. Their proposed tensor networks are restricted by fixed architectures or specific local gate sets, preventing the representation of certain unitaries with finite bond dimensions. Unitary networks, as we will show, correspond to a sufficiently versatile architecture and are expected to represent a broader range of global unitary operators.

This paper is organized as follows.
Section~\ref{Sec: Quantum cellular automata and the GNVW index} provides a review of QCAs and the GNVW index. Section~\ref{Sec: UN} outlines the structure of the unitary network and its key properties. We show that a directed acyclic graph structure combined with local unitary tensors guarantees global unitarity. Section~\ref{Sec: Unitary network with different boundary conditions} covers unitary networks in infinite OBC systems and PBC systems. In Section~\ref{Sec: QCA and unitary networks}, we show that any 1D QCA can be represented by a unitary network. We then define locality-preserving unitary networks. Section~\ref{Sec: Non-local unitaries and Unitary networks} provides examples demonstrating that unitary networks of finite bond dimension are capable of representing non-local unitaries. In Section~\ref{Sec: Information flow and the GNVW index}, we introduce the net information flow for unitary networks. With a locality-preserving unitary network representation, the net information flow becomes an intrinsic characteristic, aligned with the GNVW index for the QCAs being represented. Section~\ref{Sec: UN and SQC} examines the connection between unitary networks and SQCs. Unitary networks with finite bond dimension are shown to be equivalent to SQCs in terms of representability. Section~\ref{Sec: reduce bond dimension} investigates methods for decomposing a global unitary into an efficient unitary network with reduced bond dimensions.

\section{Quantum cellular automata and the GNVW index}
\label{Sec: Quantum cellular automata and the GNVW index}

Given the extensive use of QCA properties in this paper, a brief review of QCAs is provided in this section. We review the Margolus partitioning scheme of a QCA and the definition of the GNVW index. We also review the definition of approximately locality preserving unitaries (ALPUs) and their indices. Readers who are familiar with these notions can skip this section.

\subsection{Operator Algebra}

 Consider a cubic lattice $\Gamma = \mathbb{Z}^s$ with spatial dimension $s$. Lattice sites are described by integer vectors $\vec{n} \in \Gamma$. At each site, there is a corresponding Hilbert space $\mathcal{H}_{\vec{n}}$. The general Hilbert space is the tensor product of Hilbert spaces for all sites: $\mathcal{H} = \bigotimes_{\vec{n}} \mathcal{H}_{\vec{n}}$. However, when dealing with a system that has an infinite number of sites, the tensor product structure becomes unclear. That's why we take the quasi-local algebra approach.

Instead of the Hilbert space, we will consider the algebra of local observables $\mathcal{A}_{X} = \bigotimes_{\vec{n} \in X} \mathcal{A}_{\vec{n}}$, which is defined on a finite set of sites $X \in \mathbb{Z}^s$. For two subsets $X \subseteq X^\prime$ there is a natural inclusion $\mathcal{A}_{X} \subseteq \mathcal{A}_{X^{\prime}}$, where $\mathcal{A}_{X}$ is identified with $\mathcal{A}_{X} \otimes I_{X^{\prime} \backslash X}$ (tensoring with the identity in $X^{\prime} \backslash X$). We may define the algebra of all strictly local operators as follows: 
\begin{equation}
    \mathcal{A}_{\Gamma}^{\text{strict}} = \bigcup_{X \subseteq \Gamma \ \text{finite} }  \mathcal{A}_{X},
\end{equation}
whose norm completion is called the quasi-local  algebra $\mathcal{A}_{\Gamma}$.  For a more complete discussion of operator algebra, refer to Ref.~\cite{bratteli2012operator}.

When examining the algebra supported on multiple sites, it is useful to define the support algebra \cite{zanardi2002stabilizationquantuminformationunified, schumacher2004reversiblequantumcellularautomata}:
\begin{definition}
   For an algebra $\mathcal{A} \subseteq \bigotimes_{i} \mathcal{B}_{i}$ across multiple sites, the support algebra $\boldsymbol{S}(\mathcal{A}, \mathcal{B}_{i})$ is the minimum subalgebra in $\mathcal{B}_{i}$ required to construct elements of $\mathcal{A}$.:
    \begin{equation}
        \mathcal{A} \subseteq \bigotimes_{i} \boldsymbol{S}(\mathcal{A}, \mathcal{B}_{i})\subseteq \bigotimes_{i} \mathcal{B}_{i}
    \end{equation}
\end{definition}

\subsection{Automorphism}

An automorphism $u$ can be considered as an extension of unitary operations from finite to infinite systems. In a finite system, an automorphism $u$ can always be represented by
\begin{equation}
    u(O) = UOU^{\dagger}
\end{equation}
where $U$ is a unitary operator. However, within an infinite system like the one-dimensional chain $\mathbb{Z}$, a unitary $U$ might not exist for a given automorphism. Throughout the paper, we will still construct the automorphism $u$ as $u(O) = UOU^{\dagger}$, where $U$ could represent a unitary network $U_{Net}$ on the infinite system. For $U$ on an infinite system and its associated automorphism $u$, we sometimes still call it a unitary operator for convenience.

\subsection{Locality preserving and QCAs}

We provide the definitions of locality preservation and quantum cellular automata (QCAs) here. For a more comprehensive review on QCAs, refer to Ref.~\cite{Farrelly_2020}.

\begin{definition} [Locality preserving and QCAs]
    An automorphism $u: \mathcal{A}_{\Gamma} \rightarrow \mathcal{A}_{\Gamma}$ is locality preserving if there is some $R>0$, such that
     \begin{equation}
         u(O_{X})  \in  \mathcal{A}_{\bar{B}(X,R)} \quad \text{for} \ O_{X} \in \mathcal{A}_{X},
     \end{equation}
     where $\bar{B}(X,R)$ is the union of closed ball:
     \begin{equation}
     \begin{aligned}
         \bar{B}(X,R) &= \{ n\in \Gamma: d(n,X) \le R\} \\
         &= \cup_{x \in X} \{  n\in \Gamma: d(n,x) \le R\}
     \end{aligned}
     \end{equation}
     A quantum cellular automaton (QCA) \cite{schumacher2004reversiblequantumcellularautomata, perez2005models, perez2007local} is a locality-preserving automorphism defined on a lattice $\Gamma$.  $R$ is called the radius of the QCA.
\end{definition}

Some literature \cite{schumacher2004reversiblequantumcellularautomata, perez2007local} also requires that a QCA exhibit translational invariance, expressed as:
\begin{equation}
    u \cdot \tau_{x} = \tau_{x} \cdot u,
\end{equation}
where $\tau_{x}$ is the shift operation $\tau_{x}: \mathcal{A}_{Y} \rightarrow \mathcal{A}_{Y+x}$, such that:
\begin{equation}
    \tau_{x}(O_{y}) = T_{x} O_{y} T_{-x} = O_{x+y}.
\end{equation}
Throughout the paper, QCAs that meet the TI condition will be termed TI-QCAs.

\subsection{Margolus partitioning} \label{subsection: Margolus partitioning}

The Margolus neighborhood scheme of (classical) Cellular Automata was introduced in \cite{toffoli1990invertible}. The Margolus partitioning construction was first introduced in \cite{schumacher2004reversiblequantumcellularautomata} for TI QCAs. However, the discussions in \cite{Gross_2012, ranard2022converse} establish that translational invariance is not essential.

Margolus partitioning dictates that a nearest-neighbor QCA $u$ can be constructed in two steps:
\begin{equation}
    u(O) = (v \cdot w)(O),
\end{equation}
\begin{figure}
    \centering
\includegraphics[width=\linewidth]{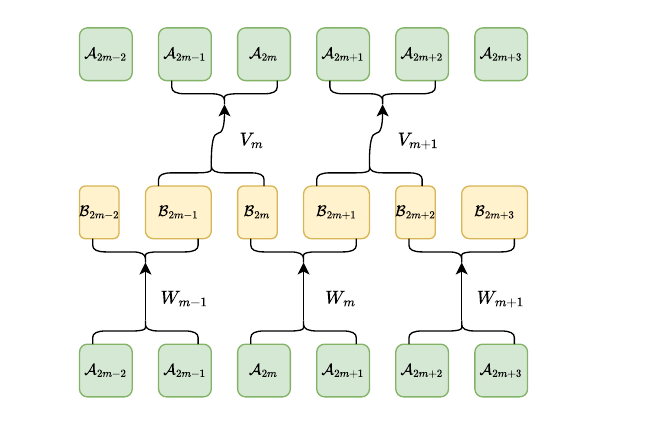}
    \caption{Any nearest-neighbor QCA $u$ can be constructed by a two-step Margolus partitioning scheme. This diagram is adapted from \cite{Farrelly_2020}.}
    \label{fig:Margolus partitioning}
\end{figure}
where $v$ and $w$ each can be decomposed into an automorphism on supercells that contains an alternate Margolus neighborhood. Fig.~\ref{fig:Margolus partitioning} illustrates a Margolus partitioning scheme for QCA construction in a one-dimensional chain. The supercells defined for $w$, $\{2m, 2m+1\}$, differ from those for $v$: $\{2m-1, 2m\}$:
\begin{equation}
\begin{aligned}
    w &= \bigotimes_{m} w_{m}, \quad  v = \bigotimes_{m} v_{m}, \\
   &  w_{m}: \mathcal{A}_{2m} \otimes \mathcal{A}_{2m+1} \rightarrow \mathcal{B}_{2m} \otimes \mathcal{B}_{2m+1}, \\ 
     &   v_{m}: \mathcal{B}_{2m-1} \otimes \mathcal{B}_{2m} \rightarrow \mathcal{A}_{2m-1} \otimes \mathcal{A}_{2m}.
\end{aligned}
\end{equation}
The intermediate algebras $\mathcal{B}_{n}$ usually vary from the observable algebra $\mathcal{A}_{n}$, yet they are proved to be isomorphic to the algebra of $b_n \times b_n$ complex matrices, i.e., $\mathcal{B}_{n} \cong \mathcal{M}_{b_{n}}$ \cite{schumacher2004reversiblequantumcellularautomata}. In general $\mathcal{B}_{2m} \ncong \mathcal{B}_{2m+1}$.

\subsection{GNVW index}
Ref.~\cite{Gross_2012} introduced the GNVW index of QCAs to quantify the ``net flow of quantum information" across the system. Two QCAs $u_{1}$ and $u_{2}$ can be concatenated or continuously deformed into each other only when their GNVW indices are identical.

\begin{definition} [GNVW index]
    Following the Margolus partitioning scheme of QCAs, suppose that the observable algebra $\mathcal{A}_{n}$ is isomorphic to the algebra for $a_n \times a_{n}$ complex matrices $\mathcal{M}_{a_{n}}$, and the intermediate algebra $\mathcal{B}_{n}$ is isomorphic to the algebra for $b_n \times b_{n}$ complex matrices $\mathcal{M}_{b_{n}}$. We denote
\begin{equation}
    \dim \mathcal{A}_{n} = a_{n}^{2}, \ \dim \mathcal{B}_{n} = b_{n}^{2}.
\end{equation}

The GNVW index \cite{Gross_2012} of a QCA is defined as 
\begin{equation}
    I_{GNVW}(u) \coloneq \sqrt{\frac{\dim \mathcal{B}_{2m}}{\dim \mathcal{A}_{2m}}} = \frac{b_{2m}}{a_{2m}}.
\end{equation}
\end{definition}
Focusing instead on the Hilbert spaces $\mathcal{H}_{\mathcal{A}_{n}}$ and $\mathcal{H}_{\mathcal{B}_{n}}$ where operator algebras $\mathcal{A}_{n}$ and $\mathcal{B}_{n}$ operate. Notice that $a_{2m} = \dim \mathcal{H}_{\mathcal{A}_{2m}}$ and $b_{2m} = \dim \mathcal{H}_{\mathcal{B}_{2m}}$ actually denote Hilbert space dimensions. We take the logarithm of the GNVW index to obey the additive nature of a flow:
\begin{equation}
\begin{aligned}
       \log_{d} I_{GNVW} (u)
       = \log_{d}{\dim \mathcal{H}_{\mathcal{B}_{2m}}} - \log_{d}{\dim \mathcal{H}_{\mathcal{A}_{2m}}} \\
\end{aligned}
\end{equation}

\subsection{Approximately locality preserving unitaries}
Among global unitary operators that violate the locality preserving condition, some unitary operators have approximate causal cones, and are called \textit{approximately locality preservation unitaries} (ALPU) \cite{ranard2022converse}. ALPUs are an important class of unitary operators, because time evolution by a local Hamiltonian will not have a strict causal cone;
instead, the dynamic given by local Hamiltonian evolutions satisfies the Lieb-Robinson bounds \cite{lieb1972finite}.

\begin{definition} [Near inclusion]
    An operator $a$ is nearly included \cite{ranard2022converse} in the algebra $\mathcal{B}$, denoted as $a \overset{\epsilon}{\in} \mathcal{B}$, if 
    \begin{equation}
        \exists b \in \mathcal{B}, \|a-b\| \le \epsilon \|a\|.
    \end{equation}
A algebra $\mathcal{A}$ is nearly included in the algebra $\mathcal{B}$ if all its elements are nearly included in $\mathcal{B}$:
\begin{equation}
    \forall a \in \mathcal{A}, a \overset{\epsilon}{\in} \mathcal{B}.
\end{equation}
\end{definition}

\begin{definition} [ALPU]
    An ALPU \cite{ranard2022converse} $\alpha$ with $f(r)$-tails is
    an automorphism of the quasi-local algebra such that
    \begin{equation}
        \alpha(O_{X}) \overset{f(r)}{\in} \mathcal{A}_{B(X,r)}.
    \end{equation}
where $f (r)$ is some positive function that satisfied $\lim_{r\rightarrow\infty} f(r)=0$.
\end{definition}

Ref.~\cite{ranard2022converse} shows that any 1D ALPU can be approximated by a sequence of QCAs:
\begin{proposition} [QCA approximations of an ALPU]
\label{prop: QCA approximations of an ALPU}
    An 1D ALPU $\alpha$ with $f (r)$-tails can be approximated by a sequence of QCAs $\{\beta_{j}\}_{j=1}^{\infty}$ \cite{ranard2022converse}, such that $\beta_{j}$ has radius $2j$,  and for any finite subset $X \subseteq \Gamma$:
    \begin{equation}
        \| (\alpha - \beta_{j})|_{\mathcal{A}_{X}}\| \le C_{f} \cdot f(j) \cdot \min \{|X|, \frac{ \left\lceil\text{diam}(X) \right \rceil}{j}\},
    \end{equation}
    where $C_{f}$ is a constant determined by the tails $f(r)$, $|X|$ indicates the number of sites in subset $X$, and $\text{diam}(X)$ represents the diameter of $X$.
\end{proposition}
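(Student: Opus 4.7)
The natural approach is a \emph{block-and-round} argument: coarse-grain the chain into blocks of size $j$ so that $\alpha$ becomes approximately nearest-neighbor, then perturb the resulting approximately multiplicative map into a genuine automorphism with strict radius $2j$.

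First, I would block $\mathbb{Z}$ into consecutive segments of length $j$, giving a coarse-grained chain with site algebras $\widetilde{\mathcal{A}}_m = \mathcal{A}_{[mj,(m+1)j-1]}$. Applying the near-inclusion hypothesis with $r=j$ to each block shows that for any $O \in \widetilde{\mathcal{A}}_m$,
\begin{equation}
    \alpha(O) \overset{f(j)}{\in} \widetilde{\mathcal{A}}_{m-1}\otimes \widetilde{\mathcal{A}}_{m}\otimes \widetilde{\mathcal{A}}_{m+1},
\end{equation}
so after blocking, $\alpha$ behaves like an approximate nearest-neighbor automorphism with spread error $f(j)$. Next I would project out the tails by defining $\tilde\beta(O) := E_m(\alpha(O))$ for $O \in \widetilde{\mathcal{A}}_m$, where $E_m$ is the normalized partial trace onto the three-block window around $m$. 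The near-inclusion bound gives $\|\tilde\beta(O)-\alpha(O)\| \le 2f(j)\|O\|$, and from this one checks that $\tilde\beta$ is unital, $*$-preserving, and $O(f(j))$-approximately multiplicative, with images on separated blocks approximately commuting on their overlapping windows.

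The key obstacle is then to promote the approximately multiplicative map $\tilde\beta$ to a genuine $*$-automorphism. Since the site algebras are full matrix algebras, a stability-of-homomorphisms theorem for nearly multiplicative maps between finite-dimensional $C^*$-algebras delivers a true automorphism $\tilde\beta'$ within $O(f(j))$ of $\tilde\beta$. Feeding $\tilde\beta'$ into the Margolus partitioning construction of Sec.~\ref{subsection: Margolus partitioning} factorizes it into two layers of unitaries acting on adjacent block-pairs of the blocked chain; upon unblocking, this yields a QCA $\beta_j$ on the original chain of radius at most $2j$. This rounding step is what requires genuine care, since merely having a completely positive approximation does not by itself yield an automorphism, and without it one cannot close the argument to produce bona fide QCAs.

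Finally, the quoted error bound follows from two complementary ways of decomposing $O_X \in \mathcal{A}_X$ through a telescoping sum. Writing $O_X$ as a sum of at most $|X|$ single-site contributions and applying the site-wise bound gives the $|X|$ factor. Alternatively, grouping $O_X$ into at most $\lceil\mathrm{diam}(X)\rceil/j + O(1)$ block-supported pieces and summing the per-block $O(f(j))$ error gives the $\lceil\mathrm{diam}(X)\rceil/j$ factor. Taking the minimum and absorbing the blocking, rounding, and telescoping constants into a single $C_f$ depending only on the tail $f$ yields the stated estimate.
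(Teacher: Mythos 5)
First, note that the paper does not actually prove this proposition: it is quoted verbatim from Ref.~\cite{ranard2022converse} as an imported result, so there is no in-paper proof to compare against. Judged on its own merits, your outline reproduces the correct high-level skeleton (block at scale $j$, use the $f(j)$-tails to get an approximate nearest-neighbor structure, repair it into an exact QCA, then telescope for the error bound), but the step you yourself flag as the "key obstacle" is precisely where the entire content of the theorem lives, and your proposal does not close it.

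The gap is in promoting $\tilde\beta$ to a genuine automorphism of the quasi-local algebra that is simultaneously (i) exactly multiplicative, (ii) globally consistent across the infinite chain, and (iii) strictly local with radius $2j$. A stability-of-homomorphisms theorem applied block by block gives you, at best, a family of corrected $*$-homomorphisms $\tilde\beta'_m$ on the individual block algebras $\widetilde{\mathcal{A}}_m$; it does not make the images of \emph{different} blocks exactly commute, nor does it make them jointly generate the target algebra, nor does it guarantee that the correction does not spread the support beyond the $2j$-window. Without exact commutation and joint surjectivity you cannot assemble the $\tilde\beta'_m$ into one automorphism, and without support control you cannot claim radius $2j$. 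The actual argument in Ref.~\cite{ranard2022converse} handles this with near-inclusion "rotation" lemmas from the perturbation theory of operator algebras (Christensen-type results): an algebra $\epsilon$-included in another can be conjugated into an exact inclusion by a unitary within $O(\epsilon)$ of the identity \emph{supported on the enclosing region}, and these rotations are applied iteratively with errors that telescope rather than accumulate. Your sketch would need to supply this machinery, or an equivalent, to be a proof. Secondarily, the error bound: a generic $O_X\in\mathcal{A}_X$ is not a sum of $|X|$ single-site operators, so the $|X|$ factor cannot come from decomposing $O_X$; it comes from a hybrid/telescoping argument over interpolating maps that agree with $\alpha$ on some sites and with $\beta_j$ on the rest, which is legitimate only because both maps are homomorphisms --- i.e., it presupposes that the rounding step has already succeeded.
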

The GNVW index applicable to QCAs extends to ALPUs as follows:
\begin{definition}[Index for ALPUs]
\label{def: Index for ALPUs}
    The index of an ALPU $\alpha$ is defined in Ref.~\cite{ranard2022converse} defines as:
    \begin{equation}
        I(\alpha) \coloneq \lim_{j \rightarrow \infty} I_{GNVW}(\beta_{j}).
    \end{equation}
\end{definition}
Ref.~\cite{ranard2022converse} showed that $I_{GNVW}(\beta_{j})$ stabilizes at large $j$ such that
\begin{equation}
    I_{GNVW}(\beta_{j_{1}}) = I_{GNVW}(\beta_{j_{2}}) \quad \text{for} \  j_{1}, j_{2} \ge j_{0}.
\end{equation}

\section{Unitary Network}
\label{Sec: UN}

This section outlines the architecture of unitary networks and highlights their distinctions from MPUs. Within a unitary network, the edges are oriented and the local tensors are unitary. Employing local unitary tensors in combination with a directed acyclic graph ensures the unitarity of the global tensor. We proposed an architecture named bilayer unitary networks. We demonstrate that without constraints on bond dimensions, they can represent any unitary on finite systems. We demonstrate that a unitary network generates a tensor network state when applied to a product state.

\subsection{Local unitary tensor}

The building block of a unitary network is a local unitary tensor. A unitary tensor is similar to a standard tensor, but its legs are directed, either incoming or outgoing. 
\begin{definition}[unitary tensor]
    A unitary tensor $U$ can be transformed into a unitary matrix $\tilde{U}$ by merging all its incoming legs into one incoming leg, and merging all its outgoing legs into one outgoing leg \cite{ferris2012perfect}:
    \begin{align} 
    \label{Eq: unitary tensor}
    \begin{array}{c}
    \tikzsetnextfilename{unitary_tensor}
        \begin{tikzpicture}
            \node[draw, rectangle, rounded corners=2pt, minimum width=0.6cm, minimum height=0.6cm, fill=tensorcolor, thick] (U)  {$U$};
            \draw[mid arrow=0.65] (0.3, 0.1) -- node[above] {$n$} (0.9, 0.1) ;
            \draw[mid arrow=0.65] (-0.9, 0.1) -- node[above] {$i$}(-0.3, 0.1);
            \draw[mid arrow=0.65] (0.9, -0.1) -- node[below]{$k$} (0.3, -0.1);
            \draw[mid arrow=0.65] (-0.3, -0.1) -- node[below]{$l$} (-0.9, -0.1);
            \draw[mid arrow=0.65] (0, 0.3) -- node[right]{$m$}(0, 0.9);
            \draw[mid arrow=0.65] (0, -0.9) -- node[right]{$j$}(0, -0.3);
        \end{tikzpicture}     
    \end{array}
        = 
    \begin{array}{c}
        \tikzsetnextfilename{unitary_matrix}
        \begin{tikzpicture}
             \node[draw, rectangle, rounded corners=2pt, minimum width=1.2cm, minimum height=0.6cm, fill=tensorcolor, thick] (a){$\Tilde{U}$};
             \draw[mid arrow=0.65] (-0.3, 0.3) -- (-0.3, 0.9) node[draw=none,fill=none,font=\scriptsize] at (-0.3, 1.1) {$l$};
             \draw[mid arrow=0.65] (0, 0.3) -- (0, 0.9) node[draw=none,fill=none,font=\scriptsize] at (0, 1.1) {$m$};
            \draw[mid arrow=0.65] (0.3, 0.3) -- (0.3, 0.9) node[draw=none,fill=none,font=\scriptsize] at (0.3, 1.1) {$n$};
            \draw[mid arrow=0.65] (-0.3, -0.9) -- (-0.3, -0.3) node[draw=none,fill=none,font=\scriptsize] at (-0.3, -1.1) {$i$};
             \draw[mid arrow=0.65] (0, -0.9) -- (0, -0.3) node[draw=none,fill=none,font=\scriptsize] at (0, -1.1) {$j$};
            \draw[mid arrow=0.65] (0.3, -0.9) -- (0.3, -0.3) node[draw=none,fill=none,font=\scriptsize] at (0.3, -1.1) {$k$};
        \end{tikzpicture} 
    \end{array} =
    \begin{array}{c}
        \tikzsetnextfilename{unitary_matrix_legs_merged}
        \begin{tikzpicture}
             \node[draw, rectangle, rounded corners=2pt, minimum width=0.6cm, minimum height=0.6cm, fill=tensorcolor, thick] (a){$\Tilde{U}$};
             \draw[mid arrow=0.65,thick] (0, 0.3) -- (0, 0.9) node[draw=none,fill=none,font=\scriptsize] at (0, 1.1) {$lmn$};
             \draw[mid arrow=0.65, thick] (0, -0.9) -- (0, -0.3) node[draw=none,fill=none,font=\scriptsize] at (0, -1.1) {$ijk$};
        \end{tikzpicture} 
    \end{array}
    ,
    \end{align}
satisfying
\begin{equation}
    \tilde{U}\tilde{U}^{\dagger} = \tilde{U}^{\dagger}\tilde{U} = I.
\end{equation}
\end{definition}
In a unitary tensor, each leg $e$ corresponds to a Hilbert space $\mathcal{H}_{e}$, and we recognize the dimension of the Hilbert space $\dim \mathcal{H}_{e}$ as that of the legs. Later, in a unitary network, contracting two tensor legs $i,j$ to create a bond results in a bond dimension of
\begin{equation}
    D = \dim \mathcal{H}_{i} = \dim \mathcal{H}_{j},
\end{equation}
aligning with conventional tensor network terminology.

The outgoing and incoming legs are denoted by upper and lower indices, respectively. For example, the local unitary tensor in (\ref{Eq: unitary tensor}) is denoted as:
\begin{equation}
    U^{lmn}_{ijk}.
\end{equation}
As with any tensor, the legs in unitary tensors can be split or merged. For a unitary tensor, the legs to be merged must share the same direction. Appendix \ref{Append: Splitting or merging legs} provides a brief review of the splitting or merging legs.

\subsection{Unitary network}

The contraction of unitary tensors adheres to the same principles as the contraction of general tensors. However, an extra constraint exists for contracting unitary tensors: An incoming leg must contract with an outgoing leg. A brief overview of unitary tensor contraction can be found in Appendix \ref{Append: Unitary tensor contraction}. A unitary network results from the contraction of a set of unitary tensors. 

\begin{definition} [Unitary network]
    A unitary network, denoted as $U_{Net}$, is constructed by connecting a set of unitary tensors. Upon evaluating this unitary network, all unitary tensors are contracted, producing a global tensor $U$ represented by: \begin{equation}
        U = \text{Eval}(U_{Net}),
    \end{equation}
    which may not be a unitary tensor in general.
    
    As suggested by the term network, each unitary network $U_{Net}$ corresponds to a directed graph $G[U_{Net}] = (V,E)$. The vertex set $V$ includes all interior nodes as well as sources and sinks:
    \begin{equation}
        V = \{ \text{internal nodes}\} \cup \{\text{sources}\} \cup \{\text{sinks} \}.
    \end{equation}
    An internal node symbolizes a local unitary tensor, featuring both incoming and outgoing connections. In a unitary network, the external outgoing edges point to sinks, which are nodes without outgoing edges. In contrast, sources provide the external incoming legs. The edge set $E \subseteq V \times V$ consists of all connections in a unitary network:
\begin{equation}
    \begin{aligned}
        E = &\{\text{directed bond legs between local tensors}\} \\
        &\cup \{ \text{external outgoing legs pointing to sinks}\} \\
        &\cup \{ \text{all incoming external legs originating from sources}\}.
    \end{aligned}
\end{equation}
\end{definition}
An example of a unitary network $U_{Net}$ and its graph $G[U_{Net}]$ is depicted below:
\begin{align}
\begin{array}{c}
\tikzsetnextfilename{UN_graph}
    \begin{tikzpicture}
    \node[draw, circle, minimum width=0.6cm, minimum height=0.6cm, fill=tensorcolor, thick] (in) at (0,-1.2) {$i$};
    \node[draw, circle, minimum width=0.6cm, minimum height=0.6cm, fill=tensorcolor, thick] (out1) at (-1,2.4) {$o_{1}$};
    \node[draw, circle, minimum width=0.6cm, minimum height=0.6cm, fill=tensorcolor, thick] (out2) at (1,2.4) {$o_{2}$};
   \node[draw, rectangle, rounded corners=2pt, minimum width=0.6cm, minimum height=0.6cm, fill=tensorcolor, thick] (A) at (0,0) {$A$};
   \node[draw, rectangle, rounded corners=2pt, minimum width=0.6cm, minimum height=0.6cm, fill=tensorcolor, thick] (B) at (1,1) {$B$};
   \node[draw, rectangle, rounded corners=2pt, minimum width=0.6cm, minimum height=0.6cm, fill=tensorcolor, thick] (C) at (-1,1) {$C$};
    \draw[mid arrow=0.65] (C.90) -- (out1.270);
    \draw[mid arrow=0.65] (B.90) -- (out2.270);
    \draw[mid arrow=0.65] (in.90) to [out = 90,in=270]  (A.270);
    \draw[mid arrow=0.65] (A.0) to [out = 0,in=270]  (B.270);
    \draw[mid arrow=0.65] (C.270) to [out = 270,in=180]  (A.180);
    \draw[mid arrow=0.65] ([yshift = 0.1cm] B.180) to [out = 180,in=0]  ([yshift = 0.1cm] C.0);
    \draw[mid arrow=0.65] ([yshift = -0.1cm] C.0) to [out = 0,in=180]  ([yshift = -0.1cm] B.180);
\end{tikzpicture}
\end{array},
\end{align}
whose vertex and edge sets are
\begin{equation}
\begin{aligned}
    V =& \{ A,B,C,i, o_{1}, o_{2}  \}, \\
    E =& \{ (i, A), (A,B), (B,C), (B, o_{2}), \\
    &(C, A), (C,B), (C,o_{1})  \}. 
\end{aligned}
\end{equation}

More often, we omit the source and sink vertices, resulting in dangling external edges: 
\begin{align} \label{eq: unitary network example}
\begin{array}{c}
\tikzsetnextfilename{Unitary_network}
    \begin{tikzpicture}
   \node[draw, rectangle, rounded corners=2pt, minimum width=0.6cm, minimum height=0.6cm, fill=tensorcolor, thick] (A) at (0,0) {$A$};
   \node[draw, rectangle, rounded corners=2pt, minimum width=0.6cm, minimum height=0.6cm, fill=tensorcolor, thick] (B) at (1,1) {$B$};
   \node[draw, rectangle, rounded corners=2pt, minimum width=0.6cm, minimum height=0.6cm, fill=tensorcolor, thick] (C) at (-1,1) {$C$};
    \draw[mid arrow=0.65] (A.0) to [out = 0,in=270]  (B.270);
    \draw[mid arrow=0.65] (C.270) to [out = 270,in=180]  (A.180);
    \draw[mid arrow=0.65] ([yshift = 0.1cm] B.180) to [out = 180,in=0]  ([yshift = 0.1cm] C.0);
    \draw[mid arrow=0.65] ([yshift = -0.1cm] C.0) to [out = 0,in=180]  ([yshift = -0.1cm] B.180);
    \draw[mid arrow=0.65] (A.270)+(0,-0.6) --  (A.270);
    \draw[mid arrow=0.65] (B.90) --  ++(0,0.6);
    \draw[mid arrow=0.65] (C.90) --  ++(0,0.6);
\end{tikzpicture}
\end{array}.
\end{align}

By removing and adding sources and sinks when necessary, we can define the sub-network of a unitary network as follows: 
\begin{definition} [Sub-network of a unitary network]
Consider a unitary network $U_{Net}$ with an associated graph $G[U_{Net}]=(V,E)$. A sub-network $U_{Net}^{S}$ consists of the subgraph $G(U_{Net}^{S}) = (V^{S}, E^{S}) \subseteq G[U_{Net}]$, made up of certain internal nodes (local unitary tensors) and edges (legs) from $U_{Net}$.
\end{definition}
An example of a sub-network for the unitary network in \eqref{eq: unitary network example} is illustrated below.

\begin{align} \label{eq: sub unitary network example}
\begin{array}{c}
\tikzsetnextfilename{sub_UN}
    \begin{tikzpicture}
   \node[draw, rectangle, rounded corners=2pt, minimum width=0.6cm, minimum height=0.6cm, fill=tensorcolor, thick] (A) at (0,0) {$A$};
   \node[draw, rectangle, rounded corners=2pt, minimum width=0.6cm, minimum height=0.6cm, fill=tensorcolor, thick] (B) at (1,1) {$B$};
    \draw[mid arrow=0.65] (A.0) to [out = 0,in=270]  (B.270);
    \draw[mid arrow=0.65]   ([xshift=-0.6cm]A.180) -- (A.180);
    \draw[mid arrow=0.65] ([yshift = 0.1cm] B.180) -- ++(-0.6, 0);
    \draw[mid arrow=0.65] ([xshift=-0.6cm, yshift = -0.1cm] B.180) to [out = 0,in=180]  ([yshift = -0.1cm] B.180);
    \draw[mid arrow=0.65] (A.270)+(0,-0.6) --  (A.270);
    \draw[mid arrow=0.65] (B.90) --  ++(0,0.6);
\end{tikzpicture}
\end{array}.
\end{align}
Sub-networks feature external legs disconnected from physical sites, but associated with bond degrees of freedom. The Hilbert spaces of these external legs can be labeled by the vertices they attach to. For example, $\mathcal{H}_{v,i}^{in}$ and $\mathcal{H}_{v,j}^{out}$ denote the Hilbert spaces for the $i$-th input leg and $j$-th output leg at vertex $v$. The corresponding observable algebras are denoted by $\mathcal{A}_{v,i}^{in}$ and $\mathcal{A}_{v,j}^{out}$.

\subsection{Global unitarity}

It may be tempting to assume that the unitarity of local unitary tensors automatically ensures the unitarity of the global tensor. However, this is not the case. It can be readily shown that certain unitary networks do not form a global unitary operator. Consider the following example:
\begin{align}
    \tikzsetnextfilename{Self_contraction}
    \begin{tikzpicture}
        \node[draw, rectangle, rounded corners=2pt, minimum width=0.8cm, minimum height=0.6cm, fill=tensorcolor, thick] (I) {$I$};
        \draw[mid arrow=0.6] ([xshift=0.2cm]I.90) -- ++(0,0.4) -- ++(0.4,0) --  ++(0,-1.4) -- ++(-0.4,0) -- ([xshift=0.2cm]I.270);
        \draw[mid arrow=0.65]([xshift=-0.2cm]I.90) -- ++(0,0.4);
        \draw[mid arrow=0.65] ([xshift=-0.2cm, yshift=-0.4cm]I.270) -- ([xshift=-0.2cm]I.270);
    \end{tikzpicture}
\end{align}
The identity matrix for the combined system $A + B$ is shown above, followed by the partial trace performed on subsystem $B$. The resulting tensor can be expressed as:
\begin{equation}
    \text{Tr}_{B}{I} = \dim \mathcal{H}_{B} \cdot I_{A},
\end{equation}
where $\dim \mathcal{H}_{B}$ represents the dimension of the Hilbert space linked to subsystem B. Evidently, this is not a unitary tensor. As will be discussed, a unitary network can generate a non-unitary global tensor if its directed graph has loops.

\begin{definition} [Directed Acyclic Graph]
    A Directed Acyclic Graph (DAG) \cite{manber1989introduction} is a directed graph in which no directed paths form a cycle.
\end{definition}
DAGs are widely used across various domains and possess the following nice property:
\begin{lemma} [Topologically sorting for DAG]
For any DAG, we can find a total ordering $<$ of vertices in a directed cyclic graph (DAG) such that for every directed edge $(A,B) \in E$, we have $A < B$ in the ordering. The proof and the algorithm for topological sorting are provided in Ref.~\cite{manber1989introduction}. 
\end{lemma}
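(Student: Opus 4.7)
The plan is a proof by induction on the number of vertices $|V|$, with the base case $|V|=1$ being trivial. The inductive engine is the claim that every finite nonempty DAG possesses at least one source, i.e.\ a vertex with no incoming edge. Once this claim is in hand, the desired total ordering can be built by removing a source, ordering what remains by induction, and prepending the removed source.

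First I would establish the source-existence claim. Suppose for contradiction that every vertex of a finite DAG $G=(V,E)$ has at least one in-neighbor. Starting from any $v_0$, pick $v_1$ with $(v_1,v_0)\in E$, then $v_2$ with $(v_2,v_1)\in E$, and so on. This produces an infinite backward sequence $v_0, v_1, v_2, \dots$ drawn from a finite set, so some pair of indices $i<j$ satisfies $v_i=v_j$. The backward walk $v_j \to v_{j-1} \to \cdots \to v_i = v_j$ is then a directed cycle, contradicting acyclicity.

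For the inductive step, suppose the lemma holds for all DAGs on at most $n$ vertices and let $G=(V,E)$ have $|V|=n+1$. Choose a source $s$, remove it together with all edges leaving it, and apply the induction hypothesis to the resulting smaller DAG $G'$ to obtain a total ordering $<'$ of $V\setminus\{s\}$ respecting every edge of $G'$. Extend $<'$ to a total ordering $<$ on $V$ by declaring $s < v$ for all $v \neq s$. For any edge $(A,B)\in E$: if $A=s$ then $A<B$ by construction, and if $A\neq s$ then $B\neq s$ as well (since $s$ has no incoming edges), so the edge survives in $G'$ and $A <' B$ gives $A < B$.

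The only genuine obstacle is the source-existence step, which crucially uses the interplay of finiteness and acyclicity: remove either ingredient and the conclusion fails. In the infinite unitary networks considered later in the paper, the argument requires more care, and one would likely need an additional structural assumption (for instance, that each vertex has only finitely many ancestors) in order to carry out the same inductive construction or to appeal to well-ordering to produce the ordering.
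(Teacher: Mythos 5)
Your proof is correct and is the standard argument (source removal plus induction, i.e.\ the idea behind Kahn's algorithm); the paper itself gives no proof and simply defers to Ref.~\cite{manber1989introduction}, which presents essentially this same construction. Your closing remark is also well taken: the lemma as stated holds only for finite DAGs, and since the paper later applies topological ordering to infinite unitary networks (infinite OBC chains), some additional hypothesis --- such as each vertex having finitely many ancestors, which does hold for the bilayer architectures considered --- is genuinely needed there; the paper glosses over this point.
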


\begin{proposition} [Unitarity of unitary network] 
\label{prop: Unitarity of unitary network}
A unitary network $U_{Net}$ forms a global unitary tensor if its corresponding graph $G[U_{Net}] = (V,E)$ is a DAG. Additionally, all its sub-networks $U^{S}_{Net}$ are global unitaries.

Conversely, if the graph of a unitary network contains directed loops, the overall tensor typically loses its unitary property.
\end{proposition}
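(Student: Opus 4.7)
The plan is to prove unitarity by induction on the number of internal vertices, using topological sorting of the DAG to peel off one local unitary at a time. The key observation is that contracting an outgoing leg of one unitary with an incoming leg of another amounts, up to tensoring with identities and permuting Hilbert space factors, to composing two unitaries, and hence preserves unitarity.

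First I would invoke the topological sorting lemma to obtain an ordering $v_1 < \cdots < v_n$ of the internal vertices such that every directed edge $(v_a,v_b) \in E$ satisfies $a < b$; this ordering exists precisely because $G[U_{Net}]$ is a DAG. Then I induct on $n$. The base case $n=1$ is immediate: the ``global tensor'' is a single local unitary. For the inductive step I remove the final vertex $v_n$ to obtain a sub-network $U_{Net}'$ whose graph $G' = G[U_{Net}] \setminus \{v_n\}$ is still a DAG, with a modified external-leg structure: edges of $U_{Net}$ that pointed into $v_n$ now become external outgoing legs of $U_{Net}'$, while sources feeding directly into $v_n$ are no longer present. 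By the induction hypothesis, $\tilde U' := \mathrm{Eval}(U_{Net}')$ is unitary.

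Next, I reassemble $\mathrm{Eval}(U_{Net})$ from $\tilde U'$ and the local unitary $\tilde U_{v_n}$ at $v_n$. Writing $H_{v_n}^{\mathrm{in}} = H_{\mathrm{src}} \otimes H_{\mathrm{int}}$, where $H_{\mathrm{src}}$ collects sources feeding directly into $v_n$ and $H_{\mathrm{int}}$ collects the edges from $v_1,\ldots,v_{n-1}$ into $v_n$, and letting $H_{\mathrm{rem}}$ denote the remaining outgoing legs of $\tilde U'$ (those not contracted with $v_n$), the global evaluation factors as
\begin{equation}
\mathrm{Eval}(U_{Net}) \;=\; \bigl(I_{H_{\mathrm{rem}}} \otimes \tilde U_{v_n}\bigr) \circ P \circ \bigl(\tilde U' \otimes I_{H_{\mathrm{src}}}\bigr),
\end{equation}
where $P$ is a unitary factor permutation reshuffling $H_{\mathrm{rem}} \otimes H_{\mathrm{int}} \otimes H_{\mathrm{src}}$ into $H_{\mathrm{rem}} \otimes (H_{\mathrm{src}} \otimes H_{\mathrm{int}})$. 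Each factor is manifestly unitary, so the composition is unitary, completing the induction. The sub-network statement is then immediate: the graph of any sub-network is a subgraph of the DAG $G[U_{Net}]$, hence itself a DAG, and the same argument applies. For the converse, it suffices to display a counterexample in which a directed cycle destroys unitarity; the self-loop construction shown right before the proposition already does this, since $\mathrm{Tr}_B\, I = (\dim \mathcal{H}_B)\, I_A$ has operator norm $\dim \mathcal{H}_B \neq 1$ whenever the loop subsystem is nontrivial.

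The main obstacle is not deep mathematics but careful bookkeeping: one must verify that the topological ordering guarantees all inputs of $v_n$ are produced either by already-processed vertices or by external sources, so that the factorization $H_{v_n}^{\mathrm{in}} = H_{\mathrm{src}} \otimes H_{\mathrm{int}}$ is well defined, and one must track the correspondence between the output-leg ordering of $\tilde U'$ and the input-leg ordering of $\tilde U_{v_n}$, which contributes only the unitary permutation $P$. With these bookkeeping details in place, the result reduces to the elementary fact that tensoring, permuting, and composing unitaries yields unitaries.
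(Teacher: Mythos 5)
Your proof is correct and follows essentially the same route as the paper: topologically sort the DAG, treat the order as a causal sequence, and observe that each step is a composition of a local unitary (tensored with identity) with a unitary permutation of Hilbert-space factors; your induction on the number of vertices simply makes the paper's sequential argument explicit, and your treatment of the sub-network claim and the traced-identity counterexample matches the paper's.
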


\begin{proof}
We begin with a topological sorting of the DAG to establish a strict total order $<$ of the vertices (local unitary tensors). This ordering is then treated as a causal sequence of unitary operations: beginning with sources $in$, symbolizing the incoming Hilbert space, we perform the local unitary operations in sequence. Observe that edge propagation merely involves permutations within the Hilbert space, which are unitary operations. This ensures that the global tensor remains unitary. 

For a DAG, any subgraph remains a DAG, ensuring the sub-network forms a global unitary.
\end{proof}
Illustrated here is an example of a unitary network structured as a DAG. Following the topological sort, the sequence of each local unitary tensor is shown in parentheses. Unitarity is preserved at each step of applying a local unitary tensor.
\begin{align}
\label{Eq:DAG=circuit}
\begin{array}{c}
    \tikzsetnextfilename{UN_DAG}
    \begin{tikzpicture}
    \node[draw, circle, minimum width=0.6cm, minimum height=0.6cm, fill=tensorcolor, thick] (in) at (0,-1.2) {$i$};
    \node[draw, circle, minimum width=0.6cm, minimum height=0.6cm, fill=tensorcolor, thick] (out1) at (-1,2.4) {$o_{1}$};
    \node[draw, circle, minimum width=0.6cm, minimum height=0.6cm, fill=tensorcolor, thick] (out2) at (1,2.4) {$o_{2}$};
   \node[draw, rectangle, rounded corners=2pt, minimum width=0.6cm, minimum height=0.6cm, fill=tensorcolor, thick] (A) at (0,0) {$A (1)$};
   \node[draw, rectangle, rounded corners=2pt, minimum width=0.6cm, minimum height=0.6cm, fill=tensorcolor, thick] (B) at (1,1) {$B (2)$};
   \node[draw, rectangle, rounded corners=2pt, minimum width=0.6cm, minimum height=0.6cm, fill=tensorcolor, thick] (C) at (-1,1) {$C (3)$};
    \draw[mid arrow=0.65] (C.90) to [out = 90,in=270]  (out1.270);
    \draw[mid arrow=0.65] (B.90) to [out = 90,in=270]  (out2.270);
    \draw[mid arrow=0.65] (in.90) to [out = 90,in=270]  (A.270);
    \draw[mid arrow=0.65] (A.0) to [out = 0,in=270]  (B.270);
    \draw[mid arrow=0.65] (A.180) to [out = 180,in=270]  (C.270);
    \draw[mid arrow=0.65] (B.180) to [out = 180,in=0]  (C.0);
\end{tikzpicture}
\end{array} = 
\begin{array}{c}
\tikzsetnextfilename{UN_circuit}
\begin{tikzpicture}
    \node[draw, circle, minimum width=0.6cm, minimum height=0.6cm, fill=tensorcolor, thick] (in) at (0,-1.2) {$i$};
    \node[draw, circle, minimum width=0.6cm, minimum height=0.6cm, fill=tensorcolor, thick] (out1) at (0,3.6) {$o_{1}$};
    \node[draw, circle, minimum width=0.6cm, minimum height=0.6cm, fill=tensorcolor, thick] (out2) at (0.9,3.6) {$o_{2}$};
   \node[draw, rectangle, rounded corners=2pt, minimum width=1cm, minimum height=0.6cm, fill=tensorcolor, thick] (A) at (0,0) {$A (1)$};
   \node[draw, rectangle, rounded corners=2pt, minimum width=0.6cm, minimum height=0.6cm, fill=tensorcolor, thick] (B) at (0.6,1.2) {$B (2)$};
   \node[draw, rectangle, rounded corners=2pt, minimum width=1cm, minimum height=0.6cm, fill=tensorcolor, thick] (C) at (0,2.4) {$C (3)$};
    \draw[mid arrow=0.65] (C.90) to [out = 90,in=270]  (out1.270);
    \draw[mid arrow=0.65] ([xshift=0.3cm]B.90) to [out = 90,in=270]  (out2.270);
    \draw[mid arrow=0.65] (in.90) to [out = 90,in=270]  (A.270);
    \draw[mid arrow=0.65] ([xshift=0.3cm]A.90) --  ([xshift=-0.3cm]B.270);
    \draw[mid arrow=0.65] ([xshift=-0.3cm]A.90) to  ([xshift=-0.3cm]C.270);
    \draw[mid arrow=0.65] ([xshift=-0.3cm]B.90) --  ([xshift=0.3cm]C.270);
\end{tikzpicture}
\end{array}
.
\end{align}
The inverse of proposition \ref{prop: Unitarity of unitary network} is not true. With directed loops, a unitary network may still constitute a global unitary. For example, starting from (\ref{Eq:DAG=circuit}), by first contracting $C$ with $A$ into $CA$, we obtain the following digram,
\begin{align}
\label{Eq:contract CA}
\begin{array}{c}
    \tikzsetnextfilename{UN_DAG_contracted}
    \begin{tikzpicture}
    \node[draw, circle, minimum width=0.6cm, minimum height=0.6cm, fill=tensorcolor, thick] (in) at (-1,-0.2) {$i$};
    \node[draw, circle, minimum width=0.6cm, minimum height=0.6cm, fill=tensorcolor, thick] (out1) at (-1,2.2) {$o_{1}$};
    \node[draw, circle, minimum width=0.6cm, minimum height=0.6cm, fill=tensorcolor, thick] (out2) at (1,2.2) {$o_{2}$};
   \node[draw, rectangle, rounded corners=2pt, minimum width=0.6cm, minimum height=0.6cm, fill=tensorcolor, thick] (B) at (1,1) {$B$};
   \node[draw, rectangle, rounded corners=2pt, minimum width=0.6cm, minimum height=0.6cm, fill=tensorcolor, thick] (CA) at (-1,1) {$CA$};
    \draw[mid arrow=0.65] (CA.90) to [out = 90,in=270]  (out1.270);
    \draw[mid arrow=0.65] (B.90) to [out = 90,in=270]  (out2.270);
    \draw[mid arrow=0.65] (in.90) to [out = 90,in=270]  (CA.270);
    \draw[mid arrow=0.65] ([yshift=-0.1cm]CA.0) to [out = 0,in=180]  ([yshift=-0.1cm]B.180);
    \draw[mid arrow=0.65] ([yshift=0.1cm]B.180) to [out = 180,in=0]  ([yshift=0.1cm]CA.0);
\end{tikzpicture}
\end{array}
\end{align}
which clearly contains a directed loop $CA \rightarrow B, B\rightarrow CA$. Since it originates from a DAG, it inherently possesses global unitarity. In some sense, this directed loop in the graph is not intrinsic.  This situation occurs when later a unitary network is utilized to model a Quantum Cellular Automata on a PBC system.

Although the topological sorting of local unitary tensors is employed to demonstrate global unitarity and aids in converting the unitary network into a quantum circuit, it is not mandatory to adhere to this order during contraction. In evaluating a unitary network $U = \text{Eval}(U_{Net})$, the contraction sequence can be freely selected to reduce computational cost. In the preceding example (\ref{Eq:contract CA}), C is first contracted with A. Although this contraction creates directed loops, it does not impact global unitarity.

\subsection{Bilayer unitary network}
\label{subsection: Bilayer}

In general, the graph of a unitary network can be arbitrary. However, when describing a global unitary on the lattice system $\mathbb{Z}^{D}$, it is natural for the unitary network to adopt a lattice structure, essentially forming a non-uniform MPU \cite{Styliaris_2025} (or 1D PEPU \cite{Piroli_2020}) with directed legs. The following is a one-dimensional example:
\begin{align}
\tikzsetnextfilename{UN_graph_with_loops2}
    \begin{tikzpicture}
        \node[draw, rectangle, rounded corners=2pt, minimum width=0.6cm, minimum height=0.6cm, fill=tensorcolor, thick](A) at (-1.2, 0) {};
        \node[draw, rectangle, rounded corners=2pt, minimum width=0.6cm, minimum height=0.6cm, fill=tensorcolor, thick](B) {};
        \node[draw, rectangle, rounded corners=2pt, minimum width=0.6cm, minimum height=0.6cm, fill=tensorcolor, thick](C) at (1.2, 0) {};
        \draw[mid arrow=0.65] (A.90) -- ++(0,0.6);
        \draw[mid arrow=0.65] (B.90) -- ++(0,0.6);
        \draw[mid arrow=0.65] (C.90) -- ++(0,0.6);
        \draw[mid arrow=0.65] ([yshift=-0.6cm]A.270) -- (A.270);
        \draw[mid arrow=0.65] ([yshift=-0.6cm]B.270) -- (B.270);
        \draw[mid arrow=0.65] ([yshift=-0.6cm]C.270) -- (C.270);
        \draw[mid arrow=0.65] ([yshift=0.1cm]A.0) -- ([yshift=0.1cm]B.180);
        \draw[mid arrow=0.65] ([yshift=0.1cm]B.0) -- ([yshift=0.1cm]C.180);
        \draw[mid arrow=0.65] ([yshift=-0.1cm]C.180) -- ([yshift=-0.1cm]B.0);
        \draw[mid arrow=0.65] ([yshift=-0.1cm]B.180) -- ([yshift=-0.1cm]A.0);
    \end{tikzpicture}
\end{align}
However, the single-layer architecture above features a directed loop, thus the unitarity of the global tensor may not be guaranteed. To tackle this issue, we propose an efficient architecture for a unitary network, incorporating two layers of local unitary tensors.
\begin{definition} [Bilayer unitary network]
A bilayer unitary network consists of two layers of local unitary tensors. Two layers are stacked in the temporal direction. 

This paper consistently employs the right canonical form without explicit specification: Within the bottom layer, the horizontal legs are uniformly pointed in the positive direction, while within the upper layer, the horizontal legs are uniformly pointed in the negative direction:

\begin{align} \label{Eq: bilayer unitary network}
\tikzsetnextfilename{Bilayer_UN_DAG2}
    \begin{tikzpicture}
        \node[draw, rectangle, rounded corners=2pt, minimum width=0.6cm, minimum height=0.6cm, fill=tensorcolor, thick](A) at (-1.2, 0) {1};
        \node[draw, rectangle, rounded corners=2pt, minimum width=0.6cm, minimum height=0.6cm, fill=tensorcolor, thick](B) {2};
        \node[draw, rectangle, rounded corners=2pt, minimum width=0.6cm, minimum height=0.6cm, fill=tensorcolor, thick](C) at (1.2, 0) {3};
        \node[draw, rectangle, rounded corners=2pt, minimum width=0.6cm, minimum height=0.6cm, fill=tensorcolor, thick](D) at (-1.2, 1.2) {6};
        \node[draw, rectangle, rounded corners=2pt, minimum width=0.6cm, minimum height=0.6cm, fill=tensorcolor, thick](E) at (0, 1.2) {5};
        \node[draw, rectangle, rounded corners=2pt, minimum width=0.6cm, minimum height=0.6cm, fill=tensorcolor, thick](F) at (1.2, 1.2) {4};
        \draw[mid arrow=0.65] (D.90) -- ++(0,0.6);
        \draw[mid arrow=0.65] (E.90) -- ++(0,0.6);
        \draw[mid arrow=0.65] (F.90) -- ++(0,0.6);
        \draw[mid arrow=0.65] ([yshift=-0.6cm]A.270) -- (A.270);
        \draw[mid arrow=0.65] ([yshift=-0.6cm]B.270) -- (B.270);
        \draw[mid arrow=0.65] ([yshift=-0.6cm]C.270) -- (C.270);
        \draw[mid arrow=0.65] (A.0) -- (B.180);
        \draw[mid arrow=0.65] (B.0) -- (C.180);
        \draw[mid arrow=0.65] (F.180) -- (E.0);
        \draw[mid arrow=0.65] (E.180) -- (D.0);
        \draw[mid arrow=0.65] (A.90) -- (D.270);
        \draw[mid arrow=0.65] (B.90) -- (E.270);
        \draw[mid arrow=0.65] (C.90) -- (F.270);
    \end{tikzpicture}.
\end{align}
\end{definition}
The bilayer unitary network above forms a DAG, where the labels of each local unitary tensor correspond to a sequence derived from a topological sort:
\begin{equation}
    1 < 2 < 3 < 4 < 5 < 6.
\end{equation}
Nevertheless, the right canonical form is not the only option; alternative canonical forms may also be employed for a bilayer unitary network.
\begin{definition} [Canonical form of tensor network]
\label{def: Canonical form}

A tensor network $U_{Net}$ is said to be in canonical form with center $C = (V_{C}, E_{C}) \subseteq G[U_{Net}]$ a subgraph of it, if its graph $G[U_{Net}] = (V,E)$ satisfies
\begin{enumerate} [label=(\roman*)]
    \item $G[U_{Net}]$ is a DAG.
    \item Following a topological sort, the local unitaries within $C$ can align consecutively.
\end{enumerate}
The center $C$ constitutes a sub-network, denoted by $U^{C}_{Net} \subseteq U_{Net}$.
\end{definition}
An example of a bilayer unitary network with the center $C$ colored blue is shown below. Observe that the arrow directions differ from the right canonical form of bilayer unitary networks.
\begin{align}
\tikzsetnextfilename{canonical_form}
       \begin{tikzpicture}
        \node[draw, rectangle, rounded corners=2pt, minimum width=0.6cm, minimum height=0.6cm, fill=tensorcolor, thick](A1) at (-1.2, 0) {1};
        \node[draw, rectangle, rounded corners=2pt, minimum width=0.6cm, minimum height=0.6cm, fill=btensorcolor, thick](B1) at (0,0) {3};
        \node[draw, rectangle, rounded corners=2pt, minimum width=0.6cm, minimum height=0.6cm, fill=tensorcolor, thick](C1) at (1.2, 0) {2};
        \node[draw, rectangle, rounded corners=2pt, minimum width=0.6cm, minimum height=0.6cm, fill=tensorcolor, thick](A2) at (-1.2, 1.2) {5};
        \node[draw, rectangle, rounded corners=2pt, minimum width=0.6cm, minimum height=0.6cm, fill=btensorcolor, thick](B2) at (0, 1.2) {4};
        \node[draw, rectangle, rounded corners=2pt, minimum width=0.6cm, minimum height=0.6cm, fill=tensorcolor, thick](C2) at (1.2, 1.2) {6};
        \draw[mid arrow=0.65] (A2.90) -- ++(0,0.6);
        \draw[mid arrow=0.65] (B2.90) -- ++(0,0.6);
        \draw[mid arrow=0.65] (C2.90) -- ++(0,0.6);
        \draw[mid arrow=0.65] ([yshift=-0.6cm]A1.270) -- (A1.270);
        \draw[mid arrow=0.65] ([yshift=-0.6cm]B1.270) -- (B1.270);
        \draw[mid arrow=0.65] ([yshift=-0.6cm]C1.270) -- (C1.270);
        \draw[mid arrow=0.65] ([xshift=-0.6cm]A1.180) -- (A1.180);
        \draw[mid arrow=0.65] (A1.0) -- (B1.180);
        \draw[mid arrow=0.65] (C1.180) -- (B1.0);
        \draw[mid arrow=0.65] ([xshift=0.6cm]C1.0) -- (C1.0);
        \draw[mid arrow=0.65] (C2.0) -- ++(0.6,0);
        \draw[mid arrow=0.65]  (B2.0) -- (C2.180);
        \draw[mid arrow=0.65] (B2.180) -- (A2.0);
        \draw[mid arrow=0.65] (A2.180) -- ++(-0.6,0);
        \draw[mid arrow=0.65] (A1.90) -- (A2.270);
        \draw[mid arrow=0.65] (B1.90) -- (B2.270);
        \draw[mid arrow=0.65] (C1.90) -- (C2.270);
    \end{tikzpicture}
\end{align}

Bilayer unitary networks are applicable to higher-dimensional systems. Illustrated below is a bilayer unitary network for a two-dimensional system, with each local unitary tensor represented by a dot:
\begin{align}
\label{Eq: BUN_2D}
\tikzsetnextfilename{Bilayer_UN_2D}
    \begin{tikzpicture}
    \tdplotsetmaincoords{60}{120}
    \begin{scope}[tdplot_main_coords, shift={(0,-3,0)}]
        \draw[-to] (0,0,0) -- (1,0,0) node[anchor=west]{$x$};
	\draw[-to] (0,0,0) -- (0,1,0)         node[anchor=west]{$y$};
	\draw[-to] (0,0,0) -- (0,0,1) node[anchor=west]{$t$};
    \end{scope}
  \begin{scope}[tdplot_main_coords]
         \node[draw, circle, scale=0.5, fill=tensorcolor] (AA1) at (-1, -1, 0) {};
        \node[draw, circle, scale=0.5, fill=tensorcolor] (BA1) at (0, -1, 0) {};
        \node[draw, circle, scale=0.5, fill=tensorcolor] (CA1) at (1, -1, 0) {};
        \node[draw, circle, scale=0.5, fill=tensorcolor] (AB1) at (-1, 0, 0) {};
        \node[draw, circle, scale=0.5, fill=tensorcolor] (BB1) at (0, 0, 0) {};
        \node[draw, circle, scale=0.5, fill=tensorcolor] (CB1) at (1, 0, 0) {};
        \node[draw, circle, scale=0.5, fill=tensorcolor] (AC1) at (-1, 1, 0) {};
        \node[draw, circle, scale=0.5, fill=tensorcolor] (BC1) at (0, 1, 0) {};
        \node[draw, circle, scale=0.5, fill=tensorcolor] (CC1) at (1, 1, 0) {};
         \node[draw, circle, scale=0.5, fill=tensorcolor] (AA2) at (-1, -1, 3) {};
        \node[draw, circle, scale=0.5, fill=tensorcolor] (BA2) at (0, -1, 3) {};
        \node[draw, circle, scale=0.5, fill=tensorcolor] (CA2) at (1, -1, 3) {};
        \node[draw, circle, scale=0.5, fill=tensorcolor] (AB2) at (-1, 0, 3) {};
        \node[draw, circle, scale=0.5, fill=tensorcolor] (BB2) at (0, 0, 3) {};
        \node[draw, circle, scale=0.5, fill=tensorcolor] (CB2) at (1, 0, 3) {};
        \node[draw, circle, scale=0.5, fill=tensorcolor] (AC2) at (-1, 1, 3) {};
        \node[draw, circle, scale=0.5, fill=tensorcolor] (BC2) at (0, 1, 3) {};
        \node[draw, circle, scale=0.5, fill=tensorcolor] (CC2) at (1, 1, 3) {};
        \draw[mid arrow=0.65] (AA1) -- (AA2);
        \draw[mid arrow=0.65] (BA1) -- (BA2);
        \draw[mid arrow=0.65] (CA1) -- (CA2);
        \draw[mid arrow=0.65] (AB1) -- (AB2);
        \draw[mid arrow=0.65] (BB1) -- (BB2);
        \draw[mid arrow=0.65] (CB1) -- (CB2);
        \draw[mid arrow=0.65] (AC1) -- (AC2);
        \draw[mid arrow=0.65] (BC1) -- (BC2);
        \draw[mid arrow=0.65] (CC1) -- (CC2);
        \draw[mid arrow=0.65] (AA1) -- (BA1);
        \draw[mid arrow=0.65] (BA1) -- (CA1);
        \draw[mid arrow=0.65] (AB1) -- (BB1);
        \draw[mid arrow=0.65] (BB1) -- (CB1);
        \draw[mid arrow=0.65] (AC1) -- (BC1);
        \draw[mid arrow=0.65] (BC1) -- (CC1);
        \draw[mid arrow=0.65] (AA1) -- (AB1);
        \draw[mid arrow=0.65] (AB1) -- (AC1);
        \draw[mid arrow=0.65] (BA1) -- (BB1);
        \draw[mid arrow=0.65] (BB1) -- (BC1);
        \draw[mid arrow=0.65] (CA1) -- (CB1);
        \draw[mid arrow=0.65] (CB1) -- (CC1);
        \draw[mid arrow=0.65] (BA2) -- (AA2);
        \draw[mid arrow=0.65] (CA2) -- (BA2);
        \draw[mid arrow=0.65] (BB2) -- (AB2);
        \draw[mid arrow=0.65] (CB2) -- (BB2);
        \draw[mid arrow=0.65] (BC2) -- (AC2);
        \draw[mid arrow=0.65] (CC2) -- (BC2);
        \draw[mid arrow=0.65] (AB2) -- (AA2);
        \draw[mid arrow=0.65] (AC2) -- (AB2);
        \draw[mid arrow=0.65] (BB2) -- (BA2);
        \draw[mid arrow=0.65] (BC2) -- (BB2);
        \draw[mid arrow=0.65] (CB2) -- (CA2);
        \draw[mid arrow=0.65] (CC2) -- (CB2);
        \draw[mid arrow=0.65] (AA2) -- ++(0,0,0.5);
        \draw[mid arrow=0.65] (BA2) -- ++(0,0,0.5);
        \draw[mid arrow=0.65] (CA2) -- ++(0,0,0.5);
        \draw[mid arrow=0.65] (AB2) -- ++(0,0,0.5);
        \draw[mid arrow=0.65] (BB2) -- ++(0,0,0.5);
        \draw[mid arrow=0.65] (CB2) -- ++(0,0,0.5);
        \draw[mid arrow=0.65] (AC2) -- ++(0,0,0.5);
        \draw[mid arrow=0.65] (BC2) -- ++(0,0,0.5);
        \draw[mid arrow=0.65] (CC2) -- ++(0,0,0.5);
        \draw[mid arrow=0.65] ($(AA1)+(0,0,-0.5)$) -- ++(AA1);
        \draw[mid arrow=0.65] ($(BA1)+(0,0,-0.5)$) -- ++(BA1);
        \draw[mid arrow=0.65] ($(CA1)+(0,0,-0.5)$) -- ++(CA1);
        \draw[mid arrow=0.65] ($(AB1)+(0,0,-0.5)$) -- ++(AB1);
        \draw[mid arrow=0.65] ($(BB1)+(0,0,-0.5)$) -- ++(BB1);
        \draw[mid arrow=0.65] ($(CB1)+(0,0,-0.5)$) -- ++(CB1);
        \draw[mid arrow=0.65] ($(AC1)+(0,0,-0.5)$) -- ++(AC1);
        \draw[mid arrow=0.65] ($(BC1)+(0,0,-0.5)$) -- ++(BC1);
        \draw[mid arrow=0.65] ($(CC1)+(0,0,-0.5)$) -- ++(CC1);
    \end{scope}
    \end{tikzpicture}
\end{align}
The bottom layer features horizontal legs directed positively along $X+$ and $Y+$, while in the bottom layer, the horizontal legs have negative directions, designated as $X-$ and $Y-$. The higher-dimensional bilayer unitary network can be verified to form a DAG for higher-dimensional systems. This study will concentrate on the 1D unitary network, although much of the reasoning applies to higher dimensions.

Note that the bilayer unitary network design resembles the quantum circuit tensor networks proposed in Ref.~\cite{haghshenas2022variational}. The distinction lies in employing two layers with opposing directions, ensuring the bilayer unitary network exhibits universality. 

Ref.~\cite{Styliaris_2025} introduces a 2-floor staircase quantum circuit construction for MPU, whose architecture is similar to our bilayer unitary networks. They showed that a 2-floor staircase circuit with two-qubit gates is insufficient for representing all MPUs \cite{Styliaris_2025}. In bilayer unitary networks, local unitary tensors are not restricted to two-qubit gates. With a larger bond dimension $D$, a local unitary tensor can be regarded as a wider gate. Infinite OBC scenarios further differentiate 2-floor staircase quantum circuits from unitary networks. Later in Section \ref{Sec: UN and SQC}, we will illustrate that unitary networks enable non-zero net information flow, sometimes offering a more efficient representation than quantum circuits.

\begin{properties} [Properties of bilayer unitary networks] 
A bilayer unitary network has the following properties:
\begin{enumerate}[label=(\roman*)]
    \item Unitary (acyclic): A bilayer unitary network is devoid of directed loops and constitutes a global unitary.
    \item Spatially connected: There exists a directed path linking every incoming external leg to each outgoing external leg pair.
    \item Universal: With sufficient bond dimensions, a bilayer unitary network can represent any global unitary on a finite-size system.
\end{enumerate}
\end{properties}

\begin{proof}
    (i) Acyclic and (ii) Spatial connected properties are illustrated in the diagrams (\ref{Eq: bilayer unitary network}) and (\ref{Eq: BUN_2D}). Our primary focus here is to provide a visual demonstration for (iii), universal property. We present a global unitary operation applied to a one-dimensional finite chain of qudits, where each site features a Hilbert space of dimension $d$:
\begin{equation}
    \mathcal{H}_{n} = \mathcal{H}_{\text{qudit}} = d \quad \text{for } n=1,2,3 ,
\end{equation}
    \begin{align}
    \label{Eq: decompose into BUN}
    \begin{array}{c}
    \tikzsetnextfilename{global_unitary}
        \begin{tikzpicture}
             \node[draw, rectangle, rounded corners=2pt, minimum width=1.2cm, minimum height=0.6cm, fill=tensorcolor, thick]{$U$};
             \draw[mid arrow=0.65] (-0.3, 0.3) -- (-0.3, 0.9);
             \draw[mid arrow=0.65] (0, 0.3) -- (0, 0.9);
            \draw[mid arrow=0.65] (0.3, 0.3) -- (0.3, 0.9);
            \draw[mid arrow=0.65] (-0.3, -0.9) -- (-0.3, -0.3);
             \draw[mid arrow=0.65] (0, -0.9) -- (0, -0.3);
            \draw[mid arrow=0.65] (0.3, -0.9) -- (0.3, -0.3);
        \end{tikzpicture} 
    \end{array} = 
    \begin{array}{c}
    \tikzsetnextfilename{UN_rep}
\begin{tikzpicture}
        \node[draw, rectangle, rounded corners=2pt, minimum width=0.6cm, minimum height=0.6cm, fill=tensorcolor, thick](A) at (-1.2, 0) {$I_{A}$};
        \node[draw, rectangle, rounded corners=2pt, minimum width=0.6cm, minimum height=0.6cm, fill=tensorcolor, thick](B) {$I_{B}$};
        \node[draw, rectangle, rounded corners=2pt, minimum width=0.6cm, minimum height=0.6cm, fill=tensorcolor, thick](C) at (1.2, 0) {$I_{C}$};
        \node[draw, rectangle, rounded corners=2pt, minimum width=0.6cm, minimum height=0.6cm, fill=tensorcolor, thick](D) at (-1.2, 1.2) {$I_{D}$};
        \node[draw, rectangle, rounded corners=2pt, minimum width=0.6cm, minimum height=0.6cm, fill=tensorcolor, thick](E) at (0, 1.2) {$I_{E}$};
        \node[draw, rectangle, rounded corners=2pt, minimum width=0.6cm, minimum height=0.6cm, fill=tensorcolor, thick](F) at (1.2, 1.2) {$U_{F}$};
        \draw[mid arrow=0.65] (D.90) -- ++(0,0.6);
        \draw[mid arrow=0.65] (E.90) -- ++(0,0.6);
        \draw[mid arrow=0.65] (F.90) -- ++(0,0.6);
        \draw[mid arrow=0.65] ([yshift=-0.6cm]A.270) -- (A.270);
        \draw[mid arrow=0.65] ([yshift=-0.6cm]B.270) -- (B.270);
        \draw[mid arrow=0.65] ([yshift=-0.6cm]C.270) -- (C.270);
        \draw[mid arrow=0.65] (A.0) -- (B.180);
        \draw[mid arrow=0.65] ([yshift=-0.1cm]B.0) -- ([yshift=-0.1cm]C.180);
        \draw[mid arrow=0.65] ([yshift=0.1cm]B.0) -- ([yshift=0.1cm]C.180);
        \draw[mid arrow=0.65] ([yshift=-0.1cm]F.180) -- ([yshift=-0.1cm]E.0);
        \draw[mid arrow=0.65] ([yshift=0.1cm]F.180) -- ([yshift=0.1cm]E.0);
        \draw[mid arrow=0.65] (E.180) -- (D.0);
        \draw[mid arrow=0.65] (C.90) -- (F.270);
        \draw[mid arrow=0.65] ([xshift=-0.2cm]C.90) -- ([xshift=-0.2cm]F.270);
        \draw[mid arrow=0.65] ([xshift=0.2cm]C.90) -- ([xshift=0.2cm]F.270);
    \end{tikzpicture}
    \end{array}.
    \end{align}
Here and in later sections, some instances occur where a leg of dimension $d^{n}$ is split into $n$ legs, each with a dimension of $d$. This notation helps to demonstrate more clearly the unitarity of local tensors within the diagram. Moreover, aligning the Hilbert space of dimension $d$ with the Hilbert space of a qudit $\mathcal{H}_{\text{qudit}}$ allows us to equate a unitary network with a quantum circuit.

Placing the target global unitary in the top right and transforming specific vertical physical legs into horizontal bond legs allows us to treat it as a local unitary tensor, $U_{F} = U$. By adding local identity tensors as padding, we develop a bilayer unitary network for the target global unitary. Observe that $I_{A}$ and $I_{D}$ are unconnected. In principle, we can introduce a vertical leg with a one-dimensional Hilbert space that links them. 

\end{proof}

Although the construction approach in the prior proof ensures that any unitary operator on a finite-sized system can be represented as a bilayer unitary network, this representation is inefficient.   We will discuss how to achieve a more efficient unitary network with a smaller bond dimension later in Section \ref{Sec: reduce bond dimension}.

\subsection{Unitary networks and tensor network states}

As stated in Ref.~\cite{haghshenas2022variational}, employing a tensor network unitary on a product state such as $|00\cdots 0\rangle$ generates a \textit{tensor network state} (TNS) \cite{evenbly2011tensor}. This is also true for unitary networks. 

Examine a one-dimensional bilayer unitary network with uniform bulk tensors and arbitrary boundary tensors, operating on the initial state $|00\cdots 0\rangle$:
\begin{align}
\begin{array}{c}
\tikzsetnextfilename{UN_act_on_state}
    \begin{tikzpicture}
     \node[draw, circle, minimum width=0.6cm, fill=tensorcolor, thick](L0) at (-2.4, -1.2) {$0$};
        \node[draw, circle, minimum width=0.6cm, fill=tensorcolor, thick](A0) at (-1.2, -1.2) {$0$};
        \node[draw, circle, minimum width=0.6cm, fill=tensorcolor, thick](B0) at (0, -1.2) {$0$};
        \node[draw, circle, minimum width=0.6cm, fill=tensorcolor, thick](C0) at (1.2, -1.2) {$0$};
        \node[draw, circle, minimum width=0.6cm, fill=tensorcolor, thick](R0) at (2.4, -1.2) {$0$};
        \node[draw, rectangle, rounded corners=2pt, minimum width=0.6cm, minimum height=0.6cm, fill=tensorcolor, thick](L1) at (-2.4, 0) {$L_{B}$};
        \node[draw, rectangle, rounded corners=2pt, minimum width=0.6cm, minimum height=0.6cm, fill=tensorcolor, thick](A1) at (-1.2, 0) {$A$};
        \node[draw, rectangle, rounded corners=2pt, minimum width=0.6cm, minimum height=0.6cm, fill=tensorcolor, thick](B1) at (0,0) {$A$};
        \node[draw, rectangle, rounded corners=2pt, minimum width=0.6cm, minimum height=0.6cm, fill=tensorcolor, thick](C1) at (1.2, 0) {$A$};
        \node[draw, rectangle, rounded corners=2pt, minimum width=0.6cm, minimum height=0.6cm, fill=tensorcolor, thick](R1) at (2.4, 0) {$R_{B}$};
        \node[draw, rectangle, rounded corners=2pt, minimum width=0.6cm, minimum height=0.6cm, fill=tensorcolor, thick](L2) at (-2.4, 1.2) {$L_{T}$};
        \node[draw, rectangle, rounded corners=2pt, minimum width=0.6cm, minimum height=0.6cm, fill=tensorcolor, thick](A2) at (-1.2, 1.2) {$B$};
        \node[draw, rectangle, rounded corners=2pt, minimum width=0.6cm, minimum height=0.6cm, fill=tensorcolor, thick](B2) at (0, 1.2) {$B$};
        \node[draw, rectangle, rounded corners=2pt, minimum width=0.6cm, minimum height=0.6cm, fill=tensorcolor, thick](C2) at (1.2, 1.2) {$B$};
         \node[draw, rectangle, rounded corners=2pt, minimum width=0.6cm, minimum height=0.6cm, fill=tensorcolor, thick](R2) at (2.4, 1.2) {$R_{T}$};
        \draw[mid arrow=0.65] (A2.90) -- ++(0,0.6);
        \draw[mid arrow=0.65] (B2.90) -- ++(0,0.6);
        \draw[mid arrow=0.65] (C2.90) -- ++(0,0.6);
        \draw[mid arrow=0.65] (L2.90) -- ++(0,0.6);
        \draw[mid arrow=0.65] (R2.90) -- ++(0,0.6);
        \draw[mid arrow=0.65] ([yshift=-0.6cm]A1.270) -- (A1.270);
        \draw[mid arrow=0.65] ([yshift=-0.6cm]B1.270) -- (B1.270);
        \draw[mid arrow=0.65] ([yshift=-0.6cm]C1.270) -- (C1.270);
        \draw[mid arrow=0.65] (L1.0) -- (A1.180);
        \draw[mid arrow=0.65] (A1.0) -- (B1.180);
        \draw[mid arrow=0.65] (B1.0) -- (C1.180);
        \draw[mid arrow=0.65] (C1.0) -- (R1.180);
        \draw[mid arrow=0.65] (R2.180) -- (C2.0);
        \draw[mid arrow=0.65] (C2.180) -- (B2.0);
        \draw[mid arrow=0.65] (B2.180) -- (A2.0);
        \draw[mid arrow=0.65] (A2.180) -- (L2.0);
        \draw[mid arrow=0.65] (A1.90) -- (A2.270);
        \draw[mid arrow=0.65] (B1.90) -- (B2.270);
        \draw[mid arrow=0.65] (C1.90) -- (C2.270);
        \draw[mid arrow=0.65] (L1.90) -- (L2.270);
        \draw[mid arrow=0.65] (R1.90) -- (R2.270);
        \draw[mid arrow=0.65] (A0.90) -- (A1.270);
        \draw[mid arrow=0.65] (B0.90) -- (B1.270);
        \draw[mid arrow=0.65] (C0.90) -- (C1.270);
        \draw[mid arrow=0.65] (L0.90) -- (L1.270);
        \draw[mid arrow=0.65] (R0.90) -- (R1.270);
    \end{tikzpicture}
\end{array}
\end{align}
An MPS can be obtained by vertically contracting $\{ A, B, |0\rangle \}$ into the tensor $C$, followed by ignoring the leg orientations:
\begin{align}
        \begin{array}{c}
\tikzsetnextfilename{TNS}
\begin{tikzpicture}
    \node[draw, circle, minimum width=0.6cm, fill=tensorcolor, thick](L0) at (-2, 0) {$L$};
    \node[draw, circle, minimum width=0.6cm, fill=tensorcolor, thick](A0) at (-1, 0) {$C$};
    \node[draw, circle, minimum width=0.6cm, fill=tensorcolor, thick](B0) at (0, 0) {$C$};
    \node[draw, circle, minimum width=0.6cm, fill=tensorcolor, thick](C0) at (1, 0) {$C$};
    \node[draw, circle, minimum width=0.6cm, fill=tensorcolor, thick](R0) at (2, 0) {$R$};
        \draw (A0.90) -- ++(0,0.4);
        \draw (B0.90) -- ++(0,0.4);
        \draw (C0.90) -- ++(0,0.4);
        \draw (L0.90) -- ++(0,0.4);
        \draw (R0.90) -- ++(0,0.4);
        \draw (L0.0) -- (A0.180);
        \draw (A0.0) -- (B0.180);
        \draw (B0.0) -- (C0.180);
        \draw (C0.0) -- (R0.180);
\end{tikzpicture}
\end{array}
\end{align}
This example demonstrates that in a unitary network, contractions are not limited by the sequence resulting from topological sorting. The flexibility in choosing the contraction order grants unitary networks an advantage over quantum circuit simulation, where gate operations must be performed sequentially. Once the TNS is acquired, the correlations or entanglement entropy can be calculated \cite{evenbly2011tensor}.

\section{Unitary network with different boundary conditions}
\label{Sec: Unitary network with different boundary conditions}

The prior section focused on the unitary network in a finite OBC system. Here, we address unitary networks in both infinite OBC and PBC systems, noting how these boundary conditions introduce subtle variations.

\subsection{Unitary networks in infinite OBC systems}

In employing a unitary network to represent a global unitary $U$ for an infinite system, we typically focus on a finite subsystem. Consider, for example, a subsystem $S \subseteq \Gamma$ consisting of three sites taken from an infinite chain $\Gamma$. The remaining part of the lattice $\Gamma$ can be considered as the environment $E = \Gamma \backslash S$. Within the larger unitary network, we identify a sub-network $U_{Net}^{S} \subseteq U_{Net}$ such that $\text{Eval}(U_{Net}^{S}) = U_{S}$:
\begin{align}
\label{Fig: UN_OBC_boundary}
\begin{array}{c}
\tikzsetnextfilename{UN_OBC_boundary}
    \begin{tikzpicture}
        \node[draw, rectangle, rounded corners=2pt, minimum width=0.6cm, minimum height=0.6cm, fill=tensorcolor, thick](A1) at (-1.2, 0) {};
        \node[draw, rectangle, rounded corners=2pt, minimum width=0.6cm, minimum height=0.6cm, fill=tensorcolor, thick](B1) at (0,0) {};
        \node[draw, rectangle, rounded corners=2pt, minimum width=0.6cm, minimum height=0.6cm, fill=tensorcolor, thick](C1) at (1.2, 0) {};
        \node[draw, rectangle, rounded corners=2pt, minimum width=0.6cm, minimum height=0.6cm, fill=tensorcolor, thick](A2) at (-1.2, 1.2) {};
        \node[draw, rectangle, rounded corners=2pt, minimum width=0.6cm, minimum height=0.6cm, fill=tensorcolor, thick](B2) at (0, 1.2) {};
        \node[draw, rectangle, rounded corners=2pt, minimum width=0.6cm, minimum height=0.6cm, fill=tensorcolor, thick](C2) at (1.2, 1.2) {};
        \draw[mid arrow=0.65] (A2.90) -- ++(0,0.6);
        \draw[mid arrow=0.65] (B2.90) -- ++(0,0.6);
        \draw[mid arrow=0.65] (C2.90) -- ++(0,0.6);
        \draw[mid arrow=0.65] ([yshift=-0.6cm]A1.270) -- (A1.270);
        \draw[mid arrow=0.65] ([yshift=-0.6cm]B1.270) -- (B1.270);
        \draw[mid arrow=0.65] ([yshift=-0.6cm]C1.270) -- (C1.270);
        \draw[mid arrow=0.65] ([xshift=-0.6cm]A1.180) -- (A1.180);
        \draw[mid arrow=0.65] (A1.0) -- (B1.180);
        \draw[mid arrow=0.65] (B1.0) -- (C1.180);
        \draw[mid arrow=0.65] (C1.0) -- ++(0.6,0);
        \draw[mid arrow=0.65] ([xshift=0.6cm]C2.0) -- (C2.0);
        \draw[mid arrow=0.65] (C2.180) -- (B2.0);
        \draw[mid arrow=0.65] (B2.180) -- (A2.0);
        \draw[mid arrow=0.65] (A2.180) -- ++(-0.6,0);
        \draw[mid arrow=0.65] (A1.90) -- (A2.270);
        \draw[mid arrow=0.65] (B1.90) -- (B2.270);
        \draw[mid arrow=0.65] (C1.90) -- (C2.270);
    \end{tikzpicture}
\end{array}
    =
    \begin{array}{c}
\tikzsetnextfilename{global_unitary_OBC2}
\begin{tikzpicture}
     \node[draw, rectangle, rounded corners=2pt, minimum width=1.2cm, minimum height=0.8cm, fill=tensorcolor, thick](A) at (-1.2, 0) {$U_{S}$};
        \draw[mid arrow=0.65] ([xshift=-0.4cm,yshift=-0.6cm]A.270) -- ([xshift=-0.4cm]A.270);
        \draw[mid arrow=0.65] ([yshift=-0.6cm]A.270) -- (A.270);
        \draw[mid arrow=0.65] ([xshift=0.4cm,yshift=-0.6cm]A.270) -- ([xshift=0.4cm]A.270);
        \draw[mid arrow=0.65] (A.90) -- ++(0,0.6);
        \draw[mid arrow=0.65] ([xshift=-0.4cm]A.90) -- ++(0,0.6);
        \draw[mid arrow=0.65] ([xshift=0.4cm]A.90) -- ++(0,0.6);
        \draw[mid arrow=0.65] ([yshift=-0.2cm]A.0) -- ++(0.6,0);
        \draw[mid arrow=0.65] ([xshift=0.6cm, yshift=0.2cm]A.0) -- ++(-0.6,0);
        \draw[mid arrow=0.65] ([xshift=-0.6cm,yshift=-0.2cm]A.180) -- ++(0.6,0);
         \draw[mid arrow=0.65] ([yshift=0.2cm]A.180) -- ++(-0.6,0);
\end{tikzpicture}
\end{array},
\end{align}
Contracting the unitary network in the subsystem results in a tensor $U_{S}$ whose unitarity is associated with the combination of external vertical physical legs and external horizontal bond legs. The external horizontal bond legs can be perceived as environmental degrees of freedom. Often, we are concerned solely with the physical part of the transformation. Thus, we would like to trace out the environmental degree of freedom accordingly.
\begin{definition} [Reduced unitary]
\label{def: Reduced unitary}
Considering a unitary $U$ and the automorphism defined by $u(O) = UOU^{\dagger}$, we define the reduced unitary $u_{A}$ for subsystem $A$ as illustrated in the diagram below:
\begin{align}
u_{A}(O_{A}) = 
    \frac{1}{\dim \mathcal{H}_{B}}\begin{array}{c}
    \tikzsetnextfilename{reduced_automorphism}
    \begin{tikzpicture}
         \node[draw, rectangle, rounded corners=2pt, minimum width=0.6cm, minimum height=0.6cm, fill=tensorcolor, thick] (OA) at (-0.4, 0) {$O_{A}$};
        \node[draw, rectangle, rounded corners=2pt, minimum width=1.2cm, minimum height=0.6cm, fill=tensorcolor, thick] (U) at (0, 1.2) {$U$};
        \node[draw, rectangle, rounded corners=2pt, minimum width=1.2cm, minimum height=0.6cm, fill=tensorcolor, thick] (Ud) at (0, -1.2) {$U^{\dagger}$};
        \draw[mid arrow=0.65] ([xshift=-0.4cm,yshift=-0.6cm]U.270) -- ++(0,0.6);
        \draw[mid arrow=0.65] ([xshift=-0.4cm]Ud.90) -- ++(0,0.6);
        \draw[mid arrow=0.65] ([xshift=-0.4cm,yshift=-0.6cm]Ud.270) -- ++(0,0.6);
         \draw[mid arrow=0.65] ([xshift=-0.4cm]U.90) -- ++(0,0.6);
         \draw[mid arrow=0.65] ([xshift=0.4cm]Ud.90) -- ([xshift=0.4cm]U.270);
        \draw[mid arrow=0.5] ([xshift=0.4cm]U.90) -- ++(0,0.6) -- ++(0.4,0) -- ++(0,-4.2) -- ++ (-0.4, 0) -- ([xshift=0.4cm]Ud.270);
    \end{tikzpicture}
    \end{array} 
    =
     \begin{array}{c}
     \tikzsetnextfilename{reduced_automorphism_normalized}
    \begin{tikzpicture}
         \node[draw, rectangle, rounded corners=2pt, minimum width=0.6cm, minimum height=0.6cm, fill=tensorcolor, thick] (OA) at (-0.4, 0) {$O_{A}$};
        \node[draw, rectangle, rounded corners=2pt, minimum width=1.2cm, minimum height=0.6cm, fill=tensorcolor, thick] (U) at (0, 1.2) {$U$};
        \node[draw, rectangle, rounded corners=2pt, minimum width=1.2cm, minimum height=0.6cm, fill=tensorcolor, thick] (Ud) at (0, -1.2) {$U^{\dagger}$};
        \node[draw, rectangle, rounded corners=2pt, minimum width=0.6cm, minimum height=0.6cm, fill=tensorcolor, thick] (IB1) at (0.4, 0) {$\hat{I}_{B}$};
        \node[draw, rectangle, rounded corners=2pt, minimum width=0.6cm, minimum height=0.6cm, fill=tensorcolor, thick] (IB2) at (1.2, 0) {$\hat{I}_{B}$};
        \draw[mid arrow=0.65] ([xshift=-0.4cm,yshift=-0.6cm]U.270) -- ++(0,0.6);
        \draw[mid arrow=0.65] ([xshift=-0.4cm]Ud.90) -- ++(0,0.6);
        \draw[mid arrow=0.65] ([xshift=-0.4cm,yshift=-0.6cm]Ud.270) -- ++(0,0.6);
         \draw[mid arrow=0.65] ([xshift=-0.4cm]U.90) -- ++(0,0.6);
         \draw[mid arrow=0.65] ([xshift=0.4cm]Ud.90) -- (IB1.270);
         \draw[mid arrow=0.65] (IB1.90) --([xshift=0.4cm]U.270);
        \draw[mid arrow=0.5] ([xshift=0.4cm]U.90) -- ++(0,0.6) -- ++(0.8,0) -- (IB2.90);
        \draw[mid arrow=0.5] (IB2.270) -- ++(0,-1.8) -- ++(-0.8,0) -- ([xshift=0.4cm]Ud.270);
    \end{tikzpicture}
    \end{array}, 
\end{align}
where $\hat{I}_{B} = \frac{1}{\sqrt{\dim \mathcal{H}_{B}}}$ is the normalized operator $I_{B}$ (The norm $||I_{B}||= \sqrt{\dim \mathcal{H}_{B}}$).
\end{definition}
Typically, a reduced unitary $u_{A}$ does not serve as an automorphism in $\mathcal{A}_{A}$. However, it functions as a superoperator, meaning it is a linear transformation from $\mathcal{A}_{A}$ to itself, fulfilling
\begin{equation}
   u(O_{A} \otimes I_{B}) =  u_{A} (O_{A}) \otimes I_{B} + Q_{A} \otimes Q_{B}, 
\end{equation}
where $Q_{B} \ne I_{B}$ is non-trivial on subsystem B. Back to the unitary network, we may construct the reduced unitary $u_{S}[U_{S}]$ acting on the physical degree of freedom only by tracing out bond degrees of freedom:
\begin{align}
u_{S}[U_{S}] = 
\frac{1}{\dim \mathcal{H}_{\text{bond}}}
\begin{array}{c}
    \tikzsetnextfilename{reduced_UN}
\begin{tikzpicture}
     \node[draw, rectangle, rounded corners=2pt, minimum width=1.2cm, minimum height=0.8cm, fill=tensorcolor, thick](A) at (0, 1.3) {$U_{S}$};
     \node[draw, rectangle, rounded corners=2pt, minimum width=1.2cm, minimum height=0.8cm, fill=tensorcolor, thick](B) at (0, -1.3) {$U_{S}^{\dagger}$};
        \draw[mid arrow=0.65] ([xshift=-0.4cm,yshift=-0.6cm]A.270) -- ([xshift=-0.4cm]A.270);
        \draw[mid arrow=0.65] ([yshift=-0.6cm]A.270) -- (A.270);
        \draw[mid arrow=0.65] ([xshift=0.4cm,yshift=-0.6cm]A.270) -- ([xshift=0.4cm]A.270);
        \draw[mid arrow=0.65] (A.90) -- ++(0,0.6);
        \draw[mid arrow=0.65] ([xshift=-0.4cm]A.90) -- ++(0,0.6);
        \draw[mid arrow=0.65] ([xshift=0.4cm]A.90) -- ++(0,0.6);
        \draw[mid arrow=0.5] ([yshift=-0.2cm]A.0) -- ++(0.6,0) -- ++(0,-2.2) -- ([yshift=0.2cm]B.0);
         \draw[mid arrow=0.5] ([yshift=0.2cm]A.180) -- ++(-0.8,0) -- ++(0,-3) -- ([yshift=-0.2cm]B.180);
        \draw[mid arrow=0.65] ([yshift=-0.6cm]B.270) -- (B.270);
        \draw[mid arrow=0.65] ([xshift=0.4cm,yshift=-0.6cm]B.270) -- ([xshift=0.4cm]B.270);
        \draw[mid arrow=0.65] (B.90) -- ++(0,0.6);
        \draw[mid arrow=0.65] ([xshift=-0.4cm]B.90) -- ++(0,0.6);
        \draw[mid arrow=0.65] ([xshift=0.4cm]B.90) -- ++(0,0.6);
        \draw[mid arrow=0.5] ([yshift=-0.2cm]B.0) -- ++(0.8,0) --++(0,3) -- ([yshift=0.2cm]A.0);
        \draw[mid arrow=0.65] ([xshift=-0.4cm, yshift=-0.6cm]B.270) -- ++(0,0.6);
         \draw[mid arrow=0.5] ([yshift=0.2cm]B.180) -- ++(-0.6,0) -- ++(0,2.2) -- ([yshift=-0.2cm]A.180);
\end{tikzpicture}
\end{array}
\end{align}
We hope that the reduced unitaries $u_{S}[U_{S}]$ to accurately represent the local transformation of the global unitary $U$, such that $u_{S}[U_{S}] = u_{S}[U]$:
\begin{equation}
\frac{1}{\dim \mathcal{H}_{\text{bond}}}
\begin{array}{c}
      \tikzsetnextfilename{reduced_UN2}
\begin{tikzpicture}
     \node[draw, rectangle, rounded corners=2pt, minimum width=1.2cm, minimum height=0.8cm, fill=tensorcolor, thick](A) at (0, 1.3) {$U_{S}$};
     \node[draw, rectangle, rounded corners=2pt, minimum width=1.2cm, minimum height=0.8cm, fill=tensorcolor, thick](B) at (0, -1.3) {$U_{S}^{\dagger}$};
        \draw[mid arrow=0.65] ([xshift=-0.4cm,yshift=-0.6cm]A.270) -- ([xshift=-0.4cm]A.270);
        \draw[mid arrow=0.65] ([yshift=-0.6cm]A.270) -- (A.270);
        \draw[mid arrow=0.65] ([xshift=0.4cm,yshift=-0.6cm]A.270) -- ([xshift=0.4cm]A.270);
        \draw[mid arrow=0.65] (A.90) -- ++(0,0.6);
        \draw[mid arrow=0.65] ([xshift=-0.4cm]A.90) -- ++(0,0.6);
        \draw[mid arrow=0.65] ([xshift=0.4cm]A.90) -- ++(0,0.6);
        \draw[mid arrow=0.5] ([yshift=-0.2cm]A.0) -- ++(0.2,0) -- ++(0,-2.2) -- ([yshift=0.2cm]B.0);
         \draw[mid arrow=0.5] ([yshift=0.2cm]A.180) -- ++(-0.4,0) -- ++(0,-3) -- ([yshift=-0.2cm]B.180);
        \draw[mid arrow=0.65] ([yshift=-0.6cm]B.270) -- (B.270);
        \draw[mid arrow=0.65] ([xshift=0.4cm,yshift=-0.6cm]B.270) -- ([xshift=0.4cm]B.270);
        \draw[mid arrow=0.65] (B.90) -- ++(0,0.6);
        \draw[mid arrow=0.65] ([xshift=-0.4cm]B.90) -- ++(0,0.6);
        \draw[mid arrow=0.65] ([xshift=0.4cm]B.90) -- ++(0,0.6);
        \draw[mid arrow=0.5] ([yshift=-0.2cm]B.0) -- ++(0.4,0) --++(0,3) -- ([yshift=0.2cm]A.0);
        \draw[mid arrow=0.65] ([xshift=-0.4cm, yshift=-0.6cm]B.270) -- ++(0,0.6);
         \draw[mid arrow=0.5] ([yshift=0.2cm]B.180) -- ++(-0.2,0) -- ++(0,2.2) -- ([yshift=-0.2cm]A.180);
\end{tikzpicture}
\end{array}
= 
\frac{1}{\dim \mathcal{H}_{E}}
\begin{array}{c}
\tikzsetnextfilename{reduced_U}
\begin{tikzpicture}
     \node[draw, rectangle, rounded corners=2pt, minimum width=2cm, minimum height=0.8cm, fill=tensorcolor, thick](A) at (0, 1.3) {$U$};
     \node[draw, rectangle, rounded corners=2pt, minimum width=2cm, minimum height=0.8cm, fill=tensorcolor, thick](B) at (0, -1.3) {$U^{\dagger}$};
        \draw[mid arrow=0.65] ([xshift=-0.4cm,yshift=-0.6cm]A.270) -- ([xshift=-0.4cm]A.270);
        \draw[mid arrow=0.65] ([yshift=-0.6cm]A.270) -- (A.270);
        \draw[mid arrow=0.65] ([xshift=0.4cm,yshift=-0.6cm]A.270) -- ([xshift=0.4cm]A.270);
        \draw[mid arrow=0.65] (A.90) -- ++(0,0.6);
        \draw[mid arrow=0.65] ([xshift=-0.4cm]A.90) -- ++(0,0.6);
        \draw[mid arrow=0.65] ([xshift=0.4cm]A.90) -- ++(0,0.6);
        \draw[mid arrow=0.65] ([yshift=-0.6cm]B.270) -- (B.270);
        \draw[mid arrow=0.65] ([xshift=0.4cm,yshift=-0.6cm]B.270) -- ([xshift=0.4cm]B.270);
        \draw[mid arrow=0.65] (B.90) -- ++(0,0.6);
        \draw[mid arrow=0.65] ([xshift=-0.4cm]B.90) -- ++(0,0.6);
        \draw[mid arrow=0.65] ([xshift=0.4cm]B.90) -- ++(0,0.6);
        \draw[mid arrow=0.65] ([xshift=-0.4cm, yshift=-0.6cm]B.270) -- ++(0,0.6);
        \draw[mid arrow=0.5] ([xshift=-0.8cm]B.90) -- ([xshift=-0.8cm]A.270);
        \draw[mid arrow=0.5] ([xshift=0.8cm]B.90) -- ([xshift=0.8cm]A.270);
        \draw[mid arrow=0.5] ([xshift=0.8cm]A.90) -- ++(0,0.6) -- ++(0.4,0) -- ++(0,-4.6) -- ++(-0.4,0)-- ([xshift=0.8cm]B.270);
        \draw[mid arrow=0.5] ([xshift=-0.8cm]A.90) -- ++(0,0.6) -- ++(-0.4,0) -- ++(0,-4.6) -- ++(0.4,0) -- ([xshift=-0.8cm]B.270);
\end{tikzpicture}
\end{array}
\end{equation}
However, this applies exclusively when $U_{Net}$ is in its canonical form, with the local tensors supported on the subsystem $S$ serving as the center $C$.

\begin{proposition}
    For a unitary network $\text{Eval}(U_{Net}) = U$ in its canonical form with center $C$ located in a subsystem $S \in \Gamma$, the evaluation $U_{C} = \text{Eval}(U^{C}_{Net})$ results in reduced unitaries that are identical on subsystem $S$:
    \begin{equation}
        u_{S}[U_{C}] = u_{S}[U]
    \end{equation}
\end{proposition}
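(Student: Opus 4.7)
The plan is to exploit the defining property of the canonical form: a topological sort of $G[U_{Net}]$ places all center tensors in $C$ consecutively. I would therefore group the tensors into three blocks, ``pre-center,'' ``center,'' and ``post-center,'' and contract each block separately to obtain a factorization
\begin{equation}
    U \;=\; U_{\text{post}}\,\bigl(U_{C}\otimes I_{E,\text{mid}}\bigr)\,U_{\text{pre}},
\end{equation}
where $U_{\text{pre}}:\mathcal{H}_{E}^{\text{in}}\to \mathcal{H}_{E,\text{mid}}\otimes\mathcal{H}_{\text{bond}}^{\text{in}}$, $U_{C}:\mathcal{H}_{S}^{\text{in}}\otimes\mathcal{H}_{\text{bond}}^{\text{in}}\to \mathcal{H}_{S}^{\text{out}}\otimes\mathcal{H}_{\text{bond}}^{\text{out}}$, and $U_{\text{post}}:\mathcal{H}_{E,\text{mid}}\otimes\mathcal{H}_{\text{bond}}^{\text{out}}\to\mathcal{H}_{E}^{\text{out}}$. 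Each block is a unitary by Proposition \ref{prop: Unitarity of unitary network} applied to the corresponding DAG sub-network, and $U_{\text{pre}}$, $U_{\text{post}}$ act trivially on $\mathcal{H}_{S}$ while $U_{C}$ acts trivially on $\mathcal{H}_{E,\text{mid}}$.

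Next I would substitute this factorization into Definition \ref{def: Reduced unitary}, i.e., evaluate
\begin{equation}
    u_{S}[U](O_{S})=\tfrac{1}{\dim\mathcal{H}_{E}^{\text{in}}}\,\mathrm{Tr}_{E,\text{out}}\!\bigl(U(O_{S}\otimes I_{E,\text{in}})U^{\dagger}\bigr).
\end{equation}
Using the standard identity $\mathrm{Tr}_{Y}(VXV^{\dagger})=\mathrm{Tr}_{X}(X)\cdot$\,(identity shuffle) for a unitary $V$ between two spaces, $U_{\text{post}}$ and $U_{\text{post}}^{\dagger}$ collapse the outer trace over $\mathcal{H}_{E}^{\text{out}}$ into a trace over $\mathcal{H}_{E,\text{mid}}\otimes\mathcal{H}_{\text{bond}}^{\text{out}}$. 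Then, because $U_{\text{pre}}$ commutes with $O_{S}$ and $U_{\text{pre}}U_{\text{pre}}^{\dagger}=I$ when acting on $I_{E,\text{in}}$, the conjugation by $U_{\text{pre}}$ replaces $I_{E,\text{in}}$ by $I_{E,\text{mid}}\otimes I_{\text{bond}}^{\text{in}}$. The $I_{E,\text{mid}}$ factor then factors out of $U_{C}$, and tracing it contributes $\dim\mathcal{H}_{E,\text{mid}}$.

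Collecting the prefactors and using $\dim\mathcal{H}_{E}^{\text{in}}=\dim\mathcal{H}_{E,\text{mid}}\cdot\dim\mathcal{H}_{\text{bond}}^{\text{in}}$, the normalization becomes $1/\dim\mathcal{H}_{\text{bond}}^{\text{in}}$, yielding
\begin{equation}
    u_{S}[U](O_{S})=\tfrac{1}{\dim\mathcal{H}_{\text{bond}}}\,\mathrm{Tr}_{\text{bond}^{\text{out}}}\!\bigl(U_{C}(O_{S}\otimes I_{\text{bond}^{\text{in}}})U_{C}^{\dagger}\bigr)=u_{S}[U_{C}](O_{S}).
\end{equation}

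The only non-routine step I anticipate is justifying the block factorization itself: I must verify that after contracting the ``pre'' tensors one by one in the topological order, the result is a genuine unitary between the stated Hilbert spaces, with the incoming bond legs of $C$ appearing as outputs on the $\mathcal{H}_{\text{bond}}^{\text{in}}$ factor, and dually for $U_{\text{post}}$. This follows from applying Proposition \ref{prop: Unitarity of unitary network} to each of the three sub-networks obtained by cutting the bond edges that cross from pre to center and from center to post, and checking that the bond-leg directions are consistent with treating them as external legs of the respective sub-networks. Once this is in place, the remaining manipulations are the standard trace identities above.
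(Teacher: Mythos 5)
Your proposal is correct and is essentially the paper's own argument: the paper uses the topological sort from the canonical form to cancel the pre-center tensors one by one against their daggers (each sees only identity on its inputs) and the post-center tensors one by one under the output trace, whereas you perform the same two cancellations in a single step by grouping the tensors into pre/center/post blocks and invoking Proposition~\ref{prop: Unitarity of unitary network} for each block. The block factorization you flag as the only non-routine step is exactly the content of the paper's sequential contraction, so the two proofs coincide in substance.
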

\begin{proof}
    After topological sorting of $U_{Net}$, we get an order of local unitary tensors:
    \begin{equation}
        U_{1} <  \cdots < U_{m} < U^{C}_{m+1} < \cdots < U^{C}_{n} < U_{n+1} <  \cdots < U_{N}.
    \end{equation}
    Beginning with $U_{1}$, as all incoming legs to $U_{1}$ are necessarily external, the contraction of $U_{1}$ with $U_{1}^{\dagger}$ leads to an identity. This process continues with contracts involving $U_{2}$ until $U^{C}_{m+1}$ is reached. Moreover, since all outgoing legs for $U_{N}$ are external, the contraction of $U_{N}$ with $U_{N}^{\dagger}$ yields the identity. This process of contracting the highest-order unitary with its Hermitian conjugate is continued until only the central tensors are left.
\end{proof}

For an infinite system, imposing the canonical form condition for a unitary network with center $C$ located in a finite region $S \subset \Gamma$ ensures that $U_{Net}^{C}$ faithfully represents the local transformation. However, imposing TI on the unitary network requires a right- (or left-) canonical form, which makes $u_{S}[U^{Net}_{S}]$ an approximation rather than a faithful superoperator of $u_{S}[U]$.

\subsection{Unitary networks in PBC system}

Up to this point, we have focused on unitary networks in OBC systems. In this subsection, we examine unitary networks in systems with PBC.

A potential design for PBC systems is the PBC-bilayer unitary network, wrapping an OBC-bilayer unitary network \eqref{Fig: UN_OBC_boundary} and contracting its external horizontal legs:
\begin{align}
\begin{array}{c}
\tikzsetnextfilename{bilayer_UN_PBC}
    \begin{tikzpicture}
        \node[draw, rectangle, rounded corners=2pt, minimum width=0.6cm, minimum height=0.6cm, fill=tensorcolor, thick](A) at (-1.2, 0) {};
        \node[draw, rectangle, rounded corners=2pt, minimum width=0.6cm, minimum height=0.6cm, fill=tensorcolor, thick](B) at (0,0) {};
        \node[draw, rectangle, rounded corners=2pt, minimum width=0.6cm, minimum height=0.6cm, fill=tensorcolor, thick](C) at (1.2, 0) {};
        \node[draw, rectangle, rounded corners=2pt, minimum width=0.6cm, minimum height=0.6cm, fill=tensorcolor, thick](D) at (-1.2, 1.2) {};
        \node[draw, rectangle, rounded corners=2pt, minimum width=0.6cm, minimum height=0.6cm, fill=tensorcolor, thick](E) at (0, 1.2) {};
        \node[draw, rectangle, rounded corners=2pt, minimum width=0.6cm, minimum height=0.6cm, fill=tensorcolor, thick](F) at (1.2, 1.2) {};
        \draw[mid arrow=0.65] (D.90) -- ++(0,0.6);
        \draw[mid arrow=0.65] (E.90) -- ++(0,0.6);
        \draw[mid arrow=0.65] (F.90) -- ++(0,0.6);
        \draw[mid arrow=0.65] ([yshift=-0.6cm]A.270) -- (A.270);
        \draw[mid arrow=0.65] ([yshift=-0.6cm]B.270) -- (B.270);
        \draw[mid arrow=0.65] ([yshift=-0.6cm]C.270) -- (C.270);
        \draw[mid arrow=0.65] (A.0) -- (B.180);
        \draw[mid arrow=0.65] (B.0) -- (C.180);
        \draw[mid arrow=0.65] (F.180) -- (E.0);
        \draw[mid arrow=0.65] (E.180) -- (D.0);
        \draw[mid arrow=0.65] (A.90) -- (D.270);
        \draw[mid arrow=0.65] (B.90) -- (E.270);
        \draw[mid arrow=0.65] (C.90) -- (F.270);
        \draw[mid arrow=0.5] (C.0) -- ++(0.3,0) -- ++(0,-0.5) -- ++(-3.6,0) -- ++(0,0.5) -- (A.180);
        \draw[mid arrow=0.5] (D.180) -- ++(-0.3,0) -- ++(0,0.5) -- ++(3.6,0) -- ++(0,-0.5) -- (F.0);
    \end{tikzpicture}
\end{array}
\end{align}
However, due to the directed loops in the architecture, the global tensor may not be unitary (see Proposition \ref{prop: Unitarity of unitary network}). Fortunately, wrapping a locality-preserving OBC unitary network (with a sufficient number of sites) guarantees unitarity, as demonstrated in the following: 
\begin{proposition} [Unitarity of PBC-bilayer unitary network]
For an OBC-unitary network $u_{Net}^{OBC}$, if the algebra $\mathcal{A}_{h,in}$ associated with the incoming horizontal leg, once transformed by $u_{OBC}$, will not be supported on the outgoing horizontal leg:
\begin{equation} \label{Eq:unitary condition for PBC-UN}
\begin{aligned}
&\forall O_{h,in} \in \mathcal{A}_{h,in}, \\ 
 &  u_{Net}^{OBC}(O_{h,in}) = O_{v,out} \otimes I_{h,out} \in \mathcal{A}_{v,out},
\end{aligned}
\end{equation}
then, wrapping the unitary network by contracting its external horizontal legs will result in a PBC-unitary network that is a global unitary.
\end{proposition}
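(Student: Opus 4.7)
The plan is to interpret the PBC wrapping as a partial trace over the horizontal bond and then use the hypothesis, together with its commutant-dual, to collapse this trace into the identity on the vertical degrees of freedom. After identifying the horizontal input and output Hilbert spaces as a common space $\mathcal{H}_h$ (as is implicit in the wrapping procedure), the PBC operator takes the form $U_{PBC} = \sum_h B_h$ with
\[
B_h := (I_{v,\text{out}}\otimes\langle h|)\, U_{OBC}\, (I_{v,\text{in}}\otimes|h\rangle),
\]
so that $U_{PBC}$ acts only on the physical legs; the task is then to verify $U_{PBC}U_{PBC}^\dagger = I_{v,\text{out}}$ and $U_{PBC}^\dagger U_{PBC} = I_{v,\text{in}}$.

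Next I would extract the commutant-dual of the hypothesis. The stated condition reads $U_{OBC}(I_v\otimes O)U_{OBC}^\dagger = F(O)\otimes I_{h,\text{out}}$ for every $O\in\mathcal{A}_{h,\text{in}}$, where $F$ is a linear map $\mathcal{A}_{h,\text{in}}\to\mathcal{A}_{v,\text{out}}$. Taking commutants within the full input/output matrix algebras, and using that the automorphism $u_{OBC}$ preserves commutants, should yield the dual relation
\[
U_{OBC}^\dagger (I_v\otimes O) U_{OBC} = G(O)\otimes I_{h,\text{in}}
\]
for every $O\in\mathcal{A}_{h,\text{out}}$ and some linear $G:\mathcal{A}_{h,\text{out}}\to\mathcal{A}_{v,\text{in}}$. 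The equalities $F(I_h)=I_{v,\text{out}}$ and $G(I_h)=I_{v,\text{in}}$ are immediate from $U_{OBC}U_{OBC}^\dagger = U_{OBC}^\dagger U_{OBC} = I$.

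Unitarity would then follow by direct expansion. Writing
\[
U_{PBC}U_{PBC}^\dagger = \sum_{h_1,h_2}(I\otimes\langle h_1|)\,U_{OBC}\,(I\otimes|h_1\rangle\langle h_2|)\,U_{OBC}^\dagger\,(I\otimes|h_2\rangle),
\]
and applying the hypothesis to the inner matrix unit $|h_1\rangle\langle h_2|\in\mathcal{A}_{h,\text{in}}$ turns the conjugated operator into $F(|h_1\rangle\langle h_2|)\otimes I_h$. The factor $\langle h_1|h_2\rangle = \delta_{h_1 h_2}$ then collapses the double sum to $\sum_h F(|h\rangle\langle h|) = F(I_h) = I_{v,\text{out}}$. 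The analogous manipulation, now using the dual relation, produces $U_{PBC}^\dagger U_{PBC} = I_{v,\text{in}}$.

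The main obstacle is establishing the dual relation cleanly: one needs the commutant theorem for the finite-dimensional matrix algebras at play and careful bookkeeping of the $v$ versus $h$ factorizations on both the input and output sides. In the bilayer setting the horizontal bonds from the two layers simply aggregate into a single wrapped subsystem, so once the dual is in hand the argument above applies verbatim; everything else is straightforward linear algebra.
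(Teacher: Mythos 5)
Your proof is correct, but it takes a genuinely different route from the paper. The paper's proof is structural: it uses the hypothesis to decompose the row $U_{OBC}$ into two causally ordered unitary tensors --- a first tensor that emits the outgoing horizontal leg and a second tensor that absorbs the incoming horizontal leg --- so that the wrapping contraction runs from the earlier tensor to the later one, no directed loop is created, and global unitarity follows from the paper's earlier Proposition on DAG unitary networks. You instead verify unitarity by direct computation: writing $U_{PBC}=\sum_h (I\otimes\langle h|)U_{OBC}(I\otimes|h\rangle)$, applying the hypothesis to the matrix units $|h_1\rangle\langle h_2|\in\mathcal{A}_{h,\mathrm{in}}$ so that the double sum collapses via $\delta_{h_1h_2}$ to $F(I_h)=I_{v,\mathrm{out}}$, and handling $U_{PBC}^\dagger U_{PBC}$ with the commutant-dual relation. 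The dual relation does hold by the argument you sketch: the hypothesis says $u_{OBC}(\mathcal{A}_{h,\mathrm{in}})$ commutes with $\mathcal{A}_{h,\mathrm{out}}$, so applying $u_{OBC}^{-1}$ shows $u_{OBC}^{-1}(\mathcal{A}_{h,\mathrm{out}})$ commutes with the full matrix algebra on the factor $\mathcal{H}_{h,\mathrm{in}}$, and the finite-dimensional commutant theorem $\bigl(\mathcal{B}(\mathcal{H}_1)\otimes I\bigr)'=I\otimes\mathcal{B}(\mathcal{H}_2)$ forces it into $\mathcal{A}_{v,\mathrm{in}}$. What each approach buys: the paper's decomposition keeps the wrapped object inside the unitary-network formalism (causally ordered local unitaries), which matters for the rest of their framework, but it asserts the two-tensor decomposition without spelling out the support-algebra argument behind it. Your computation is more elementary and self-contained, proves both $U_{PBC}U_{PBC}^\dagger=I$ and $U_{PBC}^\dagger U_{PBC}=I$ explicitly, and makes visible the one nontrivial input (the commutant-dual) that the paper's decomposition hides.
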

\begin{proof}
First, let us examine the bottom layer of the OBC bilayer unitary network. Upon satisfying the required condition, we can decompose it into:

\begin{align}
\begin{array}{c}
\tikzsetnextfilename{unitary_OBC}
\begin{tikzpicture}
     \node[draw, rectangle, rounded corners=2pt, minimum width=1.2cm, minimum height=0.6cm, fill=tensorcolor, thick](A) at (-1.2, 0) {};
        \draw[mid arrow=0.65] ([xshift=-0.4cm,yshift=-0.6cm]A.270) -- ([xshift=-0.4cm]A.270);
        \draw[mid arrow=0.65] ([yshift=-0.6cm]A.270) -- (A.270);
        \draw[mid arrow=0.65] ([xshift=0.4cm,yshift=-0.6cm]A.270) -- ([xshift=0.4cm]A.270);
        \draw[mid arrow=0.65] (A.90) -- ++(0,0.6);
        \draw[mid arrow=0.65] ([xshift=-0.4cm]A.90) -- ++(0,0.6);
        \draw[mid arrow=0.65] ([xshift=0.4cm]A.90) -- ++(0,0.6);
        \draw[mid arrow=0.65] (A.0) -- ++(0.6,0);
        \draw[mid arrow=0.65] ([xshift=-0.6cm]A.180) -- ++(0.6,0);
\end{tikzpicture}
\end{array}
=
\begin{array}{c}
\tikzsetnextfilename{bilayer_global_unitary_OBC}
\begin{tikzpicture}
     \node[draw, rectangle, rounded corners=2pt, minimum width=1.2cm, minimum height=0.6cm, fill=tensorcolor, thick](A) at (0, 0) {};
     \node[draw, rectangle, rounded corners=2pt, minimum width=1.2cm, minimum height=0.6cm, fill=tensorcolor, thick](B) at (0, 1.2) {};
        \draw[mid arrow=0.65] ([xshift=-0.4cm,yshift=-0.6cm]A.270) -- ([xshift=-0.4cm]A.270);
        \draw[mid arrow=0.65] ([yshift=-0.6cm]A.270) -- (A.270);
        \draw[mid arrow=0.65] ([xshift=0.4cm,yshift=-0.6cm]A.270) -- ([xshift=0.4cm]A.270);
        \draw[mid arrow=0.65] (B.90) -- ++(0,0.6);
        \draw[mid arrow=0.65] ([xshift=-0.4cm]B.90) -- ++(0,0.6);
        \draw[mid arrow=0.65] ([xshift=0.4cm]B.90) -- ++(0,0.6);
        \draw[mid arrow=0.65] ([xshift=-0.4cm]A.90) -- ([xshift=-0.4cm]B.270);
        \draw[mid arrow=0.65] (A.90) -- (B.270);
        \draw[mid arrow=0.65] ([xshift=0.4cm]A.90) -- ([xshift=0.4cm]B.270);
        \draw[mid arrow=0.65] (A.0) -- ++(0.6,0);
        \draw[mid arrow=0.65] ([xshift=-0.6cm]B.180) -- ++(0.6,0);
\end{tikzpicture}
\end{array}
\end{align}
Then the corresponding PBC-unitary network can be decomposed into two local unitary tensors with strict causal order:
\begin{align}
\begin{array}{c}
\tikzsetnextfilename{unitary_PBC}
\begin{tikzpicture}
     \node[draw, rectangle, rounded corners=2pt, minimum width=1.2cm, minimum height=0.6cm, fill=tensorcolor, thick](A) at (-1.2, 0) {};
        \draw[mid arrow=0.65] ([xshift=-0.4cm,yshift=-0.6cm]A.270) -- ([xshift=-0.4cm]A.270);
        \draw[mid arrow=0.65] ([yshift=-0.6cm]A.270) -- (A.270);
        \draw[mid arrow=0.65] ([xshift=0.4cm,yshift=-0.6cm]A.270) -- ([xshift=0.4cm]A.270);
        \draw[mid arrow=0.65] (A.90) -- ++(0,0.6);
        \draw[mid arrow=0.65] ([xshift=-0.4cm]A.90) -- ++(0,0.6);
        \draw[mid arrow=0.65] ([xshift=0.4cm]A.90) -- ++(0,0.6);
         \draw[mid arrow=0.5] (A.0) -- ++(0.3,0) -- ++(0,-0.4) -- ++(-1.8,0) -- ++(0,0.4) -- (A.180);
\end{tikzpicture}
\end{array}
=
\begin{array}{c}
\tikzsetnextfilename{bilayer_global_unitary_PBC}
\begin{tikzpicture}
     \node[draw, rectangle, rounded corners=2pt, minimum width=1.2cm, minimum height=0.6cm, fill=tensorcolor, thick](A) at (0, 0) {1};
     \node[draw, rectangle, rounded corners=2pt, minimum width=1.2cm, minimum height=0.6cm, fill=tensorcolor, thick](B) at (0, 1.2) {2};
        \draw[mid arrow=0.65] ([xshift=-0.4cm,yshift=-0.6cm]A.270) -- ([xshift=-0.4cm]A.270);
        \draw[mid arrow=0.65] ([yshift=-0.6cm]A.270) -- (A.270);
        \draw[mid arrow=0.65] ([xshift=0.4cm,yshift=-0.6cm]A.270) -- ([xshift=0.4cm]A.270);
        \draw[mid arrow=0.65] (B.90) -- ++(0,0.6);
        \draw[mid arrow=0.65] ([xshift=-0.4cm]B.90) -- ++(0,0.6);
        \draw[mid arrow=0.65] ([xshift=0.4cm]B.90) -- ++(0,0.6);
        \draw[mid arrow=0.65] ([xshift=-0.4cm]A.90) -- ([xshift=-0.4cm]B.270);
        \draw[mid arrow=0.65] (A.90) -- (B.270);
        \draw[mid arrow=0.65] ([xshift=0.4cm]A.90) -- ([xshift=0.4cm]B.270);
        \draw[mid arrow=0.5] (A.0) -- ++(0.3,0) -- ++(0,0.4) -- ++(-1.8,0) -- ++(0,0.8) -- (B.180);
\end{tikzpicture}
\end{array}.
\end{align}
This also applies to the top layer. Then according to Proposition \ref{prop: Unitarity of unitary network}, the PBC-unitary network forms a global unitary tensor.
\end{proof}
This echoes the wrapping lemma \cite{schumacher2004reversiblequantumcellularautomata}: when neighborhoods in a finite PBC system overlap similarly to in an infinite setting, one can directly correlate a QCA on an infinite lattice with one on a finite lattice with PBC.

For a general 1D OBC unitary network that does not satisfy the locality-preserving condition, contracting its external horizontal legs will produce a non-unitary tensor. For ALPU which is approximately locality-preserving, contracting horizontal legs to create a finite-size PBC unitary network will yield an approximately unitary result. To see that, we can consider the horizontal legs as boundary degrees of freedom located on the left and right boundaries, respectively:
\begin{align}
\begin{array}{c}
\tikzsetnextfilename{ALPU_tensor}
\begin{tikzpicture}
     \node[draw, rectangle, rounded corners=2pt, minimum width=1.2cm, minimum height=0.6cm, fill=tensorcolor, thick](A) at (-1.2, 0) {$U_{OBC}$};
        \draw[mid arrow=0.65] ([xshift=-0.4cm,yshift=-0.6cm]A.270) -- ([xshift=-0.4cm]A.270);
        \draw[mid arrow=0.65] ([yshift=-0.6cm]A.270) -- (A.270);
        \draw[mid arrow=0.65] ([xshift=0.4cm,yshift=-0.6cm]A.270) -- ([xshift=0.4cm]A.270);
        \draw[mid arrow=0.65] (A.90) -- ++(0,0.6);
        \draw[mid arrow=0.65] ([xshift=-0.4cm]A.90) -- ++(0,0.6);
        \draw[mid arrow=0.65] ([xshift=0.4cm]A.90) -- ++(0,0.6);
        \draw[mid arrow=0.65, red!70] (A.0) -- ++(0.6,0);
        \draw[mid arrow=0.65, red!70] ([xshift=-0.6cm]A.180) -- ++(0.6,0);
\end{tikzpicture}
\end{array}
=
\begin{array}{c}
\tikzsetnextfilename{ALPU_gate}
\begin{tikzpicture}
     \node[draw, rectangle, rounded corners=2pt, minimum width=1.6cm, minimum height=0.6cm, fill=tensorcolor, thick](A) at (-1.2, 0) {$U_{OBC}$};
         \draw[mid arrow=0.65, red!70] ([xshift=-0.6cm,yshift=-0.6cm]A.270) -- ([xshift=-0.6cm]A.270);
        \draw[mid arrow=0.65] ([xshift=-0.2cm,yshift=-0.6cm]A.270) -- ([xshift=-0.2cm]A.270);
        \draw[mid arrow=0.65] ([xshift=0.2cm,yshift=-0.6cm]A.270) -- ([xshift=0.2cm]A.270);
        \draw[mid arrow=0.65] ([xshift=0.6cm,yshift=-0.6cm]A.270) -- ([xshift=0.6cm]A.270);
        \draw[mid arrow=0.65] ([xshift=-0.6cm]A.90) -- ++(0,0.6);
        \draw[mid arrow=0.65] ([xshift=-0.2cm]A.90) -- ++(0,0.6);
        \draw[mid arrow=0.65] ([xshift=0.2cm]A.90) -- ++(0,0.6);
        \draw[mid arrow=0.65, red!70] ([xshift=0.6cm]A.90) -- ++(0,0.6); 
\end{tikzpicture}
\end{array}
\end{align}
If the OBC-unitary network is approximately locality-preserving with $f(r)$-tails (a definition of approximately locality-preserving unitary network will be given later in Def.~\ref{def: (Approximate) locality-preserving unitary network}), then 
\begin{equation}
    u_{OBC}(O_{h,in}) \overset{f(L)}{\in} \mathcal{A}_{v,out},
\end{equation}
where $L$ represents the length of the chain, defined as the total number of sites. Given that $\lim_{r\rightarrow\infty} f(r)=0$,  Eq.~(\ref{Eq:unitary condition for PBC-UN})  holds true in the thermodynamic limit. This means that when the chain is extended to greater lengths, $U_{PBC}$ progressively approaches a genuine global unitary tensor.

Apart from PBC-bilayer unitary networks, alternative unitary network architectures can be utilized. These alternative architectures can eliminate directed circuits and ensure global unitarity. For example, for a nearest-neighbor QCA, we may use the following 4-layer structure:
\begin{align} \label{fig: nearest-neighbour un pbc}
\tikzsetnextfilename{4_layer_UN_PBC}
    \begin{tikzpicture}
          \node[draw, rectangle, rounded corners=2pt, minimum width=0.6cm, minimum height=0.6cm, fill=tensorcolor, thick] (a) at (-1.2,-1.8) {};
          \node[draw, rectangle, rounded corners=2pt, minimum width=0.6cm, minimum height=0.6cm, fill=tensorcolor, thick] (b) at (0,-1.8) {};
          \node[draw, rectangle, rounded corners=2pt, minimum width=0.6cm, minimum height=0.6cm, fill=tensorcolor, thick] (c) at (1.2,-1.8) {};
          \node[draw, rectangle, rounded corners=2pt, minimum width=0.6cm, minimum height=0.6cm, fill=tensorcolor, thick] (d) at (-1.2,-0.6) {};
          \node[draw, rectangle, rounded corners=2pt, minimum width=0.6cm, minimum height=0.6cm, fill=tensorcolor, thick] (e) at (0,-0.6) {};
          \node[draw, rectangle, rounded corners=2pt, minimum width=0.6cm, minimum height=0.6cm, fill=tensorcolor, thick] (f) at (1.2,-0.6) {};
          \node[draw, rectangle, rounded corners=2pt, minimum width=0.6cm, minimum height=0.6cm, fill=tensorcolor, thick] (g) at (-1.2,0.6) {};
          \node[draw, rectangle, rounded corners=2pt, minimum width=0.6cm, minimum height=0.6cm, fill=tensorcolor, thick] (h) at (0,0.6) {};
           \node[draw, rectangle, rounded corners=2pt, minimum width=0.6cm, minimum height=0.6cm, fill=tensorcolor, thick] (i) at (1.2,0.6) {};
           \node[draw, rectangle, rounded corners=2pt, minimum width=0.6cm, minimum height=0.6cm, fill=tensorcolor, thick] (j) at (-1.2,1.8) {};
          \node[draw, rectangle, rounded corners=2pt, minimum width=0.6cm, minimum height=0.6cm, fill=tensorcolor, thick] (k) at (0,1.8) {};
          \node[draw, rectangle, rounded corners=2pt, minimum width=0.6cm, minimum height=0.6cm, fill=tensorcolor, thick] (l) at (1.2,1.8) {};
          \draw[mid arrow=0.65] ([yshift=-0.6cm]a.270) -- ++(0,0.6);
          \draw[mid arrow=0.65] ([yshift=-0.6cm]b.270) -- ++(0,0.6);
          \draw[mid arrow=0.65] ([yshift=-0.6cm]c.270) -- ++(0,0.6);
          \draw[mid arrow=0.65] (a.90) -- ++(0,0.6);
          \draw[mid arrow=0.65] (b.90) -- ++(0,0.6);
          \draw[mid arrow=0.65] (c.90) -- ++(0,0.6);
          \draw[mid arrow=0.65] (d.90) -- ++(0,0.6);
          \draw[mid arrow=0.65] (e.90) -- ++(0,0.6);
          \draw[mid arrow=0.65] (f.90) -- ++(0,0.6);
          \draw[mid arrow=0.65] (g.90) -- ++(0,0.6);
          \draw[mid arrow=0.65] (h.90) -- ++(0,0.6);
          \draw[mid arrow=0.65] (i.90) -- ++(0,0.6);
          \draw[mid arrow=0.65] (j.90) -- ++(0,0.6);
          \draw[mid arrow=0.65] (k.90) -- ++(0,0.6);
          \draw[mid arrow=0.65] (l.90) -- ++(0,0.6);
          \draw[mid arrow=0.65] (a.0) to [out = 0,in=180]  (e.180);
          \draw[mid arrow=0.65] (b.0) to [out = 0,in=180]  (f.180);
          \draw[mid arrow =0.5] (c.0) -- ++(0.3,0) -- ++(0,0.4) -- ++(-3.6,0) -- ++(0,0.8) -- (d.180);
          \draw[mid arrow =0.5] (g.180) -- ++(-0.3,0) -- ++(0,0.4) -- ++(3.6,0) -- ++(0,0.8) -- (l.0);
          \draw[mid arrow=0.65] (h.180) to [out = 180,in=0]  (j.0);
          \draw[mid arrow=0.65] (i.180) to [out = 180,in=0]  (k.0);
    \end{tikzpicture}
\end{align}
The design shown in the diagram ensures that the global tensor functions as a nearest-neighbor QCA. 

More general PBC unitary operators can be expressed using SQC architecture \cite{Chen_2024}. SQCs can represent all QCAs and various non-local unitary transformations.  Characterized by local unitary gates and a sequential structure, an SQC can be viewed as a unitary network of finite bond dimensions. We will demonstrate this point more explicitly later in Section \ref{Sec: UN and SQC}.

\section{Quantum cellular automata and unitary networks}
\label{Sec: QCA and unitary networks}

Using the Margolus partitioning scheme, we demonstrate that any 1D QCA can be represented as a bilayer unitary network. We define the locality-preserving criterion for a unitary network, distinct from that of a global unitary operator. We also explore imposing a locality-preserving condition for unitary networks by selecting specific architectures.

\subsection{Margolus partitioning scheme as unitary networks}

In this subsection, we demonstrate that any 1D QCA can be expressed as a bilayer unitary network with finite bond dimensions. 

We consider QCAs to be nearest-neighbor, which can always be ensured by regrouping sites. All 1D nearest-neighbor QCAs can be encapsulated by a Margolus partitioning scheme \cite{schumacher2004reversiblequantumcellularautomata, Gross_2012, ranard2022converse}, which can be regarded as a unitary network:
\begin{align} \label{fig:margolus partitioning}
\tikzsetnextfilename{margolus_partitioning_UN}
    \begin{tikzpicture}
      \node[draw, rectangle, rounded corners=2pt, minimum width=1.8cm, minimum height=0.6cm, fill=tensorcolor, thick] (Wm-1) at (-2.4, -0.6) {$W_{m-1}$};
         \node[draw, rectangle, rounded corners=2pt, minimum width=1.8cm, minimum height=0.6cm, fill=tensorcolor, thick] (Wm) at (0, -0.6) {$W_{m}$};
         \node[draw, rectangle, rounded corners=2pt, minimum width=1.8cm, minimum height=0.6cm, fill=tensorcolor, thick] (Wm+1) at (2.4, -0.6) {$W_{m+1}$};
         \node[draw, rectangle, rounded corners=2pt, minimum width=1.8cm, minimum height=0.6cm, fill=tensorcolor, thick] (Vm-1) at (-3.6, 0.6) {$V_{m-1}$};
         \node[draw, rectangle, rounded corners=2pt, minimum width=1.8cm, minimum height=0.6cm, fill=tensorcolor, thick] (Vm) at (-1.2, 0.6) {$V_{m}$};
         \node[draw, rectangle, rounded corners=2pt, minimum width=1.8cm, minimum height=0.6cm, fill=tensorcolor, thick] (Vm+1) at (1.2, 0.6) {$V_{m+1}$};
         \draw[mid arrow=0.65] 
         ([xshift=-0.6cm,yshift=-0.6cm]Wm-1.270) -- ++(0,0.6);
         \draw[mid arrow=0.65] 
         ([xshift=0.6cm,yshift=-0.6cm]Wm-1.270) -- ++(0,0.6);
         \draw[mid arrow=0.65] ([xshift=-0.6cm,yshift=-0.6cm]Wm.270) -- node[yshift=-0.2cm, below,font=\scriptsize]{$\mathcal{A}_{2m}$}  ++(0,0.6);
         \draw[mid arrow=0.65] 
         ([xshift=0.6cm,yshift=-0.6cm]Wm.270) -- node[yshift=-0.2cm,below,font=\scriptsize]{$\mathcal{A}_{2m+1}$} ++(0,0.6);
         \draw[mid arrow=0.65] 
         ([xshift=-0.6cm,yshift=-0.6cm]Wm+1.270) -- ++(0,0.6);
         \draw[mid arrow=0.65] 
         ([xshift=0.6cm,yshift=-0.6cm]Wm+1.270) -- ++(0,0.6);
         \draw[mid arrow=0.65] ([xshift=-0.6cm]Vm-1.90) -- ++(0,0.6);
          \draw[mid arrow=0.65] ([xshift=+0.6cm]Vm-1.90) -- ++(0,0.6);
         \draw[mid arrow=0.65] ([xshift=-0.6cm]Vm.90) -- node[yshift=0.2cm,above, font=\scriptsize] {$\mathcal{A}_{2m-1}$} ++(0,0.6);
          \draw[mid arrow=0.65] ([xshift=+0.6cm]Vm.90) -- node[yshift=0.2cm,above, font=\scriptsize] {$\mathcal{A}_{2m}$} ++(0,0.6);
          \draw[mid arrow=0.65] ([xshift=-0.6cm]Vm+1.90) -- ++(0,0.6);
          \draw[mid arrow=0.65] ([xshift=+0.6cm]Vm+1.90) -- ++(0,0.6);
          \draw[mid arrow=0.65, very thin] ([xshift=-0.6cm]Wm-1.90) -- ++(0,0.6);
          \draw[mid arrow=0.65,semithick] ([xshift=+0.6cm]Wm-1.90) -- ++(0,0.6);
        \draw[mid arrow=0.65, very thin] ([xshift=-0.6cm]Wm.90) -- node[right, font=\scriptsize] {$\mathcal{B}_{2m}$} ++(0,0.6);
          \draw[mid arrow=0.65,semithick] ([xshift=+0.6cm]Wm.90) -- node[right, font=\scriptsize] {$\mathcal{B}_{2m+1}$}++(0,0.6);
          \draw[mid arrow=0.65, very thin] ([xshift=-0.6cm]Wm+1.90) -- ++(0,0.6);
          \draw[mid arrow=0.65,semithick] ([xshift=+0.6cm]Wm+1.90) -- ++(0,0.6);
          \draw[mid arrow=0.65,very thin] ([xshift=-0.6cm,yshift=-0.6cm]Vm-1.270) -- ++(0,0.6);
    \end{tikzpicture}
\end{align}
where local unitary operators $W_{m}$ and $V_{m}$ are such that
\begin{equation}
\begin{aligned}
  & w_{m}(O) = W_{m} OW_{m}^{\dagger},\\
  &v_{m}(O) = V_{m} OV_{m}^{\dagger}.
\end{aligned}
\end{equation}
This Margolus partitioning scheme for a QCA in \eqref{fig:margolus partitioning} is generally not an FDLU circuit. Unlike quantum circuit wires, the Hilbert space dimensions of various legs (such as $\mathcal{B}_{2m}$ and $\mathcal{B}_{2m+1}$) can vary. A more detailed discussion can be found in Appendix \ref{Append: Margolus Partitioning is not a quantum circuit}.

As we show below, the unitary network in  \eqref{fig:margolus partitioning} can always be transformed into a bilayer unitary network. Starting from the Margolus partitioning scheme ($W_{m}$ and $V_{m}$), we do the following decomposition of $W_{m}$:
\begin{align} \label{fig:Wm}
    \begin{array}{c}
    \tikzsetnextfilename{decompose_Wm_global}
    \begin{tikzpicture}
         \node[draw, rectangle, rounded corners=2pt, minimum width=1cm, minimum height=0.6cm, fill=tensorcolor, thick] (Wm) {$W_{m}$} ;
         \draw[mid arrow=0.65] ([xshift=-0.3cm]Wm.90) -- ++(0,0.6);
         \draw[mid arrow=0.65] ([xshift=0.3cm]Wm.90) -- ++(0,0.6);
         \draw[mid arrow=0.65] ([xshift=-0.3cm,yshift=-0.6cm]Wm.270) -- ++(0,0.6);
         \draw[mid arrow=0.65] ([xshift=0.3cm,yshift=-0.6cm]Wm.270) -- ++(0,0.6);
    \end{tikzpicture}
    \end{array}
    =
     \begin{array}{c}
     \tikzsetnextfilename{decompose_Wm_bilayer}
    \begin{tikzpicture}
         \node[draw, rectangle, rounded corners=2pt, minimum width=0.6cm, minimum height=0.6cm, fill=tensorcolor, thick] (A)  at (-0.6,-0.6) {$A_{m}$};
         \node[draw, rectangle, rounded corners=2pt, minimum width=0.6cm, minimum height=0.6cm, fill=tensorcolor, thick] (B)  at (0.6,-0.6) {$B_{m}$};
         \node[draw, rectangle, rounded corners=2pt, minimum width=0.6cm, minimum height=0.6cm, fill=tensorcolor, thick] (C)  at (-0.6, 0.6) {$C_{m}$};
         \node[draw, rectangle, rounded corners=2pt, minimum width=0.6cm, minimum height=0.6cm, fill=tensorcolor, thick] (D)  at (0.6, 0.6) {$D_{m}$};
         \draw[mid arrow=0.65]  ([yshift=-0.6cm]A.270) -- ++(0,0.6);
         \draw[mid arrow=0.65] ([yshift=-0.6cm]B.270) -- ++(0,0.6);
         \draw[mid arrow=0.65] (C.90) -- ++(0,0.6);
         \draw[mid arrow=0.65] (D.90) -- ++(0,0.6);
         \draw[mid arrow=0.65] (A.0) -- (B.180);
        \draw[mid arrow=0.65] (A.90) -- (C.270);
        \draw[mid arrow=0.65] (B.90) -- (D.270);
        \draw[mid arrow=0.65] (D.180) -- (C.0);
    \end{tikzpicture}
    \end{array}
    =  
    \begin{array}{c}
    \tikzsetnextfilename{decompose_Wm_LR}
    \begin{tikzpicture}
         \node[draw, rectangle, rounded corners=2pt, minimum width=0.6cm, minimum height=1.2cm, fill=tensorcolor, thick] (A)  at (-0.6,0) {$L_{m}$};
         \node[draw, rectangle, rounded corners=2pt, minimum width=0.6cm, minimum height=1.2cm, fill=tensorcolor, thick] (B)  at (0.6, 0) {$R_{m}$};
         \draw[mid arrow=0.65] (A.90) -- ++(0,0.6);
         \draw[mid arrow=0.65] (B.90) -- ++(0,0.6);
         \draw[mid arrow=0.65] ([yshift=-0.6cm]A.270) -- ++(0,0.6);
         \draw[mid arrow=0.65] ([yshift=-0.6cm]B.270) -- ++(0,0.6);
         \draw[mid arrow=0.65] ([yshift=-0.3cm]A.0) -- ([yshift=-0.3cm]B.180);
         \draw[mid arrow=0.65] ([yshift=0.3cm]B.180) -- ([yshift=0.3cm]A.0);
    \end{tikzpicture}
    \end{array}
\end{align}
$W_{m}$ can be broken down into a bilayer unitary network, as illustrated from the left panel to the middle, using the method described in (\ref{Eq: decompose into BUN}). From the middle to the right panel, we contract $A_{m}$ with $C_{m}$ to form $L_{m}$, and $B_{m}$ with $D_{m}$ to produce $R_{m}$. This is for later use when calculating the GNVW index. All tensors represented in the diagram are ensured to be unitary.

\begin{align} \label{fig:QCA Bilayer}
\begin{array}{c}
\tikzsetnextfilename{margolus:VW}
\begin{tikzpicture}
    \node[draw, rectangle, rounded corners=2pt, minimum width=1cm, minimum height=0.6cm, fill=tensorcolor, thick] (Wm) at (0,-0.6) {$W_{m}$} ;
    \node[draw, rectangle, rounded corners=2pt, minimum width=1cm, minimum height=0.6cm, fill=tensorcolor, thick] (Vm) at (-0.6, 0.6) {$V_{m}$} ;
    \node[draw, rectangle, rounded corners=2pt, minimum width=1cm, minimum height=0.6cm, fill=tensorcolor, thick] (Wm-1) at (-1.2,-0.6){$W_{m-1}$} ;
         \draw[mid arrow=0.65] ([xshift=-0.3cm]Wm.90) -- ++(0,0.6);
         \draw[mid arrow=0.65] ([xshift=0.3cm]Wm.90) -- ++(0,0.6);
         \draw[mid arrow=0.65] ([xshift=-0.3cm,yshift=-0.6cm]Wm.270) -- ++(0,0.6);
         \draw[mid arrow=0.65] ([xshift=0.3cm,yshift=-0.6cm]Wm.270) -- ++(0,0.6);
         \draw[mid arrow=0.65] ([xshift=-0.3cm]Vm.90) -- ++(0,0.6);
         \draw[mid arrow=0.65] ([xshift=0.3cm]Vm.90) -- ++(0,0.6);
          \draw[mid arrow=0.65] ([xshift=-0.3cm]Wm-1.90) -- ++(0,0.6);
         \draw[mid arrow=0.65] ([xshift=-0.3cm,yshift=-0.6cm]Vm.270) -- ++(0,0.6);
         \draw[mid arrow=0.65] ([xshift=0.3cm,yshift=-0.6cm]Wm-1.270) -- ++(0,0.6);
         \draw[mid arrow=0.65] ([xshift=-0.3cm,yshift=-0.6cm]Wm-1.270) -- ++(0,0.6);
\end{tikzpicture}
\end{array}
= 
\begin{array}{c}
\tikzsetnextfilename{margolus__bilayer_UN}
\begin{tikzpicture}
        \node[draw, rectangle, rounded corners=2pt, minimum width=1.6cm, minimum height=0.6cm, fill=tensorcolor, font=\scriptsize, thick] (Vm)  at (-1.2,1.8) {$V_{m}$};
        \draw[mid arrow=0.65] ([xshift=-0.6cm]Vm.90) -- ++(0,0.6);
        \draw[mid arrow=0.65] ([xshift=0.6cm]Vm.90) -- ++(0,0.6);
        \node[draw, rectangle, rounded corners=2pt, minimum width=0.6cm, minimum height=0.6cm, fill=tensorcolor, font=\scriptsize, thick] (A)  at (-0.6,-0.6) {$A_{m}$};
         \node[draw, rectangle, rounded corners=2pt, minimum width=0.6cm, minimum height=0.6cm, fill=tensorcolor, font=\scriptsize, thick] (B)  at (0.6,-0.6) {$B_{m}$};
         \node[draw, rectangle, rounded corners=2pt, minimum width=0.6cm, minimum height=0.6cm, font=\scriptsize,fill=tensorcolor, font=\scriptsize, thick] (C)  at (-0.6, 0.6) {$C_{m}$};
         \node[draw, rectangle, rounded corners=2pt, minimum width=0.6cm, minimum height=0.6cm, fill=tensorcolor, font=\scriptsize, thick] (D)  at (0.6, 0.6) {$D_{m}$};
         \node[draw, rectangle, rounded corners=2pt, minimum width=0.6cm, minimum height=0.6cm, fill=tensorcolor, font=\scriptsize, thick] (A-1)  at (-3.2,-0.6) {$A_{m-1}$};
         \node[draw, rectangle, rounded corners=2pt, minimum width=0.6cm, minimum height=0.6cm, fill=tensorcolor, font=\scriptsize, thick] (B-1)  at (-1.8,-0.6) {$B_{m-1}$};
         \node[draw, rectangle, rounded corners=2pt, minimum width=0.6cm, minimum height=0.6cm, font=\scriptsize,fill=tensorcolor, font=\scriptsize, thick] (C-1)  at (-3.2, 0.6) {$C_{m-1}$};
         \node[draw, rectangle, rounded corners=2pt, minimum width=0.6cm, minimum height=0.6cm, fill=tensorcolor, font=\scriptsize, thick] (D-1)  at (-1.8, 0.6) {$D_{m-1}$};
          \draw[mid arrow=0.65]  ([yshift=-0.6cm]A.270) -- ++(0,0.6);
         \draw[mid arrow=0.65] ([yshift=-0.6cm]B.270) -- ++(0,0.6);
         \draw[mid arrow=0.65] (C.90) -- ++(0,0.6);
         \draw[mid arrow=0.65] (D.90) -- ++(0,0.6);
         \draw[mid arrow=0.65] (A.0) -- (B.180);
        \draw[mid arrow=0.65] (A.90) -- (C.270);
        \draw[mid arrow=0.65] (B.90) -- (D.270);
        \draw[mid arrow=0.65] (D.180) -- (C.0);
        \draw[mid arrow=0.65]  ([yshift=-0.6cm]A-1.270) -- ++(0,0.6);
         \draw[mid arrow=0.65] ([yshift=-0.6cm]B-1.270) -- ++(0,0.6);
         \draw[mid arrow=0.65] (C-1.90) -- ++(0,0.6);
         \draw[mid arrow=0.65] (D-1.90) -- ++(0,0.6);
         \draw[mid arrow=0.65] (A-1.0) -- (B-1.180);
        \draw[mid arrow=0.65] (A-1.90) -- (C-1.270);
        \draw[mid arrow=0.65] (B-1.90) -- (D-1.270);
        \draw[mid arrow=0.65] (D-1.180) -- (C-1.0);
        \draw[dashed, blue] ([shift={(-0.3,-0.1)}]B-1.225) -- ([shift={(0.1,-0.1)}]A.315) -- ([shift={(0.1,0.1)}]A.45) -- ([shift={(-0.3,0.1)}]B-1.135) -- cycle;
        \draw[dashed, blue] ([shift={(-0.3,-0.1)}]D-1.225) -- ([shift={(0.1,-0.1)}]C.315) -- ([shift={(0.2,0.4)}]Vm.0) -- ([shift={(-0.4,0.4)}]Vm.180) -- cycle;
\end{tikzpicture}
\end{array} 
\end{align}
By reincorporating $V_{m}$ into the unitary network, we can then contract $\{V_{m}, D_{m-1}, C_{m}\}$ and integrate $\{B_{m-1}, A_{m}\}$. This results in a bilayer unitary network where each local unitary tensor is supported on a supercell of two sites $\{ m-1, m\}$. The bond dimension $D < d^2$, where $d$ denotes the dimension of the physical Hilbert space for an individual site. If we disregard the orientation of the horizontal legs, we retrieve an MPU representation for the QCA \cite{cirac2017matrix,_ahino_lu_2018,Farrelly_2020}.

In fact, this bilayer unitary network representation can be used for any QCA that admits a Margolus partitioning scheme. However, for dimensions greater than 1, there are QCAs that do not allow for a Margolus partitioning scheme \cite{arrighi2008one}.

\subsection{Locality-preserving unitary network}

In representing global unitary operators with unitary networks, we differentiate the physical unitary operators $U$ from their corresponding unitary network implementations $U_{Net}$. It is sometimes useful to discuss whether the unitary network implementation $U_{Net}$ of a QCA is also locality-preserving. We begin by defining the distance in a unitary network.
\begin{definition}[Distance in a unitary network graph]
\label{def: UN distance}
Let $U_{Net}$ be a unitary network associated with the graph $G[U_{Net}] = (V,E)$. Each directed edge $e_{ij} \in E$ can have a defined length $|e_{ij}|$. The distance between vertices $s$ and $t$ in the network is defined as:
\begin{equation}
\begin{aligned}
     d_{U_{Net}} (s, t) \coloneq  \min_{P \in \mathcal{P}(s,t)}  \sum_{(u,v) \in P} |e_{uv}|,
\end{aligned}
\end{equation}
where $\mathcal{P}(s, t) $ is the set of all directed paths from $s$ to $t$.
\end{definition}
In general unitary networks, only the locations of the sinks $Out = \{o_{1}, o_{2}, \cdots \}$ and the sources $In = \{i_{1}, i_{2}, \cdots \}$ are specified, as they correspond to physical sites. The edge lengths $|e_{ij}|$ between the local tensors are rather arbitrary. For a bilayer unitary network $U_{Net}$ defined in a lattice system with lattice constant $a$, each local tensor is associated with a site, denoted by position $x$.
\begin{align}
\tikzsetnextfilename{info_path_length_UN}
    \begin{tikzpicture}
        \node[draw, circle, minimum width=0.6cm, minimum height=0.6cm, fill=tensorcolor, thick](o1) at (-1.2, 2.4) {$o_{1}$};
        \node[draw, circle, minimum width=0.6cm, minimum height=0.6cm, fill=tensorcolor, thick](o2) at (0, 2.4) {$o_{2}$};
        \node[draw, circle, minimum width=0.6cm, minimum height=0.6cm, fill=tensorcolor, thick](o3) at (1.2, 2.4) {$o_{3}$};
        \node[draw, circle, minimum width=0.6cm, minimum height=0.6cm, fill=tensorcolor, thick](i1) at (-1.2, -1.2) {$i_{1}$};
        \node[draw, circle, minimum width=0.6cm, minimum height=0.6cm, fill=tensorcolor, thick](i2) at (0, -1.2) {$i_{2}$};
        \node[draw, circle, minimum width=0.6cm, minimum height=0.6cm, fill=tensorcolor, thick](i3) at (1.2, -1.2) {$i_{3}$};
        \node[draw, rectangle, rounded corners=2pt, minimum width=0.6cm, minimum height=0.6cm, fill=tensorcolor, thick](A) at (-1.2, 0) {1};
        \node[draw, rectangle, rounded corners=2pt, minimum width=0.6cm, minimum height=0.6cm, fill=tensorcolor, thick](B) {2};
        \node[draw, rectangle, rounded corners=2pt, minimum width=0.6cm, minimum height=0.6cm, fill=tensorcolor, thick](C) at (1.2, 0) {3};
        \node[draw, rectangle, rounded corners=2pt, minimum width=0.6cm, minimum height=0.6cm, fill=tensorcolor, thick](D) at (-1.2, 1.2) {6};
        \node[draw, rectangle, rounded corners=2pt, minimum width=0.6cm, minimum height=0.6cm, fill=tensorcolor, thick](E) at (0, 1.2) {5};
        \node[draw, rectangle, rounded corners=2pt, minimum width=0.6cm, minimum height=0.6cm, fill=tensorcolor, thick](F) at (1.2, 1.2) {4};
        \draw[mid arrow=0.65] (D.90) -- (o1.270);
        \draw[mid arrow=0.65] (E.90) -- (o2.270);
        \draw[mid arrow=0.65] (F.90) --(o3.270);
        \draw[mid arrow=0.65] (i1.90) -- (A.270);
        \draw[mid arrow=0.65] (i2.90) -- (B.270);
        \draw[mid arrow=0.65] (i3.90) -- (C.270);
        \draw[mid arrow=0.65] (A.0) -- (B.180);
        \draw[mid arrow=0.65] (B.0) -- (C.180);
        \draw[mid arrow=0.65] (F.180) -- (E.0);
        \draw[mid arrow=0.65] (E.180) -- (D.0);
        \draw[mid arrow=0.65] (A.90) -- (D.270);
        \draw[mid arrow=0.65] (B.90) -- (E.270);
        \draw[mid arrow=0.65] (C.90) -- (F.270);
    \end{tikzpicture}
\end{align}
Therefore, the length $|e_{ij}|$ for edges can be naturally defined below:
\begin{enumerate} [label=(\roman*)]
    \item For every horizontal leg $e^{h}_{ij}$, its length $|e^{h}_{ij}| = a$, where $a$ is the lattice constant. This is the distance over which information travels to adjacent sites.
    \item For every vertical leg $e^{v}_{ij}$,  its length $|e^{v}_{ij}| = 0$. This can be justified by the fact that the temporal extension is artificial and can be adjusted arbitrarily.
\end{enumerate}

We also define a causal cone  within a unitary network:
\begin{definition} [Causal cone in unitary network \cite{ferris2012perfect}]
The causal cone of a vertex $v \in V$ within $U_{Net}$ is a sub-network $U_{Net}^{cc}(v)$ containing all vertices (local unitary tensors) accessible from $v$ and the edges (legs) connected to these vertices:
\begin{equation}
    G[U_{Net}^{cc}(v)] = (V^{cc}(v), E^{cc}(v)) \in G[U_{Net}]
\end{equation}
The base of the causal cone denoted as $B[U_{Net}^{cc}(v)] \subseteq \Gamma$, is a subset of lattice points $\Gamma$, defined as the support of all sinks contained in the subgraph $G[U_{Net}^{cc}(v)]$:
\begin{equation}
    B[U_{Net}^{cc}(v)] = \{ x(o_{i})\in \Gamma:  o_{i} \in (V^{cc}(v) \cap Out)   \} 
\end{equation}
\end{definition}
To assess the transformation of a local operator $O_{x}$, it suffices to consider the local tensors within the causal cone from source vertex $i_{x}$, thereby reducing computational cost \cite{ferris2012perfect}. For example, take the unitary network $U_{Net}$ obtained from the Margolus partitioning of a QCA; the causal cone connected to a local operator $O_{2m}$ located at lattice site $2m$ is highlighted in blue:
\begin{align} \label{fig:causal cone}
\tikzsetnextfilename{UN_causal_cone}
    \begin{tikzpicture}
    \node[draw, rectangle, rounded corners=2pt, minimum width=0.6cm, minimum height=0.6cm, fill=btensorcolor, thick]  at (-0.6, -1.8) {$i_{2m}$};
    \node[draw, rectangle, rounded corners=2pt, minimum width=0.6cm, minimum height=0.6cm, fill=btensorcolor, thick]  at (-0.6, 1.8) {$o_{2m}$};
    \node[draw, rectangle, rounded corners=2pt, minimum width=0.6cm, minimum height=0.6cm, fill=btensorcolor, thick]  at (-1.8, 1.8) {$o_{2m-1}$};
    \node[draw, rectangle, rounded corners=2pt, minimum width=0.6cm, minimum height=0.6cm, fill=btensorcolor, thick]  at (0.6, 1.8) {$o_{2m+1}$};
    \node[draw, rectangle, rounded corners=2pt, minimum width=0.6cm, minimum height=0.6cm, fill=btensorcolor, thick]  at (1.8, 1.8) {$o_{2m+2}$};
      \node[draw, rectangle, rounded corners=2pt, minimum width=1.8cm, minimum height=0.6cm, fill=tensorcolor, thick] (Wm-1) at (-2.4, -0.6) {$W_{m-1}$};
         \node[draw, rectangle, rounded corners=2pt, minimum width=1.8cm, minimum height=0.6cm, fill=btensorcolor, thick] (Wm) at (0, -0.6) {$W_{m}$};
         \node[draw, rectangle, rounded corners=2pt, minimum width=1.8cm, minimum height=0.6cm, fill=tensorcolor, thick] (Wm+1) at (2.4, -0.6) {$W_{m+1}$};
         \node[draw, rectangle, rounded corners=2pt, minimum width=1.8cm, minimum height=0.6cm, fill=tensorcolor, thick] (Vm-1) at (-3.6, 0.6) {$V_{m-1}$};
         \node[draw, rectangle, rounded corners=2pt, minimum width=1.8cm, minimum height=0.6cm, fill=btensorcolor, thick] (Vm) at (-1.2, 0.6) {$V_{m}$};
         \node[draw, rectangle, rounded corners=2pt, minimum width=1.8cm, minimum height=0.6cm, fill=btensorcolor, thick] (Vm+1) at (1.2, 0.6) {$V_{m+1}$};
         \draw[mid arrow=0.65] 
         ([xshift=-0.6cm,yshift=-0.6cm]Wm-1.270) -- ++(0,0.6);
         \draw[mid arrow=0.65] 
         ([xshift=0.6cm,yshift=-0.6cm]Wm-1.270) -- ++(0,0.6);
         \draw[mid arrow=0.65, btensorcolor] ([xshift=-0.6cm,yshift=-0.6cm]Wm.270) --   ++(0,0.6);
         \draw[mid arrow=0.65, btensorcolor] 
         ([xshift=0.6cm,yshift=-0.6cm]Wm.270) --  ++(0,0.6);
         \draw[mid arrow=0.65] 
         ([xshift=-0.6cm,yshift=-0.6cm]Wm+1.270) -- ++(0,0.6);
         \draw[mid arrow=0.65] 
         ([xshift=0.6cm,yshift=-0.6cm]Wm+1.270) -- ++(0,0.6);
         \draw[mid arrow=0.65] ([xshift=-0.6cm]Vm-1.90) -- ++(0,0.6);
          \draw[mid arrow=0.65] ([xshift=+0.6cm]Vm-1.90) -- ++(0,0.6);
         \draw[mid arrow=0.65, btensorcolor] ([xshift=-0.6cm]Vm.90) --  ++(0,0.6);
          \draw[mid arrow=0.65, btensorcolor] ([xshift=+0.6cm]Vm.90) --  ++(0,0.6);
          \draw[mid arrow=0.65, btensorcolor] ([xshift=-0.6cm]Vm+1.90) -- ++(0,0.6);
          \draw[mid arrow=0.65, btensorcolor] ([xshift=+0.6cm]Vm+1.90) -- ++(0,0.6);
          \draw[mid arrow=0.65, very thin] ([xshift=-0.6cm]Wm-1.90) -- ++(0,0.6);
          \draw[mid arrow=0.65,semithick, btensorcolor] ([xshift=+0.6cm]Wm-1.90) -- ++(0,0.6);
        \draw[mid arrow=0.65, very thin, btensorcolor] ([xshift=-0.6cm]Wm.90) --  ++(0,0.6);
          \draw[mid arrow=0.65,semithick,btensorcolor] ([xshift=+0.6cm]Wm.90) -- ++(0,0.6);
          \draw[mid arrow=0.65, very thin, btensorcolor] ([xshift=-0.6cm]Wm+1.90) -- ++(0,0.6);
          \draw[mid arrow=0.65,semithick] ([xshift=+0.6cm]Wm+1.90) -- ++(0,0.6);
          \draw[mid arrow=0.65,very thin] ([xshift=-0.6cm,yshift=-0.6cm]Vm-1.270) -- ++(0,0.6);
    \end{tikzpicture}
\end{align}
It can be seen that a local operator $O_{X}$, transformed by a unitary network $U = \text{Eval}(U_{Net})$, is localized at the base of the causal cone:
\begin{equation}
    u(O_{X}) = UO_{X}U^{\dagger} \in \mathcal{A}_{\cup_{x \in X}B[U_{Net}^{cc}(i_{x})]}.
\end{equation}
where the base of causal cone is given by
\begin{equation}
    B[G_{cc}(O_{2m})] = \{ 2m-1, 2m,2m+1, 2m+2 \}.
\end{equation}

Sometimes, it is desirable for the network architecture to inherently exhibit the property of locality-preserving. This can be done by requiring that every causal cone of a vertex is bounded:
\begin{equation}
   \forall v \in V: V^{cc}(v) \subseteq B_{U_{Net}}(v,R)
\end{equation}
for some radius $R$, where 
\begin{equation}
     \begin{aligned}
         B_{U_{Net}}(v,R) = \{ u\in V: d_{U_{Net}}(v,u) \le R\}.
     \end{aligned}
\end{equation}
This property is related to the graph $G[U_{Net}]$ and does not depend on the value of the local unitary tensor at each vertex. Therefore, we can maintain the locality-preserving property in QCA representations by designing unitary network architectures. Margolus partitioning architecture in \eqref{fig:causal cone} meets this condition.

A bilayer unitary network serves as a counterexample to a locality-preserving architecture. In such a network, the causal cone of any source $i$ extends across the whole lattice $\Gamma$. As a result, a bilayer unitary network can implement a global unitary operation that violates the locality-preserving condition.

Despite its non-local architecture, a bilayer unitary network can still function as a locality-preserving implementation of a QCA. 
\begin{align} \label{fig:shift OBC}
    \tikzsetnextfilename{shift_UN_OBC}
    \begin{tikzpicture} 
    \node[draw, rectangle, rounded corners=2pt, minimum width=0.6cm, minimum height=0.6cm, fill=tensorcolor, thick] (a) at (0,0) {};
    \node[draw, rectangle, rounded corners=2pt, minimum width=0.6cm, minimum height=0.6cm, fill=tensorcolor, thick] (b) at (1.2,0) {};
    \node[draw, rectangle, rounded corners=2pt, minimum width=0.6cm, minimum height=0.6cm, fill=tensorcolor, thick] (c) at (2.4,0) {};
    \node[draw, rectangle, rounded corners=2pt, minimum width=0.6cm, minimum height=0.6cm, fill=tensorcolor, thick] (d) at (3.6,0) {};
    \draw[mid arrow=0.65] (a.90) -- ++(0,0.6);
    \draw[mid arrow=0.65] ([yshift=-0.6cm]a.270) -- ++(0,0.6);
    \draw[mid arrow=0.65] (b.90) -- ++(0,0.6);
    \draw[mid arrow=0.65] ([yshift=-0.6cm]b.270) -- ++(0,0.6);
    \draw[mid arrow=0.65] (c.90) -- ++(0,0.6);
    \draw[mid arrow=0.65] ([yshift=-0.6cm]c.270) -- ++(0,0.6);
    \draw[mid arrow=0.65] (d.90) -- ++(0,0.6);
    \draw[mid arrow=0.65] ([yshift=-0.6cm]d.270) -- ++(0,0.6);
    \draw[mid arrow=0.65] (a.0) -- (b.180);
    \draw[mid arrow=0.65] (b.0) -- (c.180);
    \draw[mid arrow=0.65] (c.0) -- (d.180);
    \draw[mid arrow=0.5] (d.0) --  ++(0.6,0);
    \draw[mid arrow=0.5] ([xshift=-0.6cm]a.180) --  (a.180);
    \draw (a.180) to [out = 0,in=270] (a.90);
    \draw (b.180) to [out = 0,in=270] (b.90);
    \draw (c.180) to [out = 0,in=270] (c.90);
    \draw (d.180) to [out = 0,in=270] (d.90);
    \draw (a.270) to [out = 90,in=180] (a.0);
    \draw (b.270) to [out = 90,in=180] (b.0);
    \draw (c.270) to [out = 90,in=180] (c.0);
    \draw (d.270) to [out = 90,in=180] (d.0);
\end{tikzpicture}
\end{align}
For example, the above bilayer unitary network provides a locality-preserving implementation for a shift operation. It is a locality-preserving implementation since its information flows only locally. Therefore, an intrinsic definition is necessary to determine when a unitary network implementation maintains locality.

\begin{definition} [(Approximate) locality-preserving unitary network]
\label{def: (Approximate) locality-preserving unitary network}
    A unitary network $U_{Net}$ is locality preserving if there is some radius $R>0$, such that any sub-network $U^{S}_{Net} \subseteq U_{Net}$  forms a locality-preserving unitary operator with respect to the distance defined on the network $d_{U^{S}_{Net}}(s,t)$:
    \begin{equation}
         u^{S}_{Net}(O_{s}^{in})  \in  \mathcal{A}^{out}_{\bar{B}(s,R)} \quad \text{for} \ O_{s}^{in} \in \mathcal{A}_{s}^{in},
     \end{equation}
     where $\bar{B}(s,R)$ is the closed ball in graph $G[U^{S}_{Net}]$:
     \begin{equation}
     \begin{aligned}
         \bar{B}(s,R) = \{t \in V^{S}: d_{U_{Net}^{S}}(s,t) \le R\},
     \end{aligned}
     \end{equation}
    $\mathcal{A}_{s}^{in}$ is the algebra on the external input Hilbert space of vertex $s$, and $\mathcal{A}_{t}^{out}$ is the algebra on the external output Hilbert space of vertex $t$.

    Similarly, a unitary network $U_{Net}$ is approximately locality preserving if all its sub-networks are approximately locality preserving with respect to the distance $d_{U^{S}_{Net}}(s,t)$:

     \begin{equation}
        u^{S}_{Net}(O_{s}^{in}) \overset{f(r)}{\in} \mathcal{A}^{out}_{\bar{B}(s,R)}.
    \end{equation}
where $f (r)$ is some positive function that satisfied $\lim_{r\rightarrow\infty} f(r)=0$.
\end{definition}

Intuitively, a locality-preserving unitary network forbids non-local information flow. This is illustrated by the following characteristic of such unitary networks: 
For two vertices $s,t \in V$, if $d_{U_{Net}}(s,t) > R$, then for any sub-network $U_{Net}^{S}$ from $s$ to $t$, 
     \begin{equation}
         \boldsymbol{S}(u^{S}_{Net}(\mathcal{A}_{s}^{in}), \mathcal{A}_{t}^{out}) = I_{t}^{out},  
     \end{equation}
 where $\boldsymbol{S}$ means the support algebra, $I_{t}^{out}$ is the identity.

\section{Non-local unitaries and Unitary networks} 
\label{Sec: Non-local unitaries and Unitary networks}

This section provides examples showing that unitary networks with finite bond dimensions can capture non-local unitaries, which are unitary operators that violate the locality-preserving condition. The stacked XY circuit exemplifies an ALPU with exponentially decaying tails. The stacked CNOT circuit and 1D Kramers-Wannier transformation are non-local unitary operators, which do not display decaying tails.

\subsection{Stacked CNOT circuit}

A quantum circuit of infinite depth is typically non-local. Here, we present an example of an SQC, yet it can be represented by a unitary network with finite bond dimensions. 

Consider the CNOT gate $C_{n}$ that operates on the $n$ th and $n+1$ th qubits. The dynamic of the algebra of qubits is given by \cite{Nielsen_Chuang_2010}
\begin{equation}
\label{Eq:CNOT algebra}
    \begin{aligned}
        &C_{n} X_{n} C_{n} = X_{n} X_{n+1}, \\
          &C_{n} Z_{n} C_{n} = Z_{n},  \\
        &C_{n} X_{n+1} C_{n} = X_{n+1}, \\
        &C_{n} Z_{n+1} C_{n} = Z_{n} Z_{n+1}.    
    \end{aligned}
\end{equation}
The transformation of $Y_{n}$ can be derived from the transformations of $X_{n}$ and $Z_{n}$, with $X_{n}, Y_{n}, Z_{n}$ being Pauli matrices on the $n$th qubit, and $C_{n}^{-1} = C_{n}^{\dagger} = C_{n}$.

\begin{definition} [Stacked CNOT circuit]
The stacked CNOT circuit is defined by the limit of $StC_{N}$, achieved through the repeated application of CNOT gates:
\begin{equation}
\begin{aligned}
StC_{N} &=  \prod_{n=N}^{-N} C_{n} = C_{-N}\cdots C_{n-1} C_{n} C_{n+1} \cdots  C_{N},\\
    StC &= \lim_{N \rightarrow +\infty} StC_{N} = \cdots C_{n-1} C_{n} C_{n+1} \cdots.
\end{aligned}
\end{equation} 
The CNOT gates $C_{n}$ and $C_{n+1}$ do not commute, so we specify they are performed in the reverse order ($C_{n+1}$ is performed before $C_{n}$). 
The diagram of a stacked CNOT circuit is given below:
\begin{align}
\tikzsetnextfilename{stacked_CNOT}
\begin{tikzpicture}
    \filldraw[black] (-0.6,0) circle (2pt) ;  
    \filldraw[black] (0,0) circle (2pt) ;
    \filldraw[black] (0.6,0) circle (2pt);
    \filldraw[black] (1.2,0) circle (2pt);
    \draw (-0.6, -0.6) -- (-0.6, 1.2);
    \draw (0, -0.6) -- (0, 1.2);
    \draw (0.6, -0.6) -- (0.6, 1.2);
    \draw (1.2, -0.6) -- (1.2, 1.2);
    \node[draw, circle, minimum size=0.3cm] (a) at (-0.6,0.6) {};
    \node[draw, circle, minimum size=0.3cm] (b) at (0,0.6) {};
    \node[draw, circle, minimum size=0.3cm] (c) at (0.6,0.6) {};
    \node[draw, circle, minimum size=0.3cm] (d) at (1.2,0.6) {};
    \draw (a.180) -- (a.0);
    \draw (b.180) -- (b.0);
    \draw (c.180) -- (c.0);
    \draw (d.180) -- (d.0);
    \draw[dashed] (-1.2,0) to [out=0,in=180] (a.180);
    \draw (-0.6,0) to [out=0,in=180] (b.180);
    \draw (0,0) to [out=0,in=180] (c.180);
    \draw (0.6,0) to [out=0,in=180] (d.180);
    \draw[dashed] (1.2,0) to [out=0,in=180] ([xshift=0.6cm]d.180);
\end{tikzpicture}
\end{align}
\end{definition}

By iteratively using Eq.~(\ref{Eq:CNOT algebra}), we may get the algebra of $StC_{N}$:
\begin{equation}
\label{Eq:StC algebra}
\begin{aligned}
   & StC_{N} X_{n} StC_{N}^{\dagger} = X_{n} X_{n+1} \quad  \text{for} \ n \le  N-1, \\
   & StC_{N} X_{N} StC_{N}^{\dagger} = X_{N}, \\
   & StC_{N} Z_{n} StC_{N}^{\dagger} = Z_{-N} \cdots Z_{n},  \quad \text{for} \ -N \le  n \le  N.
\end{aligned}
\end{equation}
The stacked CNOT circuit is non-local, as the supports of $u_{StC}(Y_{n})$ and $u_{StC}(Z_{n})$ indefinitely extend in the thermodynamic limit as $N \rightarrow \infty$. A reverse alignment of the stacked-CNOT can give a different global unitary operation, which can be found in Appendix \ref{Append: Reversed Stacked-CNOT}. In an infinite OBC scenario, the stacked CNOT circuit can be represented by a bilayer unitary network, as illustrated below. 
\begin{align}
\tikzsetnextfilename{stacked_CNOT_UN}
\begin{tikzpicture}
    \node[draw, rectangle, rounded corners=2pt, minimum width=1cm, minimum height=0.6cm, fill=tensorcolor, thick] (a1) at (0,0) {};
    \node[draw, rectangle, rounded corners=2pt, minimum width=1cm, minimum height=0.6cm, fill=tensorcolor, thick] (b1) at (1.6,0) {};
    \node[draw, rectangle, rounded corners=2pt, minimum width=1cm, minimum height=0.6cm, fill=tensorcolor, thick] (c1) at (3.2,0) {};
    \node[draw, rectangle, rounded corners=2pt, minimum width=1cm, minimum height=0.6cm, fill=tensorcolor, thick] (d1) at (4.8,0) {};
    \node[draw, rectangle, rounded corners=2pt, minimum width=1cm, minimum height=0.6cm, fill=tensorcolor, thick] (a2) at (0,1.2) {};
    \node[draw, rectangle, rounded corners=2pt, minimum width=1cm, minimum height=0.6cm, fill=tensorcolor, thick] (b2) at (1.6,1.2) {};
    \node[draw, rectangle, rounded corners=2pt, minimum width=1cm, minimum height=0.6cm, fill=tensorcolor, thick] (c2) at (3.2,1.2) {};
    \node[draw, rectangle, rounded corners=2pt, minimum width=1cm, minimum height=0.6cm, fill=tensorcolor, thick] (d2) at (4.8,1.2) {};
    \draw[mid arrow=0.65] (a2.90) -- ++(0,0.6);
    \draw[mid arrow=0.65] ([yshift=-0.6cm]a1.270) -- ++(0,0.6);
    \draw[mid arrow=0.65] (b2.90) -- ++(0,0.6);
    \draw[mid arrow=0.65] ([yshift=-0.6cm]b1.270) -- ++(0,0.6);
    \draw[mid arrow=0.65] (c2.90) -- ++(0,0.6);
    \draw[mid arrow=0.65] ([yshift=-0.6cm]c1.270) -- ++(0,0.6);
    \draw[mid arrow=0.65] (d2.90) -- ++(0,0.6);
    \draw[mid arrow=0.65] ([yshift=-0.6cm]d1.270) -- ++(0,0.6);
    \draw[mid arrow=0.65] (a1.0) -- (b1.180);
    \draw[mid arrow=0.65] (b1.0) -- (c1.180);
    \draw[mid arrow=0.65] (c1.0) -- (d1.180);
    \draw[mid arrow=0.65] (b2.180) -- (a2.0);
    \draw[mid arrow=0.65] (c2.180) -- (b2.0);
    \draw[mid arrow=0.65] (d2.180) -- (c2.0);
    \draw[mid arrow=0.65] (a1.90) -- (a2.270);
    \draw[mid arrow=0.65] (b1.90) -- (b2.270);
    \draw[mid arrow=0.65] (c1.90) -- (c2.270);
    \draw[mid arrow=0.65] (d1.90) -- (d2.270);
    \draw (a1.270) to [out=90,in=180] (a1.0);
    \draw (b1.270) to [out=90, in=180] (b1.0);
    \draw (c1.270) to [out=90, in=180] (c1.0);
    \draw (d1.270) to [out=90, in=180] (d1.0);
    \draw (a1.180) to [out=0, in=270] (a1.90);
    \draw (b1.180) to [out=0, in=270] (b1.90);
    \draw (c1.180) to [out=0, in=270] (c1.90);
    \draw (d1.180) to [out=0, in=270] (d1.90);
    \draw[mid arrow=0.65] ([xshift=-0.6cm]a1.180) -- ++(0.6,0);
    \draw[mid arrow=0.65] (d1.0) -- ++(0.6,0);
    \draw[mid arrow=0.65] ([xshift=0.6cm]d2.0) -- ++(-0.6,0);
    \draw[mid arrow=0.65] (a2.180) -- ++(-0.6,0);
    \draw (d2.0) -- (d2.180);
    \draw (c2.0) -- (c2.180);
    \draw (b2.0) -- (b2.180);
    \draw (a2.0) -- (a2.180);
    \filldraw[black] ([xshift=-0.2cm]a2) circle (1pt) ; 
    \filldraw[black] ([xshift=-0.2cm]b2) circle (1pt) ; 
    \filldraw[black] ([xshift=-0.2cm]c2) circle (1pt) ; 
    \filldraw[black] ([xshift=-0.2cm]d2) circle (1pt) ; 
    \node[draw, circle, minimum size=0.3cm] (a3) at ([xshift=0.2cm]a2) {};
    \node[draw, circle, minimum size=0.3cm] (b3) at ([xshift=0.2cm]b2) {};
    \node[draw, circle, minimum size=0.3cm] (c3) at ([xshift=0.2cm]c2) {};
    \node[draw, circle, minimum size=0.3cm] (d3) at ([xshift=0.2cm]d2) {};
     \draw (a3.270) -- (a3.90);
    \draw (b3.270) -- (b3.90);
    \draw (c3.270) -- (c3.90);
    \draw (d3.270) -- (d3.90);
    \draw (a3.90) to [out=90, in=270] (a2.90);
    \draw (b3.90) to [out=90, in=270] (b2.90);
    \draw (c3.90) to [out=90, in=270] (c2.90);
    \draw (d3.90) to [out=90, in=270] (d2.90);
    \draw (a2.270) to [out=90, in=0 ]([xshift=-0.2cm]a2); 
    \draw (b2.270) to [out=90, in=0 ]([xshift=-0.2cm]b2); 
    \draw (c2.270) to [out=90, in=0 ]([xshift=-0.2cm]c2); 
    \draw (d2.270) to [out=90, in=0 ]([xshift=-0.2cm]d2); 
\end{tikzpicture}
\end{align}
The bottom layer of this unitary network is simply a shift, and its top layer is composed of horizontally linked CNOT gates. 

Next, we explore the definition of a stacked-CNOT circuit on a PBC system. Enforcing translational invariance makes the causal order of each CNOT gate ambiguous \cite{schumacher2004reversiblequantumcellularautomata}: it is not possible to determine which CNOT gate is performed first:
\begin{align}
\tikzsetnextfilename{stacked_CNOT_PBC}
\begin{tikzpicture}
    \filldraw[black] (-0.6,0) circle (2pt) ;  
    \filldraw[black] (0,0) circle (2pt) ;
    \filldraw[black] (0.6,0) circle (2pt);
    \filldraw[black] (1.2,0) circle (2pt);
    \draw (-0.6, -0.6) -- (-0.6, 1.2);
    \draw (0, -0.6) -- (0, 1.2);
    \draw (0.6, -0.6) -- (0.6, 1.2);
    \draw (1.2, -0.6) -- (1.2, 1.2);
    \node[draw, circle, minimum size=0.3cm] (a) at (-0.6,0.6) {};
    \node[draw, circle, minimum size=0.3cm] (b) at (0,0.6) {};
    \node[draw, circle, minimum size=0.3cm] (c) at (0.6,0.6) {};
    \node[draw, circle, minimum size=0.3cm] (d) at (1.2,0.6) {};
    \draw (a.180) -- (a.0);
    \draw (b.180) -- (b.0);
    \draw (c.180) -- (c.0);
    \draw (d.180) -- (d.0);
    \draw (-0.6,0) to [out=0,in=180] (b.180);
    \draw (0,0) to [out=0,in=180] (c.180);
    \draw (0.6,0) to [out=0,in=180] (d.180);
    \draw (1.2,0) -- ++(0.3,0) -- ++(0,-0.2) -- ++(-2.4,0) -- ++ (0,0.8) -- (a.180);
\end{tikzpicture}
\end{align}
As a result, the global operator is not unitary. This can be seen from the transformation of states:
\begin{equation}
\begin{aligned}
    &|0000 \rangle \overset{StC}{\longrightarrow} |0000\rangle, \\
    &|1111 \rangle \overset{StC}{\longrightarrow} |0000\rangle.
\end{aligned}
\end{equation}
On the other hand, breaking translational invariance and allowing alternative boundaries can yield a valid SQC applicable to both PBC and finite OBC systems:
\begin{align}
\tikzsetnextfilename{stacked_CNOT_PBC_SQC}
\begin{tikzpicture}
    \filldraw[black] (-0.6,0) circle (2pt) ;  
    \filldraw[black] (0,0) circle (2pt) ;
    \filldraw[black] (0.6,0) circle (2pt);
    \draw (-0.6, -0.6) -- (-0.6, 1.2);
    \draw (0, -0.6) -- (0, 1.2);
    \draw (0.6, -0.6) -- (0.6, 1.2);
    \draw (1.2, -0.6) -- (1.2, 1.2);
    \node[draw, circle, minimum size=0.3cm] (b) at (0,0.6) {};
    \node[draw, circle, minimum size=0.3cm] (c) at (0.6,0.6) {};
    \node[draw, circle, minimum size=0.3cm] (d) at (1.2,0.6) {};
    \draw (b.180) -- (b.0);
    \draw (c.180) -- (c.0);
    \draw (d.180) -- (d.0);
    \draw (-0.6,0) to [out=0,in=180] (b.180);
    \draw (0,0) to [out=0,in=180] (c.180);
    \draw (0.6,0) to [out=0,in=180] (d.180);
\end{tikzpicture}
\end{align}
This SQC corresponds to the unitary network as:
\begin{align}
\tikzsetnextfilename{stacked_CNOT_UN_SQC}
\begin{tikzpicture}
    \node[draw, rectangle, rounded corners=2pt, minimum width=1cm, minimum height=0.6cm, fill=tensorcolor, thick] (a1) at (0,0) {};
    \node[draw, rectangle, rounded corners=2pt, minimum width=1cm, minimum height=0.6cm, fill=tensorcolor, thick] (b1) at (1.6,0) {};
    \node[draw, rectangle, rounded corners=2pt, minimum width=1cm, minimum height=0.6cm, fill=tensorcolor, thick] (c1) at (3.2,0) {};
    \node[draw, rectangle, rounded corners=2pt, minimum width=1cm, minimum height=0.6cm, fill=tensorcolor, thick] (d1) at (4.8,0) {};
    \node[draw, rectangle, rounded corners=2pt, minimum width=1cm, minimum height=0.6cm, fill=tensorcolor, thick] (a2) at (0,1.2) {};
    \node[draw, rectangle, rounded corners=2pt, minimum width=1cm, minimum height=0.6cm, fill=tensorcolor, thick] (b2) at (1.6,1.2) {};
    \node[draw, rectangle, rounded corners=2pt, minimum width=1cm, minimum height=0.6cm, fill=tensorcolor, thick] (c2) at (3.2,1.2) {};
    \node[draw, rectangle, rounded corners=2pt, minimum width=1cm, minimum height=0.6cm, fill=tensorcolor, thick] (d2) at (4.8,1.2) {};
    \draw[mid arrow=0.65] (a2.90) -- ++(0,0.6);
    \draw[mid arrow=0.65] ([yshift=-0.6cm]a1.270) -- ++(0,0.6);
    \draw[mid arrow=0.65] (b2.90) -- ++(0,0.6);
    \draw[mid arrow=0.65] ([yshift=-0.6cm]b1.270) -- ++(0,0.6);
    \draw[mid arrow=0.65] (c2.90) -- ++(0,0.6);
    \draw[mid arrow=0.65] ([yshift=-0.6cm]c1.270) -- ++(0,0.6);
    \draw[mid arrow=0.65] (d2.90) -- ++(0,0.6);
    \draw[mid arrow=0.65] ([yshift=-0.6cm]d1.270) -- ++(0,0.6);
    \draw[mid arrow=0.65] (a1.0) -- (b1.180);
    \draw[mid arrow=0.65] (b1.0) -- (c1.180);
    \draw[mid arrow=0.65] (c1.0) -- (d1.180);
    \draw[mid arrow=0.65] (b2.180) -- (a2.0);
    \draw[mid arrow=0.65] (c2.180) -- (b2.0);
    \draw[mid arrow=0.65] (d2.180) -- (c2.0);
    \draw[mid arrow=0.65] (b1.90) -- (b2.270);
    \draw[mid arrow=0.65] (c1.90) -- (c2.270);
    \draw[mid arrow=0.65] (d1.90) -- (d2.270);
    \draw[mid arrow=0.65] ([xshift=0.2cm]d1.90) -- ([xshift=0.2cm]d2.270);
    \draw (a1.270) to [out=90,in=180] (a1.0);
    \draw (b1.270) to [out=90, in=180] (b1.0);
    \draw (c1.270) to [out=90, in=180] (c1.0);
    \draw (d1.270) to [out=90, in=270] ([xshift=0.2cm]d1.90);
    \draw (b1.180) to [out=0, in=270] (b1.90);
    \draw (c1.180) to [out=0, in=270] (c1.90);
    \draw (d1.180) to [out=0, in=270] (d1.90);
    \draw (c2.0) -- (c2.180);
    \draw (b2.0) -- (b2.180);
    \draw (a2.0) to [out=180, in=270] (a2.90);
    \filldraw[black] ([xshift=-0.2cm]b2) circle (1pt) ; 
    \filldraw[black] ([xshift=-0.2cm]c2) circle (1pt) ; 
    \filldraw[black] ([xshift=-0.2cm]d2) circle (1pt) ; 
    \node[draw, circle, minimum size=0.3cm] (b3) at ([xshift=0.2cm]b2) {};
    \node[draw, circle, minimum size=0.3cm] (c3) at ([xshift=0.2cm]c2) {};
    \node[draw, circle, minimum size=0.3cm] (d3) at ([xshift=0.2cm]d2) {};
    \draw (b3.270) -- (b3.90);
    \draw (c3.270) -- (c3.90);
    \draw (d3.270) -- (d3.90);
    \draw (b3.90) to [out=90, in=270] (b2.90);
    \draw (c3.90) to [out=90, in=270] (c2.90);
    \draw (d3.90) to [out=90, in=270] (d2.90);
    \draw (b2.270) to [out=90, in=0 ]([xshift=-0.2cm]b2); 
    \draw (c2.270) to [out=90, in=0 ]([xshift=-0.2cm]c2); 
    \draw (d2.270) to [out=90, in=0 ]([xshift=-0.2cm]d2); 
    \draw ([xshift=0.2cm]d2.270) -- (d3.270);
    \draw (d3.0) -- (d2.180);
\end{tikzpicture}
\end{align}
It can be viewed as either a finite OBC system or a PBC system with a cut. Note that its boundary behavior differs from the bulk.

\subsection{1D Kramers-Wannier transformation}
Kramers-Wannier (KW) transformation relates the paramagnetic and ferromagnetic phases of the transverse-field Ising chain. Consider the transverse field Ising model in 1D:
\begin{equation}
    H = -J \sum_{n} Z_{n}Z_{n+1} - B \sum_{n} X_{n},
\end{equation}
where $J>0, B>0$. For $B > J$, the ground state is in the symmetric phase, while for $B < J$, it is in the symmetry breaking phase. The 1D KW transformation is given by a map of Pauli operators \cite{aasen2016topological, cao2023subsystem}:
\begin{equation}
\label{eq: KW-transformation}
    X_{n} \rightarrow Z_{n}Z_{n+1}, \ Z_{n} \rightarrow \prod_{m \le n} X_{n},  \ Z_{n-1}Z_{n} \rightarrow X_{n}.
\end{equation}
Under KW transformation the function of $J$ and $B$ exchange. Therefore, KW transformation maps the ground state of the symmetric phase $|\Psi_{SP} \rangle$ to the ground state of the symmetry breaking phase $|\Psi_{SB} \rangle$ and vice versa. By comparing (\ref{eq: KW-transformation}) and (\ref{Eq:StC algebra}), we observe that the KW transformation can be achieved with a stacked-CNOT followed by a Hadamard gate $H$ on each qubit. Therefore, it can be represented by a bilayer unitary network: 
\begin{align}
\tikzsetnextfilename{KW-transformation}
\begin{tikzpicture}
    \node[draw, rectangle, rounded corners=2pt, minimum width=1cm, minimum height=0.6cm, fill=tensorcolor, thick] (a1) at (0,0) {};
    \node[draw, rectangle, rounded corners=2pt, minimum width=1cm, minimum height=0.6cm, fill=tensorcolor, thick] (b1) at (1.6,0) {};
    \node[draw, rectangle, rounded corners=2pt, minimum width=1cm, minimum height=0.6cm, fill=tensorcolor, thick] (c1) at (3.2,0) {};
    \node[draw, rectangle, rounded corners=2pt, minimum width=1cm, minimum height=0.6cm, fill=tensorcolor, thick] (d1) at (4.8,0) {};
    \node[draw, rectangle, rounded corners=2pt, minimum width=1cm, minimum height=0.6cm, fill=tensorcolor, thick] (a2) at (0,1.2) {};
    \node[draw, rectangle, rounded corners=2pt, minimum width=1cm, minimum height=0.6cm, fill=tensorcolor, thick] (b2) at (1.6,1.2) {};
    \node[draw, rectangle, rounded corners=2pt, minimum width=1cm, minimum height=0.6cm, fill=tensorcolor, thick] (c2) at (3.2,1.2) {};
    \node[draw, rectangle, rounded corners=2pt, minimum width=1cm, minimum height=0.6cm, fill=tensorcolor, thick] (d2) at (4.8,1.2) {};
    \node[draw, rectangle, rounded corners=2pt, minimum width=0.6cm, minimum height=0.6cm, fill=tensorcolor, thick] (a4) at (0,2.4) {$H$};
    \node[draw, rectangle, rounded corners=2pt, minimum width=0.6cm, minimum height=0.6cm, fill=tensorcolor, thick] (b4) at (1.6,2.4) {$H$};
    \node[draw, rectangle, rounded corners=2pt, minimum width=0.6cm, minimum height=0.6cm, fill=tensorcolor, thick] (c4) at (3.2,2.4) {$H$};
    \node[draw, rectangle, rounded corners=2pt, minimum width=0.6cm, minimum height=0.6cm, fill=tensorcolor, thick] (d4) at (4.8,2.4) {$H$};
    \draw[mid arrow=0.65] (a2.90) -- ++(0,0.6);
    \draw[mid arrow=0.65] ([yshift=-0.6cm]a1.270) -- ++(0,0.6);
    \draw[mid arrow=0.65] (b2.90) -- ++(0,0.6);
    \draw[mid arrow=0.65] ([yshift=-0.6cm]b1.270) -- ++(0,0.6);
    \draw[mid arrow=0.65] (c2.90) -- ++(0,0.6);
    \draw[mid arrow=0.65] ([yshift=-0.6cm]c1.270) -- ++(0,0.6);
    \draw[mid arrow=0.65] (d2.90) -- ++(0,0.6);
    \draw[mid arrow=0.65] ([yshift=-0.6cm]d1.270) -- ++(0,0.6);
    \draw[mid arrow=0.65] (a1.0) -- (b1.180);
    \draw[mid arrow=0.65] (b1.0) -- (c1.180);
    \draw[mid arrow=0.65] (c1.0) -- (d1.180);
    \draw[mid arrow=0.65] (b2.180) -- (a2.0);
    \draw[mid arrow=0.65] (c2.180) -- (b2.0);
    \draw[mid arrow=0.65] (d2.180) -- (c2.0);
    \draw[mid arrow=0.65] (a1.90) -- (a2.270);
    \draw[mid arrow=0.65] (b1.90) -- (b2.270);
    \draw[mid arrow=0.65] (c1.90) -- (c2.270);
    \draw[mid arrow=0.65] (d1.90) -- (d2.270);
    \draw[mid arrow=0.65] (a4.90) -- ++(0,0.6);
    \draw[mid arrow=0.65] (b4.90) -- ++(0,0.6);
    \draw[mid arrow=0.65] (c4.90) -- ++(0,0.6);
    \draw[mid arrow=0.65] (d4.90) -- ++(0,0.6);
    \draw (a1.270) to [out=90,in=180] (a1.0);
    \draw (b1.270) to [out=90, in=180] (b1.0);
    \draw (c1.270) to [out=90, in=180] (c1.0);
    \draw (d1.270) to [out=90, in=180] (d1.0);
    \draw (a1.180) to [out=0, in=270] (a1.90);
    \draw (b1.180) to [out=0, in=270] (b1.90);
    \draw (c1.180) to [out=0, in=270] (c1.90);
    \draw (d1.180) to [out=0, in=270] (d1.90);
    \draw[mid arrow=0.65] ([xshift=-0.6cm]a1.180) -- ++(0.6,0);
    \draw[mid arrow=0.65] (d1.0) -- ++(0.6,0);
    \draw[mid arrow=0.65] ([xshift=0.6cm]d2.0) -- ++(-0.6,0);
    \draw[mid arrow=0.65] (a2.180) -- ++(-0.6,0);
    \draw (d2.0) -- (d2.180);
    \draw (c2.0) -- (c2.180);
    \draw (b2.0) -- (b2.180);
    \draw (a2.0) -- (a2.180);
    \filldraw[black] ([xshift=-0.2cm]a2) circle (1pt) ; 
    \filldraw[black] ([xshift=-0.2cm]b2) circle (1pt) ; 
    \filldraw[black] ([xshift=-0.2cm]c2) circle (1pt) ; 
    \filldraw[black] ([xshift=-0.2cm]d2) circle (1pt) ; 
    \node[draw, circle, minimum size=0.3cm] (a3) at ([xshift=0.2cm]a2) {};
    \node[draw, circle, minimum size=0.3cm] (b3) at ([xshift=0.2cm]b2) {};
    \node[draw, circle, minimum size=0.3cm] (c3) at ([xshift=0.2cm]c2) {};
    \node[draw, circle, minimum size=0.3cm] (d3) at ([xshift=0.2cm]d2) {};
     \draw (a3.270) -- (a3.90);
    \draw (b3.270) -- (b3.90);
    \draw (c3.270) -- (c3.90);
    \draw (d3.270) -- (d3.90);
    \draw (a3.90) to [out=90, in=270] (a2.90);
    \draw (b3.90) to [out=90, in=270] (b2.90);
    \draw (c3.90) to [out=90, in=270] (c2.90);
    \draw (d3.90) to [out=90, in=270] (d2.90);
    \draw (a2.270) to [out=90, in=0 ]([xshift=-0.2cm]a2); 
    \draw (b2.270) to [out=90, in=0 ]([xshift=-0.2cm]b2); 
    \draw (c2.270) to [out=90, in=0 ]([xshift=-0.2cm]c2); 
    \draw (d2.270) to [out=90, in=0 ]([xshift=-0.2cm]d2); 
\end{tikzpicture}
\end{align}
where Hadamard gates $H$ may be integrated within the CNOT blocks.

The MPO implementations of the KW transformation are presented in \cite{bridgeman2017anomalies, Lootens2023, Tantivasadakarn2024}. An SQC realization of the 1D KW transformation was given in \cite{Chen_2024}.

\subsection{Approximately locality preserving unitaries}

Below we show that a unitary network can effectively represent ALPU, featuring exponentially decaying tails:
\begin{equation}
    f(r) \sim e^{-r/\xi},
\end{equation}
where $\xi$ is the characteristic length.

A XY entangling gate \cite{kempe2001encoded,kempe2001encoded,echternach2001universal, schuch2003natural, abrams2019implementation} on two qubits is defined as 
\begin{equation}
\begin{aligned}
    XY(\theta) &= \exp[i\frac{\theta}{2}(\frac{X_{1}\otimes X_{2}+Y_{1}\otimes Y_{2}}{2})] \\
    &= 
\left(
\begin{array}{cccc}
   1  & 0 & 0 & 0 \\
   0  & c & is & 0 \\
   0  & is & c & 0 \\
   0 & 0 & 0 & 1
\end{array}
\right),
\end{aligned}
\end{equation}
where $c=\cos (\frac{\theta}{2}), s=\sin(\frac{\theta}{2})$. $X_{1}$ and $Y_{1}$ are Pauli operators on the first qubit, $X_{2}$ and $Y_{2}$ are Pauli operators on the second qubit. The algebraic dynamics of qubits operated by the XY gate is described by:
\begin{equation}
\begin{aligned}
    X_{1} \overset{XY(\theta)}{\longrightarrow} &\cos{\theta} \cdot X_{1} + \sin{\theta} \cdot Z_{1}Y_{2}, \\
     Y_{1} \overset{XY(\theta)}{\longrightarrow} &\cos{\theta} \cdot Y_{1} - \sin{\theta} \cdot Z_{1}X_{2}, \\
     Z_{1} \overset{XY(\theta)}{\longrightarrow} &
     Z_{1} + (\cos2\theta-1)(\frac{Z_{1}-Z_{2}}{2})\\
     &+ \sin2\theta \cdot (\frac{Y_{1}X_{2}-X_{1}Y_{2}}{2})\\
     =&\cos^2\theta \cdot Z_{1} + \sin^{2}\theta \cdot Z_{2}\\
     &+ \sin \theta \cos \theta \cdot (Y_{1}X_{2}-X_{1}Y_{2}).\\
\end{aligned}
\end{equation}

\begin{definition} [Stacked-XY circuit]
    The stacked-XY circuit is defined by the limit of $StXY_{N}$, achieved through the repeated application of XY gates:
\begin{equation}
\begin{aligned}
StXY_{N}(\theta) &=  \prod_{n=N}^{-N} XY_{n} = XY_{-N}(\theta)\cdots  XY_{n}(\theta) \cdots  XY_{N}(\theta)\\
    StXY(\theta) &= \lim_{N \rightarrow +\infty} StXY_{N}(\theta)\\
    &= \cdots XY_{n-1}(\theta) \cdot XY_{n}(\theta) \cdot XY_{n+1}(\theta) \cdots
\end{aligned}
\end{equation} 
\end{definition}
The dynamic of algebra under the Stacked-XY circuit is complicated, but one can verify that when $\sin\theta<1$, a stacked-XY circuit $StXY(\theta)$ is an ALPU with an exponentially-decaying tail:
\begin{equation}
   f(r) \sim e^{-r/\xi},
   \quad \xi = \frac{a}{-\ln (\sqrt{1-\cos^{4} \theta})},
\end{equation}
where $a$ is the lattice constant. A stacked-XY circuit can be represented by a unitary network as below:
\begin{align}
\tikzsetnextfilename{stacked_XY_UN}
\begin{tikzpicture}
    \node[draw, rectangle, rounded corners=2pt, minimum width=1cm, minimum height=0.6cm, fill=tensorcolor, thick] (a1) at (0,0) {};
    \node[draw, rectangle, rounded corners=2pt, minimum width=1cm, minimum height=0.6cm, fill=tensorcolor, thick] (b1) at (1.6,0) {};
    \node[draw, rectangle, rounded corners=2pt, minimum width=1cm, minimum height=0.6cm, fill=tensorcolor, thick] (c1) at (3.2,0) {};
    \node[draw, rectangle, rounded corners=2pt, minimum width=1cm, minimum height=0.6cm, fill=tensorcolor, thick] (d1) at (4.8,0) {};
    \node[draw, rectangle, rounded corners=2pt, minimum width=1cm, minimum height=0.6cm, font=\scriptsize, fill=tensorcolor, thick] (a2) at (0,1.2) {$XY(\theta)$};
    \node[draw, rectangle, rounded corners=2pt, minimum width=1cm, minimum height=0.6cm, font=\scriptsize, fill=tensorcolor, thick] (b2) at (1.6,1.2) {$XY(\theta)$};
    \node[draw, rectangle, rounded corners=2pt, minimum width=1cm, minimum height=0.6cm, font=\scriptsize, fill=tensorcolor, thick] (c2) at (3.2,1.2) {$XY(\theta)$};
    \node[draw, rectangle, rounded corners=2pt, minimum width=1cm, minimum height=0.6cm, font=\scriptsize, fill=tensorcolor, thick] (d2) at (4.8,1.2) {$XY(\theta)$};
    \draw[mid arrow=0.65] (a2.90) -- ++(0,0.6);
    \draw[mid arrow=0.65] ([yshift=-0.6cm]a1.270) -- ++(0,0.6);
    \draw[mid arrow=0.65] (b2.90) -- ++(0,0.6);
    \draw[mid arrow=0.65] ([yshift=-0.6cm]b1.270) -- ++(0,0.6);
    \draw[mid arrow=0.65] (c2.90) -- ++(0,0.6);
    \draw[mid arrow=0.65] ([yshift=-0.6cm]c1.270) -- ++(0,0.6);
    \draw[mid arrow=0.65] (d2.90) -- ++(0,0.6);
    \draw[mid arrow=0.65] ([yshift=-0.6cm]d1.270) -- ++(0,0.6);
    \draw[mid arrow=0.65] (a1.0) -- (b1.180);
    \draw[mid arrow=0.65] (b1.0) -- (c1.180);
    \draw[mid arrow=0.65] (c1.0) -- (d1.180);
    \draw[mid arrow=0.65] (b2.180) -- (a2.0);
    \draw[mid arrow=0.65] (c2.180) -- (b2.0);
    \draw[mid arrow=0.65] (d2.180) -- (c2.0);
    \draw[mid arrow=0.65] (a1.90) -- (a2.270);
    \draw[mid arrow=0.65] (b1.90) -- (b2.270);
    \draw[mid arrow=0.65] (c1.90) -- (c2.270);
    \draw[mid arrow=0.65] (d1.90) -- (d2.270);
    \draw (a1.270) to [out=90,in=180] (a1.0);
    \draw (b1.270) to [out=90, in=180] (b1.0);
    \draw (c1.270) to [out=90, in=180] (c1.0);
    \draw (d1.270) to [out=90, in=180] (d1.0);
    \draw (a1.180) to [out=0, in=270] (a1.90);
    \draw (b1.180) to [out=0, in=270] (b1.90);
    \draw (c1.180) to [out=0, in=270] (c1.90);
    \draw (d1.180) to [out=0, in=270] (d1.90);
    \draw[mid arrow=0.65] ([xshift=-0.6cm]a1.180) -- ++(0.6,0);
    \draw[mid arrow=0.65] (d1.0) -- ++(0.6,0);
    \draw[mid arrow=0.65] ([xshift=0.6cm]d2.0) -- ++(-0.6,0);
    \draw[mid arrow=0.65] (a2.180) -- ++(-0.6,0);
\end{tikzpicture}
\end{align}

\section{Information flow and the GNVW index}
\label{Sec: Information flow and the GNVW index}

This section introduces the idea of the information flow. In the unitary network, each leg carries an information flow with a magnitude given by the logarithm of its Hilbert space dimension and follows the direction of the leg. In a local unitary tensor, information flows in and out equally, maintaining a conservation of information.

In 1D unitary networks, the net information flow through any surface remains constant. Moreover, the net entropy flow in locality-preserving unitary network representations naturally captures the property of physical operators, mirroring the GNVW index \cite{Gross_2012}.

\subsection{Information flow and information conservation} \label{subsec: Information flow}

The local unitary nature of a unitary network allows for the definition of an additional structure on its directed graph.
\begin{definition} [Flow network]
A flow network (or network flows) \cite{manber1989introduction} is a directed graph $G = (V, E)$ that includes specific vertices: sources $In =\{i_{1}, i_{2}, \cdots \}$ and sinks $Out =\{ o_{1}, o_2,\cdots \}$, along with a flow defined within $G$.

A flow in $G$ is a function $f: E \rightarrow \mathbb{R}$ satisfying flow conservation condition: for each $u \in V \backslash (In \cup Out)$, 
\begin{equation}
    \sum_{(u,v)\in E}f(u,v) = \sum_{(v,u) \in E} f(v,u).
\end{equation}
In certain contexts, each edge is assigned a nonnegative capacity $c(u, v)$, which is irrelevant for our discussion.
\end{definition}

\begin{proposition} [Unitary network is a flow network]
A unitary network $U_{Net}$ is a flow network, with graph $G[U_{Net}]$ and flow function $f: E \rightarrow \mathbb{R}$:
\begin{equation}
    f(e) =  \log_{d} \dim \mathcal{H}_{e}.
\end{equation}
\end{proposition}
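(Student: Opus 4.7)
The plan is to verify the flow conservation condition at every internal vertex of $G[U_{Net}]$, since the definition of flow network requires only this local balance (sources and sinks being exempt). The key observation is that internal vertices of $U_{Net}$ correspond to local unitary tensors, and unitarity places a strong dimensional constraint on such tensors which, after taking $\log_d$, is exactly conservation of the proposed flow $f(e) = \log_{d} \dim \mathcal{H}_{e}$.

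First I would fix an arbitrary internal vertex $v \in V \setminus (In \cup Out)$, corresponding to a local unitary tensor $U_v$ with incoming legs $\{e^{in}_1, \ldots, e^{in}_p\}$ and outgoing legs $\{e^{out}_1, \ldots, e^{out}_q\}$. By the definition of unitary tensor in Eq.~\eqref{Eq: unitary tensor}, merging all incoming legs into a single leg of dimension $\prod_{k=1}^{p} \dim \mathcal{H}_{e^{in}_k}$ and all outgoing legs into a single leg of dimension $\prod_{k=1}^{q} \dim \mathcal{H}_{e^{out}_k}$ yields a unitary matrix $\tilde{U}_v$ satisfying $\tilde{U}_v \tilde{U}_v^{\dagger} = \tilde{U}_v^{\dagger} \tilde{U}_v = I$. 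Since a unitary matrix must be square, this forces the dimensional equality
\begin{equation}
    \prod_{k=1}^{p} \dim \mathcal{H}_{e^{in}_k} = \prod_{k=1}^{q} \dim \mathcal{H}_{e^{out}_k}.
\end{equation}

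Next I would apply $\log_d$ to both sides, turning the product into a sum:
\begin{equation}
    \sum_{k=1}^{p} \log_{d} \dim \mathcal{H}_{e^{in}_k} = \sum_{k=1}^{q} \log_{d} \dim \mathcal{H}_{e^{out}_k},
\end{equation}
which is precisely $\sum_{(u,v) \in E} f(u,v) = \sum_{(v,u) \in E} f(v,u)$ with $f(e) = \log_d \dim \mathcal{H}_e$. Since $v$ was arbitrary, flow conservation holds at every internal vertex, and therefore $(G[U_{Net}], f)$ constitutes a flow network in the sense of the preceding definition, with sources and sinks inherited from the unitary network's external incoming and outgoing legs respectively.

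The argument is essentially a one-line consequence of local unitarity, so there is no real obstacle; the only subtlety worth flagging is a notational one, namely that when two legs of different tensors are contracted into a single bond edge they must share the same Hilbert space dimension (otherwise the contraction is ill-defined), so $f$ is well-defined as a function on $E$ regardless of which endpoint tensor one uses to read off $\dim \mathcal{H}_e$. I would include a brief remark to this effect so that the flow $f$ is unambiguously assigned to bond edges as well as to external (source/sink) edges.
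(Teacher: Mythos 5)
Your proposal is correct and follows essentially the same route as the paper's own proof: unitarity of each local tensor forces the incoming and outgoing Hilbert space dimensions (as products over legs) to coincide, and taking $\log_d$ yields flow conservation at every internal vertex. The extra remark about $f$ being well-defined on contracted bond edges is a harmless clarification the paper leaves implicit.
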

\begin{proof}
    For a local unitary tensor $v \in V$, unitarity suggests that its outgoing Hilbert space $\mathcal{H}^{out}_{v}$ must be isomorphic to the incoming Hilbert space $\mathcal{H}^{in}_{v}$:  $\mathcal{H}^{out}_{v} \cong \mathcal{H}^{in}_{v}$. Suppose $v$ has multiple incoming and outgoing legs, each carries a Hilbert space. The Hilbert space $\mathcal{H}_{in}$ ($\mathcal{H}_{out}$) is decomposed into tensor product of the Hilbert spaces associated with different legs: 
\begin{equation}
    \begin{aligned}
    \mathcal{H}^{out}_{v} &= \bigotimes_{e = (v,w) \in E } \mathcal{H}_{e}, \\ 
    \mathcal{H}^{in}_{v} &= \bigotimes_{e=(w,v)\in E} \mathcal{H}_{e}.
    \end{aligned}
\end{equation}
The Hilbert space dimensions hold the following relation:
\begin{equation}
\begin{aligned}
     \prod_{e=(v,w)\in E} \dim \mathcal{H}_{e} &=\dim \mathcal{H}^{out}_{v} \\
     &= \dim \mathcal{H}^{in}_{v} \\
     &= \prod_{e=(w,v)\in E} \dim \mathcal{H}_{e}
\end{aligned}
\end{equation}
Taking the logarithms gives 
\begin{equation}
    \sum_{e=(v,w)\in E} \log_{d}\dim \mathcal{H}_{e} = \sum _{e=(w,v)\in E} \log_{d} \dim \mathcal{H}_{e}
\end{equation}
\end{proof}

The flow $f(e) = \log_{d} \dim \mathcal{H}_{e}$ mirrors the entanglement entropy $S(\rho)$ associated with a maximally mixed state within $\mathcal{H}_{e}$. However, this interpretation is unsatisfactory, particularly for unitary operations acting on pure states. In Ref.~\cite{Roncaglia_2019}, a coherent entropy $S_{c}$ was introduced to measure the coherent information carried by a quantum state, which is given by
\begin{equation}
    S_{c} (\rho) = \log_{d} \dim \mathcal{H} - S(\rho).
\end{equation}
In addition, a conservation law of quantum information (under unitary time-evolution) was discussed in Ref.~\cite{Roncaglia_2019}. To illustrate, the total information in a bipartite state always equals the sum of the coherent information of each part and the mutual information between the subsystems:
\begin{equation}
    S_{c}[\rho_{AB}] = S_{c}[\rho_{A}] +S_{c}[\rho_{B}] + I_{A:B},
\end{equation}
where $I_{A:B} = S[\rho_{A}] +S[\rho_{B}] - S[\rho_{AB}]$ is the mutual information between the two systems $A$ and $B$. This inspired us to identify $f(e) = \log_{d} \dim \mathcal{H}_{e}$ as the maximum coherent information allowed for the Hilbert space $\mathcal{H}_{e}$, denoted as $S_{c}[\mathcal{H}_{e}]$, achieved by a pure state on $\mathcal{H}_{e}$. Therefore, $f(e)$ will be referred to as the information flow.

\begin{definition} [Information flow]
\label{Def: Information flow}
For a unitary network $U_{Net}$ and its associated graph $G[U_{Net}] = (V,E)$, the information flow is defined as a flow function $f_{I}:E \rightarrow  \mathbb{R}$:
\begin{equation}
    f_{I}(e) \coloneq S_{c}[\mathcal{H}_{e}] = \log_{d} \dim \mathcal{H}_{e},
\end{equation}
which satisfies the conservation law on each local unitary tensor: for each $u \in V \backslash (In \cup Out)$, 
\begin{equation}
\label{Eq: Information flow conservation}
    \sum_{(u,v)\in E}f_{I}(u,v) = \sum_{(v,u) \in E} f_{I}(v,u).
\end{equation}
\end{definition}
As discussed in \cite{Roncaglia_2019}, the conservation of information is a manifestation of unitarity. Information flow then represents the propagation of information in unitary networks.

\subsection{Bilayer unitary networks and the conservation of information}
A bilayer unitary network is structured in space-time, with its legs oriented horizontally or vertically. Here, the horizontal direction denotes space, while the vertical direction indicates time. Recall that a vector in $D=d+1$-dimensional space-time is given by $j^{\mu} = (j^{0}, \vec{j}) = (\rho, \vec{j})$, where the temporal component of flux is identified as density. The conservation law in space-time is given by
\begin{equation} \label{Eq:conservation law}
    \partial_{t} \rho + \nabla \cdot \vec{j} = 0 .
\end{equation}

In a bilayer unitary network, the horizontal and vertical legs serve distinct functions. Vertical legs extend upward through time, keeping the information localized. The information flow of a vertical leg can be viewed as a charge of information $Q_{I} = S_{c}$:
\begin{equation}
    f_{I} (e^{\perp}) = \int_{V} \rho dV =Q_{I},
\end{equation}
In contrast, the horizontal legs connect to adjacent local tensors, demonstrating the transmission of information. Information flow in a horizontal leg is actually an information transfer:
\begin{equation}
     f_{I} (e^{\parallel})  = \int_{T} dt \int_{S} \vec{j} \cdot d\vec{S} = \Delta Q_{I},
\end{equation}
The conservation of information flow  (\ref{Eq: Information flow conservation}) serves as a discrete version of (\ref{Eq:conservation law}): each local unitary tensor represents a spacetime point with conserved information flow.

The information flow is essentially the redistribution of information. In typical bilayer unitary networks, the input and output Hilbert spaces at the same site are equivalent, so the information distribution remains unchanged, imposing an additional constraint on the information flow. This perspective on information flow helps in understanding the GNVW index of QCAs.

\subsection{Net information flow for 1D unitary network}
We define net information flow across a surface and demonstrate that it is an invariance of a 1D unitary network, unaffected by the vertical surface chosen. Subsequently, we will illustrate that the net information flow aligns with the GNVW index \cite{Gross_2012}.

\begin{definition} [Net information flow across a surface]
    Consider a unitary network $U_{Net}$ with the associated graph $G[U_{Net}] = (V,E)$. For an orientable surface $\Sigma$, the net information flow across this surface is determined by the difference between the outgoing information and the incoming information: \begin{equation}
        f_{I}[\Sigma] = \sum_{e \ \text{points out}} f_{I}(e) - \sum_{e \ \text{points in}} f_{I}(e)
    \end{equation}
    where `in' and `out' pertain to the surface $\Sigma$, and $f_{I}[e]$ represents the information flow of edge $e$, as defined in Def. \ref{Def: Information flow}.
\end{definition}

We analyze the net information flow in a (1+1)-dimensional bilayer unitary network:
 \begin{align} 
    \tikzsetnextfilename{vertical_cuts}
    \begin{tikzpicture}
        \draw[dashed] (-0.6, -1.2) -- (-0.6, 2.4) node[above] {$\Sigma_{1}$};
        \draw[dashed] (0.6, -1.2) -- (0.6, 2.4)  node[above] {$\Sigma_{2}$};
        \node[draw, rectangle, rounded corners=2pt, minimum width=0.6cm, minimum height=0.6cm, fill=tensorcolor, thick](A) at (-1.2, 0) {};
        \node[draw, rectangle, rounded corners=2pt, minimum width=0.6cm, minimum height=0.6cm, fill=tensorcolor, thick](B) at (0,0) {};
        \node[draw, rectangle, rounded corners=2pt, minimum width=0.6cm, minimum height=0.6cm, fill=tensorcolor, thick](C) at (1.2, 0) {};
        \node[draw, rectangle, rounded corners=2pt, minimum width=0.6cm, minimum height=0.6cm, fill=tensorcolor, thick](D) at (-1.2, 1.2) {};
        \node[draw, rectangle, rounded corners=2pt, minimum width=0.6cm, minimum height=0.6cm, fill=tensorcolor, thick](E) at (0, 1.2) {};
        \node[draw, rectangle, rounded corners=2pt, minimum width=0.6cm, minimum height=0.6cm, fill=tensorcolor, thick](F) at (1.2, 1.2) {};
        \draw[mid arrow=0.65] (D.90) -- ++(0,0.6);
        \draw[mid arrow=0.65] (E.90) -- ++(0,0.6);
        \draw[mid arrow=0.65] (F.90) -- ++(0,0.6);
        \draw[mid arrow=0.65] ([yshift=-0.6cm]A.270) -- (A.270);
        \draw[mid arrow=0.65] ([yshift=-0.6cm]B.270) -- (B.270);
        \draw[mid arrow=0.65] ([yshift=-0.6cm]C.270) -- (C.270);
        \draw[mid arrow=0.65] ([xshift=-0.6cm]A.180) -- (A.180);
        \draw[mid arrow=0.65] (A.0) -- (B.180);
        \draw[mid arrow=0.65] (B.0) -- (C.180);
        \draw[mid arrow=0.65] (C.0) -- ++(0.6,0);
        \draw[mid arrow=0.65] ([xshift=0.6cm]F.0) -- (F.0);
        \draw[mid arrow=0.65] (F.180) -- (E.0);
        \draw[mid arrow=0.65] (E.180) -- (D.0);
        \draw[mid arrow=0.65] (D.180) -- ([xshift=-0.6cm]D.180);
        \draw[mid arrow=0.65] (A.90) -- (D.270);
        \draw[mid arrow=0.65] (B.90) -- (E.270);
        \draw[mid arrow=0.65] (C.90) -- (F.270);
    \end{tikzpicture}
\end{align}
 A vertical surface, $\Sigma_{1}$ or $\Sigma_{2}$, represents a spatial point that extends along the time axis. For any vertical surface, only two legs cross it: one points to the right, labeled $e_{R}$, and the other points to the left, labeled $e_{L}$. The net information flow of surface $\Sigma$ can be expressed as:
\begin{equation}
            f_{I}[\Sigma] = f_{I}(e_{R}) - f_{I}(e_{L}).
        \end{equation}
 In fact, this net information flow is independent of the chosen vertical surface.
\begin{properties} [Net information flow does not depend on the chosen vertical surface] 
\label{prop: uniform net information flow}
In a (1+1) dimensional bilayer unitary network $U_{Net}$, the net information flow $f_{I}[\Sigma]$ does not depend on the location of the vertical surface. Therefore, for any two vertical surfaces $\Sigma_{1}$ and $\Sigma_{2}$,
\begin{equation}
    f_{I}[\Sigma_{1}] = f_{I}[\Sigma_{2}].
\end{equation}
\end{properties}
\begin{proof}
Applying the conservation law of information for the subsystem enclosed by $\Sigma_{1}$ and $\Sigma_{2}$:
\begin{equation}
\begin{aligned}
    f_{I}[\Sigma_{1}] - f_{I}[\Sigma_{2}] &= 
    \Delta S_{c} \\
    &=  S_{c}[\mathcal{H}_{phys,out}] -S_{c}[\mathcal{H}_{phys,in}] \\
    &= 0
\end{aligned}
\end{equation}
From the second line to the third line, we implicitly assume that the incoming and outgoing physical Hilbert spaces for the same sites are isomorphic. This assumption generally holds; however, for synthetic situations where it does not, see Appendix \ref{Append: Non-uniform net information flow in 1D system}.

Simply put, the global unitary operator does not alter the distribution of the information (coherent entropy) $S_{c}$ on each site. Therefore, according to the conservation law, the net flux through any closed surface must be zero. In a one-dimensional space, a closed surface consists of two points, corresponding to two vertical surfaces within a (1+1)-dimensional unitary network, thereby enforcing a uniform flow.
\end{proof}

Using the property above, net information flow for a 1D unitary network can be consistently defined.
\begin{definition} [Net information flow for unitary network]
The net information flow for a unitary network $U_{Net}$ is defined as
\begin{equation}
    f_{I}[U_{Net}] \coloneq f_{I}[\Sigma],
\end{equation}
where $\Sigma$ is any vertical surface.
\end{definition}

\subsection{Concatenation of unitary networks}

In \cite{Gross_2012}, GNVW index is used as a classification of 1D QCAs: An equivalence class for QCAs is given by $U_{1} \sim U_{2}$ if there exists a crossover QCA $U_{12}$ that behaves as $U_{1}$ for $x \le a$ and as $U_{2}$ for $x \ge b$. We demonstrate that the net information flow exhibits a similar property. 
\begin{proposition} [Concatenation of two unitary networks]
\label{prop: Concatenation of two unitary networks}
Two unitary networks $U^{1}_{Net}$ and $U^{2}_{Net}$ (not necessarily locality-preserving) can be concatenated and form a crossover unitary network  $U_{Net_{12}}$ if and only if they have identical net information flow,
\begin{equation}
    f_{I}[U^{1}_{Net}] = f_{I}[U^{2}_{Net}].
\end{equation}
\end{proposition}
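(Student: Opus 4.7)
The plan is to prove both directions separately, with the forward (only if) direction following directly from the uniform net information flow property, and the converse (if) direction proved by an explicit glue construction.

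For the only-if direction, suppose a crossover unitary network $U_{Net_{12}}$ exists, behaving like $U^{1}_{Net}$ for $x \le a$ and like $U^{2}_{Net}$ for $x \ge b$. I would invoke Property~\ref{prop: uniform net information flow}, which states that the net information flow through any vertical surface is the same. Choosing one vertical surface in the $U^{1}$-region and another in the $U^{2}$-region gives $f_I[U^1_{Net}] = f_I[\Sigma_1] = f_I[\Sigma_2] = f_I[U^2_{Net}]$.

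For the converse direction, I would construct an explicit crossover in the bilayer setting (general unitary networks can be brought to this form by the universality of the bilayer architecture). Denote by $D^{1}_R, D^{1}_L$ the right-going (bottom layer) and left-going (top layer) horizontal bond dimensions of $U^{1}_{Net}$ at its right boundary, and similarly $D^{2}_R, D^{2}_L$ for $U^{2}_{Net}$ at its left boundary. The common net information flow $n := f_I[U^1_{Net}] = f_I[U^2_{Net}]$ translates to the equality of ratios $D^{1}_R / D^{1}_L = D^{2}_R / D^{2}_L = d^{n}$. I would insert a single glue site between the two networks, consisting of one local tensor in the bottom layer and one in the top layer, connected by an internal vertical leg of auxiliary dimension $d_v$. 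Unitarity of the bottom glue tensor requires $D^{1}_R \cdot d_p = D^{2}_R \cdot d_v$, and unitarity of the top glue tensor requires $D^{2}_L \cdot d_v = D^{1}_L \cdot d_p$, where $d_p$ is the physical dimension at the glue site. Dividing one equation by the other, both constraints collapse to the single consistency condition $D^{1}_R / D^{1}_L = D^{2}_R / D^{2}_L$, which is exactly the hypothesis $f_I[U^1_{Net}] = f_I[U^2_{Net}]$. Hence one may pick, say, $d_p = D^{2}_R$ and $d_v = D^{1}_R$, and choose the two local tensors to be any unitary matrix of the resulting (balanced) shape, e.g.\ the identity after appropriate reshaping.

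I would then verify that the resulting network is a legitimate unitary network: each local tensor is unitary by construction, and the global structure remains a DAG because the glue region inherits the bilayer arrow orientations (bottom layer rightward, top layer leftward, vertical legs upward) from $U^{1}_{Net}$ and $U^{2}_{Net}$. A brief remark would note that the glue site has a larger physical dimension than the bulk of either original network, but this is acceptable for a crossover and mirrors the situation in the GNVW classification of QCAs, where two chains with different index values can be glued across a region with enlarged on-site Hilbert space.

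The main obstacle is the bond-dimension arithmetic: we need the required $d_v$ and $d_p$ to be genuine positive integers, which is automatic since $D^{1}_R, D^{1}_L, D^{2}_R, D^{2}_L$ are themselves positive integer Hilbert space dimensions. A subtler issue is that for general (non-bilayer) unitary networks, the boundary bond structure may be more complicated; I would handle this by first invoking the universal representability of bilayer unitary networks (Properties~(iii) of Section~\ref{subsection: Bilayer}) to reduce to the bilayer case, at the possible cost of enlarging bond dimensions, which does not alter $f_I$.
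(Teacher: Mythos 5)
Your proof is correct and follows essentially the same route as the paper: the forward direction via flow conservation (the paper phrases it as ``information flow conservation in the middle section forbids the concatenation''), and the converse via an explicit single-site glue between the truncated networks, which is exactly the paper's construction with a blue middle column whose physical dimension is enlarged to ``absorb the additional information flow.'' Your explicit bond-dimension arithmetic ($D^{1}_{R}d_{p}=D^{2}_{R}d_{v}$, $D^{2}_{L}d_{v}=D^{1}_{L}d_{p}$, solved by $d_{p}=D^{2}_{R}$, $d_{v}=D^{1}_{R}$) is a welcome sharpening of what the paper leaves implicit.
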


\begin{proof}

For two unitary networks $U^{1}_{Net}$ and $U^{2}_{Net}$ (not necessarily locality-preserving) with identical net information flow, the crossover unitary network $U_{Net_{12}}$ can be constructed as follows:
\begin{equation}
\label{eq: crossver of UN}
\tikzsetnextfilename{crossover_of_UN}
    \begin{tikzpicture}
         \node[draw, rectangle, rounded corners=2pt, minimum width=0.4cm, minimum height=0.4cm, fill=tensorcolor, thick] (a1) at (-2.4,0) {};
        \node[draw, rectangle, rounded corners=2pt, minimum width=0.4cm, minimum height=0.4cm, fill=tensorcolor, thick] (b1) at (-1.6,0) {};
        \node[draw, rectangle, rounded corners=2pt, minimum width=0.4cm, minimum height=0.4cm, fill=tensorcolor, thick] (c1) at (-0.8,0) {};
        \node[draw, rectangle, rounded corners=2pt, minimum width=0.4cm, minimum height=0.4cm, fill=btensorcolor, thick] (d1) at (0,0) {};
        \node[draw, rectangle, rounded corners=2pt, minimum width=0.4cm, minimum height=0.4cm, fill=tensorcolor, thick] (e1) at (0.8,0) {};
        \node[draw, rectangle, rounded corners=2pt, minimum width=0.4cm, minimum height=0.4cm, fill=tensorcolor, thick] (f1) at (1.6,0) {};
        \node[draw, rectangle, rounded corners=2pt, minimum width=0.4cm, minimum height=0.4cm, fill=tensorcolor, thick] (g1) at (2.4,0) {};
        \node[draw, rectangle, rounded corners=2pt, minimum width=0.4cm, minimum height=0.4cm, fill=tensorcolor, thick] (a2) at (-2.4,0.8) {};
        \node[draw, rectangle, rounded corners=2pt, minimum width=0.4cm, minimum height=0.4cm, fill=tensorcolor, thick] (b2) at (-1.6,0.8) {};
        \node[draw, rectangle, rounded corners=2pt, minimum width=0.4cm, minimum height=0.4cm, fill=tensorcolor, thick] (c2) at (-0.8,0.8) {};
        \node[draw, rectangle, rounded corners=2pt, minimum width=0.4cm, minimum height=0.4cm, fill=btensorcolor, thick] (d2) at (0,0.8) {};
        \node[draw, rectangle, rounded corners=2pt, minimum width=0.4cm, minimum height=0.4cm, fill=tensorcolor, thick] (e2) at (0.8,0.8) {};
        \node[draw, rectangle, rounded corners=2pt, minimum width=0.4cm, minimum height=0.4cm, fill=tensorcolor, thick] (f2) at (1.6,0.8) {};
        \node[draw, rectangle, rounded corners=2pt, minimum width=0.4cm, minimum height=0.4cm, fill=tensorcolor, thick] (g2) at (2.4,0.8) {};
        \draw[mid arrow = 0.65] (a1.90) -- (a2.270); 
        \draw[mid arrow = 0.65] (b1.90) -- (b2.270); 
        \draw[mid arrow = 0.65] (c1.90) -- (c2.270); 
        \draw[mid arrow = 0.65] (d1.90) -- (d2.270); 
        \draw[mid arrow = 0.65] (e1.90) -- (e2.270); 
        \draw[mid arrow = 0.65] (f1.90) -- (f2.270); 
        \draw[mid arrow = 0.65] (g1.90) -- (g2.270); 
        \draw[mid arrow = 0.65] ([xshift=-0.4cm]a1.180) -- (a1.180); 
        \draw[mid arrow = 0.65] (a1.0) -- (b1.180); 
        \draw[mid arrow = 0.65] (b1.0) -- (c1.180); 
        \draw[mid arrow = 0.65] (c1.0) -- (d1.180); 
        \draw[mid arrow = 0.65] (e1.180) -- (d1.0); 
        \draw[mid arrow = 0.65] (f1.180) -- (e1.0); 
        \draw[mid arrow = 0.65] (g1.180) -- (f1.0); 
        \draw[mid arrow = 0.65] ([xshift=0.4cm]g1.0) -- (g1.0); 
        \draw[mid arrow = 0.65] (a2.180) -- ++(-0.4,0); 
        \draw[mid arrow = 0.65] (b2.180) -- (a2.0); 
        \draw[mid arrow = 0.65] (b2.180) -- (a2.0); 
        \draw[mid arrow = 0.65] (c2.180) -- (b2.0);
        \draw[mid arrow = 0.65] (d2.180) -- (c2.0);
        \draw[mid arrow = 0.65] (d2.0) -- (e2.180);
        \draw[mid arrow = 0.65] (e2.0) -- (f2.180);
        \draw[mid arrow = 0.65] (f2.0) -- (g2.180);
        \draw[mid arrow = 0.65] (g2.0) -- ++(0.4,0);
        \draw[dashed] (-0.4,-0.8) -- (-0.4,1.6);
        \draw[dashed] (0.4,-0.8) -- (0.4,1.6);
        \draw [mid arrow=0.65] ([yshift=-0.4cm]a1.270) -- (a1.270);
        \draw [mid arrow=0.65] ([yshift=-0.4cm]b1.270) -- (b1.270);
        \draw [mid arrow=0.65] ([yshift=-0.4cm]c1.270) -- (c1.270);
        \draw [mid arrow=0.65] ([yshift=-0.4cm]d1.270) -- (d1.270);
        \draw [mid arrow=0.65] ([yshift=-0.4cm]e1.270) -- (e1.270);
        \draw [mid arrow=0.65] ([yshift=-0.4cm]f1.270) -- (f1.270);
        \draw [mid arrow=0.65] ([yshift=-0.4cm]g1.270) -- (g1.270);
        \draw [mid arrow=0.65] (a2.90) -- ++(0,0.4);
        \draw [mid arrow=0.65] (b2.90) -- ++(0,0.4);
        \draw [mid arrow=0.65] (c2.90) -- ++(0,0.4);
        \draw [mid arrow=0.65] (d2.90) -- ++(0,0.4);
        \draw [mid arrow=0.65] (e2.90) -- ++(0,0.4);
        \draw [mid arrow=0.65] (f2.90) -- ++(0,0.4);
        \draw [mid arrow=0.65] (g2.90) -- ++(0,0.4);
         \node at (-1.6,-1) {Truncated $U_{1}$};
        \node at (1.6,-1) {Truncated $U_{2}$};
    \end{tikzpicture}
\end{equation}
In general, if one side has bond dimensions greater than that of the other, the blue-colored middle area must be large enough to absorb the additional information flow. By adopting the middle canonical forms as in \eqref{eq: crossver of UN}, the central section can be any local unitary tensors.

Conversely, if $f_{I}[U^{1}_{Net}] \ne f_{I}[U^{2}_{Net}]$, information flow conservation in the middle section forbids the concatenation.
\end{proof}

\subsection{Intrinsic net information flow}
\label{subsec: Intrinsic net information flow}
In general, the same physical state or operator can admit multiple tensor-network representations. In the above, we have defined a flow index for a unitary network. Here, we discuss under what condition does the flow index becomes independent of unitary network implementation and therefore can be viewed as an intrinsic property of the unitary operator itself.

As an example, consider the following infinite unitary network: 

\begin{equation}
\label{eq: redundant flow}
    \begin{aligned}
        \begin{array}{c}
        \tikzsetnextfilename{UN_bond_decoupled}
        \begin{tikzpicture}
              \node[draw, rectangle, rounded corners=2pt, minimum width=0.6cm, minimum height=0.6cm, fill=tensorcolor, thick](A1) at (-1.2, 0) {$A$};
        \node[draw, rectangle, rounded corners=2pt, minimum width=0.6cm, minimum height=0.6cm, fill=tensorcolor, thick](B1) at (0,0) {$A$};
        \node[draw, rectangle, rounded corners=2pt, minimum width=0.6cm, minimum height=0.6cm, fill=tensorcolor, thick](C1) at (1.2, 0) {$A$};
        \node[draw, rectangle, rounded corners=2pt, minimum width=0.6cm, minimum height=0.6cm, fill=tensorcolor, thick](A2) at (-1.2, 1.2) {};
        \node[draw, rectangle, rounded corners=2pt, minimum width=0.6cm, minimum height=0.6cm, fill=tensorcolor, thick](B2) at (0, 1.2) {};
        \node[draw, rectangle, rounded corners=2pt, minimum width=0.6cm, minimum height=0.6cm, fill=tensorcolor, thick](C2) at (1.2, 1.2) {};
        \draw[mid arrow=0.65] (A2.90) -- ++(0,0.6);
        \draw[mid arrow=0.65] (B2.90) -- ++(0,0.6);
        \draw[mid arrow=0.65] (C2.90) -- ++(0,0.6);
        \draw[mid arrow=0.65] ([yshift=-0.6cm]A1.270) -- (A1.270);
        \draw[mid arrow=0.65] ([yshift=-0.6cm]B1.270) -- (B1.270);
        \draw[mid arrow=0.65] ([yshift=-0.6cm]C1.270) -- (C1.270);
        \draw[mid arrow=0.65] ([xshift=-0.6cm]A1.180) -- (A1.180);
        \draw[mid arrow=0.65] (A1.0) -- (B1.180);
        \draw[mid arrow=0.65] (B1.0) -- (C1.180);
        \draw[mid arrow=0.65] (C1.0) -- ++(0.6,0);
        \draw[mid arrow=0.65, red!70] ([xshift=0.6cm]C2.0) -- (C2.0);
        \draw[mid arrow=0.65, red!70] (C2.180) -- (B2.0);
        \draw[mid arrow=0.65,red!70] (B2.180) -- (A2.0);
        \draw[mid arrow=0.65,red!70] (A2.180) -- ++(-0.6,0);
        \draw[mid arrow=0.65] (A1.90) -- (A2.270);
        \draw[mid arrow=0.65] (B1.90) -- (B2.270);
        \draw[mid arrow=0.65] (C1.90) -- (C2.270);
        \draw[red!70] (A2.0) -- (A2.180);
        \draw[red!70] (B2.0) -- (B2.180);
        \draw[red!70] (C2.0) -- (C2.180);
        \draw (A2.270) -- (A2.90);
        \draw (B2.270) -- (B2.90);
        \draw (C2.270) -- (C2.90);
        \end{tikzpicture}
        \end{array}
    \end{aligned}
\end{equation}
In so far as the flow index is concerned, the red arrow contributes a leftward flow which compensates the rightward flow in the bottom row. However, by design the red arrow is completely decoupled from the physical legs, and therefore, independent of the Hilbert space dimension attached to the red arrow, the unitary network leads to the same physical unitary operator. In these situations, we say the red arrow contributes to redundant net information flow. 

As will be detailed in Appendix \ref{Append: Redundant net information flow of finite PBC unitary networks}, a finite PBC unitary network representation shows redundant net information flow when the information loops unnecessarily around the system. Moreover, an infinite OBC unitary network with redundant net information flow arises from the thermodynamic limit of a finite PBC unitary network exhibiting the same feature.

These discussions highlight the fact that the flow index we have defined is a property of the unitary network, and may not be an intrinsic index attached to the unitary operator it represents.

We now argue that, when we endow the unitary network with the locality-preserving condition, then the flow index becomes intrinsic. As locality-preserving unitary networks give rise to QCAs, the assertion can be phrased in terms of the following proposition:

\begin{proposition}
\label{prop: same net information flow for LU-rep}
Any two locality-preserving implementations  $U_{Net}^{1}$ and $U_{Net}^{2}$ of the same QCA $U$ must share the same net information flow:
\begin{equation}
    f_{I}[U_{Net}^{1}] =  f_{I}[U_{Net}^{2}]
\end{equation}
\end{proposition}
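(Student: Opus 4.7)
The plan is to reduce the statement to the single claim that for any locality-preserving unitary network $U_{Net}$ representing a QCA $u$, the net information flow equals $\log_{d} I_{GNVW}(u)$. Since the right-hand side depends only on $u$ and not on the representation, applying the claim to both $U_{Net}^{1}$ and $U_{Net}^{2}$ yields $f_{I}[U_{Net}^{1}] = f_{I}[U_{Net}^{2}]$ immediately.

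To prove the claim, I would fix a locality-preserving $U_{Net}$ and a vertical surface $\Sigma$, and write $f_{I}[\Sigma] = \log_{d} D_{R} - \log_{d} D_{L}$, where $D_{R}$ and $D_{L}$ are the dimensions of the right- and left-pointing bond legs crossing $\Sigma$. I would then argue that the locality-preserving condition of Def.~\ref{def: (Approximate) locality-preserving unitary network} forces $D_{R}$ to equal the dimension of the support algebra of $u(\mathcal{A}_{L})$ restricted to an $R$-neighborhood of $\Sigma$ on the right, and $D_{L}$ to the analogous quantity for $u^{-1}$ on the left. The underlying reason is that any extra bond dimension would manifest as decoupled horizontal information flow, exactly the redundant-flow pattern illustrated in Eq.~\eqref{eq: redundant flow}; but such decoupled flow, traced through the surrounding sub-networks, would violate the strict locality bound imposed on every sub-network by the locality-preserving condition.

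Next, I would compare $D_{R}/D_{L}$ with $I_{GNVW}(u)$ via the canonical Margolus-partitioning bilayer UN for $u$ constructed in Fig.~\ref{fig:QCA Bilayer}. In that representation the bonds across $\Sigma$ directly realize the intermediate algebra $\mathcal{B}$ and the physical algebra $\mathcal{A}$ of the Margolus decomposition, with dimensions $b_{2m}$ and $a_{2m}$ respectively. Hence $f_{I}[\Sigma] = \log_{d}(b_{2m}/a_{2m}) = \log_{d} I_{GNVW}(u)$ for the Margolus UN, and the support-algebra identification from the previous step lifts this equality to every locality-preserving representation of $u$.

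The main obstacle is the first step: proving that $D_{R}$ equals, rather than merely upper-bounds, the relevant support-algebra dimension. Intuitively, locality preservation forbids the redundant loop structure discussed in Appendix~\ref{Append: Redundant net information flow of finite PBC unitary networks}, but making this precise requires a careful analysis of the restricted automorphisms $u_{Net}^{S}$ on both sub-networks on either side of $\Sigma$, together with unitarity of the local tensors, to rule out any factorized excess in the bond Hilbert space.
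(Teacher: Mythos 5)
Your overall strategy differs from the paper's. The paper does \emph{not} prove this proposition by identifying the flow with the GNVW index directly; instead it cuts $U_{Net}^{1}$ and $U_{Net}^{2}$ in half and explicitly constructs an interpolating local unitary tensor $U_{M}$ on a middle region of diameter exceeding $R_{1}+R_{2}$, using the locality-preserving property to show that the support algebras $\mathcal{B}_{L}=\boldsymbol{S}((u_{Net}^{1,R})^{-1}(\mathcal{A}_{L,out}),\mathcal{A}_{phy,in})$ and $\mathcal{B}_{R}$ commute, so that $\mathcal{A}_{phy,in}=\mathcal{B}_{L}\otimes\mathcal{B}_{M}\otimes\mathcal{B}_{R}$ and the action of $U_{M}$ can be glued from $u_{Net}^{1,R}$, $u_{Net}^{2,L}$ and $u$ itself. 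The equality of flows then follows from Proposition~\ref{prop: Concatenation of two unitary networks}. (Logically, the paper then uses the present proposition to prove that the flow equals $-\log_{d}I_{GNVW}$, so you cannot cite that identity here without an independent proof.)

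The gap in your plan is the step you yourself flag, and it is fatal as stated: it is not true that $D_{R}$ equals the dimension of the support algebra of $u(\mathcal{A}_{L})$ near $\Sigma$, even for locality-preserving networks. One can always dress a locality-preserving representation by shuttling an extra qudit rightward across $\Sigma$ in the bottom layer and back leftward across $\Sigma$ in the top layer within a single unit cell; every sub-network still satisfies Definition~\ref{def: (Approximate) locality-preserving unitary network} because nothing propagates farther than one lattice constant, yet both $D_{R}$ and $D_{L}$ are inflated by the same factor. Hence only the \emph{difference} $\log_{d}D_{R}-\log_{d}D_{L}$ is pinned, never $D_{R}$ or $D_{L}$ individually, and your proposed identification of each with a support-algebra dimension cannot be repaired by a more careful analysis of the sub-network automorphisms. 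To salvage the argument you would need to show that the balanced excess cancels in the difference for an arbitrary locality-preserving architecture (not just bilayer ones), and the natural way to do that is precisely the paper's interpolation-plus-concatenation construction, which sidesteps any counting of individual bond dimensions.
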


\begin{proof}

We aim to show that any two locality-preserving unitary networks $U_{Net}^{1}$ and $U_{Net}^{2}$ implementation for the same QCA $U$ can be connected by a local unitary tensor $U_{M}$: 
\begin{equation}
\label{eq: concatenate UN reps}
\tikzsetnextfilename{concatenate_locality_preserving_implementations}
\begin{tikzpicture}
     \node[draw, rectangle, rounded corners=2pt, minimum width=1.2cm, minimum height=0.8cm, fill=tensorcolor, thick](A) at (-1.8, 0) {$U_{Net}^{1, L}$};
     \node[draw, rectangle, rounded corners=2pt, minimum width=1.2cm, minimum height=0.8cm, fill=tensorcolor, thick](M) at (0, 0) {$U_{M}$};
     \node[draw, rectangle, rounded corners=2pt, minimum width=1.2cm, minimum height=0.8cm, fill=tensorcolor, thick](B) at (1.8, 0) {$U_{Net}^{2, R}$};
        \draw[mid arrow=0.65] ([xshift=-0.4cm,yshift=-0.6cm]A.270) -- ([xshift=-0.4cm]A.270);
        \draw[mid arrow=0.65] ([yshift=-0.6cm]A.270) -- (A.270);
        \draw[mid arrow=0.65] ([xshift=0.4cm,yshift=-0.6cm]A.270) -- ([xshift=0.4cm]A.270);
        \draw[mid arrow=0.65] (A.90) -- ++(0,0.6);
        \draw[mid arrow=0.65] ([xshift=-0.4cm]A.90) -- ++(0,0.6);
        \draw[mid arrow=0.65] ([xshift=0.4cm]A.90) -- ++(0,0.6);

        \draw[mid arrow=0.65] ([xshift=-0.4cm,yshift=-0.6cm]M.270) -- ([xshift=-0.4cm]M.270);
        \draw[mid arrow=0.65] ([yshift=-0.6cm]M.270) -- (M.270);
        \draw[mid arrow=0.65] ([xshift=0.4cm,yshift=-0.6cm]M.270) -- ([xshift=0.4cm]M.270);
        \draw[mid arrow=0.65] (M.90) -- ++(0,0.6);
        \draw[mid arrow=0.65] ([xshift=-0.4cm]M.90) -- ++(0,0.6);
        \draw[mid arrow=0.65] ([xshift=0.4cm]M.90) -- ++(0,0.6);

        \draw[mid arrow=0.65] ([xshift=-0.4cm,yshift=-0.6cm]B.270) -- ([xshift=-0.4cm]B.270);
        \draw[mid arrow=0.65] ([yshift=-0.6cm]B.270) -- (B.270);
        \draw[mid arrow=0.65] ([xshift=0.4cm,yshift=-0.6cm]B.270) -- ([xshift=0.4cm]B.270);
        \draw[mid arrow=0.65] (B.90) -- ++(0,0.6);
        \draw[mid arrow=0.65] ([xshift=-0.4cm]B.90) -- ++(0,0.6);
        \draw[mid arrow=0.65] ([xshift=0.4cm]B.90) -- ++(0,0.6);

        \draw[mid arrow=0.65] 
        ([yshift=-0.2cm]A.0) -- node[below]{$L, in$}([yshift=-0.2cm]M.180) ;
        \draw[mid arrow=0.65] ([yshift=0.2cm]M.180) -- node[above]{$L, out$}([yshift=0.2cm]A.0);

         \draw[mid arrow=0.65] 
        ([yshift=0.2cm]M.0) -- node[above]{$R, out$}([yshift=0.2cm]B.180);
        \draw[mid arrow=0.65] ([yshift=-0.2cm]B.180) -- node[below]{$R, in$}([yshift=-0.2cm]M.0);
        
        \draw[mid arrow=0.65] ([xshift=-0.6cm,yshift=-0.2cm]A.180) -- ++(0.6,0);
         \draw[mid arrow=0.65] ([yshift=0.2cm]A.180) -- ++(-0.6,0);
         \draw[mid arrow=0.65] ([yshift=0.2cm]B.0) -- ++(0.6,0);
         \draw[mid arrow=0.65] ([xshift=0.6cm, yshift=-0.2cm]B.0) -- ++(-0.6,0);
\end{tikzpicture}
\end{equation}
Then according to Proposition \ref{prop: Concatenation of two unitary networks}, $U_{Net}^{1}$ and $U_{Net}^{2}$ must share the same net information flow.

Assume unitary network representations $U_{Net}^{1}$ and $U_{Net}^{2}$ are locality-preserving with respective radii $R_{1}$ and $R_{2}$. We require $U_{M}$ to be supported on a middle region $X$ with $\text{diam}(X) > R_{1} + R_{2}$. Before concatenating $U_{Net}^{1}$ with $U_{Net}^{2}$, we cut $U_{Net}^{1}$ into $U_{Net}^{1, L}$ and $U_{Net}^{1, R}$,
\begin{equation}
\tikzsetnextfilename{CUT_UN_implementation}
\begin{tikzpicture}
     \node[draw, rectangle, rounded corners=2pt, minimum width=1.2cm, minimum height=0.8cm, fill=tensorcolor, thick](A) at (-1.8, 0) {$U_{Net}^{1, L}$};
     \node[draw, rectangle, rounded corners=2pt, minimum width=1.2cm, minimum height=0.8cm, fill=tensorcolor, thick](B) at (0, 0) {$U_{Net}^{1, R}$};
        \draw[mid arrow=0.65] ([xshift=-0.4cm,yshift=-0.6cm]A.270) -- ([xshift=-0.4cm]A.270);
        \draw[mid arrow=0.65] ([yshift=-0.6cm]A.270) -- (A.270);
        \draw[mid arrow=0.65] ([xshift=0.4cm,yshift=-0.6cm]A.270) -- ([xshift=0.4cm]A.270);
        \draw[mid arrow=0.65] (A.90) -- ++(0,0.6);
        \draw[mid arrow=0.65] ([xshift=-0.4cm]A.90) -- ++(0,0.6);
        \draw[mid arrow=0.65] ([xshift=0.4cm]A.90) -- ++(0,0.6);

        \draw[mid arrow=0.65] ([xshift=-0.4cm,yshift=-0.6cm]B.270) -- ([xshift=-0.4cm]B.270);
        \draw[mid arrow=0.65] ([yshift=-0.6cm]B.270) -- (B.270);
        \draw[mid arrow=0.65] ([xshift=0.4cm,yshift=-0.6cm]B.270) -- ([xshift=0.4cm]B.270);
        \draw[mid arrow=0.65] (B.90) -- ++(0,0.6);
        \draw[mid arrow=0.65] ([xshift=-0.4cm]B.90) -- ++(0,0.6);
        \draw[mid arrow=0.65] ([xshift=0.4cm]B.90) -- ++(0,0.6);
        
        \draw[mid arrow=0.65] 
        ([yshift=-0.2cm]A.0) -- node[below]{$L, in$}([yshift=-0.2cm]B.180) ;
        \draw[mid arrow=0.65] ([yshift=0.2cm]B.180) -- node[above]{$L, out$}([yshift=0.2cm]A.0);

        \draw[mid arrow=0.65] ([xshift=-0.6cm,yshift=-0.2cm]A.180) -- ++(0.6,0);
         \draw[mid arrow=0.65] ([yshift=0.2cm]A.180) -- ++(-0.6,0);
         \draw[mid arrow=0.65] ([yshift=0.2cm]B.0) -- ++(0.6,0);
         \draw[mid arrow=0.65] ([xshift=0.6cm, yshift=-0.2cm]B.0) -- ++(-0.6,0);
\end{tikzpicture}
\end{equation}
We also split $U_{Net}^{2}$ into $U_{Net}^{2, L}$ and $U_{Net}^{2, R}$. We aim to join $U_{Net}^{1, L}$ with $U_{Net}^{2, R}$ by a middle $U_{M}$. In the construction of $U_{M}$, we actually need information of $U_{Net}^{1, R}$ and $U_{Net}^{2, L}$.

We explicitly construct $U_{M}$ now. If $U_{M}$ exists, it will feature horizontal bond legs in addition to physical legs. We merge horizontal bond legs into four legs $(L,in), (L,out), (R,in), (R,out)$ as depicted in the \eqref{eq: concatenate UN reps}. Moreover, we label its input and output physical Hilbert spaces as $(phy,in)$ and $(phy,out)$. Our approach avoids specifying the causal order of the horizontal legs, allowing for more flexible unitary network architectures.
Define subalgebras of $\mathcal{A}_{phy,in}$ by supporting algebra:
\begin{equation}
\begin{aligned}
   & \mathcal{B}^{\prime}_{L} = \boldsymbol{S}((u_{Net}^{1,R})^{-1}(\mathcal{A}_{L,out}), \mathcal{A}_{phy, in}), \\
   & \mathcal{B}^{\prime}_{R} = \boldsymbol{S}((u_{Net}^{2,L})^{-1}(\mathcal{A}_{R,out}), \mathcal{A}_{phy, in}).
\end{aligned} 
\end{equation}
We choose minimum matrix algebras ${B}_{L} \cong M_{n_{L}}$ and ${B}_{R} \cong M_{n_{R}}$, such that $\mathcal{B}^{\prime}_{L} \subseteq {B}_{L}$ and $\mathcal{B}^{\prime}_{R} \subseteq {B}_{R}$. Since the middle region has a size $L > R_{1} + R_{2}$, the locality-preserving property of $u_{Net}^{1,R}$  and $u_{Net}^{2,L}$ guarantees $[\mathcal{B}_{L}, \mathcal{B}_{R}] = 0 $. Therefore, we can decompose $\mathcal{A}_{phy,in}$ as 
\begin{equation}
    \mathcal{A}_{phy,in} = \mathcal{B}_{L} \otimes \mathcal{B}_{M} \otimes \mathcal{B}_{R},
\end{equation}
We define the action of $U_{M}$ as below:
\begin{equation}
\label{eq: def UM}
\begin{aligned}
    & u_{M}(\mathcal{A}_{L,in}) \coloneq u_{Net}^{1,R}(\mathcal{A}_{L,in}) \subseteq \mathcal{A}_{phy,out} \otimes \mathcal{A}_{L,out}, \\
    & u_{M}(\mathcal{B}_{L}) \coloneq u_{Net}^{1,R}(\mathcal{B}_{L}) \subseteq \mathcal{A}_{phy,out}\otimes \mathcal{A}_{L,out}, \\
    & u_{M}(\mathcal{A}_{R,in}) \coloneq u_{Net}^{2,L}(\mathcal{A}_{R,in}) \subseteq \mathcal{A}_{phy,out} \otimes \mathcal{A}_{R,out}, \\
    & u_{M}(\mathcal{B}_{R}) \coloneq u_{Net}^{2,L}(\mathcal{B}_{R}) \subseteq \mathcal{A}_{phy,out}\otimes \mathcal{A}_{R,out}, \\
    & u_{M}(\mathcal{B}_{M}) \coloneq u(\mathcal{B}_{M}) \subseteq \mathcal{A}_{phy,out}.
\end{aligned}
\end{equation}
Since $U_{Net}^{1}$ and $U_{Net}^{2}$ both implement the same QCA $U$, we have 
\begin{equation}
    u_{M}(\mathcal{B}_{M})  \coloneq u(\mathcal{B}_{M}) =  u_{Net}^{1,R}(\mathcal{B}_{M}) = u_{Net}^{2,L}(\mathcal{B}_{M}).
\end{equation} 
The mutual commutation of result algebras are therefore assured.  The physical QCA $U$ also ensures the surjection. Therefore, we find the unitary tensor $U_{M}$ connecting $U_{Net}^{1}$ and $U_{Net}^{2}$. $U_{Net}^{1}$ and $U_{Net}^{2}$ must share the same net information flow.
\end{proof}

We here argue that the inherent characteristic exhibited by the net information flow within a locality-preserving unitary network representation precisely corresponds to the "net flow of information" discussed in ref.~\cite{Gross_2012, Kitaev_2006}. We illustrate this by proving that the following identity.

\begin{proposition} [GNVW index is net information flow]
For a one-dimensional QCA $u$ represented by a locality preserving unitary network $U_{Net}$, the net information flow $f_{I}[U_{Net}]$ equals the logarithm of the GNVW index:
\begin{equation}
    \log_{d} I_{GNVW}(u) = -f_{I}[U_{Net}] .
\end{equation}
\end{proposition}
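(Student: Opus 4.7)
The plan is to compute $f_I[U_{Net}]$ explicitly for one convenient locality-preserving representation of $u$---the Margolus-partitioning bilayer unitary network constructed in Section \ref{Sec: QCA and unitary networks}---and then invoke Proposition \ref{prop: same net information flow for LU-rep} to transport that value to every other locality-preserving representation of the same QCA.

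First I would isolate a single two-site block $W_m:\mathcal{A}_{2m}\otimes\mathcal{A}_{2m+1}\to\mathcal{B}_{2m}\otimes\mathcal{B}_{2m+1}$ of the Margolus scheme and use its four-tensor decomposition $\{A_m,B_m,C_m,D_m\}$. Denote by $D_R$ the dimension of the rightward bond $A_m\to B_m$, by $D_L$ the dimension of the leftward bond $D_m\to C_m$, and by $D_V$ the vertical bond $A_m\to C_m$. Matching input and output Hilbert space dimensions at each local unitary tensor gives
\begin{align}
a_{2m} &= D_R\cdot D_V \quad (\text{unitarity of }A_m),\\
b_{2m} &= D_V\cdot D_L \quad (\text{unitarity of }C_m),
\end{align}
so that $D_R/D_L = a_{2m}/b_{2m}$.

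Next I would choose a vertical surface $\Sigma$ cutting between sites $2m$ and $2m+1$. Because every $V_k$ in the upper layer is supported on an odd-even supercell $\{2k-1,2k\}$, $\Sigma$ crosses no $V$-tensor; the only legs crossing $\Sigma$ are the two internal horizontal bonds of $W_m$'s decomposition. Hence
\begin{equation}
f_I[\Sigma] = \log_d D_R - \log_d D_L = \log_d\frac{a_{2m}}{b_{2m}} = -\log_d I_{GNVW}(u),
\end{equation}
and by Proposition \ref{prop: uniform net information flow} this equals $f_I[U_{Net}]$ for the Margolus representation. Since the Margolus architecture has uniformly bounded causal cones, it is locality-preserving in the sense of Def.~\ref{def: (Approximate) locality-preserving unitary network}, so Proposition \ref{prop: same net information flow for LU-rep} forces every other locality-preserving representation of $u$ to share this same value of $f_I$, giving the claimed identity in general.

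The main obstacle is the bond-dimension bookkeeping inside the $W_m$ decomposition: one has to be careful that $D_R$ and $D_L$ are correctly assigned to rightward and leftward bonds so that the ratio $D_R/D_L$ comes out as $a_{2m}/b_{2m}$ rather than its inverse, and that the vertical bond $D_V$ cancels between the two unitarity constraints. Once this is pinned down, the global result is an almost immediate consequence of the invariance of $f_I$ across vertical surfaces and the equality of $f_I$ across locality-preserving representations established earlier.
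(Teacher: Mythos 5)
Your proposal is correct and follows essentially the same route as the paper: it computes $f_I$ for the Margolus bilayer representation by cutting between sites $2m$ and $2m+1$, extracts $\log_d(a_{2m}/b_{2m}) = \log_d D_R - \log_d D_L$ from local unitarity, and then invokes Propositions \ref{prop: uniform net information flow} and \ref{prop: same net information flow for LU-rep} to globalize. The only cosmetic difference is that you apply the unitarity constraints to $A_m$ and $C_m$ separately (with the vertical bond $D_V$ canceling), whereas the paper contracts them into $L_m$ first and applies information conservation once; these are the same calculation.
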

\begin{proof}
    Consider the locality-preserving unitary network representation of a QCA given in (\ref{fig:QCA Bilayer}). We concentrate on the local unitary tensor $L_{m}$ from (\ref{fig:Wm}):
    \begin{align}
    \tikzsetnextfilename{info_flow_GNVW}
    \begin{tikzpicture}
        \draw[dashed] (1.2, -1.6) -- (1.2, 1.6) node[right]{$\Sigma$};
        \node[draw, rectangle, rounded corners=2pt, minimum width=0.6cm, minimum height=1.2cm, fill=tensorcolor, thick] (Lm)   {$L_{m}$};
        \draw[mid arrow=0.65] (Lm.90) -- node[left]{$\mathcal{H}_{\mathcal{B}_{2m}}$} ++(0,0.6);
         \draw[mid arrow=0.65] ([yshift=-0.6cm]Lm.270) -- node[left]{$\mathcal{H}_{\mathcal{A}_{2m}}$} ++(0,0.6);
         \draw[mid arrow=0.65] ([yshift=-0.3cm]Lm.0) -- node[below]{$\mathcal{H}_{R}$} ++(0.6,0);
         \draw[mid arrow=0.65] ([xshift=0.6cm,yshift=0.3cm]Lm.0) --node[above]{$\mathcal{H}_{L}$} ++(-0.6,0);
    \end{tikzpicture}
    \end{align}
The conservation law of information for $L_{m}$ dictates
\begin{equation}
S_{c}[\mathcal{H}_{\mathcal{B}_{2m}}] + S_{c}[\mathcal{H}_{R}] = S_{c}[\mathcal{H}_{\mathcal{A}_{2m}}] + S_{c}[\mathcal{H}_{L}].
\end{equation}
The logarithm of GNVW index can be expressed as:
\begin{equation}
\begin{aligned}
     \log_{d} I_{GNVW}(u) &= S_{c}[\mathcal{H}_{\mathcal{B}_{2m}}] - S_{c}[\mathcal{H}_{\mathcal{A}_{2m}}]\\
     &=  S_{c}[\mathcal{H}_{L}] - S_{c}[\mathcal{H}_{R}] \\
     & = -f_{I}[\Sigma] \\
     & = -f_{I}[U_{Net}],
\end{aligned}
\end{equation}
where the negative sign arises from the convention of selecting the positive direction of a surface. Thus, according to Proposition \ref{prop: same net information flow for LU-rep}, all unitary network representations that preserve locality adhere to this identity.
\end{proof}

An extension of intrinsic net information flow to the case of APLUs can be found in Appendix~\ref{Append: Intrinsic net information flow for approximately locality-preserving unitary networks}.

\section{Sequential quantum circuits and Unitary networks}
\label{Sec: UN and SQC}

A \textit{Sequential Quantum Circuit} \cite{schon2005sequential, schon2007sequential, banuls2008sequentially, wei2022sequential, Chen_2024} (SQC) is defined as a quantum circuit with local unitary gates, where each site is acted upon a finite number of gates \cite{Chen_2024}. SQCs are shown to be able to represent all QCAs and various non-local transformations to connect states of distinct gapped phases for PBC systems \cite{Chen_2024}. This section demonstrates the close relationship between unitary networks and SQCs. Essentially, a unitary network is an SQC with an additional shift, and any SQC can be transformed into a unitary network with finite bond dimensions.

\subsection{Unitary networks as SQCs}

Refs.~\cite{pollmann2016efficient, wahl2017efficient, haghshenas2022variational, Styliaris_2025} employ quantum circuit architectures to build tensor networks. A finite OBC unitary network can be regarded as a quantum circuit, particularly when each leg possesses a dimension of $\dim \mathcal{H}_{e} = d^{n}$, where
$d = \dim \mathcal{H}^{\text{qudit}}\in \mathbb{N}^+$ is any positive integer representing the dimension of a qudit Hilbert space, and $n \in \mathbb{N}^+$ is another positive integer. In such cases, the Hilbert space of the leg corresponds to the Hilbert space of $n$ qudits:
\begin{equation}
    \mathcal{H}_{e} \sim \bigotimes_{i=1}^{n} \mathcal{H}_{i}^{\text{qudit}}
\end{equation}
To highlight the link between unitary networks and quantum circuits, we will split the legs with dimension $d^{n}$ into $n$ legs of dimension $d$ in the diagrams.

As a first step, we will demonstrate that a local unitary tensor can be regarded as a quantum gate operating on qudits:
\begin{align}
\label{Eq: tensor=gate}
    \begin{array}{c}   \tikzsetnextfilename{local_unitary_tensor_is_quantum_gate__tensor}
        \begin{tikzpicture}
            \node[draw, rectangle, rounded corners=2pt, minimum width=0.6cm, minimum height=0.6cm, fill=tensorcolor, thick] (U)  {$U$};
            \draw[mid arrow=0.65,very thick] (U.0) --  node[above]{$mn$} ++(0.6,0) ;
             \draw[mid arrow=0.65,very thick] ([xshift=-0.6cm]U.180) --  node[above]{$ij$} ++(0.6,0) ;
             \draw[mid arrow=0.65, red!70] (U.90) --  node[right]{$l$} ++(0,0.6) ;
             \draw[mid arrow=0.65, red!70] ([yshift=-0.6cm]U.270) --  node[right]{$k$} ++(0,0.6) ;
        \end{tikzpicture}
    \end{array}
    =
   \begin{array}{c}
   \tikzsetnextfilename{local_unitary_tensor_is_quantum_gate__tensor_leg_splitting}
        \begin{tikzpicture}
            \node[draw, rectangle, rounded corners=2pt, minimum width=0.6cm, minimum height=0.6cm, fill=tensorcolor, thick] (U)  {$U$};
            \draw[mid arrow=0.65] ([yshift=-0.1cm]U.0) --  node[below]{$n$} ++(0.6,0) ;
            \draw[mid arrow=0.65] ([yshift=0.1cm]U.0) --  node[above]{$m$} ++(0.6,0) ;
             \draw[mid arrow=0.65] ([xshift=-0.6cm, yshift=0.1cm]U.180) --  node[above]{$i$} ++(0.6,0) ;
             \draw[mid arrow=0.65] ([xshift=-0.6cm,yshift=-0.1cm]U.180) --  node[below]{$j$} ++(0.6,0) ;
             \draw[mid arrow=0.65, red!70] (U.90) --  node[right]{$l$} ++(0,0.6) ;
             \draw[mid arrow=0.65,red!70] ([yshift=-0.6cm]U.270) --  node[right]{$k$} ++(0,0.6) ;
        \end{tikzpicture}
    \end{array}
        = 
    \begin{array}{c}
    \tikzsetnextfilename{local_unitary_tensor_is_quantum_gate__quantum_gate}
        \begin{tikzpicture}
             \node[draw, rectangle, rounded corners=2pt, minimum width=1.2cm, minimum height=0.6cm, fill=tensorcolor, thick] (U){$\Tilde{U}$};
             \draw[mid arrow=0.65, red!70] (-0.3, 0.3) -- (-0.3, 0.9)
            node[draw=none,fill=none,font=\scriptsize] at (-0.3, 1.1) {$l$};
             \draw[mid arrow=0.65] (0, 0.3) -- (0, 0.9)
            node[draw=none,fill=none,font=\scriptsize] at (0, 1.1) {$m$};
            \draw[mid arrow=0.65] (0.3, 0.3) -- (0.3, 0.9)
            node[draw=none,fill=none,font=\scriptsize] at (0.3, 1.1) {$n$};
            \draw[mid arrow=0.65] (-0.3, -0.9) -- (-0.3, -0.3)
            node[draw=none,fill=none,font=\scriptsize] at (-0.3, -1.1) {$i$};
             \draw[mid arrow=0.65] (0, -0.9) -- (0, -0.3)
            node[draw=none,fill=none,font=\scriptsize] at (0, -1.1) {$j$};
            \draw[mid arrow=0.65,red!70] (0.3, -0.9) -- (0.3, -0.3)
            node[draw=none,fill=none,font=\scriptsize] at (0.3, -1.1) {$k$};
        \end{tikzpicture} 
    \end{array}.
    \end{align}
As illustrated in the diagram, converting a local unitary tensor into a quantum gate involves repositioning all outgoing legs to the top of the tensor and all incoming legs to the bottom. A specific order needs to be determined for the legs of the quantum gate $\tilde{U}$. In one-dimensional scenarios, the outgoing (incoming) legs are naturally ordered from top to bottom and from left to right. 

The vertical legs $l$ and $k$ of the unitary tensor $U$ are painted red. Although $l$ and $k$ represent outgoing and incoming vertical legs positioned at the same location in $U$, they correspond to different qudits in the quantum gate $\tilde{U}$. This is allowed since the intermediate physical Hilbert space in the quantum circuit does not represent the final outcome and can be regarded as a type of bond space. Note that the bond dimension $D$ of a local unitary tensor determines the width of the corresponding quantum gate.

A unitary network meeting the following conditions can be considered a quantum circuit:
\begin{proposition} [Unitary network as a quantum circuit]
A unitary network $U_{Net}$ can be consistently transformed into a quantum circuit if it meets the following criteria:
\begin{enumerate} [label=(\roman*)]
    \item It is defined on a finite OBC system.
    \item Each of its legs has a dimension of $d^{n}$.
    \item Its graph $G[U_{Net}]$ is a DAG.
\end{enumerate}
\end{proposition}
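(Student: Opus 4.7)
The plan is to build the quantum circuit by leveraging each of the three conditions in turn. First, condition (iii) that $G[U_{Net}]$ is a DAG permits a topological sort producing a total order $U_1 < U_2 < \cdots < U_N$ of the internal unitary tensors, and condition (i) ensures this list is finite. Condition (ii), that every leg has dimension $d^n$ for some non-negative integer $n$ (which may vary per leg), lets each leg be unambiguously factorized as a tensor product of $n$ qudits of dimension $d$, which will become the physical wires of the circuit.

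Next, I would convert each local unitary tensor into a quantum gate via the prescription in Eq.~(\ref{Eq: tensor=gate}): merge all outgoing legs of $U_t$ upward and all incoming legs downward, obtaining a unitary $\tilde{U}_t$ on a fixed number of qudits determined by $d^{p_t} = \dim \mathcal{H}^{in}_t = \dim \mathcal{H}^{out}_t$. The direction constraint on unitary tensor contractions (an incoming leg must contract with an outgoing leg) then identifies each internal bond with a wire that the earlier gate writes to and the later gate reads from; external source legs become circuit input wires at time zero and external sink legs become output wires at the final time. Applying the gates in the topological order $U_1, U_2, \ldots, U_N$ yields a candidate circuit diagram.

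The consistency check is that the total number of active wires is invariant across the circuit. By unitarity of each $U_t$, the qudits consumed by $\tilde U_t$ are matched by those produced, so the total active qudit count remains constant and equals the qudit count of the sources (which, by global unitarity established in Proposition~\ref{prop: Unitarity of unitary network}, matches the qudit count of the sinks). The topological order guarantees that every input wire to $U_t$ has already been produced by either a source or an earlier gate $U_{t'}$ with $t' < t$; acyclicity is precisely what rules out the pathology of a gate needing its own output as input. The main obstacle I anticipate is bookkeeping: ordering the qudit labels so that gate $\tilde U_t$ acts on the intended set of wires in the intended positions, given that bonds may cross one another across the network. Since any permutation of qudits is itself a valid quantum-circuit operation (a product of SWAP gates) and may be inserted freely between consecutive gates, this is a cosmetic rearrangement that does not affect the overall unitary action, so the construction yields a well-defined quantum circuit of depth $N$ acting on a fixed collection of qudits.
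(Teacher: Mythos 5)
Your proposal is correct and follows essentially the same route as the paper's proof: topologically sort the DAG, convert each local unitary tensor into a gate via the leg-merging prescription of Eq.~(\ref{Eq: tensor=gate}), execute the gates in that order, and absorb any wire crossings into SWAP gates. Your additional bookkeeping observation that local unitarity keeps the active qudit count invariant along the circuit is a sound (if implicit in the paper) consistency check, not a different argument.
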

\begin{proof}
The conversion resembles the method demonstrated in Eq.~(\ref{Eq:DAG=circuit}). First, we perform a topological sort on the DAG to establish the sequence of the local tensors. Each local tensor is then treated as a quantum gate and executed sequentially. If two lines intersect, a swap gate might be introduced to handle the crossing.
\end{proof}

The following diagram illustrates how to transform a bilayer unitary network into a quantum circuit:
\begin{align} 
\begin{array}{c}
\tikzsetnextfilename{unitary_network_is_quantum_circuit__UN}
    \begin{tikzpicture}
        \node[draw, rectangle, rounded corners=2pt, minimum width=0.6cm, minimum height=0.6cm, fill=tensorcolor, thick](A) at (-1.2, 0) {1};
        \node[draw, rectangle, rounded corners=2pt, minimum width=0.6cm, minimum height=0.6cm, fill=tensorcolor, thick](B) {2};
        \node[draw, rectangle, rounded corners=2pt, minimum width=0.6cm, minimum height=0.6cm, fill=tensorcolor, thick](C) at (1.2, 0) {3};
        \node[draw, rectangle, rounded corners=2pt, minimum width=0.6cm, minimum height=0.6cm, fill=tensorcolor, thick](D) at (-1.2, 1.2) {6};
        \node[draw, rectangle, rounded corners=2pt, minimum width=0.6cm, minimum height=0.6cm, fill=tensorcolor, thick](E) at (0, 1.2) {5};
        \node[draw, rectangle, rounded corners=2pt, minimum width=0.6cm, minimum height=0.6cm, fill=tensorcolor, thick](F) at (1.2, 1.2) {4};
        \draw[mid arrow=0.65] ([xshift=-0.1cm]D.90) -- ++(0,0.6);
        \draw[mid arrow=0.65] ([xshift=0.1cm]D.90) -- ++(0,0.6);
        \draw[mid arrow=0.65] ([xshift=-0.1cm]E.90) -- ++(0,0.6);
        \draw[mid arrow=0.65] ([xshift=0.1cm]E.90) -- ++(0,0.6);
        \draw[mid arrow=0.65] ([xshift=-0.1cm]F.90) -- ++(0,0.6);
        \draw[mid arrow=0.65] ([xshift=0.1cm]F.90) -- ++(0,0.6);
        \draw[mid arrow=0.65] ([xshift=-0.1cm,yshift=-0.6cm]A.270) -- ++(0,0.6);
        \draw[mid arrow=0.65] ([xshift=0.1cm,yshift=-0.6cm]A.270) -- ++(0,0.6);
        \draw[mid arrow=0.65] ([xshift=-0.1cm,yshift=-0.6cm]B.270) -- ++(0,0.6);
        \draw[mid arrow=0.65] ([xshift=0.1cm,yshift=-0.6cm]B.270) -- ++(0,0.6);
        \draw[mid arrow=0.65] ([xshift=-0.1cm,yshift=-0.6cm]C.270) -- ++(0,0.6);
        \draw[mid arrow=0.65] ([xshift=0.1cm,yshift=-0.6cm]C.270) -- ++(0,0.6);
        \draw[mid arrow=0.65] (A.0) -- (B.180);
        \draw[mid arrow=0.65] ([yshift=-0.1cm]B.0) -- ([yshift=-0.1cm]C.180);
        \draw[mid arrow=0.65] ([yshift=0.1cm]B.0) -- ([yshift=0.1cm]C.180);
        \draw[mid arrow=0.65] ([yshift=-0.1cm]F.180) -- ([yshift=-0.1cm]E.0);
        \draw[mid arrow=0.65] ([yshift=0.1cm]F.180) -- ([yshift=0.1cm]E.0);
        \draw[mid arrow=0.65] (E.180) -- (D.0);
        \draw[mid arrow=0.65] (A.90) -- (D.270);
        \draw[mid arrow=0.65] (B.90) -- (E.270);
        \draw[mid arrow=0.65] ([xshift=-0.15cm]C.90) -- ([xshift=-0.15cm]F.270);
        \draw[mid arrow=0.65] ([xshift=-0.05cm]C.90) -- ([xshift=-0.05cm]F.270);
        \draw[mid arrow=0.65] ([xshift=0.05cm]C.90) -- ([xshift=0.05cm]F.270);
        \draw[mid arrow=0.65] ([xshift=0.15cm]C.90) -- ([xshift=0.15cm]F.270);
    \end{tikzpicture}
\end{array} = 
\begin{array}{c}
\tikzsetnextfilename{unitary_network_is_quantum_circuit__quantum_circuit}
\begin{tikzpicture}
    \node[draw, rectangle, rounded corners=2pt, minimum width=0.6cm, minimum height=0.6cm, fill=tensorcolor, thick](A) at (0, 0) {1};
    \draw ([xshift=-0.2cm,yshift=-0.4cm]A.270) -- ++(0,0.4);
    \draw ([xshift=0.2cm,yshift=-0.4cm]A.270) -- ++(0,0.4);
    \draw ([xshift=-0.2cm]A.90) -- ++(0,4.4);
    \draw ([xshift=0.2cm]A.90) -- ++(0,0.4);
     \node[draw, rectangle, rounded corners=2pt, minimum width=1.0cm, minimum height=0.6cm, fill=tensorcolor, thick](B) at (0.6, 1) {2};
    \draw ([xshift=0.4cm,yshift=-1.4cm]B.270) -- ++(0,1.4);
    \draw ([yshift=-1.4cm]B.270) -- ++(0,1.4);
    \draw ([xshift=-0.4cm]B.90) -- ++(0,2.4);
    \draw (B.90) -- ++(0,0.4);
    \draw ([xshift=0.4cm]B.90) -- ++(0,0.4);
    \node[draw, rectangle, rounded corners=2pt, minimum width=1.4cm, minimum height=0.6cm, fill=tensorcolor, thick](C) at (1.2, 2) {3};
    \draw ([xshift=-0.6cm]C.90) -- ++(0,0.4);
    \draw ([xshift=-0.2cm]C.90) -- ++(0,0.4);
    \draw ([xshift=0.2cm]C.90) -- ++(0,0.4);
    \draw ([xshift=0.6cm]C.90) -- ++(0,0.4);
    \draw ([xshift=0.2cm,yshift=-2.4cm]C.270) -- ++(0,2.4);
    \draw ([xshift=0.6cm,yshift=-2.4cm]C.270) -- ++(0,2.4);
    \node[draw, rectangle, rounded corners=2pt, minimum width=1.4cm, minimum height=0.6cm, fill=tensorcolor, thick](F) at (1.2, 3) {4};
    \draw ([xshift=-0.6cm]F.90) -- ++(0,0.4);
    \draw ([xshift=-0.2cm]F.90) -- ++(0,0.4);
    \draw ([xshift=0.2cm]F.90) -- ++(0,2.4);
    \draw ([xshift=0.6cm]F.90) -- ++(0,2.4);
    \node[draw, rectangle, rounded corners=2pt, minimum width=1.0cm, minimum height=0.6cm, fill=tensorcolor, thick](E) at (0.6, 4) {5};
    \draw ([xshift=-0.4cm]E.90) -- ++(0,0.4);
    \draw (E.90) -- ++(0,1.4);
    \draw ([xshift=0.4cm]E.90) -- ++(0,1.4);
    \node[draw, rectangle, rounded corners=2pt, minimum width=0.6cm, minimum height=0.6cm, fill=tensorcolor, thick](D) at (0, 5) {6};
    \draw ([xshift=-0.2cm]D.90) -- ++(0,0.4);
    \draw ([xshift=0.2cm]D.90) -- ++(0,0.4);
\end{tikzpicture}
\end{array}
\end{align}

Unitary networks defined in a finite-size system, when viewed as quantum circuits, exhibit linear depth and sequential architecture. Moreover, each site will be acted upon by a finite number of gates. Quantum circuits with these characteristics are exactly SQCs.

\subsection{SQCs as unitary networks}
\label{Subsec: SQCs as unitary networks}

In this subsection, we show that any SQC, featuring local unitarity and DAGs, can be regarded as a unitary network with finite bond dimension. 

\begin{proposition} [SQC as a unitary network]
Any SQC can be converted into a multi-layer unitary network with finite bond dimensions.
\end{proposition}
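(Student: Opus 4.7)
The plan is to convert the given SQC into a unitary network sweep by sweep. By definition, an SQC $U$ consists of local unitary gates of bounded support, and each site is acted on by only finitely many gates; grouping these gates by position and execution order yields a factorization $U = V_L V_{L-1}\cdots V_1$ in which each $V_\ell$ is a single \emph{staircase sweep} -- a product of overlapping local unitaries applied in a fixed direction along the chain -- and the number of sweeps $L$ is a constant independent of the system size.

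The core step is to show that a single staircase sweep $V = G_{N-1}\cdots G_2 G_1$ of two-body gates (the general case of width-$r$ gates works identically, with bond dimension $d^{r-1}$) can be written as a single-layer unitary network. Introduce one internal node $T_k$ for each gate and take $T_k$ to be $G_k$ itself, reshaped so that its four $d$-dimensional legs play the following roles: the incoming left bond carries the already-updated state of site $k$ emerging from the previous gate $G_{k-1}$; an incoming physical leg receives the original input at site $k+1$; an outgoing physical leg emits the final output at site $k$; and the outgoing right bond passes the updated site-$(k+1)$ state forward to $T_{k+1}$. The boundary tensors $T_1$ and $T_{N-1}$ simply absorb the missing bond into the physical legs at sites $1$ and $N$, respectively. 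Each $T_k$ is unitary because it is a reshape of the unitary gate $G_k$; the collection $T_1\to T_2\to\cdots\to T_{N-1}$ forms a directed path, which is trivially a DAG; and contracting along the horizontal bonds reproduces $G_{N-1}\cdots G_1 = V$ by construction. The bond dimension of this layer is bounded by $d^{r-1}$.

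Finally, one stacks the $L$ single-layer networks vertically: the physical outputs of the network for $V_\ell$ are identified with the physical inputs of the network for $V_{\ell+1}$. Because each layer is a DAG and the vertical legs always propagate upward from one layer to the next, the stacked graph has no directed cycle, so by Proposition~\ref{prop: Unitarity of unitary network} the result is a unitary network whose evaluation is $V_L\cdots V_1 = U$. Every bond has dimension at most $d^{r-1}$ and the number of layers is the constant $L$, so the total bond dimension is finite.

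The main obstacle I anticipate is the bookkeeping at the interfaces. Because $T_k$ effectively sits between sites $k$ and $k+1$, a single-sweep layer introduces a slight horizontal offset between where physical inputs enter and where physical outputs exit, and stacking two layers whose sweeps run in opposite directions requires that this offset be absorbed consistently so that the combined graph remains acyclic. This can always be handled by inserting a trivial layer of swap or identity tensors between sweeps of opposite chirality -- equivalently, by adopting the right-canonical form of Sec.~\ref{subsection: Bilayer} -- after which the DAG property is preserved and the construction goes through.
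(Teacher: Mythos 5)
Your reduction of a single unidirectional staircase sweep to a one-layer unitary network is correct and is in fact tighter than what the paper does: you reshape each gate $G_k$ directly into a network tensor whose right bond carries the updated state of site $k+1$, giving bond dimension $d^{\,r-1}$ per layer, whereas the paper decomposes every gate separately into a padded bilayer network (as in its universality construction) and then compresses vertically, arriving at the cruder bound $D\le d^{2mn}$. Your stacking argument is also fine: all vertical legs point upward, so no directed cycle can form regardless of the horizontal chirality of the individual layers, and Proposition~\ref{prop: Unitarity of unitary network} then gives global unitarity.

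The genuine gap is the very first step: you assert ``by definition'' that an SQC factors as $U=V_L\cdots V_1$ with each $V_\ell$ a \emph{unidirectional} staircase sweep and $L$ independent of system size. The definition used in this paper (following Ref.~\cite{Chen_2024}) only says that the circuit is a product of local gates applied in \emph{some} order with each site acted on by finitely many gates; it does not present the circuit as a stack of sweeps, and the gates touching a given site for the $\ell$-th time need not be applied in a spatially monotone order. This matters because your construction genuinely requires unidirectionality: a ``sweep'' whose direction reverses would place oppositely oriented bonds between the same pair of tensors in a single layer, producing exactly the directed two-cycle that Sec.~\ref{subsection: Bilayer} identifies as the failure mode of single-layer architectures. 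So the sweep decomposition is the load-bearing claim of your proof and needs its own argument (or a restriction of the proposition to SQCs presented in sweep form). The paper's gate-by-gate route avoids this entirely: stacking the per-gate bilayer networks in the circuit's causal order is a DAG for \emph{any} gate ordering, and the finite-gates-per-site condition alone bounds the number of bonds crossing each vertical cut, which is all the proposition requires. If you supply a lemma showing that every SQC with at most $n$ gates per site can be reorganized into $O(n)$ unidirectional (or piecewise-unidirectional, mutually non-cyclic) sweeps, your version goes through and yields a strictly better bond dimension.
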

\begin{proof}
An SQC is, by definition, made up of local unitary gates, each acting on a finite number of sites.
Starting from an SQC as below:
\begin{align}
\tikzsetnextfilename{SQC}
\begin{tikzpicture}
    \node[draw, rectangle, rounded corners=2pt, minimum width=0.6cm, minimum height=0.6cm, fill=tensorcolor, thick](A) at (0, 0) {1};
    \draw ([xshift=-0.2cm,yshift=-0.4cm]A.270) -- ++(0,0.4);
    \draw ([xshift=0.2cm,yshift=-0.4cm]A.270) -- ++(0,0.4);
    \draw ([xshift=-0.2cm]A.90) -- ++(0,2.4);
    \draw ([xshift=0.2cm]A.90) -- ++(0,0.4);
     \node[draw, rectangle, rounded corners=2pt, minimum width=1.0cm, minimum height=0.6cm, fill=tensorcolor, thick](B) at (0.6, 1) {2};
    \draw ([xshift=0.4cm,yshift=-1.4cm]B.270) -- ++(0,1.4);
    \draw ([yshift=-1.4cm]B.270) -- ++(0,1.4);
    \draw ([xshift=-0.4cm]B.90) -- ++(0,1.4);
    \draw (B.90) -- ++(0,0.4);
    \draw ([xshift=0.4cm]B.90) -- ++(0,0.4);
    \node[draw, rectangle, rounded corners=2pt, minimum width=1.0cm, minimum height=0.6cm, fill=tensorcolor, thick](C) at (1.0, 2) {3};
    \draw ([xshift=-0.4cm]C.90) -- ++(0,0.4);
    \draw (C.90) -- ++(0,0.4);
    \draw ([xshift=0.4cm]C.90) -- ++(0,0.4);
    \draw ([xshift=0.4cm,yshift=-2.4cm]C.270) -- ++(0,2.4);
\end{tikzpicture}
\end{align}
In this case, a maximum of $n=2$ gates operate on each site. We may decompose each gate into a bilayer unitary network as in \eqref{Eq: decompose into BUN}. This will yield a unitary network, whose number of layers increases linearly with system size.
\begin{equation}
\tikzsetnextfilename{SQC_UN}
    \begin{tikzpicture}
         \node[draw, rectangle, rounded corners=2pt, minimum width=0.4cm, minimum height=0.4cm, fill=tensorcolor, thick] (a1) at (-2.4,0) {};
        \node[draw, rectangle, rounded corners=2pt, minimum width=0.4cm, minimum height=0.4cm, fill=tensorcolor, thick] (b1) at (-1.6,0) {};
         \node[draw, rectangle, rounded corners=2pt, minimum width=0.4cm, minimum height=0.4cm, fill=tensorcolor, thick] (a2) at (-2.4,0.8) {};
        \node[draw, rectangle, rounded corners=2pt, minimum width=0.4cm, minimum height=0.4cm, fill=tensorcolor, thick] (b2) at (-1.6,0.8) {};
        \node[draw, rectangle, rounded corners=2pt, minimum width=0.4cm, minimum height=0.4cm, fill=tensorcolor, thick] (a3) at (-1.6,1.6) {};
        \node[draw, rectangle, rounded corners=2pt, minimum width=0.4cm, minimum height=0.4cm, fill=tensorcolor, thick] (b3) at (-0.8,1.6) {};
        \node[draw, rectangle, rounded corners=2pt, minimum width=0.4cm, minimum height=0.4cm, fill=tensorcolor, thick] (c3) at (0, 1.6) {};
        \node[draw, rectangle, rounded corners=2pt, minimum width=0.4cm, minimum height=0.4cm, fill=tensorcolor, thick] (a4) at (-1.6,2.4) {};
        \node[draw, rectangle, rounded corners=2pt, minimum width=0.4cm, minimum height=0.4cm, fill=tensorcolor, thick] (b4) at (-0.8,2.4) {};
        \node[draw, rectangle, rounded corners=2pt, minimum width=0.4cm, minimum height=0.4cm, fill=tensorcolor, thick] (c4) at (0, 2.4) {};
         \node[draw, rectangle, rounded corners=2pt, minimum width=0.4cm, minimum height=0.4cm, fill=tensorcolor, thick] (a5) at (-0.8,3.2) {};
        \node[draw, rectangle, rounded corners=2pt, minimum width=0.4cm, minimum height=0.4cm, fill=tensorcolor, thick] (b5) at (0,3.2) {};
        \node[draw, rectangle, rounded corners=2pt, minimum width=0.4cm, minimum height=0.4cm, fill=tensorcolor, thick] (c5) at (0.8, 3.2) {};
        \node[draw, rectangle, rounded corners=2pt, minimum width=0.4cm, minimum height=0.4cm, fill=tensorcolor, thick] (a6) at (-0.8,4) {};
        \node[draw, rectangle, rounded corners=2pt, minimum width=0.4cm, minimum height=0.4cm, fill=tensorcolor, thick] (b6) at (0,4) {};
        \node[draw, rectangle, rounded corners=2pt, minimum width=0.4cm, minimum height=0.4cm, fill=tensorcolor, thick] (c6) at (0.8, 4) {};
        \draw[mid arrow=0.65] ([yshift=-0.4cm]a1.270) -- (a1.270);
        \draw[mid arrow=0.65] ([yshift=-0.4cm]b1.270) -- (b1.270);
        \draw[mid arrow=0.65] (a1.0) -- (b1.180);
         \draw[mid arrow=0.65] (a1.90) -- (a2.270);
         \draw[mid arrow=0.65] (b1.90) -- (b2.270);
         \draw[mid arrow=0.65] (b2.180) -- (a2.0);
         \draw[mid arrow=0.65] (a2.90) -- ++(0,3.6);
         \draw[mid arrow=0.65] (b2.90) -- (a3.270);
         \draw[mid arrow=0.65] ([yshift=-2cm]b3.270) -- (b3.270);
         \draw[mid arrow=0.65] ([yshift=-2cm]c3.270) -- (c3.270);
         \draw[mid arrow=0.65] ([yshift=-3.6cm]c5.270) -- (c5.270);
         \draw[mid arrow=0.65] (a3.90) -- (a4.270);
         \draw[mid arrow=0.65] (b3.90) -- (b4.270);
         \draw[mid arrow=0.65] (c3.90) -- (c4.270);
         \draw[mid arrow=0.65] (a5.90) -- (a6.270);
         \draw[mid arrow=0.65] (b5.90) -- (b6.270);
         \draw[mid arrow=0.65] (c5.90) -- (c6.270);
         \draw[mid arrow=0.65] (b4.90) -- (a5.270);
         \draw[mid arrow=0.65] (c4.90) -- (b5.270);
         \draw[mid arrow=0.65] (a6.90) -- ++(0,0.4);
         \draw[mid arrow=0.65] (b6.90) -- ++(0,0.4);
         \draw[mid arrow=0.65] (c6.90) -- ++(0,0.4);
         \draw[mid arrow=0.65] (a4.90) -- ++(0,2);
         \draw[mid arrow=0.65] (a3.0) -- (b3.180);
         \draw[mid arrow=0.65] (b3.0) -- (c3.180);
         \draw[mid arrow=0.65] (c4.180) -- (b4.0);
         \draw[mid arrow=0.65] (b4.180) -- (a4.0);
          \draw[mid arrow=0.65] (a5.0) -- (b5.180);
         \draw[mid arrow=0.65] (b5.0) -- (c5.180);
         \draw[mid arrow=0.65] (c6.180) -- (b6.0);
         \draw[mid arrow=0.65] (b6.180) -- (a6.0);
    \end{tikzpicture}
\end{equation}
We can vertically compress this unitary network, leading to a $2n=4$-layer unitary network. 
\begin{equation}
\tikzsetnextfilename{SQC_UN_compressed}
    \begin{tikzpicture}
         \node[draw, rectangle, rounded corners=2pt, minimum width=0.4cm, minimum height=0.4cm, fill=tensorcolor, thick] (a1) at (-2.4,0) {};
        \node[draw, rectangle, rounded corners=2pt, minimum width=0.4cm, minimum height=0.4cm, fill=tensorcolor, thick] (b1) at (-1.6,0) {};
         \node[draw, rectangle, rounded corners=2pt, minimum width=0.4cm, minimum height=0.4cm, fill=tensorcolor, thick] (a2) at (-2.4,0.8) {};
        \node[draw, rectangle, rounded corners=2pt, minimum width=0.4cm, minimum height=0.4cm, fill=tensorcolor, thick] (b2) at (-1.6,0.8) {};
        \node[draw, rectangle, rounded corners=2pt, minimum width=0.4cm, minimum height=0.4cm, fill=tensorcolor, thick] (a3) at (-1.6,1.6) {};
        \node[draw, rectangle, rounded corners=2pt, minimum width=0.4cm, minimum height=0.4cm, fill=tensorcolor, thick] (b3) at (-0.8,0) {};
        \node[draw, rectangle, rounded corners=2pt, minimum width=0.4cm, minimum height=0.4cm, fill=tensorcolor, thick] (c3) at (0, 0) {};
        \node[draw, rectangle, rounded corners=2pt, minimum width=0.4cm, minimum height=0.4cm, fill=tensorcolor, thick] (a4) at (-1.6,2.4) {};
        \node[draw, rectangle, rounded corners=2pt, minimum width=0.4cm, minimum height=0.4cm, fill=tensorcolor, thick] (b4) at (-0.8,0.8) {};
        \node[draw, rectangle, rounded corners=2pt, minimum width=0.4cm, minimum height=0.4cm, fill=tensorcolor, thick] (c4) at (0, 0.8) {};
         \node[draw, rectangle, rounded corners=2pt, minimum width=0.4cm, minimum height=0.4cm, fill=tensorcolor, thick] (a5) at (-0.8,1.6) {};
        \node[draw, rectangle, rounded corners=2pt, minimum width=0.4cm, minimum height=0.4cm, fill=tensorcolor, thick] (b5) at (0,1.6) {};
        \node[draw, rectangle, rounded corners=2pt, minimum width=0.4cm, minimum height=0.4cm, fill=tensorcolor, thick] (c5) at (0.8, 0) {};
        \node[draw, rectangle, rounded corners=2pt, minimum width=0.4cm, minimum height=0.4cm, fill=tensorcolor, thick] (a6) at (-0.8,2.4) {};
        \node[draw, rectangle, rounded corners=2pt, minimum width=0.4cm, minimum height=0.4cm, fill=tensorcolor, thick] (b6) at (0,2.4) {};
        \node[draw, rectangle, rounded corners=2pt, minimum width=0.4cm, minimum height=0.4cm, fill=tensorcolor, thick] (c6) at (0.8, 0.8) {};
        \draw[mid arrow=0.65] ([yshift=-0.4cm]a1.270) -- (a1.270);
        \draw[mid arrow=0.65] ([yshift=-0.4cm]b1.270) -- (b1.270);
        \draw[mid arrow=0.65] (a1.0) -- (b1.180);
         \draw[mid arrow=0.65] (a1.90) -- (a2.270);
         \draw[mid arrow=0.65] (b1.90) -- (b2.270);
         \draw[mid arrow=0.65] (b2.180) -- (a2.0);
         \draw[mid arrow=0.65] (a2.90) -- ++(0,2);
         \draw[mid arrow=0.65] (b2.90) -- (a3.270);
         \draw[mid arrow=0.65] ([yshift=-0.4cm]b3.270) -- (b3.270);
         \draw[mid arrow=0.65] ([yshift=-0.4cm]c3.270) -- (c3.270);
         \draw[mid arrow=0.65] ([yshift=-0.4cm]c5.270) -- (c5.270);
         \draw[mid arrow=0.65] (a3.90) -- (a4.270);
         \draw[mid arrow=0.65] (b3.90) -- (b4.270);
         \draw[mid arrow=0.65] (c3.90) -- (c4.270);
         \draw[mid arrow=0.65] (a5.90) -- (a6.270);
         \draw[mid arrow=0.65] (b5.90) -- (b6.270);
         \draw[mid arrow=0.65] (c5.90) -- (c6.270);
         \draw[mid arrow=0.65] (b4.90) -- (a5.270);
         \draw[mid arrow=0.65] (c4.90) -- (b5.270);
         \draw[mid arrow=0.65] (a6.90) -- ++(0,0.4);
         \draw[mid arrow=0.65] (b6.90) -- ++(0,0.4);
         \draw[mid arrow=0.65] (c6.90) -- ++(0,2);
         \draw[mid arrow=0.65] (a4.90) -- ++(0,0.4);
         \draw[mid arrow=0.65] (a3.0) -- (b3.180);
         \draw[mid arrow=0.65] (b3.0) -- (c3.180);
         \draw[mid arrow=0.65] (c4.180) -- (b4.0);
         \draw[mid arrow=0.65] (b4.180) -- (a4.0);
          \draw[mid arrow=0.65] (a5.0) -- (b5.180);
         \draw[mid arrow=0.65] (b5.0) -- (c5.180);
         \draw[mid arrow=0.65] (c6.180) -- (b6.0);
         \draw[mid arrow=0.65] (b6.180) -- (a6.0);
    \end{tikzpicture}
\end{equation}
In doing so, interpreting the vertical direction as temporal is no longer possible due to the presence of downward arrows. Nonetheless, this is irrelevant as it won't impact the global unitarity, with causal order preserved in the DAG structure. Legs possessing a vertical component can still be considered as bond legs.

We will next demonstrate that this unitary network has a finite bond dimension. Each bilayer unitary network, decomposed from a local gate, possesses a bond dimension $D_{gate}$ such that $D_{gate} < d^{2m}$, where $d$ is the Hilbert space dimension per site, and $m$ represents the maximum number of sites acted by the local gates. With up to $n$ gates acting on each site, the total bond dimension $D_{circuit}$ satisfies $D_{circuit} \le D_{gate}^{n} = d^{2mn}$, and is thus finite.
\end{proof}

\subsection{Unitary networks and SQCs: differences and equivalence}

In discussing the net information flow, we differentiate between a physical unitary operator and its unitary network representations. Similarly to unitary networks, a quantum circuit serves as a representation rather than an operator. This arises from the fact that a physical operator can be realized using different circuits.

For finite OBC systems, we have demonstrated that unitary networks can be seen as SQCs. SQCs, with their local unitary gates and DAG structures, can also be regarded unitary networks. In finite 1D OBC systems, unitary networks and SQCs are interchangeable and thus equivalent.

For infinite OBC systems and PBC systems, unitary networks extend SQCs because they naturally combine quantum circuit and shift operations. In Section \ref{Sec: Information flow and the GNVW index}, we showed that a unitary network can sustain a non-zero net information flow. In contrast, SQCs, when representing physical unitary operators, allow only zero information flow, as detailed in the Appendix \ref{Append: SQC has zero net information flow}.

In Subsections \ref{subsec: Intrinsic net information flow}, we demonstrated that both infinite OBC and PBC systems allow for shift operations implemented with zero information flow, thus achievable by SQCs. Consequently, SQCs and unitary networks possess equivalent representability. However, the SQC implementation of the shift operation introduces an unnecessary loop of information flow in PBC and extra propagating modes that do not influence the bulk in OBC system. These redundant information flows increase the bond dimensions, and thus reduce the efficiency of SQC implementation compared to unitary network representations.

\section{Reduce the Bond dimensions}
\label{Sec: reduce bond dimension}
A global unitary can be represented by different bilayer unitary networks, demonstrating diverse implementations for a global unitary tensor. In many scenarios, it is helpful to identify the most efficient unitary network representation that minimizes computational resources. 

For general unitary operators, it is challenging to find an efficient unitary network representation. In this section, we offer an algorithm to decompose a Fermionic Gaussian unitary operator into an efficient unitary network.

\subsection{Defining the problem}

A single global unitary operation can be represented by several different bilayer unitary networks. Take, for instance, the expression of an identity operation on a 3-qudit system using a bilayer unitary network; there are at least two ways to achieve this:
\begin{align}
\tikzsetnextfilename{two_unitary_network_for_identity}
\begin{tikzpicture}
    \begin{scope}[local bounding box=graph_a]
                \node[draw, rectangle, rounded corners=2pt, minimum width=0.6cm, minimum height=0.6cm, fill=tensorcolor, thick](A) at (-1.2, 0) {$I_{A}$};
        \node[draw, rectangle, rounded corners=2pt, minimum width=0.6cm, minimum height=0.6cm, fill=tensorcolor, thick](B) {$I_{B}$};
        \node[draw, rectangle, rounded corners=2pt, minimum width=0.6cm, minimum height=0.6cm, fill=tensorcolor, thick](C) at (1.2, 0) {$I_{C}$};
        \node[draw, rectangle, rounded corners=2pt, minimum width=0.6cm, minimum height=0.6cm, fill=tensorcolor, thick](D) at (-1.2, 1.2) {$I_{D}$};
        \node[draw, rectangle, rounded corners=2pt, minimum width=0.6cm, minimum height=0.6cm, fill=tensorcolor, thick](E) at (0, 1.2) {$I_{E}$};
        \node[draw, rectangle, rounded corners=2pt, minimum width=0.6cm, minimum height=0.6cm, fill=tensorcolor, thick](F) at (1.2, 1.2) {$I_{F}$};
        \draw[mid arrow=0.65] (D.90) -- ++(0,0.6);
        \draw[mid arrow=0.65] (E.90) -- ++(0,0.6);
        \draw[mid arrow=0.65] (F.90) -- ++(0,0.6);
        \draw[mid arrow=0.65] ([yshift=-0.6cm]A.270) -- (A.270);
        \draw[mid arrow=0.65] ([yshift=-0.6cm]B.270) -- (B.270);
        \draw[mid arrow=0.65] ([yshift=-0.6cm]C.270) -- (C.270);
        \draw[mid arrow=0.65] (A.0) -- (B.180);
        \draw[mid arrow=0.65] ([yshift=-0.1cm]B.0) -- ([yshift=-0.1cm]C.180);
        \draw[mid arrow=0.65] ([yshift=0.1cm]B.0) -- ([yshift=0.1cm]C.180);
        \draw[mid arrow=0.65] ([yshift=-0.1cm]F.180) -- ([yshift=-0.1cm]E.0);
        \draw[mid arrow=0.65] ([yshift=0.1cm]F.180) -- ([yshift=0.1cm]E.0);
        \draw[mid arrow=0.65] (E.180) -- (D.0);
        \draw[mid arrow=0.65] (C.90) -- (F.270);
        \draw[mid arrow=0.65] ([xshift=-0.2cm]C.90) -- ([xshift=-0.2cm]F.270);
        \draw[mid arrow=0.65] ([xshift=0.2cm]C.90) -- ([xshift=0.2cm]F.270);
    \end{scope}
    \begin{scope} [shift={(4, 0)}, local bounding box=graph_b]
                \node[draw, rectangle, rounded corners=2pt, minimum width=0.6cm, minimum height=0.6cm, fill=tensorcolor, thick](A) at (-1.2, 0) {$I_{A}$};
        \node[draw, rectangle, rounded corners=2pt, minimum width=0.6cm, minimum height=0.6cm, fill=tensorcolor, thick](B) {$I_{B}$};
        \node[draw, rectangle, rounded corners=2pt, minimum width=0.6cm, minimum height=0.6cm, fill=tensorcolor, thick](C) at (1.2, 0) {$I_{C}$};
        \node[draw, rectangle, rounded corners=2pt, minimum width=0.6cm, minimum height=0.6cm, fill=tensorcolor, thick](D) at (-1.2, 1.2) {$I_{D}$};
        \node[draw, rectangle, rounded corners=2pt, minimum width=0.6cm, minimum height=0.6cm, fill=tensorcolor, thick](E) at (0, 1.2) {$I_{E}$};
        \node[draw, rectangle, rounded corners=2pt, minimum width=0.6cm, minimum height=0.6cm, fill=tensorcolor, thick](F) at (1.2, 1.2) {$I_{F}$};
        \draw[mid arrow=0.65] (D.90) -- ++(0,0.6);
        \draw[mid arrow=0.65] (E.90) -- ++(0,0.6);
        \draw[mid arrow=0.65] (F.90) -- ++(0,0.6);
        \draw[mid arrow=0.65] ([yshift=-0.6cm]A.270) -- (A.270);
        \draw[mid arrow=0.65] ([yshift=-0.6cm]B.270) -- (B.270);
        \draw[mid arrow=0.65] ([yshift=-0.6cm]C.270) -- (C.270);
        \draw[mid arrow=0.65] (A.90) -- (D.270);
        \draw[mid arrow=0.65] (B.90) -- (E.270);
        \draw[mid arrow=0.65] (C.90) -- (F.270);
    \end{scope}
    \node at (graph_a.north west)  (label_a)  {(a)}; 
    \node  at (graph_b.north west) (label_b) {(b)}; 
\end{tikzpicture}
\end{align}
In the diagram presented above, each leg possesses a dimension of $d$. When there are $n$ legs linking two local tensors, the bond dimension becomes $d^{n}$. (b) intuitively seems more efficient than (a). By counting the legs, we see that (b) exhibits smaller bond dimensions than (a).

As the local unitary tensors in the preceding diagrams are identity tensors, we can alternatively represent the above unitary networks as follows:
\begin{align}
\tikzsetnextfilename{two_info_path_for_identity}
\begin{tikzpicture}
    \begin{scope}[shift={(-1, 0)}, local bounding box=graph_a]
    \draw[mid arrow=0.3, mid arrow=0.5, mid arrow =0.7] (0,0) -- (0, 1) -- (2.2, 1) -- (2.2, 2) -- (0,2) -- (0,3);
    \draw[mid arrow=0.3, mid arrow=0.5, mid arrow=0.7] (1.2,0) -- (1.2, 0.8) -- (2.3, 0.8) -- (2.3, 2.2) -- (1.2,2.2) -- (1.2,3);
    \draw[mid arrow=0.5] (2.4,0) -- (2.4,3);
    \end{scope}
    \begin{scope} [shift={(4, 0)}, local bounding box=graph_b]
    \draw[mid arrow=0.6] (-1.2,0) -- (-1.2,3);
    \draw[mid arrow=0.6] (0,0) -- (0,3);
    \draw[mid arrow=0.6] (1.2,0) -- (1.2,3);
    \end{scope}
    \node at ([xshift=-0.2cm, yshift=0.2cm]graph_a.north west) {(a)}; 
    \node at ([xshift=-0.2cm, yshift=0.2cm]graph_b.north west) {(b)}; 
\end{tikzpicture}
\end{align}
The degrees of freedom and their corresponding information $S_{c}[\mathcal{H}]$ in (a) travel a longer distance than in (b). In fact, with a distance measure in the unitary network as described in Def.\ref{def: UN distance}, we can define a cost function on its edges:
\begin{definition} [Cost function]
For a unitary network $U_{Net}$ and its asscociate graph $G[U_{Net}] = (V,E)$, we define a cost function $C_{I}$ for a directed edge $e_{ij} \in E$ as:
\begin{equation}
    C_{I}[e_{ij}] \coloneq S_{c}[\mathcal{H}_{e_{ij}}] \cdot |e_{ij}|,  
\end{equation}
where $S_{c}[\mathcal{H}_{e_{ij}}]$ denotes the coherent entropy of the Hilbert space associated with $e_{ij}$, and $|e_{ij}|$ represents the length of $e_{ij}$. 

The total cost $C_{I}[U_{Net}]$ for the whole unitary network is defined as:
\begin{equation}
    C_{I}[U_{Net}] \coloneq \sum_{e_{ij} \in E} C_{I}[e_{ij}].
\end{equation}
\end{definition}

With the cost function $C_{I}$, we are ready to define the optimized unitary network for a given unitary operator $\tilde{U}$.
\begin{definition} [Optimized unitary network]
\label{def: Optimized unitary network}
    Given a target global unitary $\tilde{U}$ and a graph for unitary network $G[U_{Net}] = (V,E)$, one may define an optimized unitary network by the following optimization formulation: 
\begin{equation}
\begin{aligned}
  & \text{Minimize:} \quad  C_{I}[U_{Net}], \\
  & \text{subject to:} \quad \text{the contraction of} \ U_{Net} = \tilde{U}. 
\end{aligned}
\end{equation}
\end{definition}

The total cost $C_{I}[U_{Net}]$ has a clear physical meaning. When confined to manipulating single-site unitary operations, realizing unitary operators with larger support requires transferring qudits between sites through physical transport or quantum teleportation. Hence, the transfer of qudits is considered a resource. In a unitary network defined on a lattice system, the total cost $C_{I}[U_{Net}]$ measures the required qudit transfers between neighboring sites. Optimizing the UN identifies the minimal qudit transfer required for implementing the global unitary.

Moreover, in a bilinear unitary network, the total cost $C_{I}[U_{Net}]$ is equal to the logarithm of the product of the bond dimensions:
\begin{equation}
    \sum_{e_{ij} \in E} C_{I}[e_{ij}] = a \cdot\sum_{<m,n>} \log_{d} D_{mn} = a \log_{d} (\prod_{<m,n>} D_{mn}),
\end{equation}
where $a$ represents the lattice constant, and $D_{mn}$ denotes the bond dimension between adjacent sites $m$ and $n$.

\subsection{Fermionic Gaussian Unitary and Fermionic Gaussian Unitary Network}

Throughout the majority of this paper, our focus is restricted to the application of unitary networks within qudit systems. In this section, we briefly analyze the Fermionic Gaussian unitary. By calculating everything at the mode level, we can disregard the complexities of fermionic statistics.

Consider a Fermionic system with $N$ Fermion modes. We denote the Hilbert space for $N$ Fermion modes as 
\begin{equation}
    \mathcal{H}_{f}^{\wedge N} = \mathcal{H}_{f}\wedge\mathcal{H}_{f} \wedge \cdots \wedge \mathcal{H}_{f} \quad (N \ \text{times}),
\end{equation}
where $\mathcal{H}_{f} \cong \mathbb{C}^{2}$ is the Hilbert space of a single Fermion mode, and $\wedge$ is the wedge product. The general unitary group on $\mathcal{H}_{f}^{\wedge N}$ is denoted as $U(\mathcal{H}_{f}^{\wedge N}) \subseteq U(2^{N})$. If we limit the unitary to a Fermionic Gaussian unitary with particle number conservation, it can be represented by a single-particle unitary, also referred to as mode unitary $U(N)$. Within this section, we represent a mode unitary by $U \in U(N)$ (without a hat), whereas the many-body unitary operator is denoted by $\hat{U} \in U(2^{N})$ (with hat).
\begin{definition} [Projective group representations of $U(N)$]
There is a projective representation $\hat{\rho}: U(N) \rightarrow U(\mathcal{H}_{f}^{\wedge N}) \subseteq U(2^{N})$ \cite{Hackl_2021}, given by:
\begin{equation}
\begin{aligned}
    &\hat{\rho}(e^{ih}) = \pm e^{i\hat{H}}, \\
    & \hat{H} = h_{ab} \hat{c}^{\dagger}_{a} \hat{c}_{b},
\end{aligned}
\end{equation}
which is defined up to an overall sign.  
This projective representation $\rho$ provides a group homomorphism between unitary mode $U$ and many-body unitary $\hat{U}$ such that:
\begin{equation}
    \begin{aligned}
       & \hat{\rho}(U) = \pm \hat{U},\\
      &\hat{\rho}(U_{1})\hat{\rho}(U_{2}) = \pm \hat{\rho}(U_{1}U_{2}).
    \end{aligned}
\end{equation}
\end{definition}

As we will discuss shortly, the homomorphism $U \rightarrow \hat{U}$ between mode unitary and many-body unitary can be extended to a homomorphism between mode unitary network and many-body unitary network $U_{Net} \rightarrow \hat{U}_{Net}$. 

First, we will discuss the visualization of a  mode unitary network.

\begin{definition} [Mode unitary matrix]
A mode unitary $U$, or $U^{a}_{b}$, is a unitary matrix (rank-2 tensor). Illustrate in this manner, with each outgoing leg corresponding to a column in $U$, while each incoming leg indicates a row in $U$.
\begin{align}
\tikzsetnextfilename{mode_unitary}
\begin{tikzpicture}
    \begin{scope}  [local bounding box=graph_a]
        \node[draw, rectangle, rounded corners=2pt, minimum width=1.2cm, minimum height=0.6cm, fill=tensorcolor, thick] (hat_U){$\hat{U}$};
        \draw[mid arrow=0.6] ([xshift=-0.4cm]hat_U.90) -- ++(0,0.4);
        \draw[mid arrow=0.6] (hat_U.90) -- ++(0,0.4);
        \draw[mid arrow=0.6] ([xshift=0.4cm]hat_U.90) -- ++(0,0.4);
        \draw[mid arrow=0.6] ([xshift=-0.4cm,yshift=-0.4cm]hat_U.270) -- ++(0,0.4);
        \draw[mid arrow=0.6] ([yshift=-0.4cm]hat_U.270) -- ++(0,0.4);
        \draw[mid arrow=0.6] ([xshift=0.4cm,yshift=-0.4cm]hat_U.270) -- ++(0,0.4);
    \end{scope}
    \begin{scope}  [shift={(4, 0)}, local bounding box=graph_b]
        \node[draw, rectangle, rounded corners=2pt, minimum width=1.2cm, minimum height=0.6cm, fill=tensorcolor, thick] (U){$U$};
        \draw[mid arrow=0.6] (U.90) -- ++(0,0.4);
        \draw[mid arrow=0.6] ([xshift=0.4cm]U.90) -- ++(0,0.4);
        \draw[mid arrow=0.6] ([xshift=-0.4cm]U.90) -- ++(0,0.4);
        \draw[mid arrow=0.6] ([xshift=-0.4cm,yshift=-0.4cm]U.270) -- ++(0,0.4);
        \draw[mid arrow=0.6] ([yshift=-0.4cm]U.270) -- ++(0,0.4);
        \draw[mid arrow=0.6] ([xshift=0.4cm,yshift=-0.4cm]U.270) -- ++(0,0.4);
    \end{scope}
    \node  at ([xshift=-0.2cm, yshift=0.2cm]graph_a.north west) {(a)};
    \node at ([xshift=-0.2cm, yshift=0.2cm]graph_b.north west) {(b)};
\end{tikzpicture}
\end{align}
\end{definition}
In (a), we depict a Fermionic Gaussian unitary tensor $\hat{U}$ acting on $N$ Fermion modes. It features $N$ incoming and $N$ outgoing legs, each leg carries a Hilbert space $\mathcal{H}_{f} \cong \mathbb{C}^{2}$. Therefore, $\hat{U}$ is a rank-$2N$ tensor. 

In (b), we depict the corresponding mode unitary $U$, with $\hat{\rho}(U) = \hat{U}$. Though the diagram in (b) looks identical to that of (a), it signifies a mode unitary $U \in U(N)$ with a distinct interpretation. With $N$ incoming and $N$ outgoing legs, $U$ remains a matrix (a rank-2 tensor). The presence of $N$ outgoing legs implies that $U$ consists of $N$ columns. Or, in other words, $a = 1, 2, \ldots, N$ in $U^{a}_{b}$.

In order for a set of mode unitary matrices $\{U_{1}, U_{2},\cdots, U_{n}\}$ to form a unitary network $U_{Net}$, we need to know the contraction rules of the mode unitary matrices.

\begin{proposition} [Computation rules for mode unitary network] The following are the computational rules for a unitary network:
\begin{enumerate} [label=(\roman*)]
    \item Parallel:  When two mode unitary matrix exist in parallel, the total unitary matrix is given by $U_{1} \oplus U_{2}$:
    \begin{align}
        \begin{array}{c}
        \tikzsetnextfilename{direct_sum_mode_unitary}
        \begin{tikzpicture}
        \begin{scope}  [local bounding box=graph_a, shift={(0,0)}]
        \node[draw, rectangle, rounded corners=2pt, minimum width=1.2cm, minimum height=0.6cm, fill=tensorcolor, thick] (U){$U_{1}$};
        \draw[mid arrow=0.6] ([xshift=-0.4cm]U.90) -- ++(0,0.4);
        \draw[mid arrow=0.6] (U.90) -- ++(0,0.4);
        \draw[mid arrow=0.6] ([xshift=0.4cm]U.90) -- ++(0,0.4);
        \draw[mid arrow=0.6] ([xshift=-0.4cm,yshift=-0.4cm]U.270) -- ++(0,0.4);
        \draw[mid arrow=0.6] ([yshift=-0.4cm]U.270) -- ++(0,0.4);
        \draw[mid arrow=0.6] ([xshift=0.4cm,yshift=-0.4cm]U.270) -- ++(0,0.4);
    \end{scope}
    \begin{scope}  [local bounding box=graph_b, shift={(1.6,0)}]
        \node[draw, rectangle, rounded corners=2pt, minimum width=1.2cm, minimum height=0.6cm, fill=tensorcolor, thick] (U){$U_{2}$};
        \draw[mid arrow=0.6] ([xshift=-0.4cm]U.90) -- ++(0,0.4);
        \draw[mid arrow=0.6] (U.90) -- ++(0,0.4);
        \draw[mid arrow=0.6] ([xshift=0.4cm]U.90) -- ++(0,0.4);
        \draw[mid arrow=0.6] ([xshift=-0.4cm,yshift=-0.4cm]U.270) -- ++(0,0.4);
        \draw[mid arrow=0.6] ([yshift=-0.4cm]U.270) -- ++(0,0.4);
        \draw[mid arrow=0.6] ([xshift=0.4cm,yshift=-0.4cm]U.270) -- ++(0,0.4);
    \end{scope}
    \end{tikzpicture}
    \end{array} = U_{1} \oplus U_{2} =  \left( \begin{array}{cc}
    U_{1}  & 0\\
      0  &  U_{2}
   \end{array}\right)
    \end{align} 
    \item Contraction: When one or more pairs of legs are contracted, the result is given by:
   \begin{equation}
\begin{aligned}
    \begin{array}{c}
    \tikzsetnextfilename{contraction_mode_unitary}
    \begin{tikzpicture}
        \node[draw, rectangle, rounded corners=2pt, minimum width=1cm, minimum height=0.6cm, fill=tensorcolor, thick] (U1){$U_{1}$};
        \node[draw, rectangle, rounded corners=2pt, minimum width=1cm, minimum height=0.6cm, fill=tensorcolor, thick] (U2) at (0.6,-1.2) {$U_{2}$};
        \draw[mid arrow=0.65] ([xshift=-0.3cm]U1.90) -- ++ (0,0.6);
        \draw[mid arrow=0.65] ([xshift=0.3cm]U1.90) -- ++ (0,0.6);
        \draw[mid arrow=0.65] ([xshift=-0.3cm,yshift=-0.6cm]U1.270) --  node[left] {$A$} ++(0,0.6);
        \draw[mid arrow=0.65] ([xshift=0.3cm,yshift=-0.6cm]U1.270) -- node[left] {$B$} ++(0,0.6);
         \draw[mid arrow=0.65] ([xshift=0.9cm,yshift=-0.6cm]U1.270) -- node[left] {$C$} ++(0,0.6);
          \draw[mid arrow=0.65] ([xshift=-0.3cm,yshift=-0.6cm]U2.270) -- ++ (0,0.6);
        \draw[mid arrow=0.65] ([xshift=0.3cm,yshift=-0.6cm]U2.270) -- ++ (0,0.6);
        \draw[mid arrow=0.65, densely dashed] ([shift={(-0.3,-1.8)}]U1.270) -- ++ (0,1.2);
        \draw[mid arrow=0.65, densely dashed] ([shift={(0.3,0.6)}]U2.90) -- ++ (0,1.2);
    \end{tikzpicture}
    \end{array} &= 
  U_{1} \times_{B} U_{2} \\
  &= (U_{1} \oplus I_{C}) \times (I_{A} \oplus U_{2})\\
  &=    \left(
     \begin{array}{c|c|c}
     U^{A}_{1}  & U^{B}_{1} & 0 \\
     \hline
     0 & 0 & I_{C}
   \end{array} 
\right) \left(
     \begin{array}{c|c}
     I_{A}  & 0  \\
     \hline
     0 & U^{B}_{2} \\
     \hline
     0 & U^{C}_{2}
   \end{array} 
\right) \\
&= \left(
     \begin{array}{c|c}
     U_{1}^{A}  & U_{1}^{B}U_{2}^{B}  \\
     \hline
     0 & U^{C}_{2} 
   \end{array} 
\right)
\end{aligned}
\end{equation}
where $U_{1} \times_{B} U_{2}$ means only leg $B$ is contracted. Observe that the matrix multiplication $U_{1}U_{2}$ can be viewed as a special case of contraction, wherein every incoming leg of $U_{1}$ is paired with each outgoing leg of $U_{2}$.
\end{enumerate}
\end{proposition}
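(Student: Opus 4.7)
My plan is to deduce both rules from two ingredients already available: the projective homomorphism $\hat{\rho}(U_1 U_2) = \pm\,\hat{\rho}(U_1)\hat{\rho}(U_2)$ recorded above, and the observation that the quadratic generator $h_{ab}\hat{c}_a^{\dagger}\hat{c}_b$ is additive under block-diagonal decomposition of $h$. For the parallel rule, if $U_1 \in U(V_1)$ and $U_2 \in U(V_2)$ act on disjoint single-particle spaces $V_1 \cap V_2 = \emptyset$, then on $V_1 \oplus V_2$ the single-particle Hamiltonian has block form $h = h_1 \oplus h_2$ and the second-quantised Hamiltonian splits as $\hat{H} = \hat{H}_1 + \hat{H}_2$, with $[\hat{H}_1,\hat{H}_2] = 0$ because the two summands act on disjoint modes and each has even fermion parity. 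Exponentiating, together with the canonical factorisation $\mathcal{H}_f^{\wedge(V_1 \cup V_2)} \cong \mathcal{H}_f^{\wedge V_1} \otimes \mathcal{H}_f^{\wedge V_2}$, yields $\hat{\rho}(U_1 \oplus U_2) = \pm\,\hat{\rho}(U_1) \otimes \hat{\rho}(U_2)$, which is precisely what placing the two tensors side by side represents on Fock space.

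For the contraction rule I would first establish the purely algebraic identity at the mode level,
\[
U_1 \times_B U_2 \;=\; (U_1 \oplus I_C)\,(I_A \oplus U_2),
\]
which is a direct block-matrix computation once the partition of the external legs into $A \oplus B \oplus C$ is fixed (and is already spelled out diagrammatically in the statement). Applying the projective representation, the homomorphism property, and the parallel rule just proved, this transports to the many-body statement
\[
\hat{\rho}(U_1 \times_B U_2) \;=\; \pm\,(\hat{U}_1 \otimes \hat{I}_C)\,(\hat{I}_A \otimes \hat{U}_2).
\]
The remaining task is to verify that this right-hand side equals the operator produced by diagrammatically contracting the $B$-leg between the many-body tensors $\hat{U}_1$ and $\hat{U}_2$.

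The main obstacle is this last Fock-space check, since the tensor-network contraction along $B$ involves a resolution of identity in $\mathcal{H}_f^{\wedge|B|}$ weighted by fermion-parity signs that are not automatically consistent with ordinary matrix multiplication. I would handle it by fixing a global mode ordering $A \prec B \prec C$, so that the $B$-outputs of $\hat{U}_1 \otimes \hat{I}_C$ sit immediately to the left of the $B$-inputs of $\hat{I}_A \otimes \hat{U}_2$. Because Fermionic Gaussian unitaries are parity-even, this ordering kills any Koszul-type crossing signs between the contracted $B$-modes and the spectator $A$- and $C$-modes, and the oriented contraction collapses to ordinary operator composition on the shared $\mathcal{H}_f^{\wedge|B|}$ factor. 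Simultaneous contraction of several legs then follows by iterating the single-leg argument, while the overall sign ambiguity is absorbed into the $\pm$ already present in the projective definition of $\hat{\rho}$.
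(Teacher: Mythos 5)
The paper gives no separate proof of this proposition: the computational rules are presented essentially as definitions, and the only mathematical content --- the identity $U_{1}\times_{B}U_{2}=(U_{1}\oplus I_{C})(I_{A}\oplus U_{2})$ together with its explicit block form --- is carried out inside the statement itself. Everything here lives at the single-particle (mode) level, where $U_{1}\oplus I_{C}$ and $I_{A}\oplus U_{2}$ are ordinary unitary matrices, so the contracted object is a product of unitaries and no fermionic signs can arise; the verification is the one-line block multiplication you correctly describe as immediate. Your proposal is therefore correct on the point this proposition actually asserts, but it aims most of its effort at a different target: the statements $\hat{\rho}(U_{1}\oplus U_{2})=\pm\,\hat{\rho}(U_{1})\wedge\hat{\rho}(U_{2})$ and $\hat{\rho}(U_{1}\times_{B}U_{2})=\pm\,\hat{\rho}(U_{1})\times_{B}\hat{\rho}(U_{2})$ constitute the \emph{next} proposition in the paper (the homomorphism between the many-body and mode unitary networks), which the paper likewise states without proof. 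Your treatment of that lift --- splitting the quadratic generator block-diagonally and using even fermion parity to get $[\hat{H}_{1},\hat{H}_{2}]=0$ for the parallel rule, and fixing a global ordering $A\prec B\prec C$ so that parity-evenness eliminates the Koszul crossing signs in the $B$-contraction, with the residual ambiguity absorbed into the $\pm$ of the projective representation --- is a sensible and correct way to fill in what the paper leaves implicit. In short, nothing in your argument fails; it simply proves more than the stated proposition requires, and the portion addressing the proposition as written coincides with the paper's own (purely computational) justification.
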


\begin{definition} [Mode unitary network]
    At present, we limit our focus to mode unitary networks specifically for cases of DAG. Within these cases, a clear causal sequence for local unitary tensors can be identified. A mode unitary network is constructed by contracting mode unitary matrices according to the rules mentioned earlier.
\end{definition}

\begin{proposition} [Homomorphism between many-body unitary network and mode unitary network] 
A (projective) homomorphism can be established between the many-body Fermionic Gaussian unitary network and the mode unitary network, given that the graph $G[U_{Net}]$ is a DAG:
\begin{equation}
    \hat{\rho}(U_{Net}) = \pm \hat{U}_{Net}.
\end{equation}
This can be seen from the fact that
\begin{equation}
\begin{aligned}
   & \hat{\rho}(U_{1} \oplus U_{2}) = \hat{\rho}({U})_{1} \wedge \hat{\rho}({U}_{2}), \\
   &\hat{\rho}(U_{1} \times_{B} U_{2}) = \hat{\rho}(U_{1}) \times_{B} \hat{\rho}(U_{2}).
\end{aligned}
\end{equation}
\end{proposition}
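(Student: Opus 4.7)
The plan is to proceed by induction on the number of local tensors in $U_{Net}$, exploiting the topological ordering granted by the DAG hypothesis. The base case of a single local tensor is exactly the definition of the projective representation $\hat{\rho}(U) = \pm \hat{U}$. For the inductive step the two stated identities are the workhorses, and I would establish them first as lemmas.

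For the parallel rule, if $U_1 = e^{ih_1}$ and $U_2 = e^{ih_2}$ act on disjoint mode subspaces, then $U_1 \oplus U_2 = e^{i(h_1 \oplus h_2)}$, and the corresponding quadratic many-body Hamiltonians $\hat{H}_1 = (h_1)_{ab}\hat{c}^\dagger_a \hat{c}_b$ and $\hat{H}_2 = (h_2)_{ab}\hat{c}^\dagger_a \hat{c}_b$ commute because they involve disjoint creation/annihilation operators. Exponentiating gives $\hat{\rho}(U_1 \oplus U_2) = e^{i\hat{H}_1}e^{i\hat{H}_2} = \hat{\rho}(U_1)\hat{\rho}(U_2)$, and since the operators act on disjoint mode sectors this product is naturally interpreted as the wedge product on $\mathcal{H}_f^{\wedge N_1}\wedge \mathcal{H}_f^{\wedge N_2}$. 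For the contraction rule, I would use the explicit matrix identity from the paper,
\begin{equation}
U_1 \times_B U_2 = (U_1 \oplus I_C)(I_A \oplus U_2),
\end{equation}
and combine the parallel rule with the underlying group homomorphism $\hat{\rho}(U_1 U_2) = \pm \hat{\rho}(U_1)\hat{\rho}(U_2)$ to obtain $\hat{\rho}(U_1 \times_B U_2) = \pm \hat{\rho}(U_1) \times_B \hat{\rho}(U_2)$.

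With the two lemmas in hand, the induction is straightforward. Fix a topological sort $U_1 < U_2 < \cdots < U_N$ of the local tensors of $U_{Net}$. Remove $U_N$ and let $U_{Net}'$ be the sub-network of the first $N-1$ tensors, which is still a DAG; by the inductive hypothesis $\hat{\rho}(U_{Net}') = \pm \hat{U}_{Net}'$. Reattaching $U_N$ amounts to a single parallel composition (padding with identities on the legs not touched by $U_N$) followed by contractions on the legs that $U_N$ shares with $U_{Net}'$. Each step is covered by one of the lemmas, so we conclude $\hat{\rho}(U_{Net}) = \pm \hat{U}_{Net}$.

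The main obstacle I expect is bookkeeping the $\pm$ signs of the projective representation: each application of the contraction lemma can in principle introduce an independent sign, and one must check that these signs aggregate into a single overall sign rather than ruining the identity altogether. This is also where fermionic statistics enters, since the wedge product depends on the chosen ordering of modes, so reshuffling contracted legs may generate Koszul signs. Resolving this cleanly requires either a careful fixed ordering convention on the modes attached to each tensor, or passing to a genuine (non-projective) representation on the metaplectic cover and only projecting back at the end.
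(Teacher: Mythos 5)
Your proof is correct and follows essentially the same route as the paper, which merely asserts the parallel and contraction identities and notes that the network is assembled by iterating them along the DAG; you supply the details the paper omits (commuting quadratic Hamiltonians on disjoint mode sets, the factorization $U_{1}\times_{B}U_{2}=(U_{1}\oplus I_{C})(I_{A}\oplus U_{2})$, and the explicit induction over a topological sort). The sign bookkeeping you flag is harmless in this setting: the claim is only an equality up to an overall sign, so the finitely many independent $\pm$ ambiguities accumulated in the induction still collapse to a single overall $\pm$.
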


In summary, as long as the graph lacks loops, we can repeatedly apply mode unitary contraction to build a mode unitary network.  This mode unitary network can then be transformed into a many-body unitary network. The transformation involves converting mode unitaries into many-body unitaries and translating mode matrix contractions into many-body tensor contractions.

\subsection{Cosine-sine Decomposition}

Considering a $N \times N$ unitary matrix $U \in \text{SU}(N)$. Any unitary matrix $U$ can be written into a 2-by-2 block matrix:
\begin{equation}
U = \left(
     \begin{array}{cc}
      U^{Aa}_{p\times q}  & U^{Ab}_{p \times (N-q)} \\
      U^{Ba}_{(N-p)\times q}  &  U^{Bb}_{(N-p)\times (N-q)}
   \end{array}
\right), 
\end{equation}
where we divide the outgoing basis into subsets $A$ and $B$, and split the incoming basis into $a$ and $b$, satisfying:
\begin{equation}
\begin{aligned}
    &|A| = p, \\
    &|a| = q, \\
   &|A|+|B| = |a| + |b| = N .
\end{aligned}
\end{equation}
For convenience we assume $q < p$ and $q+p < N$.
The \textit{cosine-sine decomposition} (CSD) of $U$ is given by \cite{davis1969some,davis1970rotation}
\begin{equation}
\begin{aligned}
     U &=  VRW^{\dagger} \\ &
     = \left(
     \begin{array}{cc}
    V^{A}  & 0\\
      0  &  V^{B}
   \end{array}
\right) 
\left(
     \begin{array}{c|ccc}
      C  & -S & 0 & 0\\
      0  &  0 & I_{p-q} & 0 \\
      \hline
      S & C & 0& 0    \\
       0& 0 & 0 & I_{N-p-q}           
   \end{array}
\right) 
\left(
     \begin{array}{cc}
      W^{a}  & 0\\
      0  &  W^{b}
   \end{array}
\right) ^{\dagger} ,
\end{aligned}
\end{equation}
where $V^{A}, V^{B}, W^{A},W^{B}$ are unitary matrices. 
\begin{equation}
    C = \text{diag}(c_{1},\cdots, c_{q}), \ S = \text{diag}(s_{1},\cdots, s_{q}),
\end{equation}
where $c_{i}$ and $s_{i}$ are real numbers satisfying
\begin{equation}
    c_{i}^{2} + s_{i}^{2} = 1.
\end{equation}
For a more detailed introduction of CSD, please refer to \cite{paige1994history,sutton2009computing}.

\subsection{Decompose a Fermionic Gaussian unitary}

The CSD offers a method to break down a global mode unitary matrix into a more efficient bilayer unitary network. Consider the scenario where the bipartition of both the input and output Hilbert spaces is identical:
\begin{equation}
    A = a, \ B =b.
\end{equation}
The CSD can be written as 
\begin{equation}
\begin{aligned}
    U &= VRW^{\dagger} \\
    &= \left(
     \begin{array}{cc}
    V_{A}  & 0\\
      0  &  V_{B}
   \end{array}
\right) 
\left(
     \begin{array}{cc|cc}
      I_{a}  & 0 & 0 & 0\\
      0  &  C & -S & 0 \\
      \hline
      0 & S & C& 0    \\
       0& 0 & 0 & I_{b}           
   \end{array}
\right) 
\left(
     \begin{array}{cc}
      W_{A}^{\dagger}  & 0\\
      0  &  W_{B}^{\dagger}
   \end{array}
\right).
\end{aligned}
\end{equation}
This can be visualized as:
\begin{align}
\begin{array}{c}
\tikzsetnextfilename{CSD_global_unitary}
    \begin{tikzpicture}
         \node[draw, rectangle, rounded corners=2pt, minimum width=2.4cm, minimum height=0.6cm, fill=tensorcolor, thick] (U){$U$};
         \draw[mid arrow=0.65] ([xshift=-1cm]U.90) -- ++(0,0.4);
         \draw[mid arrow=0.65] ([xshift=-0.6cm]U.90) -- ++(0,0.4);
         \draw[mid arrow=0.65] ([xshift=-0.2cm]U.90) -- ++(0,0.4);
         \draw[mid arrow=0.65] ([xshift=0.2cm]U.90) -- ++(0,0.4);
         \draw[mid arrow=0.65] ([xshift=0.6cm]U.90) -- ++(0,0.4);
         \draw[mid arrow=0.65] ([xshift=1cm]U.90) -- ++(0,0.4);
         \draw[mid arrow=0.65] 
         ([xshift=-1cm,yshift=-0.4cm]U.270) -- ++(0,0.4);
         \draw[mid arrow=0.65] 
         ([xshift=-0.6cm,yshift=-0.4cm]U.270) -- ++(0,0.4);
         \draw[mid arrow=0.65] 
         ([xshift=-0.2cm,yshift=-0.4cm]U.270) -- ++(0,0.4);
         \draw[mid arrow=0.65] 
         ([xshift=0.2cm,yshift=-0.4cm]U.270) -- ++(0,0.4);
         \draw[mid arrow=0.65] 
         ([xshift=0.6cm,yshift=-0.4cm]U.270) -- ++(0,0.4);
         \draw[mid arrow=0.65] 
         ([xshift=1cm,yshift=-0.4cm]U.270) -- ++(0,0.4);
    \end{tikzpicture}
\end{array}
= 
\begin{array}{c}
\tikzsetnextfilename{CSD_after_decomposition}
\begin{tikzpicture}
    \node[draw, rectangle, rounded corners=2pt, minimum width=0.2cm, minimum height=0.6cm, dashed] (Ia) at (-1.8, 0) {} node[left] at (Ia.180) {$I_{a}$};
    \node[draw, rectangle, rounded corners=2pt, minimum width=1cm, minimum height=0.6cm, dashed] (Ib) at (0.2, 0) {} node[right] at (Ib.0) {$I_{b}$};
    \node[draw, rectangle, rounded corners=2pt, minimum width=1.2cm, minimum height=0.6cm, fill=tensorcolor, thick] (R) at (-1, 0) {$R$};
    \node[draw, rectangle, rounded corners=2pt, minimum width=0.8cm, minimum height=0.6cm, fill=tensorcolor, thick] (VA) at (-1.6, 1.2) {$V_{A}$};
     \node[draw, rectangle, rounded corners=2pt, minimum width=1.6cm, minimum height=0.6cm, fill=tensorcolor, thick] (VB) at (0, 1.2) {$V_{B}$};
     \node[draw, rectangle, rounded corners=2pt, minimum width=0.8cm, minimum height=0.6cm, fill=tensorcolor, thick] (WA) at (-1.6, -1.2) {$W_{A}^{\dagger}$};
     \node[draw, rectangle, rounded corners=2pt, minimum width=1.6cm, minimum height=0.6cm, fill=tensorcolor, thick] (WB) at (0, -1.2) {$W_{B}^{\dagger}$};
     \draw[mid arrow=0.65] ([xshift=-0.2cm]VA.90) -- ++(0,0.6);
     \draw[mid arrow=0.65] ([xshift=0.2cm]VA.90) -- ++(0,0.6);
    \draw[mid arrow=0.65] ([xshift=-0.6cm]VB.90) -- ++(0,0.6);
     \draw[mid arrow=0.65] ([xshift=-0.2cm]VB.90) -- ++(0,0.6);
     \draw[mid arrow=0.65] ([xshift=0.2cm]VB.90) -- ++(0,0.6);
     \draw[mid arrow=0.65] ([xshift=0.6cm]VB.90) -- ++(0,0.6);
     \draw[mid arrow=0.65] ([xshift=-0.2cm,yshift=-0.6cm]WA.270) -- ++(0,0.6);
     \draw[mid arrow=0.65] ([xshift=0.2cm,yshift=-0.6cm]WA.270) -- ++(0,0.6);
     \draw[mid arrow=0.65] ([xshift=-0.6cm,yshift=-0.6cm]WB.270) -- ++(0,0.6);
     \draw[mid arrow=0.65] ([xshift=-0.2cm,yshift=-0.6cm]WB.270) -- ++(0,0.6);
     \draw[mid arrow=0.65] ([xshift=0.2cm,yshift=-0.6cm]WB.270) -- ++(0,0.6);
     \draw[mid arrow=0.65] ([xshift=0.6cm,yshift=-0.6cm]WB.270) -- ++(0,0.6);
     \draw[mid arrow=0.65] ([xshift=-0.2cm]WA.90) -- ([xshift=-0.2cm]VA.270);
     \draw[mid arrow=0.65] ([xshift=-0.2cm]WB.90) -- ([xshift=-0.2cm]VB.270);
     \draw[mid arrow=0.65] ([xshift=0.2cm]WB.90) -- ([xshift=0.2cm]VB.270);
     \draw[mid arrow=0.65] ([xshift=0.6cm]WB.90) -- ([xshift=0.6cm]VB.270);
     \draw[mid arrow=0.65] ([xshift=0.4cm]R.90) -- ([xshift=-0.6cm]VB.270);
     \draw[mid arrow=0.65] ([xshift=-0.4cm]R.90) -- ([xshift=0.2cm]VA.270);
      \draw[mid arrow=0.65] ([xshift=0.2cm]WA.90)  -- ([xshift=-0.4cm]R.270);
      \draw[mid arrow=0.65] ([xshift=-0.6cm]WB.90) -- ([xshift=0.4cm]R.270) ;
\end{tikzpicture}
\end{array}
\end{align}

The CSD identifies canonical bases $\{w_{i}^{A}\}$, $\{w_{i}^{B}\}$, $\{v^{A}_{i}\}$, and $\{v^{B}_{i}\}$ which correspond to subsystems A and B, ensuring they are paired in alignment, so that
\begin{equation}
    \begin{aligned}
        U\bm{w}^{A}_{i} &= c_{i}\bm{v}^{A}_{i} + s_{i}\bm{v}^{B}_{i}, \\
        U\bm{w}^{B}_{i} &= -s_{i}\bm{v}^{A}_{i} + c_{i}\bm{v}^{B}_{i}.
    \end{aligned}
\end{equation}
Observe that we now express the $R$ matrix as
\begin{equation}
    R = \left(
     \begin{array}{cc|cc}
      I_{a}  & 0 & 0 & 0\\
      0  &  C & -S & 0 \\
      \hline
      0 & S & C& 0    \\
       0& 0 & 0 & I_{b}           
   \end{array}
\right)  = I_{a} \oplus R^{reduced} \oplus I_{b}.
\end{equation}
Within the $R$ matrix, the presence of a $c_{i} = 1$ implies that the transformation of modes is localized in subsystem $A$ and $B$:
\begin{equation}
    \begin{aligned}
        U\bm{w}^{A}_{i} &= \bm{v}^{A}_{i},  \\
        U\bm{w}^{B}_{i} &=  \bm{v}^{B}_{i}.
    \end{aligned}
\end{equation}
For this particular pair of modes, the transformation does not occur across subsystems $A$ and $B$. The entries where $c_{i} = 1$ can be isolated and arranged into identity matrices $I_{a}$ and $I_{b}$. The process of extracting $c_{i}=1$ from matrix $R$ can be continued until $c_{i}<1$ for all pairs in $R$. This leads to the formation of a rotation matrix $R^{reduced}$ that precisely captures the rotational relationship between modes in $A$ and $B$.

Now we discuss how to decompose a global unitary matrix into mode unitary network. After performing CSD on $U$, get local unitary matrices $\{V_{A}, W_{A}^{\dagger},V_{B}, W_{B}^{\dagger},R \}$. we may contract $\{R, V_{B}, W_{B}^{\dagger}\}$ to form a new unitary $R^\prime$: 
\begin{align}
    \begin{array}{c}
    \tikzsetnextfilename{decomposition_CSD}
    \begin{tikzpicture}
    \node[draw, rectangle, rounded corners=2pt, minimum width=1.2cm, minimum height=0.6cm, fill=tensorcolor, thick] (R) at (-1, 0) {$R$};
    \node[draw, rectangle, rounded corners=2pt, minimum width=0.8cm, minimum height=0.6cm, fill=tensorcolor, thick] (VA) at (-1.6, 1.2) {$V_{A}$};
     \node[draw, rectangle, rounded corners=2pt, minimum width=1.6cm, minimum height=0.6cm, fill=tensorcolor, thick] (VB) at (0, 1.2) {$V_{B}$};
     \node[draw, rectangle, rounded corners=2pt, minimum width=0.8cm, minimum height=0.6cm, fill=tensorcolor, thick] (WA) at (-1.6, -1.2) {$W_{A}^{\dagger}$};
     \node[draw, rectangle, rounded corners=2pt, minimum width=1.6cm, minimum height=0.6cm, fill=tensorcolor, thick] (WB) at (0, -1.2) {$W_{B}^{\dagger}$};
     \draw[mid arrow=0.65] ([xshift=-0.2cm]VA.90) -- ++(0,0.6);
     \draw[mid arrow=0.65] ([xshift=0.2cm]VA.90) -- ++(0,0.6);
    \draw[mid arrow=0.65] ([xshift=-0.6cm]VB.90) -- ++(0,0.6);
     \draw[mid arrow=0.65] ([xshift=-0.2cm]VB.90) -- ++(0,0.6);
     \draw[mid arrow=0.65] ([xshift=0.2cm]VB.90) -- ++(0,0.6);
     \draw[mid arrow=0.65] ([xshift=0.6cm]VB.90) -- ++(0,0.6);
     \draw[mid arrow=0.65] ([xshift=-0.2cm,yshift=-0.6cm]WA.270) -- ++(0,0.6);
     \draw[mid arrow=0.65] ([xshift=0.2cm,yshift=-0.6cm]WA.270) -- ++(0,0.6);
     \draw[mid arrow=0.65] ([xshift=-0.6cm,yshift=-0.6cm]WB.270) -- ++(0,0.6);
     \draw[mid arrow=0.65] ([xshift=-0.2cm,yshift=-0.6cm]WB.270) -- ++(0,0.6);
     \draw[mid arrow=0.65] ([xshift=0.2cm,yshift=-0.6cm]WB.270) -- ++(0,0.6);
     \draw[mid arrow=0.65] ([xshift=0.6cm,yshift=-0.6cm]WB.270) -- ++(0,0.6);
     \draw[mid arrow=0.65] ([xshift=-0.2cm]WA.90) -- ([xshift=-0.2cm]VA.270);
     \draw[mid arrow=0.65] ([xshift=-0.2cm]WB.90) -- ([xshift=-0.2cm]VB.270);
     \draw[mid arrow=0.65] ([xshift=0.2cm]WB.90) -- ([xshift=0.2cm]VB.270);
     \draw[mid arrow=0.65] ([xshift=0.6cm]WB.90) -- ([xshift=0.6cm]VB.270);
     \draw[mid arrow=0.65] ([xshift=0.4cm]R.90) -- ([xshift=-0.6cm]VB.270);
     \draw[mid arrow=0.65, red!70] ([xshift=-0.4cm]R.90) -- ([xshift=0.2cm]VA.270);
      \draw[mid arrow=0.65, red!70] ([xshift=0.2cm]WA.90)  -- ([xshift=-0.4cm]R.270);
      \draw[mid arrow=0.65] ([xshift=-0.6cm]WB.90) -- ([xshift=0.4cm]R.270) ;
    \end{tikzpicture}
    \end{array}
    =
     \begin{array}{c}
     \tikzsetnextfilename{decomposition_UN}
    \begin{tikzpicture}
        \node[draw, rectangle, rounded corners=2pt, minimum width=0.8cm, minimum height=0.6cm, fill=tensorcolor, thick] (VA) at (-0.6, 0.6) {$V_{A}$};
        \node[draw, rectangle, rounded corners=2pt, minimum width=0.8cm, minimum height=0.6cm, fill=tensorcolor, thick] (WA) at (-0.6, -0.6) {$W_{A}^{\dagger}$};
        \node[draw, rectangle, rounded corners=2pt, minimum width=1.6cm, minimum height=1.8cm, fill=tensorcolor, thick] (Up) at (1.2, 0) {$U^\prime$};
        \draw[mid arrow=0.65] ([xshift=-0.2cm]VA.90) -- ++(0,0.6);
        \draw[mid arrow=0.65] ([xshift=0.2cm]VA.90) -- ++(0,0.6);
        \draw[mid arrow=0.65]
        ([xshift=-0.2cm,yshift=-0.6cm]WA.270) -- ++(0,0.6);
        \draw[mid arrow=0.65]
        ([xshift=0.2cm,yshift=-0.6cm]WA.270) -- ++(0,0.6);
        \draw[mid arrow=0.65] ([xshift=-0.6cm]Up.90) -- ++(0,0.6);
        \draw[mid arrow=0.65] ([xshift=-0.2cm]Up.90) -- ++(0,0.6);
        \draw[mid arrow=0.65] ([xshift=0.2cm]Up.90) -- ++(0,0.6);
        \draw[mid arrow=0.65] ([xshift=0.6cm]Up.90) -- ++(0,0.6);
        \draw[mid arrow=0.65]
        ([xshift=-0.6cm,yshift=-0.6cm]Up.270) -- ++(0,0.6);
        \draw[mid arrow=0.65]
        ([xshift=-0.2cm,yshift=-0.6cm]Up.270) -- ++(0,0.6);
        \draw[mid arrow=0.65]
        ([xshift=0.2cm,yshift=-0.6cm]Up.270) -- ++(0,0.6);
        \draw[mid arrow=0.65]
        ([xshift=0.6cm,yshift=-0.6cm]Up.270) -- ++(0,0.6);
        \draw[mid arrow=0.65]
        (WA.90) -- (VA.270);
        \draw[mid arrow=0.65,red!70]
        (WA.0) -- ([yshift=-0.6cm]Up.180);
        \draw[mid arrow=0.65,red!70]
        ([yshift=0.6cm]Up.180) -- (VA.0);
    \end{tikzpicture}
    \end{array}.
\end{align}
Notice that the new unitary $U^\prime$ is in defined on smaller number modes:
\begin{equation}
    U^\prime \in U(M), \quad M \le N.
\end{equation}
The equality holds when no local transformed modes are found in $A$. The bond dimension pertains to the number of modes that undergo non-local transformations and mixed between subsystem $A$ and $B$.

Through repeated application of the CSD, a unitary can be broken down into a bilayer unitary network, ensuring the smallest possible bond dimension at each stage.  

The bond dimensions can be further reduced by assigning a value of 1 to some $c_{i} < 1$, enabling additional local mode transformations. In this way, the unitary network $U_{Net}$ serves as an approximation of the global unitary $U$.

\section{Conclusions and Outlook}

In this paper, we introduce a tensor network architecture called the unitary network, designed to represent global unitary operations. The unitarity of local tensors, which is simple to enforce, combined with the directed acyclic nature of the graph ensures that the corresponding global operator is unitary. Unitary networks are capable of representing a wider class than uniform-MPUs, enabling the representation of unitaries that do not preserve locality.

Moreover, the structure of the unitary network naturally provides a flow index, which can be interpreted as the flow of information. With locality-preserving unitary network representation for a 1D QCA, the net information flow aligns with the GNVW index. However, for a unitary operator $U$ that is not locality-preserving, its implementation is inherently non-local. In such cases, it is not straightforward how to define an intrinsic net entropy flow.

We demonstrated the connection between unitary networks and SQCs. SQCs, incorporating local unitary gates and a DAG structure, can be regarded as unitary networks. For finite OBC systems, we demonstrate that an SQC and a unitary network can be transformed into each other. For PBC and infinite OBC systems, SQCs and unitary networks have equivalent representability. Unitary networks, which permit non-zero information flow, can more efficiently represent global unitary operations with non-zero net information flow.

We explored the decomposition of a specified unitary into a more efficient unitary network, marked by reduced bond dimensions. For Fermionic Gaussian unitaries, CSD provides a method for effective decomposition. Nevertheless, when it comes to general many-body unitaries, the algorithm for decomposition requires further investigation.

Our work opens new directions for future studies.
First, we can incorporate symmetries into our unitary networks. As with the classification of symmetry-protected topological phases, it is well known that symmetry enriches the classification of MPUs~\cite{cirac2017matrix, Gong_PRL_2020, Carolyn_SPTinv}.
In Refs.~\cite{Gong_PRL_2020, Carolyn_SPTinv}, topological indices are also introduced.
It is interesting to investigate how the net information flow introduced in this work relates to these topological indices in the presence of symmetries. 
Second, unitary network architectures, such as bilayer unitary networks and structures similar to the Margolus partitioning scheme, can be extended to higher than one dimension, potentially encompassing a broad range of unitary operations in higher dimensions. The properties of these higher-dimensional unitary networks will be explored in future work. Lastly, the conditions for global unitarity for a unitary network are easy to impose and therefore this representation could serve as a promising way to find tensor-network approximation of a desired global unitary operators through variational optimization.

\begin{acknowledgments}
    We thank Zongping Gong for helpful discussions.
    We thank the Yukawa Institute for Theoretical Physics at Kyoto University for fruitful discussions during the YITP workshop YITP-T-24-03 on ``Recent Developments and Challenges in Topological Phases'' and YITP-I-25-02 on ``Recent Developments and Challenges in Tensor Networks: Algorithms, Applications to Science, and Rigorous Theories.''
    This work is supported by the Hong Kong Research Grants Council (GRF 16308822), the Croucher Foundation (CIA23SC01), and the Fei Chi En Education and Research Fund.
    S.O.~was supported by RIKEN Special Postdoctoral Researchers Program, RIKEN Quantum, and KAKENHI Grant No.~25K17322 from the Japan Society for the Promotion of Science (JSPS).
\end{acknowledgments}

\appendix

\section{Splitting or merging legs}
\label{Append: Splitting or merging legs}

Like with any tensor, we can split or merge the legs attached to a unitary tensor. Legs can only be merged if they share the same source and destination. Consider the following example:
\begin{align}
    \begin{array}{c}
    \tikzsetnextfilename{legs_merged}
    \begin{tikzpicture}
        \node[draw, rectangle, rounded corners=2pt, minimum width=0.6cm, minimum height=0.6cm, fill=tensorcolor, thick] (U) {$U$};
        \draw[mid arrow=0.65] ([xshift=-0.6cm]U.180) -- node[above] {$i$} (U.180);
         \draw[mid arrow=0.65] ([yshift=-0.6cm]U.270) -- node[left] {$j$} (U.270);
         \draw[mid arrow=0.65,very thick] (U.90) -- node[left] {$mn$} ++(0,0.6);
    \end{tikzpicture}
    \end{array}=
    \begin{array}{c}
    \tikzsetnextfilename{after_splitting}
    \begin{tikzpicture}
        \node[draw, rectangle, rounded corners=2pt, minimum width=0.6cm, minimum height=0.6cm, fill=tensorcolor, thick] (U) {$U$};
        \draw[mid arrow=0.65] ([xshift=-0.6cm]U.180) -- node[above] {$i$} (U.180);
         \draw[mid arrow=0.65] ([yshift=-0.6cm]U.270) -- node[left] {$j$} (U.270);
         \draw[mid arrow=0.65] ([xshift=-0.1cm]U.90) -- node[left] {$m$} ++(0,0.6);
         \draw[mid arrow=0.65] ([xshift=0.1cm]U.90) -- node[right] {$n$} ++(0,0.6);
    \end{tikzpicture}
    \end{array},
\end{align}
leg $mn$ is split into leg $m$ and leg $n$. This split operation can be denoted as:
\begin{equation}
    U^{(mn)}_{ij} = U^{mn}_{ij}.
\end{equation}
Splitting a leg means decomposing the Hilbert space carried by the leg. In the above example:
\begin{equation}
    \mathcal{H}_{mn} = \mathcal{H}_{m} \otimes \mathcal{H}_{n},
\end{equation}
where $\mathcal{H}_{mn}$, $\mathcal{H}_{m}$ and  $\mathcal{H}_{n}$ are Hilbert space carried by leg $(mn)$, $m$ and $n$ separately. The relationship concerning the dimension of Hilbert spaces is expressed by
\begin{equation}
    \dim \mathcal{H}_{mn} = \dim \mathcal{H}_{m} \cdot \dim \mathcal{H}_{n}.
\end{equation}

\section{Unitary tensor contraction}
\label{Append: Unitary tensor contraction}

The process of contracting local unitary tensors is akin to the contraction of general tensors. Nevertheless, there is an additional restriction for contracting unitary tensors: the incoming leg must be contracted with an outgoing leg. Below is an example of unitary tensor contraction:
\begin{align}
\begin{array}{c}
\tikzsetnextfilename{UN_before_contraction}
\begin{tikzpicture}
     \node[draw, rectangle, rounded corners=2pt, minimum width=0.6cm, minimum height=0.6cm, fill=tensorcolor, thick] (A) {$A$};
      \node[draw, rectangle, rounded corners=2pt, minimum width=0.6cm, minimum height=0.6cm, fill=tensorcolor, thick] (B) at (1.2,0) {$B$};
    \draw[mid arrow=0.65](A.90) -- node[right] {$i$} ++(0,0.6) ;
    \draw[mid arrow=0.65](A.0) -- node[above] {$j$} (B.180);
    \draw[mid arrow=0.65] ([xshift=-0.6cm]A.180) -- node[above] {$k$} (A.180);
    \draw[mid arrow=0.65] ([yshift=-0.6cm]A.270) -- node[right] {$l$} (A.270);
    \draw[mid arrow=0.65](B.90) -- node[right] {$m$} ++(0,0.6);
    \draw[mid arrow=0.65] (B.0) -- node[above] {$n$} ++(0.6, 0);
    \draw[mid arrow=0.65] ([yshift=-0.6cm]B.270) -- node[right] {$h$} (B.270);
\end{tikzpicture}
\end{array} = 
\begin{array}{c}
\tikzsetnextfilename{UN_contraction}
\begin{tikzpicture}
     \node[draw, rectangle, rounded corners=2pt, minimum width=0.8cm, minimum height=0.6cm, fill=tensorcolor, thick] (A) {$A$};
      \node[draw, rectangle, rounded corners=2pt, minimum width=0.8cm, minimum height=0.6cm, fill=tensorcolor, thick] (B) at (0.4,1.2) {$B$};
    \draw[mid arrow=0.65]([xshift=-0.2cm]A.90) -- node[left] {$i$} ++(0,0.6) ;
    \draw[mid arrow=0.65]([xshift=0.2cm]A.90) -- node[right] {$j$} ([xshift=-0.2cm]B.270);
    \draw[mid arrow=0.65] ([xshift=-0.2cm,yshift=-0.6cm]A.270) -- node[left] {$k$} ([xshift=-0.2cm]A.270);
    \draw[mid arrow=0.65] ([xshift=0.2cm,yshift=-0.6cm]A.270) -- node[right] {$l$} ([xshift=0.2cm]A.270);
    \draw[mid arrow=0.65]([xshift=-0.2cm]B.90) -- node[left] {$m$} ++(0,0.6);
    \draw[mid arrow=0.65] ([xshift=0.2cm]B.90) -- node[right] {$n$} ++(0, 0.6);
    \draw[mid arrow=0.65] ([xshift=0.2cm,yshift=-0.6cm]B.270) -- node[right] {$h$} ([xshift=0.2cm]B.270);
\end{tikzpicture}
\end{array}
=
\begin{array}{c}
\tikzsetnextfilename{UN_after_contraction}
\begin{tikzpicture}
     \node[draw, rectangle, rounded corners=2pt, minimum width=0.8cm, minimum height=0.6cm, fill=tensorcolor, thick] (C) {$C$};
    \draw[mid arrow=0.65]([xshift=-0.2cm]C.90) -- node[left] {$i$} ++(0,0.6);
    \draw[mid arrow=0.65] ([xshift=-0.6cm]C.180) -- node[above] {$k$} (C.180);
    \draw[mid arrow=0.65] ([yshift=-0.6cm,xshift=-0.2cm]C.270) -- node[left] {$l$} ([xshift=-0.2cm]C.270);
    \draw[mid arrow=0.65]([xshift=0.2cm]C.90) -- node[right] {$m$} ++(0,0.6);
    \draw[mid arrow=0.65] (C.0) -- node[above] {$n$} ++(0.6, 0);
    \draw[mid arrow=0.65] ([xshift=0.2cm, yshift=-0.6cm]C.270) -- node[right] {$h$} ([xshift=0.2cm]C.270);
\end{tikzpicture}
\end{array}.
\end{align}
The above diagram reads:
\begin{equation}
    A^{ij}_{kl} B^{mn}_{jh} = C^{imn}_{klh},
\end{equation}
where the Einstein summation convention is assumed, in which any index that appears exactly twice in the equation is summed over. Observe that in the example given above, $C^{imn}_{klh}$ remains a local unitary tensor because the contraction of $A$ and $B$ can be seen as performing the local unitary $A$ first, followed by the unitary $B$. 

However, the tensor contraction between local unitary tensors will not produce a unitary tensor in general. Considering the following cases:
\begin{align}
\begin{array}{c}
\tikzsetnextfilename{UN_before_contraction_general}
\begin{tikzpicture}
     \node[draw, rectangle, rounded corners=2pt, minimum width=0.6cm, minimum height=0.6cm, fill=tensorcolor, thick] (A) {$A$};
      \node[draw, rectangle, rounded corners=2pt, minimum width=0.6cm, minimum height=0.6cm, fill=tensorcolor, thick] (B) at (1.2,0) {$B$};
    \draw[mid arrow=0.65](A.90) -- node[right] {$i$} ++(0,0.6);
    \draw[mid arrow=0.65]([yshift=0.1cm]A.0) -- node[above] {$j$} ([yshift=0.1cm]B.180);
    \draw[mid arrow=0.65]([yshift=-0.1cm]B.180) --  node[below] {$f$} ([yshift=-0.1cm]A.0);
    \draw[mid arrow=0.65] ([xshift=-0.6cm]A.180) -- node[above] {$k$} (A.180);
    \draw[mid arrow=0.65] ([yshift=-0.6cm]A.270) -- node[right] {$l$} (A.270);
    \draw[mid arrow=0.65](B.90) -- node[right] {$m$} ++(0,0.6);
    \draw[mid arrow=0.65] (B.0) -- node[above] {$n$} ++(0.6, 0);
    \draw[mid arrow=0.65] ([yshift=-0.6cm]B.270) -- node[right] {$h$} (B.270);
\end{tikzpicture}
\end{array} = 
\begin{array}{c}
\tikzsetnextfilename{UN_after_contraction_general}
\begin{tikzpicture}
     \node[draw, rectangle, rounded corners=2pt, minimum width=0.8cm, minimum height=0.6cm, fill=tensorcolor, thick] (C) {$C$};
    \draw[mid arrow=0.65]([xshift=-0.2cm]C.90) -- node[left] {$i$} ++(0,0.6) ;
    \draw[mid arrow=0.65] ([xshift=-0.6cm]C.180) -- node[above] {$k$} (C.180);
    \draw[mid arrow=0.65] ([yshift=-0.6cm,xshift=-0.2cm]C.270) -- node[left] {$l$} ([xshift=-0.2cm]C.270);
    \draw[mid arrow=0.65]([xshift=0.2cm]C.90) -- node[right] {$m$} ++(0,0.6);
    \draw[mid arrow=0.65] (C.0) -- node[above] {$n$} ++(0.6, 0);
    \draw[mid arrow=0.65] ([xshift=0.2cm, yshift=-0.6cm]C.270) -- node[right] {$h$} ([xshift=0.2cm]C.270);
\end{tikzpicture}
\end{array}.
\end{align}
We can still perform the tensor contraction:
\begin{equation}
    A^{ij}_{klf} B^{fmn}_{jh} = C^{imn}_{klh}.
\end{equation}
In the above case, $C$ is typically not a unitary tensor. The reason lies in the contraction of $j: A \rightarrow B$ and $f: B \rightarrow A$, as there is no time sequence established for the tensors $A$ and $B$. We demonstrate in Proposition \ref{prop: Unitarity of unitary network} that a unitary network generates a global unitary tensor if there is no directed loop within the graph.

\section{Reversed Stacked-CNOT}
\label{Append: Reversed Stacked-CNOT}
An alternative of stacked CNOT is the reverse alignment of it, where the "not" operation precedes the "control" for each qubit:
\begin{align}
\tikzsetnextfilename{stacked_CNOT_reversed}
\begin{tikzpicture}
    \filldraw[black] (-0.6,0) circle (2pt) ;  
    \filldraw[black] (0,0) circle (2pt) ;
    \filldraw[black] (0.6,0) circle (2pt);
    \filldraw[black] (1.2,0) circle (2pt);
    \draw (-0.6, -1.2) -- (-0.6, 0.6);
    \draw (0, -1.2) -- (0, 0.6);
    \draw (0.6, -1.2) -- (0.6, 0.6);
    \draw (1.2, -1.2) -- (1.2, 0.6);
    \node[draw, circle, minimum size=0.3cm] (a) at (-0.6,-0.6) {};
    \node[draw, circle, minimum size=0.3cm] (b) at (0,-0.6) {};
    \node[draw, circle, minimum size=0.3cm] (c) at (0.6,-0.6) {};
    \node[draw, circle, minimum size=0.3cm] (d) at (1.2,-0.6) {};
    \draw (a.180) -- (a.0);
    \draw (b.180) -- (b.0);
    \draw (c.180) -- (c.0);
    \draw (d.180) -- (d.0);
    \draw[dashed] (-1.2,0) to [out=0,in=180] (a.180);
    \draw (-0.6,0) to [out=0,in=180] (b.180);
    \draw (0,0) to [out=0,in=180] (c.180);
    \draw (0.6,0) to [out=0,in=180] (d.180);
    \draw[dashed] (1.2,0) to [out=0,in=180] ([xshift=0.6cm]d.180);
\end{tikzpicture}
\end{align}
\begin{equation}
\begin{aligned}
StC_{N}^{rev} &=  \prod_{n=-N}^{N} C_{n} = C_{N}\cdots C_{n+1} C_{n} C_{n-1} \cdots  C_{-N}\\
    StC^{rev} &= \lim_{N \rightarrow +\infty} StC_{N}^{rev} = \cdots C_{n+1} C_{n} C_{n-1} \cdots
\end{aligned}
\end{equation}
Here we simply list its algebra dynamics:
\begin{equation}
\begin{aligned}
   & StC_{N}^{rev} X_{n} StC_{N}^{rev \dagger} = X_{n} \cdots X_{N}, \\
   & StC_{N}^{rev} Y_{n} StC_{N}^{rev\dagger} = Z_{n-1}Y_{n}X_{n+1} \cdots X_{N},  \\
   & StC_{N}^{rev} Z_{n} StC_{N}^{rev\dagger} = Z_{n-1}  Z_{n}.
\end{aligned}
\end{equation}

\section{Margolus Partitioning is not a quantum circuit}
\label{Append: Margolus Partitioning is not a quantum circuit}

In the main section, we demonstrated that the Margolus partitioning approach of a QCA forms a unitary network. From the diagram \eqref{fig:margolus partitioning}, it is attempting to recognize the Margolus partitioning scheme of QCA as an FDLU circuit. However, this is not true in general. It is well recognized that QCAs with a non-trivial GNVW index are not representable by an FDLU circuit \cite{Gross_2012}. The key lies in the fact that $\mathcal{B}_{2m}$ and $\mathcal{B}_{2m+1}$ can differ in dimension from $\mathcal{A}_{2m}$.

Consider the following example: At each site, we define $\mathcal{A}_{n} \cong \mathcal{A}_{\text{qudit}}^{\otimes2}$, representing the algebra of two qudits. Moreover, we assert $\mathcal{B}_{2m} \cong \mathcal{A}_{\text{qudit}}$, indicating a single qubit algebra, and $\mathcal{B}_{2m+1} \cong \mathcal{A}_{\text{qudit}}^{\otimes 3}$, for the algebra of three qudits.

\begin{align}
\tikzsetnextfilename{margolus_partitioning_shift}
    \begin{tikzpicture}
        \node[draw, rectangle, rounded corners=2pt, minimum width=1.0cm, minimum height=0.6cm, thick, dashed, font=\scriptsize]  at (-0.6, -1.8) {$A_{2m}$};
        \node[draw, rectangle, rounded corners=2pt, minimum width=1.0cm, minimum height=0.6cm,  thick, dashed, font=\scriptsize]  at (0.6, -1.8) {$A_{2m+1}$};
        \node[draw, rectangle, rounded corners=2pt, minimum width=1.0cm, minimum height=0.6cm, thick, dashed, font=\scriptsize]  at (-1.2, 1.8) {$A_{2m}$};
        \node[draw, rectangle, rounded corners=2pt, minimum width=1.0cm, minimum height=0.6cm,  thick, dashed, font=\scriptsize]  at (0, 1.8) {$A_{2m+1}$};
      \node[draw, rectangle, rounded corners=2pt, minimum width=2.2cm, minimum height=0.6cm, fill=tensorcolor, thick] (Wm-1) at (-2.4, -0.6) {$W_{m-1}$};
         \node[draw, rectangle, rounded corners=2pt, minimum width=2.2cm, minimum height=0.6cm, fill=tensorcolor, thick] (Wm) at (0, -0.6) {$W_{m}$};
         \node[draw, rectangle, rounded corners=2pt, minimum width=2.2cm, minimum height=0.6cm, fill=tensorcolor, thick] (Wm+1) at (2.4, -0.6) {$W_{m+1}$};
         \node[draw, rectangle, rounded corners=2pt, minimum width=2.2cm, minimum height=0.6cm, fill=tensorcolor, thick] (Vm-1) at (-4.2, 0.6) {$V_{m-1}$};
         \node[draw, rectangle, rounded corners=2pt, minimum width=2.2cm, minimum height=0.6cm, fill=tensorcolor, thick] (Vm) at (-1.8, 0.6) {$V_{m}$};
         \node[draw, rectangle, rounded corners=2pt, minimum width=2.2cm, minimum height=0.6cm, fill=tensorcolor, thick] (Vm+1) at (0.6, 0.6) {$V_{m+1}$};
         \draw 
         ([xshift=-0.9cm,yshift=-0.6cm]Wm-1.270) -- ++(0,0.6);
         \draw 
         ([xshift=-0.3cm,yshift=-0.6cm]Wm-1.270) -- ++(0,0.6);
         \draw
         ([xshift=0.3cm,yshift=-0.6cm]Wm-1.270) -- ++(0,0.6);
         \draw 
         ([xshift=0.9cm,yshift=-0.6cm]Wm-1.270) -- ++(0,0.6);
         \draw 
         ([xshift=-0.9cm,yshift=-0.6cm]Wm.270) -- ++(0,0.6);
         \draw 
         ([xshift=-0.3cm,yshift=-0.6cm]Wm.270) -- ++(0,0.6);
         \draw
         ([xshift=0.3cm,yshift=-0.6cm]Wm.270) -- ++(0,0.6);
         \draw 
         ([xshift=0.9cm,yshift=-0.6cm]Wm.270) -- ++(0,0.6);
         \draw 
         ([xshift=-0.9cm,yshift=-0.6cm]Wm+1.270) -- ++(0,0.6);
         \draw 
         ([xshift=-0.3cm,yshift=-0.6cm]Wm+1.270) -- ++(0,0.6);
         \draw
         ([xshift=0.3cm,yshift=-0.6cm]Wm+1.270) -- ++(0,0.6);
         \draw 
         ([xshift=0.9cm,yshift=-0.6cm]Wm+1.270) -- ++(0,0.6);
         \draw 
         ([xshift=-0.9cm]Wm-1.90) -- ++(0,0.6);
         \draw 
         ([xshift=-0.3cm]Wm-1.90) -- ++(0,0.6);
         \draw 
         ([xshift=0.3cm]Wm-1.90) -- ++(0,0.6);
         \draw 
         ([xshift=0.9cm]Wm-1.90) -- ++(0,0.6);
          \draw 
         ([xshift=-0.9cm]Wm.90) -- ++(0,0.6);
         \draw 
         ([xshift=-0.3cm]Wm.90) -- ++(0,0.6);
         \draw 
         ([xshift=0.3cm]Wm.90) -- ++(0,0.6);
         \draw 
         ([xshift=0.9cm]Wm.90) -- ++(0,0.6);
          \draw 
         ([xshift=-0.9cm]Wm+1.90) -- ++(0,0.6);
         \draw 
         ([xshift=-0.3cm]Wm+1.90) -- ++(0,0.6);
         \draw 
         ([xshift=0.3cm]Wm+1.90) -- ++(0,0.6);
         \draw 
         ([xshift=0.9cm]Wm+1.90) -- ++(0,0.6);
         \draw 
         ([xshift=-0.9cm]Vm-1.90) -- ++(0,0.6);
         \draw 
         ([xshift=-0.3cm]Vm-1.90) -- ++(0,0.6);
         \draw 
         ([xshift=0.3cm]Vm-1.90) -- ++(0,0.6);
         \draw 
         ([xshift=0.9cm]Vm-1.90) -- ++(0,0.6);
          \draw 
         ([xshift=-0.9cm]Vm.90) -- ++(0,0.6);
         \draw 
         ([xshift=-0.3cm]Vm.90) -- ++(0,0.6);
         \draw 
         ([xshift=0.3cm]Vm.90) -- ++(0,0.6);
         \draw 
         ([xshift=0.9cm]Vm.90) -- ++(0,0.6);
          \draw 
         ([xshift=-0.9cm]Vm+1.90) -- ++(0,0.6);
         \draw 
         ([xshift=-0.3cm]Vm+1.90) -- ++(0,0.6);
         \draw 
         ([xshift=0.3cm]Vm+1.90) -- ++(0,0.6);
         \draw 
         ([xshift=0.9cm]Vm+1.90) -- ++(0,0.6);
    \end{tikzpicture}
\end{align}
Treating $V_{m}$ and $W_{m}$ as quantum gates, as we did in (\ref{Eq: tensor=gate}), reveals that $\mathcal{A}_{2m}$ in the input and output aligns with different qudit pairs. Indeed, as illustrated in the preceding diagram, $\mathcal{A}_{2m}$ shifts one qubit leftward in the output. To restore $\mathcal{A}_{2m}$ to the initial position, an additional shift operation is required, which cannot be performed by an FDLU circuit. 

Consequently, the Margolus partitioning of QCA can be described as an FDLU circuit complemented by a shift. A unitary network surpasses a quantum circuit in an infinite system as its local unitary tensors act as local unitary gates while the links facilitate unrestricted Hilbert space transfer. It can be seen as an extension of a quantum circuit.

\section{Non-uniform net information flow in 1D system}

\label{Append: Non-uniform net information flow in 1D system}

In proving that the net information flow is independent of the chosen surface for Properties \ref{prop: uniform net information flow}, we implicitly assume that the incoming and outgoing Hilbert space at each site are equivalent. Although this assumption is generally valid for most physical situations, a unitary network can exemplify exceptions.

In a one-dimensional system, a non-uniform information flow signifies a change in information density distribution. Consider a hypothetical scenario where such an information redistribution takes place:
\begin{align}
\tikzsetnextfilename{non_uniform_information_flow}
    \begin{tikzpicture}
        \draw[dashed] (-0.7, -1.2) -- (-0.7,1.2) node[above] {$\Sigma_{1}$};
        \draw[dashed] (0.7, -1.2) -- (0.7,1.2) node[above] {$\Sigma_{2}$};
        \node[draw, rectangle, rounded corners=2pt, minimum width=0.8cm, minimum height=0.8cm, fill=tensorcolor, thick] (a) at (-1.4,0) {};
        \node[draw, rectangle, rounded corners=2pt, minimum width=0.8cm, minimum height=0.8cm, fill=tensorcolor, thick] (b) at (0,0) {};
        \node[draw, rectangle, rounded corners=2pt, minimum width=0.8cm, minimum height=0.8cm, fill=tensorcolor, thick] (c) at (1.4,0) {};
        \draw[mid arrow=0.65] ([xshift=-0.6cm]a.180) -- ++(0.6,0);
        \draw[mid arrow=0.65] ([xshift=-0.2cm,yshift=-0.6cm]a.270) -- ++(0,0.6);
        \draw[mid arrow=0.65,red!70] ([xshift=0.2cm,yshift=-0.6cm, ]a.270) -- ++(0,0.6);
        \draw[mid arrow=0.65] (a.90) -- ++(0,0.6);
        \draw[mid arrow=0.65] ([yshift=-0.6cm]b.270) -- ++(0,0.6);
        \draw[mid arrow=0.65,red!70] ([xshift=-0.2cm]b.90) -- ++(0,0.6);
        \draw[mid arrow=0.65] ([xshift=0.2cm]b.90) -- ++(0,0.6);
         \draw[mid arrow=0.65] ([yshift=-0.6cm]c.270) -- ++(0,0.6);
        \draw[mid arrow=0.65] (c.90) -- ++(0,0.6);
        \draw[mid arrow=0.65] (c.0) -- ++(0.6,0);
        \draw[mid arrow=0.65,red!70] ([yshift=0.2cm]a.0) -- ([yshift=0.2cm]b.180);
        \draw[mid arrow=0.65] ([yshift=-0.2cm]a.0) -- ([yshift=0-.2cm]b.180);
         \draw[mid arrow=0.65] (b.0) -- (c.180);
    \end{tikzpicture}
\end{align}
In the diagram provided above, the trajectory of an impurity mode and its corresponding Hilbert space are marked by the red-highlighted legs. The net information flow differs for $\Sigma_{1}$ and $\Sigma_{2}$:
\begin{equation}
    f_{I}[\Sigma_{1}] \ne f_{I}[\Sigma_{2}].
\end{equation}
The dependency of net information flow on the chosen surface prevents a consistent definition of a GNVW index.

\section{Redundant net information flow of finite PBC unitary networks}
\label{Append: Redundant net information flow of finite PBC unitary networks}

This appendix demonstrates how various network implementations of a PBC unitary operator can lead to distinct net information flows. To illustrate, consider two distinct unitary network implementations crafted for the PBC shift operation in the following:
\begin{enumerate} [label=(\roman*)]
    \item A locality-preserving unitary network with non-trivial net information flow $f_{I} = 1$ (This can correspond to the index $\log_{d} I_{GNVW}$ additive in $\mathbb{Z}_{L} = \mathbb{Z}/L\mathbb{Z}$, where $L$ is the system size) : 
    \begin{align} \label{fig:shift}
    \tikzsetnextfilename{shift_PBC}
    \begin{tikzpicture} 
    \node[draw, rectangle, rounded corners=2pt, minimum width=0.6cm, minimum height=0.6cm, fill=tensorcolor, thick] (a) at (0,0) {};
    \node[draw, rectangle, rounded corners=2pt, minimum width=0.6cm, minimum height=0.6cm, fill=tensorcolor, thick] (b) at (1.2,0) {};
    \node[draw, rectangle, rounded corners=2pt, minimum width=0.6cm, minimum height=0.6cm, fill=tensorcolor, thick] (c) at (2.4,0) {};
    \node[draw, rectangle, rounded corners=2pt, minimum width=0.6cm, minimum height=0.6cm, fill=tensorcolor, thick] (d) at (3.6,0) {};
    \draw[mid arrow=0.65] (a.90) -- ++(0,0.6);
    \draw[mid arrow=0.65] ([yshift=-0.6cm]a.270) -- ++(0,0.6);
    \draw[mid arrow=0.65] (b.90) -- ++(0,0.6);
    \draw[mid arrow=0.65] ([yshift=-0.6cm]b.270) -- ++(0,0.6);
    \draw[mid arrow=0.65] (c.90) -- ++(0,0.6);
    \draw[mid arrow=0.65] ([yshift=-0.6cm]c.270) -- ++(0,0.6);
    \draw[mid arrow=0.65] (d.90) -- ++(0,0.6);
    \draw[mid arrow=0.65] ([yshift=-0.6cm]d.270) -- ++(0,0.6);
    \draw[mid arrow=0.65] (a.0) -- (b.180);
    \draw[mid arrow=0.65] (b.0) -- (c.180);
    \draw[mid arrow=0.65] (c.0) -- (d.180);
    \draw[mid arrow=0.5] (d.0) --  ++(0.3,0)-- ++(0,-0.5) -- ++(-4.8,0) -- ++(0,0.5) --(a.180);
    \draw (a.180) to [out = 0,in=270] (a.90);
    \draw (b.180) to [out = 0,in=270] (b.90);
    \draw (c.180) to [out = 0,in=270] (c.90);
    \draw (d.180) to [out = 0,in=270] (d.90);
    \draw (a.270) to [out = 90,in=180] (a.0);
    \draw (b.270) to [out = 90,in=180] (b.0);
    \draw (c.270) to [out = 90,in=180] (c.0);
    \draw (d.270) to [out = 90,in=180] (d.0);
\end{tikzpicture}
\end{align}

\item An SQC with $N-1$ layers of SWAP gates \cite{Po_2016}, where $N$ is the number of sites:
\begin{align}
\tikzsetnextfilename{shift_quantum_circuit}
    \begin{tikzpicture}
        \draw[rounded corners] (0.9,0) -- (0.9,0.6) -- (0.3,1.2) -- (0.3,1.8) -- (-0.3,2.4) -- (-0.3,3.0) -- (-0.9,3.6) -- (-0.9,4.2);
        \draw[rounded corners] (0.3,0) -- (0.3,0.6) -- (0.9,1.2) -- (0.9,4.2);
         \draw[rounded corners] (-0.3,0) -- (-0.3,1.8) -- (0.3,2.4) -- (0.3,4.2);
         \draw[rounded corners] (-0.9,0) -- (-0.9,3.0) -- (-0.3,3.6) -- (-0.3,4.2);
    \end{tikzpicture}
\end{align}
This quantum circuit exhibits linear depth $O(L)$. As a representation (rather than a physical operator) it violates the locality-preserving property. This quantum circuit can also be represented by a unitary network as below:
\begin{align}
\label{fig:SQC_shift}
\tikzsetnextfilename{shift_quantum_circuit_UN}
\begin{tikzpicture}
    \node[draw, rectangle, rounded corners=2pt, minimum width=0.6cm, minimum height=0.6cm, fill=tensorcolor, thick] (a1) at (0,0) {};
    \node[draw, rectangle, rounded corners=2pt, minimum width=0.6cm, minimum height=0.6cm, fill=tensorcolor, thick] (b1) at (1.2,0) {};
    \node[draw, rectangle, rounded corners=2pt, minimum width=0.6cm, minimum height=0.6cm, fill=tensorcolor, thick] (c1) at (2.4,0) {};
    \node[draw, rectangle, rounded corners=2pt, minimum width=0.6cm, minimum height=0.6cm, fill=tensorcolor, thick] (d1) at (3.6,0) {};
    \node[draw, rectangle, rounded corners=2pt, minimum width=0.6cm, minimum height=0.6cm, fill=tensorcolor, thick] (a2) at (0,1.2) {};
    \node[draw, rectangle, rounded corners=2pt, minimum width=0.6cm, minimum height=0.6cm, fill=tensorcolor, thick] (b2) at (1.2,1.2) {};
    \node[draw, rectangle, rounded corners=2pt, minimum width=0.6cm, minimum height=0.6cm, fill=tensorcolor, thick] (c2) at (2.4,1.2) {};
    \node[draw, rectangle, rounded corners=2pt, minimum width=0.6cm, minimum height=0.6cm, fill=tensorcolor, thick] (d2) at (3.6,1.2) {};
    \draw[mid arrow=0.65] (a2.90) -- ++(0,0.6);
    \draw[mid arrow=0.65] ([yshift=-0.6cm]a1.270) -- ++(0,0.6);
    \draw[mid arrow=0.65] (b2.90) -- ++(0,0.6);
    \draw[mid arrow=0.65] ([yshift=-0.6cm]b1.270) -- ++(0,0.6);
    \draw[mid arrow=0.65] (c2.90) -- ++(0,0.6);
    \draw[mid arrow=0.65] ([yshift=-0.6cm]c1.270) -- ++(0,0.6);
    \draw[mid arrow=0.65] (d2.90) -- ++(0,0.6);
    \draw[mid arrow=0.65] ([yshift=-0.6cm]d1.270) -- ++(0,0.6);
    \draw[mid arrow=0.65] (a1.0) -- (b1.180);
    \draw[mid arrow=0.65] (b1.0) -- (c1.180);
    \draw[mid arrow=0.65] (c1.0) -- (d1.180);
    \draw[mid arrow=0.65] (b2.180) -- (a2.0);
    \draw[mid arrow=0.65] (c2.180) -- (b2.0);
    \draw[mid arrow=0.65] (d2.180) -- (c2.0);
    \draw[mid arrow=0.65] (b1.90) -- (b2.270);
    \draw[mid arrow=0.65] (c1.90) -- (c2.270);
    \draw[mid arrow=0.65] ([xshift=-0.1cm]d1.90) -- ([xshift=-0.1cm]d2.270);
    \draw[mid arrow=0.65] ([xshift=0.1cm]d1.90) -- ([xshift=0.1cm]d2.270);
    \draw (a1.270) to [out=90,in=180] (a1.0);
    \draw (b1.270) to [out=90, in=180] (b1.0);
    \draw (c1.270) to [out=90, in=180] (c1.0);
    \draw (b1.180) to [out=0, in=270] (b1.90);
    \draw (c1.180) to [out=0, in=270] (c1.90);
    \draw (d1.180) to [out=0, in=270] ([xshift=-0.1cm]d1.90);
    \draw (a2.0) to [out=180, in=270] (a2.90);
    \draw (b2.0) -- (b2.180);
    \draw (b2.270) -- (b2.90);
    \draw (c2.0) -- (c2.180);
    \draw (c2.270) -- (c2.90);
    \draw (d1.270) to  ([xshift=0.1cm]d1.90);
    \draw ([xshift=0.1cm]d2.270) to  [out=90, in=0] (d2.180);
    \draw ([xshift=-0.1cm]d2.270) -- (d2.90);
\end{tikzpicture}
\end{align}
As seen in the diagram, the rightward information flow is neutralized by the leftward flow. Therefore, the net information flow is zero: $f_{I} = 0$, which is different from the net information flow in (\ref{fig:shift}). 
\end{enumerate}
Non-unique net information flow arises in PBC 1D systems because information can loop multiple times before reaching its destination, corresponding to various possible implementations.  In fact, a identity operation in a PBC system can be performed by a unitary network with net information flow $f_{I}=1$:
\begin{align}
\label{eq: un identity non zero information flow}
\tikzsetnextfilename{UN_identity_non_zero_information_flow}
\begin{tikzpicture}
    \node[draw, rectangle, rounded corners=2pt, minimum width=0.6cm, minimum height=0.6cm, fill=tensorcolor, thick] (a2) at (0,1.2) {};
    \node[draw, rectangle, rounded corners=2pt, minimum width=0.6cm, minimum height=0.6cm, fill=tensorcolor, thick] (b2) at (1.2,1.2) {};
    \node[draw, rectangle, rounded corners=2pt, minimum width=0.6cm, minimum height=0.6cm, fill=tensorcolor, thick] (c2) at (2.4,1.2) {};
    \node[draw, rectangle, rounded corners=2pt, minimum width=0.6cm, minimum height=0.6cm, fill=tensorcolor, thick] (d2) at (3.6,1.2) {};
    \draw[mid arrow=0.65] (a2.90) -- ++(0,0.6);
    \draw[mid arrow=0.65] ([yshift=-0.6cm]a2.270) -- ++(0,0.6);
    \draw[mid arrow=0.65] (b2.90) -- ++(0,0.6);
    \draw[mid arrow=0.65] ([yshift=-0.6cm]b2.270) -- ++(0,0.6);
    \draw[mid arrow=0.65] (c2.90) -- ++(0,0.6);
    \draw[mid arrow=0.65] ([yshift=-0.6cm]c2.270) -- ++(0,0.6);
    \draw[mid arrow=0.65] (d2.90) -- ++(0,0.6);
    \draw[mid arrow=0.65] ([yshift=-0.6cm]d2.270) -- ++(0,0.6);
    \draw[mid arrow=0.65] (b2.180) -- (a2.0);
    \draw[mid arrow=0.65] (c2.180) -- (b2.0);
    \draw[mid arrow=0.65] (d2.180) -- (c2.0);
    \draw (a2.0) to [out=180, in=270] (a2.90);
    \draw (b2.0) -- (b2.180);
    \draw (b2.270) -- (b2.90);
    \draw (c2.0) -- (c2.180);
    \draw (c2.270) -- (c2.90);
    \draw (d2.0) -- (d2.180);
    \draw (d2.270) -- (d2.90);
    \draw (a2.270) to [out=90, in=0] (a2.180);
    \draw[mid arrow=0.5] (a2.180) -- ++(-0.3,0) --++(0,-0.4) -- ++(4.8,0) -- ++(0,0.4) -- (d2.0);
\end{tikzpicture}.
\end{align}

Notice that the difference between two unitary network implementations \eqref{fig:shift} and \eqref{fig:SQC_shift} is twofold:
\begin{enumerate} [label=(\roman*)]
\item  Their behaviors differ at the left and right boundaries. While we are considering a PBC system, such boundaries appear at the end points of sequential SWAP gates.
\item In the bulk, \eqref{fig:SQC_shift} includes an additional left-traveling propagating mode.
\end{enumerate}
In a finite system, this decoupled propagating mode acts as an information transfer from the right to the left boundary. The net information flow of the bulk unitary operation is $f_{I}=1$, balanced by the decoupled boundary information flow of $f_{I}=-1$. 

In the thermodynamic limit, the left and right boundaries are effectively at infinity. Information transfer between two boundaries lacks physical meaning and becomes irrelevant. The only remaining difference is the redundant left-traveling mode in the bulk, relating to our earlier discussion on the non-unique net information flow for the infinite OBC scenario in Subsection \ref{subsec: Intrinsic net information flow}.

In the above examples, we observe that although the global unitary $U$ preserves locality, its implementation $U_{Net}$ may not. When information circulates in cycles (in a PBC system) or redundant modes travel throughout the bulk (in an infinite OBC system), they invoke a redundant and non-local information flow.

\section{Intrinsic net information flow for approximately locality-preserving unitary networks}
\label{Append: Intrinsic net information flow for approximately locality-preserving unitary networks}

We expect that the uniqueness of net information flow can be generalized to approximately locality-preserving unitary network implementations of ALPUs. The net information flow in these unitary network implementations should align with the index defined for ALPUs~\cite{ranard2022converse} (See Def.~\ref{def: Index for ALPUs}).

\begin{proposition}
Any two approximately locality-preserving unitary network implementations  $U_{Net}^{1}$ and $U_{Net}^{2}$ of the same ALPU must share the same net information flow:
\begin{equation}
    f_{I}[U_{Net}^{1}] =  f_{I}[U_{Net}^{2}]
\end{equation}
\end{proposition}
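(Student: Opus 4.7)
The plan is to extend the proof of Proposition~\ref{prop: same net information flow for LU-rep} to the approximate setting, and to exploit the discreteness of the net information flow: since $f_I[U_{Net}]$ is $\log_d$ of a ratio of integer Hilbert space dimensions, any bound $|f_I[U_{Net}^1] - f_I[U_{Net}^2]| < \epsilon$ that can be made arbitrarily small forces equality.

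First I would cut each unitary network at the two boundaries of a large middle region $X$ of diameter $L$, producing sub-networks $U_{Net}^{k,L}$ and $U_{Net}^{k,R}$ for $k=1,2$, exactly as in \eqref{eq: concatenate UN reps}. Approximate locality preservation with tail $f(r)$ ensures that the candidate support algebras $\mathcal{B}_L$ and $\mathcal{B}_R$ defined by the formulas analogous to those used in \eqref{eq: def UM} only approximately commute: one has $\|[b_L, b_R]\| = O(f(L))\|b_L\|\|b_R\|$. I would then define an approximate middle map $u_M$ by the prescription in \eqref{eq: def UM}. The resulting tensor $U_M$ is only approximately unitary, with error controlled by $f(L)$, because the image algebras $u_M(\mathcal{B}_L)$ and $u_M(\mathcal{B}_R)$ no longer exactly commute.

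Second, I would round $U_M$ to a genuine unitary tensor $\tilde{U}_M$ via polar decomposition; this preserves all four boundary bond dimensions and changes the global action by $O(f(L))$. The concatenation $U_{Net}^{1,L} \circ \tilde{U}_M \circ U_{Net}^{2,R}$ is then an exact unitary network, and Proposition~\ref{prop: Concatenation of two unitary networks} immediately gives $f_I[U_{Net}^1] = f_I[U_{Net}^2]$ from the existence of such a concatenation. The fact that the glued network only represents $\alpha$ up to $O(f(L))$ does not matter for the conclusion, because $f_I$ is an intrinsic property of the unitary network itself, and because the flow indices involved lie in the discrete set $\log_d \mathbb{Q}_{>0}$, so the $L \to \infty$ limit collapses any residual error.

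The main obstacle will be executing the rounding step cleanly, i.e., ensuring that the polar-decomposition surrogate $\tilde{U}_M$ can simultaneously be glued to both $U_{Net}^{1,L}$ and $U_{Net}^{2,R}$ with the prescribed boundary bond dimensions while leaving enough approximate structure intact for $f_I$ to be well-defined across the surface. A perhaps cleaner alternative route is to invoke Proposition~\ref{prop: QCA approximations of an ALPU}: since any ALPU is the limit of QCAs $\beta_j$ with stabilized GNVW index $I_{GNVW}(\beta_j) = I(\alpha)$ for $j \ge j_0$, one can locally modify $U_{Net}^{k}$ inside a region of radius $O(j)$ to produce strictly locality-preserving networks $W_{Net}^{k,j}$ that represent $\beta_j$ while preserving $f_I$ on any vertical surface outside the modified region. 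Proposition~\ref{prop: same net information flow for LU-rep} applied to $W_{Net}^{1,j}$ and $W_{Net}^{2,j}$ then yields $f_I[U_{Net}^1] = f_I[W_{Net}^{1,j}] = f_I[W_{Net}^{2,j}] = f_I[U_{Net}^2] = \log_d I(\alpha)$, simultaneously identifying the common net information flow with the ALPU index of Def.~\ref{def: Index for ALPUs}.
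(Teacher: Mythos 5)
Your core strategy---build an approximately unitary middle tensor $U_M$ whose unitarity defect vanishes as the middle region grows, then use the discreteness of Hilbert-space dimensions to force the flows to agree---is exactly the idea behind the paper's proof. The execution differs in two places, and both differences matter. First, the paper does not work directly with the $f(L)$-tails of the ALPUs when defining the support algebras: it first replaces $u$, $u_{Net}^{1,R}$, and $u_{Net}^{2,L}$ by the QCA approximants $\beta_j$, $\beta_{Net,j}^{1,R}$, $\beta_{Net,j}^{2,L}$ of Proposition~\ref{prop: QCA approximations of an ALPU}, so that $\mathcal{B}_L$ and $\mathcal{B}_R$ commute \emph{exactly} and the tensor factorization $\mathcal{A}_{phy,in}=\mathcal{B}_L\otimes\mathcal{B}_M\otimes\mathcal{B}_R$ underlying \eqref{eq: def UM} is legitimate; the only approximation error is then isolated in the mismatch of the three QCAs on $\mathcal{B}_M$, which is bounded via the triangle inequality and vanishes as $j\to\infty$. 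In your version the algebras only commute up to $O(f(L))$, so the factorization itself is approximate and the very definition of $u_M$ becomes delicate. Second, your polar-decomposition rounding step is circular where it matters most: a polar decomposition yields a unitary only when the matrix is square, i.e., only when $\dim\mathcal{H}_{in}=\dim\mathcal{H}_{out}$ for $U_M$---and that dimension equality is precisely equivalent to $f_I[U_{Net}^1]=f_I[U_{Net}^2]$, the statement being proven. You do not need the rounding at all: as in the paper, approximate unitarity with error tending to zero, combined with the fact that a dimension mismatch is a discrete (non-perturbative) obstruction, already rules out unequal flows without ever producing an exact crossover network or invoking Proposition~\ref{prop: Concatenation of two unitary networks}.

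Your alternative route has a genuine gap. An ALPU that is not a QCA has tails everywhere, so there is no modification confined to a region of radius $O(j)$ that converts an approximately locality-preserving network $U_{Net}^k$ into a strictly locality-preserving network $W_{Net}^{k,j}$ representing $\beta_j$; the non-locality you would need to remove is not localized. Moreover, Proposition~\ref{prop: same net information flow for LU-rep} requires its two inputs to be globally locality-preserving implementations of the \emph{same} QCA, so a modification that only repairs locality near the surface where you compute the flow does not put you in its hypotheses. The claim that the modification "preserves $f_I$ on any vertical surface outside the modified region" is also essentially the intrinsic-flow statement you are trying to establish, so this route assumes what it sets out to prove. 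The identification of the common value with $\log_d I(\alpha)$ from Definition~\ref{def: Index for ALPUs} is a nice observation, but it should be derived after the equality of flows is established by the limiting-tensor argument, not used as scaffolding for it.
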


\begin{proof}
The proof parallels the QCA case, but now we are dealing with an ALPU $U$. Splitting a unitary network representation $U_{Net}^{i}$ into $U_{Net}^{i, L}$ and $U_{Net}^{i, R}$ results in both parts being ALPUs as well.

We employ sequences of QCAs $\{\beta_{j}\}_{j=1}^{\infty}$, $\{\beta_{Net,j}^{1,R}\}_{j=1}^{\infty}$, and $\{\beta_{Net,j}^{2,L}\}_{j=1}^{\infty}$ to approximate the ALPUs $u$, $u_{Net}^{1, R}$, and $u_{Net}^{2, L}$ as stated in Proposition \ref{prop: QCA approximations of an ALPU}, while retaining $u_{Net}^{1, L}$ and $u_{Net}^{2, R}$ as ALPUs. For each $j$, we can define a local tensor $U_{M,j}$ exactly as in \eqref{eq: def UM}, simply replacing $u$, $u_{Net}^{1,R}$, and $u_{Net}^{2,L}$ by 
QCAs $\beta_{j}$, $\beta_{Net,j}^{1,R}$ and $\beta_{Net,j}^{2,L}$. This time, $U_{M,j}$ may not be unitary because, in general:
\begin{equation}
    \beta_{j}(\mathcal{B}_{M}) \ne  \beta_{Net,j}^{1,R}(\mathcal{B}_{M}) \ne \beta_{Net,j}^{2,L}(\mathcal{B}_{M}).
\end{equation} 
As $j$ increases, the radii of $\beta_{Net}^{1,R}$ and $\beta_{Net}^{2,L}$ expand, requiring $U_{M,j}$ to be supported on a subsystem $X_{j}$ with a larger diameter $\text{diam}(X_{j}) =  4j$. At the same time the difference between  $\beta_{j}(\mathcal{B}_{M})$,   $\beta_{Net,j}^{1,R}(\mathcal{B}_{M})$ and $ \beta_{Net,j}^{2,L}(\mathcal{B}_{M})$ becomes smaller:
\begin{equation}
\begin{aligned}
     \| (\beta_{Net,j}^{1,R} - \beta_{j})|_{\mathcal{B}_{M}} \| &\le 
     \| (\beta_{Net,j}^{1,R} - u_{Net,}^{1,R})|_{\mathcal{B}_{M}} \| \\
     &+ \| ( u_{Net,}^{1,R} - u)|_{\mathcal{B}_{M}} \| +  \| (  u - \beta_{j})|_{\mathcal{B}_{M}} \| \\
    &\le 2C_{f} \cdot f(j) \cdot \frac{ \left\lceil\text{diam}(X_{j}) \right \rceil}{j}
\end{aligned}
\end{equation}
In the limit $j \rightarrow \infty$ we have
\begin{equation}
    \lim_{j \rightarrow \infty}\beta_{j}(\mathcal{B}_{M}) = \lim_{j \rightarrow \infty}  \beta_{Net,j}^{1,R}(\mathcal{B}_{M}) = \lim_{j \rightarrow \infty} \beta_{Net,j}^{2,L}(\mathcal{B}_{M}).
\end{equation}
Therefore, we find that a connecting tensor $U_{M,j}$, which has diameter $4j$, becomes a unitary tensor when $j \rightarrow \infty$. If $U_{Net}^{1}$ and $U_{Net}^{2}$ possess different net information flows, finding such $U_{M}$ is impossible even in the thermodynamic limit, due to the mismatch of incoming and outgoing Hilbert space dimensions.
\end{proof}

We will provide a qualitative and intuitive argument here. Starting from an approximately locality-preserving unitary network implementation $U_{Net}^{0}$ of an ALPU $U$, we perturb it to change the implementation, without changing the physical unitary operator $U$ it represents.

If the unitary operator remains unchanged, the perturbation-induced extra information flow that carries observable algebras must return to its original location. For local perturbations, the backward information flow always offsets the forward information flow. This shows that, similar to the GNVW index, net information flow acts as a topological invariant and is intrinsic for QCAs.

When non-local perturbation is allowed, the net information flow can change only if some information flow loops around the PBC system as shown in \eqref{eq: un identity non zero information flow}, or in the thermodynamic limit, information transfers from left to right infinity, as shown in \eqref{eq: redundant flow}. Both types of perturbation described do not decay with distance; thus, they cannot be realized by an approximately locality-preserving unitary network. Therefore, we expect that for different approximately locality-preserving unitary network implementations of the same ALPU, the net information flow is unique.

\section{SQC has zero net information flow}
\label{Append: SQC has zero net information flow}

An FDLU circuit has a trivial GNVW index \cite{Gross_2012}. Instead of being the physical operator itself, a quantum circuit implements a unitary operator, since different circuits can realize the same global unitary operator. In this sense, quantum circuits are representations of physical operators, just like unitary networks. 

By considering quantum gates as local unitary tensors, an FDLU unitary circuit can be reinterpreted as a unitary network with zero net information flow ($f_{I} = 0$). In \ref{Subsec: SQCs as unitary networks}, we establish that an SQC with linear depth can be interpreted as a unitary network with finite bond dimensions. In this appendix, we demonstrate that the zero net information flow persists in the unitary network representation of SQC.

\begin{proposition} [Quantum circuit has zero net information flow] \label{prop: Quantum circuit has zero net information flow}
    For a quantum circuit of local unitary gates, regardless of whether the circuit has finite or infinite depth, the net information flow is zero:
    \begin{equation}
        f_{I} = 0.
    \end{equation}
\end{proposition}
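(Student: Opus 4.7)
The plan is to combine Property~\ref{prop: uniform net information flow} with a direct unitarity argument applied to the half-circuit lying on one side of a vertical cut. First I would convert the quantum circuit into a unitary network by decomposing each local unitary gate into a bilayer unitary network of local unitary tensors (as in Section~\ref{subsection: Bilayer}), producing a DAG of unitary tensors. I would then fix an arbitrary vertical surface $\Sigma$ placed between two adjacent sites $x$ and $x+1$, chosen to cut only through horizontal bonds. Writing $D_{\to R}$ (resp.~$D_{\to L}$) for the product of Hilbert space dimensions of all bonds crossing $\Sigma$ from left to right (resp.~right to left), the net flow is $f_I[\Sigma] = \log_d D_{\to R} - \log_d D_{\to L}$.

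The crucial step is to consider the sub-network $V_L$ consisting of all local unitary tensors at spatial positions $\le x$. As a sub-DAG of a DAG of unitary tensors, Proposition~\ref{prop: Unitarity of unitary network} implies that $V_L$ is itself a unitary operator on its external legs. Its input legs comprise the physical qudits entering at positions $\le x$, of total dimension $d^{N_L}$ (with $N_L$ the number of such sites), together with horizontal bonds crossing $\Sigma$ right-to-left, of total dimension $D_{\to L}$. Its output legs comprise the physical qudits exiting at positions $\le x$, again of total dimension $d^{N_L}$ since each site of a quantum circuit preserves its qudit dimension, together with horizontal bonds crossing $\Sigma$ left-to-right, of total dimension $D_{\to R}$. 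Unitarity of $V_L$ forces $d^{N_L} D_{\to L} = d^{N_L} D_{\to R}$, so $D_{\to L} = D_{\to R}$ and $f_I[\Sigma] = 0$. Invoking Property~\ref{prop: uniform net information flow} to eliminate dependence on the choice of $\Sigma$ then yields $f_I = 0$ for the entire unitary network.

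The main obstacle is to justify the argument for infinite-depth SQCs on an infinite chain, where $V_L$ contains infinitely many tensors and the direct invocation of Proposition~\ref{prop: Unitarity of unitary network} is not immediate. I would circumvent this by truncating to a finite sub-region of radius $R$ around $\Sigma$: because an SQC acts on each site with only finitely many gates, one can choose $R$ large enough that every tensor whose support touches $\Sigma$ is included, while the truncated sub-network remains a finite DAG of unitary tensors to which Proposition~\ref{prop: Unitarity of unitary network} applies verbatim. An alternative, perhaps cleaner, route is to bypass the global unitarity statement entirely and instead apply the local conservation law of Definition~\ref{Def: Information flow} at each tensor of $V_L$ and telescope: contributions from internal bonds cancel pairwise, leaving exactly the balance $d^{N_L} D_{\to L} = d^{N_L} D_{\to R}$, which is finite and well-defined provided $\Sigma$ crosses finitely many bonds --- a property guaranteed by the SQC assumption. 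Either realization of the argument makes transparent that $f_I = 0$ is a direct consequence of qudit-dimension preservation at every site together with unitarity of every local gate.
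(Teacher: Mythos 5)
Your core mechanism --- unitarity of a piece of the network plus preservation of the physical qudit dimension at each site forces the rightward and leftward bond dimensions across a cut to balance --- is exactly the physics behind the paper's proof. The difference is in how it is aggregated. The paper localizes entirely to the cut: it bisects each of the finitely many gates straddling $\Sigma$ into a left and a right unitary tensor, applies the conservation law of Definition~\ref{Def: Information flow} to one half of each such gate, and concludes that every straddling gate individually contributes zero to $f_I[\Sigma]$; summing over the (at most $D$) straddling gates finishes the proof, and the infinite-depth limit is then immediate. You instead apply unitarity to the entire half-circuit $V_L$ to the left of the cut. For a finite OBC circuit this is correct and clean, and arguably more transparent, since it needs only Proposition~\ref{prop: Unitarity of unitary network} once.

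The gap is in your treatment of the infinite chain, which is the case the appendix actually needs (SQCs on infinite OBC and PBC systems). Your truncation patch does not close the argument: restricting to a finite window $[x-R,\,x]$ reintroduces a second boundary $\Sigma'$ on the left, and unitarity of the truncated block only yields $d^{N}D_{\to L}(\Sigma)\,D_{\to R}(\Sigma') = d^{N}D_{\to R}(\Sigma)\,D_{\to L}(\Sigma')$, i.e.\ $f_I[\Sigma]=f_I[\Sigma']$ --- uniformity of the flow (Property~\ref{prop: uniform net information flow} again), not its vanishing. For a staircase SQC there is no bulk position where the second cut crosses no gates, so you cannot choose $\Sigma'$ to be trivial. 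Your telescoping alternative has the same problem in disguise: summing the local conservation law over all of $V_L$ produces the divergent identity $d^{N_L}D_{\to L}=d^{N_L}D_{\to R}$ with $N_L=\infty$, from which nothing can be cancelled. The repair is precisely the paper's localization: only the finitely many gates whose support straddles $\Sigma$ carry bonds across the cut, and your half-chain argument applied to the left half of \emph{each such gate separately} (a finite unitary tensor with matching physical input and output dimensions) already gives zero net flow per gate, hence zero in total.
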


\begin{proof}
Let us begin by proving that a local unitary gate has zero net information flow through any vertical surface. A quantum gate can be bipartite vertically, enabling its decomposition into a unitary network formed by two local tensors:
\begin{align}
    \begin{array}{c}
\tikzsetnextfilename{bisected_quantum_gate}
    \begin{tikzpicture}
        \node[draw, rectangle, rounded corners=2pt, minimum width=1cm, minimum height=0.6cm, fill=tensorcolor, thick] (a) {};
        \draw ([xshift=-0.3cm]a.90) -- ++(0,0.6);
        \draw ([xshift=0.3cm]a.90) -- ++(0,0.6);
        \draw ([xshift=-0.3cm,yshift=-0.6cm]a.270) -- ++(0,0.6) ;
        \draw ([xshift=0.3cm,yshift=-0.6cm]a.270) -- ++(0,0.6) ;
        \draw[dashed] (0,-1) -- (0,1);
    \end{tikzpicture}
    \end{array} = 
    \begin{array}{c}
    \tikzsetnextfilename{UN_with_zero_information_flow}
    \begin{tikzpicture}
        \node[draw, rectangle, rounded corners=2pt, minimum width=0.6cm, minimum height=0.6cm, fill=tensorcolor, thick] (a) {};
        \node[draw, rectangle, rounded corners=2pt, minimum width=0.6cm, minimum height=0.6cm, fill=tensorcolor, thick] (b) at (1.2,0) {};
        \draw[mid arrow = 0.65] (a.90) -- ++(0,0.6);
        \draw[mid arrow = 0.65] (b.90) -- ++(0,0.6);
        \draw[mid arrow = 0.65] ([yshift=-0.6cm]a.270) -- ++(0,0.6) ;
        \draw[mid arrow = 0.65] ([yshift=-0.6cm]b.270) -- ++(0,0.6) ;
        \draw[mid arrow = 0.65] ([yshift=-0.1cm]a.0) -- ([yshift=-0.1cm]b.180);
        \draw[mid arrow = 0.65] ([yshift=0.1cm]b.180) -- ([yshift=0.1cm]a.0);
    \end{tikzpicture}
    \end{array}
\end{align}
Examine either local tensor. Based on the conservation law of information (Def. \ref{Def: Information flow}) of the local tensor and the isomorphism of the incoming and outgoing physical Hilbert spaces $\dim \mathcal{H}_{phy, out} = \dim \mathcal{H}_{phy, in}$, it follows that the net information flow between the two local unitary tensors is zero.

Now, let us examine the quantum circuit. In an SQC, similar to FDLU circuit with depth $D$, a bisection will intersect at most $D$ local unitary gates. When a gate is bisected, it produces a zero net information flow $f_{I}[\Sigma] = 0$ across the surface. Taking the limit as $D$ approaches infinity to create an infinite-depth circuit does not alter the result. 
\begin{equation}
    \lim _{D \rightarrow \infty} f_{I}[\Sigma] = 0.
\end{equation}
\end{proof}

\bibliography{ref}
\end{document}